\renewcommand\footnotetextcopyrightpermission[1]{} 
\newcommand{\figref}[1]{Figure~\ref{#1}}
\newcommand{\secref}[1]{Section~\ref{#1}}
\newcommand{\appref}[1]{Appendix~\ref{#1}}
\newcommand{\thmref}[1]{Theorem~\ref{#1}}
\newcommand{\lemref}[1]{Lemma~\ref{#1}}
\newcommand{\corref}[1]{Corollary~\ref{#1}}
\DeclareMathOperator{\Prop}{\coqdockw{Prop}}
\DeclareMathOperator{\Type}{\coqdockw{Type}}
\newcommand{\ptranslate}[1]{\ensuremath{\llbracket}#1\ensuremath{\rrbracket}}
\newcommand{\ptranslateIso}[1]{\ensuremath{\llbracket}#1\ensuremath{\rrbracket_{iso}}}
\newcommand{\ptranslateStrongIso}[1]{\ensuremath{\llbracket}#1\ensuremath{\rrbracket_{sIso}}}
\newcommand{\ptranslateC}[2]{\ensuremath{\llbracket}#1\ensuremath{\rrbracket_{#2}}}
\newtheorem{thm}{Theorem}
\newcommand{\indexedVar}[2]{#1#2}
\newenvironment{myquote}[1]%
  {\list{}{\leftmargin=#1\rightmargin=#1}\item[]}%
  {\endlist}
\newcommand{\tprime}[1]{{#1}{₂}}
\newcommand{\trel}[1]{{#1}{ᵣ}}
\newcommand{\coqdocvarP}[1]{\coqdocvar{\tprime{#1}}}
\newcommand{\coqdocvarR}[1]{\coqdocvar{\trel{#1}}}
\newcommand{\coqdocvarFour}[1]{\coqdocvar{{#1}\ensuremath{_4}}}
\newcommand{\coqdocvarFive}[1]{\coqdocvar{{#1}\ensuremath{_5}}}
\newcommand{\isorel}{IsoRel}
\newcommand{\cocminus}{CoC\ensuremath{^-}}
\newcommand{\anyrel}{AnyRel}
\newcommand{\projTyRel}[2]{\ensuremath{\pi_{#1}} #2}
\newcommand{\BCGamma}[2]{\coqdocdefinition{safe}\ensuremath{_{{#1}}}{#2}}
\newcommand{\BCGammaC}[1]{\coqdocdefinition{safeC} {#1}}
\newcommand{\bcsubst}[4]{{#2}\{{#4}/{#3}\}\ensuremath{_{#1}}}
\newcommand{\fvars}[1]{\coqdocdefinition{freeVars} #1}
\newcommand{\bvars}[1]{\coqdocdefinition{boundVars} #1}
\newcommand{\alphaeq}[2]{#1 =\ensuremath{_\alpha} #2}
\newcommand{\coqsubscript}{i}
\newcommand{\vdashq}{\ensuremath{\vdash_{\coqsubscript}}}
\newcommand{\coqRefDefnR}[2]
{\coqref{#1.#2 R}{\coqdocdefinition{\trel{#2}}}}
\newcommand{\coqRefConstrR}[2]
{\coqref{#1.#2 R}{\coqdocconstructor{\trel{#2}}}}
\newcommand{\coqRefInductive}[2]
{\coqref{#1.#2}{\coqdocinductive{#2}}}
\newcommand{\coqRefDefn}[2]
{\coqref{#1.#2}{\coqdocdefinition{#2}}}
\newcommand{\coqRefConstr}[2]
{\coqref{#1.#2}{\coqdocconstructor{#2}}}
\newcommand{\coqRefVar}[2]
{\coqref{#1.#2}{\coqdocvar{#2}}}
\newcommand{\CoqList}{\coqexternalref{list}{http://coq.inria.fr/stdlib/Coq.Init.Datatypes}{\coqdocinductive{list}}}
\newcommand{\CoqNat}{\coqexternalref{nat}{http://coq.inria.fr/stdlib/Coq.Init.Datatypes}{\coqdocinductive{nat}}}
\newcommand{\CoqMap}{\coqexternalref{map}{http://coq.inria.fr/stdlib/Coq.Lists.List}{\coqdocdefinition{map}}}
\newcommand{\CoqBool}{\coqexternalref{bool}{http://coq.inria.fr/stdlib/Coq.Init.Datatypes}{\coqdocinductive{bool}}}
\newcommand{\CoqBTrue}{\coqexternalref{true}{http://coq.inria.fr/stdlib/Coq.Init.Datatypes}{\coqdocconstructor{true}}}
\newcommand{\CoqBFalse}{\coqexternalref{false}{http://coq.inria.fr/stdlib/Coq.Init.Datatypes}{\coqdocconstructor{false}}}
\newcommand{\CoqTrue}{\coqexternalref{True}{http://coq.inria.fr/stdlib/Coq.Init.Logic}{\coqdocinductive{True}}}
\newcommand{\CoqFalse}{\coqexternalref{False}{http://coq.inria.fr/stdlib/Coq.Init.Logic}{\coqdocinductive{False}}}
\newcommand{\CoqTrueI}{\coqexternalref{I}{http://coq.inria.fr/stdlib/Coq.Init.Logic}{\coqdocconstructor{I}}}
\newcommand{\CoqFalseRect}{\coqexternalref{False}{http://coq.inria.fr/stdlib/Coq.Init.Logic}{\coqdocdefinition{False\_rect}}}
\newcommand{\CoqExistT}{\coqexternalref{existT}{http://coq.inria.fr/stdlib/Coq.Init.Specif}{\coqdocconstructor{existT}}}
\newcommand{\CoqSigTProj}{\coqexternalref{projT1}{http://coq.inria.fr/stdlib/Coq.Init.Specif}{\coqdocdefinition{\ensuremath{\pi_1}}}}
\newcommand{\CoqNone}{\coqexternalref{None}{http://coq.inria.fr/stdlib/Coq.Init.Datatypes}{\coqdocconstructor{None}}}
\newcommand{\CoqConj}{\coqexternalref{:type scope:x '/x5C' x}{http://coq.inria.fr/distrib/8.5pl3/stdlib/Coq.Init.Logic}{\coqdocnotation{\ensuremath{\land}}}}
\newcommand{\CoqAppend}{\coqexternalref{app}{https://coq.inria.fr/stdlib/Coq.Init.Datatypes}{++}}
\colorlet{DEFGREEN}{defgreen}
\begin{document}

\title{Revisiting Parametricity: Inductives and Uniformity of Propositions}
%


\author{Abhishek Anand}
\authornote{abhishek.anand.iitg@gmail.com}          
\affiliation{
  \department{Department of Computer Science}              
  \institution{Cornell University}            
  \city{Ithaca}
  \state{NY}
  \country{USA}
}
\email{abhishek.anand.iitg@gmail.com}          

\author{Greg Morrisett}
\affiliation{
  \department{Department of Computer Science}              
  \institution{Cornell University}            
  \city{Ithaca}
  \state{NY}
  \country{USA}
}
\email{greg.morrisett@cornell.edu}         


\begin{abstract}
Reynold's parametricity theory captures the property that parametrically polymorphic functions behave uniformly: they produce related results on related instantiations.
In dependently typed programming languages, such relations and uniformity proofs can be expressed internally, and generated as a program translation. 

We present a new parametricity translation for a significant fragment of Coq. 
Previous translations of parametrically polymorphic propositions allowed non-uniformity. 
For example, on related instantiations, a function may return propositions that are logically inequivalent (e.g. True and False).
We show that uniformity of polymorphic propositions is not achievable in general.
Nevertheless, our translation produces proofs that the two propositions are 
\emph{logically equivalent} and also that \emph{any} two proofs of those propositions are related.
This is achieved at the cost of potentially requiring more assumptions on the
instantiations, requiring them to be isomorphic in the worst case.

Our translation augments the previous one for Coq by carrying and
compositionally building extra proofs about parametricity
relations. It is made easier by a new method for translating inductive types and pattern matching.
The new method builds upon and generalizes previous such translations for dependently typed programming languages. 
 
Using reification and reflection, we have implemented our translation as Coq
programs\footnote{\url{https://github.com/aa755/paramcoq-iff}}.
We obtain several stronger free theorems applicable to an ongoing
compiler-correctness project. Previously, proofs of some of these theorems took
several hours to finish.
\end{abstract}

\begin{CCSXML}
<ccs2012>
<concept>
<concept_id>10011007.10011006.10011008</concept_id>
<concept_desc>Software and its engineering~General programming languages</concept_desc>
<concept_significance>500</concept_significance>
</concept>
<concept>
<concept_id>10003456.10003457.10003521.10003525</concept_id>
<concept_desc>Social and professional topics~History of programming languages</concept_desc>
<concept_significance>300</concept_significance>
</concept>
</ccs2012>
\end{CCSXML}



\maketitle

\section{Introduction}
\label{sec:intro}
\citet{Krishnaswami.Dreyer2013} summarize Reynold's work on parametricity in the
following perfect way:
\begin{myquote}{0.1cm}
\emph{
\citet{Reynolds1983} famously introduced the concept of
relational parametricity with a fable about data abstraction.
Professors Bessel and Descartes, each teaching a class on complex numbers, defined them differently in the first lecture, the former using polar coordinates and the latter using (of course) cartesian coordinates. 
But despite accidentally trading sections after the first lecture, 
they never taught their students anything false, since after the first class, both professors proved all their theorems in terms of the defined operations on complex numbers, 
and never in terms of their underlying coordinate representation.}

\emph{
Reynolds formalized this idea by giving a semantics for System F in which each type denoted not just a set of well-formed terms, but a logical relation between them, defined recursively on the type structure of the language. 
Then, the fact that well-typed client programs were insensitive to a specific choice of implementation could be formalized in terms of their taking logically related inputs to logically related results. 
Since the two constructions of the complex numbers share the same interface, and it is easy to show they are logically related at that interface, 
any client of the interface must return equivalent results regardless of which
implementation of the interface is used.}
\end{myquote}


In Reynold's work 
 and subsequent work for 
other modern languages (e.g. OCaml~\citep{Crary2017}), 
the logical relations for types are meta-theoretic
(not defined in the programming language being studied).
In contrast, in dependently typed programming languages such as Coq, one can
express within the language such logical relations and the proofs that programs are related.
Thus, recent works~\citep{Bernardy.Jansson.ea2010, Bernardy.Jansson.ea2012, Keller.Lasson2012} have defined program translations that 
translate types to their logical relations. 
Because terms can appear in types in dependently typed languages, these
translations translate both terms and types. An amazing aspect of the translation of terms is that it produces
proofs of the corresponding abstraction theorems:
Let \ptranslate{$T$} denote the parametricity translation of the type $T$.
\coqdocnoindent
For closed terms $t$ and $T$, if $t$:$T$ ($t$ has type $T$) in System F, 
Reynold's abstraction theorem says that
($t$,$t$) is in the relation \ptranslate{$T$}.
The proof of this theorem is in the meta-theory. 
In contrast, in Coq, amazingly, the proof is precisely \ptranslate{$t$}, the
translation of $t$.

In Coq, parametricity is a powerful tool to obtain not only statements of
free theorems~\cite{Wadler1989}, but also free Coq proofs of those theorems.
In our recent compiler correctness project, we have used the
implementation\footnote{\url{https://github.com/mlasson/paramcoq}} by \citet{Keller.Lasson2012}
 to automatically obtain for free several Coq proofs that
otherwise took many hours to manually write.
 For example, by polymorphically defining the big-step operational semantics of some intermediate languages, we were able to obtain for free~(\secref{sec:app}) that the semantics are preserved when we change the
 representation from de Bruijn (the representation used in the compiler's
 source language) to named-variable bindings (the representation used in the
 backend).
However, as we explain below, ~\citeauthor{Keller.Lasson2012}'s translation
produces useless abstraction theorems for polymorphic
propositions or relations. 
\emph{In contrast, our translation lets us obtain the uniformity of even
the polymorphically defined, \emph{undecidable} relations (e.g. observational
equivalence).}

Undecidable relations are particularly problematic
because, as we explain next, they cannot be equivalently redefined in a
way that allows reaping the benefits of the existing
translation~\citep{Keller.Lasson2012}.
In proof assistants such as Coq, some amount of logic can be done using the boolean datatype.
A predicate over a type $X$ can be represented as a function of type 
$X$ $\rightarrow$ {\CoqBool}.
Given a polymorphic function, say \coqdocvar{f}, returning a {\CoqBool},
Coq's parametricity translation produces a proof that on different, parametrically related instantiations, 
\coqdocvar{f} will produce the \emph{same} boolean value.
However, undecidable predicates (or n-ary relations in general) cannot be defined this way,
because Coq functions are \emph{computable} : a term of type {\CoqBool}
\emph{must} eventually \emph{compute} to one of the two boolean values: {\CoqBTrue}, 
{\CoqBFalse}.
One can cheat and use a strong version of the axiom of excluded middle to make such definitions.
However, the axiom is provably 
\emph{non-parametric}~\cite[Sec.~5.4.2]{Keller.Lasson2012}.
Hence parametricity translations cannot generate abstraction theorems for 
definitions using the axiom.

Proof Assistants based on dependent types (e.g. Agda~\cite{Norell2009}, Coq,
F*~\cite{Swamy.Hritcu.ea2016},
Idris~\citep{Brady2013}, LEAN~\cite{deMoura.Kong.ea2015},
Nuprl~\citep{Constable.Allen.ea1986}) have another, perhaps more idiomatic
mechanism for defining propositions/relations.
For
example, dependent function types can be used to express universal quantification.
Using such quantification, one can easily define \emph{undecidable} relations.
An $n$-ary relation is just a function that takes $n$ arguments and returns a
proposition.
In Coq, \coqdockw{Prop} is a special universe
whose inhabitants are intended to be types denoting logical propositions.
In the ``propositions as types, proofs as programs" tradition, by ``\coqdocvar{P} is a proposition'',
we mean 
\coqdocvar{P}:\coqdockw{Prop}, 
 and by ``\coqdocvar{p} is a proof of \coqdocvar{P}'', we mean
\coqdocvar{p}:\coqdocvar{P}.

Propositions enjoy a special status in Coq. 
For example, by restricting pattern matching on proofs (\secref{sec:anyrel:compare}),  Coq ensures
that one can consistently assume the proof irrelevance axiom which says that any two proofs of 
a proposition are equal. Also, as a result, Coq's compiler can erase all proofs~\cite{Letouzey2004} to a dummy term.

The existing parametricity translation~\citep{Keller.Lasson2012} translates propositions and
proofs as well.
However, propositions are treated just like other types, and proofs are treated just like members of other types.
As a result, \coqdockw{Prop}, which is a universe and whose inhabitants are propositions (types), is treated differently
than {\CoqBool} which is not a universe, and whose members are not types: they are 
mere data constructors: {\CoqBTrue} and {\CoqBFalse}.
\ptranslate{{\CoqBool}}, the parametricity relation for the type {\CoqBool}  relates
{\CoqBTrue} with {\CoqBTrue}
and {\CoqBFalse} with {\CoqBFalse},
and relates nothing else.
%
In contrast, propositions (types) $P1$ and $P2$ are related by 
\ptranslate{\coqdockw{Prop}}
if there is \emph{any} relation, say $R$, between the proofs of
$P1$ and $P2$.
Note that there exist relations even between logically inequivalent types.
For example, 
$\lambda$ (\coqdocvar{t} : {\CoqTrue}) (\coqdocvar{f} : {\CoqFalse}),
 {\CoqTrue} is a relation between the propositions
 {\CoqTrue}\footnote{{\CoqTrue} and {\CoqBTrue}
 are not the same. {\CoqTrue}:\coqdockw{Prop} and 
 {\CoqBTrue}:{\CoqBool}. {\CoqTrue} is an inductively defined
 proposition with only one constructor. {\CoqBTrue} is already a
 data constructor. {\CoqFalse} is an inductively defined proposition with 
 \emph{no} constructor.\\ 
 For convenience, mentions of Coq constants
 are usually hyperlinked to their definition, if defined 
 in this paper or in Coq's standard library.
 Also, to take advantage of syntax
 highlighting, we recommend reading this paper in color.
 } and
 {\CoqFalse}.
This means that polymorphically defined propositions may have 
logically inequivalent meanings in related instantiations.
Thus, abstraction theorems for polymorphic propositions, as
generated by the existing parametricity translation~\cite{Keller.Lasson2012}, are useless.

In the context of the previous paragraph, 
the main advantage of our translation is that it
additionally ensures/requires:
\begin{enumerate}
  \item logical equivalence of the related propositions: $P1$ $\leftrightarrow$ $P2$
  \item triviality of the relation
  : $\forall$ (\coqdocvar{p1}:$P1$) (\coqdocvar{p2}:$P2$), 
   $R$ \coqdocvar{p1} \coqdocvar{p2}
\end{enumerate}
Here, $R$ is the relation between the proofs of the
propositions $P1$ and $P2$.
The usefulness of the first property was already explained above.
The second is useful when instantiating an interface that includes proofs.
For example, an interface describing a semigroup (in abstract algebra) 
in Coq may also contain fields representing the proofs of associativity equations.
To use parametricity to obtain free proofs that polymorphic functions over
semigroups behave uniformly, one needs to provide two instantiations of the semigroup interface,
and prove that all the fields, including the proof fields, are related.
The triviality property makes it trivial to prove that the proof fields
are related. 
Previously, it took one
of us several 
hours 
to do one of these proofs.
The Appendix~(\secref{appendix:intro:triv}) provides a Coq statement of
the proof, in case the reader wants to independently assess the difficulty.

There is a cost to achieving the above two properties for polymorphic
propositions: our abstraction theorem may make stronger assumptions in some cases.
For example, consider Coq's polymorphic equality proposition, which is defined using
indexed induction:

\coqdockw{Inductive} \coqdef{Coq.Init.Logic.eq}{eq}{\coqdocinductive{eq}} (\coqdocvar{T}:\coqdockw{Type}) (\coqdocvar{x}:\coqdocvariable{T}) : \coqdocvar{T} \coqref{Coq.Init.Logic.:type scope:x '->' x}{\coqdocnotation{\ensuremath{\rightarrow}}} \coqdockw{Prop} :=
\coqdef{Coq.Init.Logic.eq refl}{eq\_refl}{\coqdocconstructor{eq\_refl}} : \coqdocinductive{eq} \coqdocvar{T} \coqdocvar{x} \coqdocvar{x}
\coqdoceol
\coqdocnoindent 
This syntax says that \coqRefInductive{Coq.Init.Logic}{eq} is a family of propositions (types)
and for \emph{any} type \coqdocvariable{T} and \coqdocvariable{x} of type \coqdocvariable{T},
\coqRefConstr{Coq.Init.Logic}{eq\_refl} is a proof that \coqdocvariable{x} is equal to itself.
Because Coq's typehood judgements are preserved under computation, for closed \coqdocvariable{x} and \coqdocvariable{y},
the proposition 
\coqRefInductive{Coq.Init.Logic}{eq} \coqdocvariable{T} \coqdocvariable{x} \coqdocvariable{y}
asserts that the normal forms of \coqdocvariable{x} and \coqdocvariable{y} are
the same.
Thus Coq lets us define propositions that make logical
observations that no computation can make: by parametricity, all functions of the type
$\forall$ \coqdocvar{T}:\coqdockw{Type},
\coqdocvar{T} $\rightarrow$ \coqdocvar{T} $\rightarrow$ {\CoqBool} are \emph{constant} functions.
In~\secref{sec:uniformProp}, we see that for indexed-inductive propositions to behave uniformly, the
parametricity relation between the two instantiations of the index type may need to be one-to-one.
Also, for universal quantification ($\forall$), the relation for the quantified type may need to be total.

After analysing the uniformity requirements for Coq's mechanisms for defining new 
propositions~(Section \ref{sec:uniformProp}, \ref{sec:uniformProp:type}), we explain our new parametricity 
translation
that ensures these requirements~(\secref{sec:isorel}).
We call our new translation the {\isorel} translation because in the worst case, the two instantiations
of type variables need to be isomorphic.
In contrast, we call the old translation~\citep{Keller.Lasson2012} 
the {\anyrel} translation, because one can pick
\emph{any} relation between the two instantiations, as long as each item in the interface respects
the relation.
In this sense, Reynold's original parametricity translation of types can be
considered an {\anyrel} translation. 

Our {\isorel} translation excludes propositions that mention types
of higher universes 
(\coqdockw{Type}$_i$ for $i>0$) at certain places~(\secref{sec:isorel:limitations}). 
For example, in universal quantification, the quantified type
must be in \coqdockw{Prop} or \coqdockw{Type}$_0$, which is also denoted by \coqdockw{Set} in
Coq. Also, in inductively defined propositions, the types of indices
and the types of arguments of constructors (except the parameters of the type)
must be in \coqdockw{Set} or
\coqdockw{Prop}.
%
%
\coqdockw{Set} and \coqdockw{Prop}
suffice for many concrete applications, such as
correctness of computer systems (e.g. compilers, operating systems) and cyber-physical systems.
For example, in \coqdockw{Set}, one can define natural, rational, and real numbers, functions and infinitely branching
trees of real numbers, and abstract syntax trees used by compilers.
Our restrictions may be problematic for some applications such 
as proving the consistency of powerful logics~\cite{Anand.Rahli2014a}.

The {\anyrel} translation serves as a core of our {\isorel} translation.
The {\isorel} translation adds extra proofs about the {\anyrel} translations of types
and propositions.
The main challenge is to compositionally
build the extra  proofs of new type and proposition constructions from
the corresponding proofs of their subcomponents.
Because understanding the {\anyrel} translation is crucial for understanding our
{\isorel} translation, we first present our version of the {\anyrel} translation
in \secref{sec:anyrel}.
Our {\anyrel} translation is similar to the one 
by~\citet{Keller.Lasson2012}, except for the translation of inductive types and pattern matching.
Our {\anyrel} translation of inductive types~(\secref{sec:anyrel:ind}) and 
pattern matching~(\secref{sec:anyrel:match})
simplifies our {\isorel} translation because it allows us 
to use the \coqdockw{Prop} universe for defining the parametricity relations of those types.
As explained above, the \coqdockw{Prop} universe is well-suited for defining logical relations.
Our {\anyrel} translation of inductive types and pattern matching 
is inspired by a translation by
~\citet[Sec 5.4]{Bernardy.Jansson.ea2012}. However, we uncover and fix
a subtle
flaw in how they translate indexed-inductive types and pattern matching on inhabitants of those types.

\paragraph{Summary of Contributions:}
\begin{itemize}
\item For a significant fragment of Coq, a new parametricity translation
({\isorel}) that augments our version of the {\anyrel} translation to enforce the
uniformity of polymorphically defined propositions (Section~\ref{sec:uniformProp}-\ref{sec:isorel}).
The {\isorel} translation uses the proof irrelevance and function extensionality axioms.

\item For indexed-inductive types and pattern matching, a new {\anyrel}
translation~(Section \ref{sec:anyrel:ind}, \ref{sec:anyrel:match}) which
has proof-irrelevance properties
that simplify the {\isorel} translation and are also independently useful.
The {\anyrel} translation does not use any axiom.

\item An application of parametricity translations ({\anyrel}, {\isorel}) to obtain for free many tedious Coq proofs about compiler
correctness (\secref{sec:app}). We show a theorem (observational equivalence
respects $\alpha$ equality) that the {\isorel} translation can prove but the {\anyrel} cannot.
\end{itemize}
\EnableBpAbbreviations

\section{{\anyrel} Translation}
\label{sec:anyrel}
In this section, we present the {\anyrel} translation that forms the core of the 
{\isorel} translation described in the next sections.
As mentioned above, unlike the {\isorel} translation, 
the {\anyrel} translation does not ensure the uniformity of
propositions, and treats propositions (types) just like other types,
and treats proofs just like members of other types. 
First, we describe the translation of a core calculus of Coq
that excludes inductive constructions.
This core is exactly the Calculus of Constructions
(CoC)~\citep{Coquand.Huet1988}.
Although our presentation is very similar to the one
by~\citet{Keller.Lasson2012}, it highlights why we will later need a new translation for inductive types.
Then we add inductive types to the calculus and compare,
in the setting of Coq, the
existing {\anyrel} translations of inductive constructions and associated
constructs such as pattern-matching
(\secref{sec:anyrel:compare}).
Finally, we describe our new translation (Section~\ref{sec:anyrel:ind}, \ref{sec:anyrel:match}), which is
inspired by the compared translations. Our translation has proof irrelevance properties that
simplify the {\isorel} translation.
Also, we uncover and fix a subtle flaw in one of the compared translations.

\subsection{Core Calculus}
\label{sec:anyrel:core}
The following grammar describes the language of CoC (both terms and types):
\newcommand{\pouf}{\hspace{0.8em}}
$$ s \pouf := \text{\coqdockw{Prop}} \pouf|\pouf \text{\coqdockw{Type}}_{i}$$
\vspace{-0.6cm}
$$ A, B \pouf := \pouf  \coqdocvar{x}  \pouf|\pouf  s \pouf|\pouf \forall
\coqdocvar{x}:A, B \pouf|\pouf \lambda \coqdocvar{x}:A, B  \pouf|\pouf  (A\,B)
$$
where \coqdocvar{$x$} ranges over variables and $i$ ranges over natural
numbers.
$s$ denotes universes (also known as sorts in the literature).
The translation often needs four extra variables for each variable in the input.
Just to avoid capture, without loss of generality, we assume that there are five disjoint classes of variables and the input
only has variables from the first class, and has no repeated bound variables.
We assume that \coqdocvarP{}, \coqdocvarR{}, \coqdocvarFour{}, and \coqdocvarFive{} are injective functions that respectively map
variables of the first class to variables of the next four classes.
Semantic concepts such as $\alpha$-equality, reduction, typehood
are totally agnostic to this distinction between classes of variables.
Finally, for any term $A$, \tprime{$A$} denotes the term obtained by replacing
every variable \coqdocvar{v} by \coqdocvarP{v}.

For now, we define $\hat{s}$ := $s$. Let \coqdocvar{c} be some variable of the first class.

\ptranslate{}, the {\anyrel} 
parametricity translation is defined by structural recursion. 
To understand it, it may be helpful to first recall
its main correctness property:
For closed terms $t$ and $T$, if
$t$ : $T$, then
\ptranslate{$t$} must be the proof that
$t$ is related to itself in the relation 
\ptranslate{$T$}.
Relations are represented as functions that take two arguments
and return a proposition or a type.
Thus, more formally, if $t$ : $T$,
then we must have \ptranslate{$t$}: (\ptranslate{$T$} $t$ $t$).
\citet{Keller.Lasson2012} prove a more general version, for open terms in typing
contexts:
\begin{thm}[\label{Abstraction}Abstraction Theorem]
If $Γ ⊢ A : B$, then $\ptranslate{Γ} ⊢ A : B$, $\ptranslate{Γ} ⊢ \tprime{A} :
\tprime{B}$, and $\ptranslate{Γ} ⊢ \ptranslate{A} : \ptranslate{B}\, A\,
\tprime{A}$
\end{thm}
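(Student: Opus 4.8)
The plan is to proceed by induction on the derivation of $Γ ⊢ A : B$, establishing all three conclusions simultaneously in a single mutual induction. This simultaneity is essential: the relational conclusion in, say, the product case relies on the two non-relational conclusions for its subterms. First I would fix the intended reading of $\ptranslate{Γ}$: for each binding $x:A$ appearing in $Γ$, the translated context contains three bindings, namely $x:A$, $\tprime{x}:\tprime{A}$, and $\trel{x} : \ptranslate{A}\,x\,\tprime{x}$. With this reading, the assumptions on the five disjoint variable classes guarantee that these bindings are distinct and that no capture occurs, so that $\ptranslate{Γ}$ is a well-formed extension of $Γ$.

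The two non-relational conclusions are comparatively routine and carry no parametricity content. For conclusion (1), since $\ptranslate{Γ}$ merely interleaves extra bindings that do not capture first-class variables, a weakening lemma for CoC yields $\ptranslate{Γ} ⊢ A : B$. For conclusion (2), I would apply the injective renaming $\tprime{\cdot}$: because $\ptranslate{Γ}$ contains $\tprime{x}:\tprime{A}$ for each original $x:A$, renaming the derivation underlying conclusion (1) (equivalently, running a symmetric copy of the induction) gives $\ptranslate{Γ} ⊢ \tprime{A} : \tprime{B}$. Both reduce to standard structural metatheory (weakening and variable renaming).

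The substance is conclusion (3), handled by cases on the last typing rule. For the variable rule, $\ptranslate{x} = \trel{x}$ and the context binding $\trel{x} : \ptranslate{A}\,x\,\tprime{x}$ is exactly the goal. For sorts I would unfold $\hat s = s$ and check that the relation-former $\ptranslate{s}$ inhabits $\ptranslate{\hat{s'}}\,s\,\tprime{s}$ at the next sort, using that $\tprime{s}=s$. For product and abstraction, $\ptranslate{\forall x:A,B}$ unfolds to the relation sending $f,\tprime{f}$ to the statement that related arguments are sent to related results; well-typedness of this relation, and of $\ptranslate{\lambda x:A,B}$ as an inhabitant, follows by applying all three induction hypotheses to $A$ and to $B$ in the context extended by the triple $x,\tprime{x},\trel{x}$. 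The application case is precisely where the translation must be shown to commute with substitution.

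The main obstacle I anticipate is proving, and then invoking, the \emph{substitution lemma}: $\ptranslate{B\{a/x\}}$ equals $\ptranslate{B}$ with $x,\tprime{x},\trel{x}$ simultaneously replaced by $a,\tprime{a},\ptranslate{a}$. This identity is what makes the application case close: instantiating the induction hypothesis for the function at $a,\tprime{a},\ptranslate{a}$ (where $\ptranslate{a}:\ptranslate{A}\,a\,\tprime{a}$ by hypothesis) produces a term whose type must match the expected $\ptranslate{B\{a/x\}}\,(f\,a)\,\tprime{(f\,a)}$, and its proof is itself a structural induction on $B$ that tracks all three substitution components in lockstep. Closely related, the conversion rule forces a second lemma: $\ptranslate{\cdot}$ respects conversion, so that $B \equiv B'$ implies $\ptranslate{B} \equiv \ptranslate{B'}$, ensuring the relational types in conclusion (3) remain interchangeable. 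Preservation of a single $β$-step reduces, via the substitution lemma, to a routine check; the genuinely delicate part is keeping the bookkeeping of the variable classes and the interleaved context aligned so that these identities hold on the nose rather than merely up to $α$-renaming.
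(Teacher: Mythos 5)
Your proposal is correct and follows essentially the same route as the paper's source for this theorem: the paper cites \citet{Keller.Lasson2012}, whose proof (mirrored in this paper's appendix for the {\isorel} variant) proceeds exactly as you describe --- first a substitution lemma showing $\ptranslate{B\{a/x\}}$ is $\ptranslate{B}$ with $x,\tprime{x},\trel{x}$ simultaneously replaced, then preservation of reduction/conversion, then induction on the typing derivation proving the three conclusions simultaneously. Your attention to the variable-class bookkeeping is also apposite, since the paper's appendix notes the theorem in fact requires a Barendregt-style no-shadowing hypothesis on the input for the substitution lemma to hold.
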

\vspace{-0.5cm}
%
\begin{align*}
  \ptranslate{s} := &\lambda(\coqdocvar{c}:s)(\coqdocvarP{c}:s),\coqdocvar{c} →
  \coqdocvarP{c} → \hat{s}
  \\
  \nonumber
  \ptranslate{\coqdocvar{x}} := &\,\coqdocvarR{x} \\
  \ptranslate{∀\coqdocvar{x}\!:\! A.B} :=
  &\,λ(\coqdocvarFour{x}:∀\coqdocvar{x}:A.B)(\coqdocvarFive{x}:∀\coqdocvarP{x}:\tprime{A}.\tprime{B}),
  \quad
  ∀(\coqdocvar{x}:A)(\coqdocvarP{x}:\tprime{A})(\coqdocvarR{x}:\ptranslate{A}
  \coqdocvar{x}\,\coqdocvarP{x}),\\
  & \qquad
  \ptranslate{B}{(\coqdocvarFour{x}\;\coqdocvar{x})}{(\coqdocvarFive{x}\,\coqdocvarP{x})}
  \label{DefProd} \\
  \ptranslate{λ\coqdocvar{x}:A, B} :=
  &\,λ(\coqdocvar{x}:A)(\coqdocvarP{x}:\tprime{A})(\coqdocvarR{x}:\ptranslate{A}
  \coqdocvar{x}\, \coqdocvarP{x}), \ptranslate{B} \\
  \ptranslate{(A\,B)} := &\,(\ptranslate{A}\,B\,\tprime{B}\,\ptranslate{B})
\end{align*}

The translation of contexts is obvious from the translation of the $\lambda$
case:
\begin{align*}
  \ptranslate{\langle\rangle} := &\,\langle \rangle \\
  \ptranslate{Γ, \coqdocvar{x}:A} := &\,\ptranslate{Γ}, \coqdocvar{x}:A,
  \coqdocvarP{x}:\tprime{A}, \coqdocvarR{x}
  :\ptranslate{A}\,\coqdocvar{x}\,\coqdocvarP{x}
\end{align*}


\coqdocnoindent
As examples, \ptranslate{
{\coqdocnotation{∀}} \coqdocvar{A} : \coqdockw{Type}
{\coqdocnotation{,}} \coqdocvariable{A} 
{\coqdocnotation{→}} \coqdocvariable{A}}
$\beta$ reduces to the relation
\coqdocnoindent
{\coqdocnotation{\ensuremath{\lambda}}}
{\coqdocnotation{(}}\coqdocvarFour{A}:
  \coqdockw{\ensuremath{\forall}} \coqdocvar{A} : \coqdockw{Type}, \coqdocvariable{A} {\coqdocnotation{→}} \coqdocvariable{A}{\coqdocnotation{)}}
{\coqdocnotation{(}}\coqdocvarFive{A}:
  \coqdockw{\ensuremath{\forall}} \coqdocvarP{A} : \coqdockw{Type}, \coqdocvarP{A} {\coqdocnotation{→}} \coqdocvarP{A}{\coqdocnotation{)}}
  , \coqdockw{\ensuremath{\forall}} (\coqdocvar{A} \coqdocvar{\tprime{A}}: \coqdockw{Type}) (\coqdocvarR{A}: 
  \coqdocvariable{A} {\coqdocnotation{→}} 
  \coqdocvariable{\tprime{A}} {\coqdocnotation{→}} \coqdockw{Type}) (\coqdocvar{a}: \coqdocvariable{A})
  (\coqdocvar{\tprime{a}}: \coqdocvariable{\tprime{A}}), \coqdocvarR{A} \coqdocvariable{a} \coqdocvariable{\tprime{a}} {\coqdocnotation{→}}
  \coqdocvarR{A} 
  (\coqdocvarFour{x} \coqdocvariable{A} \coqdocvariable{a}) (\coqdocvarFive{x} 
  \coqdocvarP{A} \coqdocvariable{\tprime{a}})
and
\ptranslate{\coqexternalref{::'xCExBB' x '..' x ','
x}{http://coq.inria.fr/distrib/8.5pl3/stdlib/Coq.Unicode.Utf8\_core}{\coqdocnotation{\ensuremath{\lambda}}}
{\coqdocnotation{(}}\coqdocvar{A} : \coqdockw{Type}) (\coqdocvar{a} :
 \coqdocvariable{A}{\coqdocnotation{),}} \coqdocvariable{a}}
is
{\coqdocnotation{\ensuremath{\lambda}}}{\coqdocnotation{(}}\coqdocvar{A}
 \coqdocvar{\tprime{A}} : \coqdockw{Type}) (\coqdocvarR{A} : \coqdocvariable{A} {\coqdocnotation{→}} \coqdocvariable{\tprime{A}} {\coqdocnotation{→}} \coqdockw{Type}) (\coqdocvar{a}: \coqdocvariable{A}) (\coqdocvar{\tprime{a}}: \coqdocvariable{\tprime{A}}) 
 (\coqdocvarR{a} : \coqdocvarR{A} \coqdocvariable{a} \coqdocvariable{\tprime{a}}{\coqdocnotation{),}} \coqdocvarR{a}.\coqdoceol

\newcommand{\CoqTypeSet}[0]{{\coqdockw{Type}\ensuremath{_0}}}
\newcommand{\falseForall}[0]{{\ensuremath{\forall} \coqdocvar{$A$}:{\CoqTypeSet} \coqdocvar{$A$}}}

A problem with the above definition of $\hat{s}$ is that for a closed 
$T$:\coqdockw{Type}$_i$ , \ptranslate{$T$}
is a relation of type 
$T$ $\rightarrow$ \tprime{$T$} $\rightarrow$ \coqdockw{Type}$_i$.
In Coq, logical relations typically return propositions. Thus one may instead
desire the following type: 
$T$ $\rightarrow$ \tprime{$T$} $\rightarrow$ \coqdockw{Prop}, which is what 
we get by defining $\hat{s}$ := \coqdockw{Prop}.
As explained in the previous section, inhabitants of the \coqdockw{Prop} universe
enjoy a special status in Coq's logic and compiler.
%
Unfortunately, \citet[Sec. 4.2]{Keller.Lasson2012} observed that having $\hat{s}$ := \coqdockw{Prop}
breaks the abstraction theorem above 
for the typehood judgement \coqdockw{Type}$_i$:\coqdockw{Type}$_{i+1}$

\citet[Sec. 4.2]{Keller.Lasson2012} consider a different calculus (CIC$_r$), which has
two chains of universes \coqdockw{Type}$_{i}$ and \coqdockw{Set}$_{i}$.
The latter chain does not have the rule \coqdockw{Set}$_i$:\coqdockw{Set}$_{i+1}$
and thus they are able to have $\hat{\text{\coqdockw{Set}}_i}$ := \coqdockw{Prop}.
However, without that rule, the higher universes in the latter chain may have limited utility.
Also, although they defined an embedding from CIC$_r$ to Coq, they didn't define
any embedding of any fragment of Coq into
CIC$_r$. Thus, it is not clear how their theory applies to Coq. Indeed, their
implementation for Coq always picks $\hat{s}$ := $s$. 

Instead of switching to a different calculus, we consider Coq.
Note that the relations for the lowermost universe \emph{can} live in
\coqdockw{Prop}, i.e., we \emph{can} define
$\hat{{\CoqTypeSet}}$ := \coqdockw{Prop} and $\hat{s}$ := $s$ otherwise.
For $i>0$, the abstraction theorem for 
\coqdockw{Type}$_0$:\coqdockw{Type}$_{i}$ $\beta$-reduces to the following,
which typechecks in Coq:
(λ (\coqdocvar{$A$} \coqdocvarP{$A$}: {\CoqTypeSet}),
\coqdocvar{$A$} → \coqdocvarP{$A$} → \coqdockw{Prop}) 
        : ({\CoqTypeSet} → {\CoqTypeSet} → \coqdockw{Type}$_{i}$)).\\
To follow Coq's convention, we will henceforth write
\coqdockw{Set} instead of {{\CoqTypeSet}}.

In the next subsection, we will see that the desire to have
$\hat{\coqdockw{Set}}$ := \coqdockw{Prop} has major implications
on how the inductive types are translated.

\subsection{Previous Translations of Inductive Types and Propositions: Comparison}
In the above core calculus, the only way to form new types was to form dependent function types.
One can also \emph{inductively} define new types and propositions in Coq.
For example, below we have a  Peano-style inductive definition of natural
numbers:

\vspace{0.15cm}
\noindent
%
\begin{minipage}[t]{0.32\textwidth}
\coqdocnoindent
\coqdockw{Inductive} \coqdef{Top.paper.nat}{nat}{\coqdocinductive{nat}} : \coqdockw{Set} :=\coqdoceol
\coqdocnoindent
\ensuremath{|} \coqdef{Top.paper.O}{O}{\coqdocconstructor{O}} : \coqref{Top.paper.nat}{\coqdocinductive{nat}}\coqdoceol
\coqdocnoindent
\ensuremath{|} \coqdef{Top.paper.S}{S}{\coqdocconstructor{S}} : \coqref{Top.paper.nat}{\coqdocinductive{nat}} \coqexternalref{:type scope:x '->' x}{http://coq.inria.fr/distrib/8.5pl3/stdlib/Coq.Init.Logic}{\coqdocnotation{\ensuremath{\rightarrow}}} \coqref{Top.paper.nat}{\coqdocinductive{nat}}.\coqdoceol
\coqdocemptyline
\coqdocemptyline
\end{minipage}
\begin{minipage}[t]{0.68\textwidth}
One can write functions by pattern matching on inductive data/proofs.
For example, below are the definitions of the predecessor function (left) and a
logical predicate (right) asserting that the input is zero.

\end{minipage}
%
%
%
%

\vspace{0.15cm}
\noindent
\begin{minipage}[t]{0.5\textwidth}
\coqdocnoindent
\coqdockw{Definition} \coqdef{Top.paper.pred}{pred}{\coqdocdefinition{pred}} (\coqdocvar{n}:\coqref{Top.paper.nat}{\coqdocinductive{nat}}) : \coqref{Top.paper.nat}{\coqdocinductive{nat}} :=\coqdoceol
\coqdocindent{1.00em}
\coqdockw{match} \coqdocvariable{n} \coqdockw{with}\coqdoceol
\coqdocindent{1.00em}
\ensuremath{|} \coqref{Top.paper.O}{\coqdocconstructor{O}}  \ensuremath{\Rightarrow} \coqref{Top.paper.O}{\coqdocconstructor{O}}\coqdoceol
\coqdocindent{1.00em}
\ensuremath{|} \coqref{Top.paper.S}{\coqdocconstructor{S}} \coqdocvar{n} \ensuremath{\Rightarrow} \coqdocvariable{n}\coqdoceol
\coqdocindent{1.00em}
\coqdockw{end}.\coqdoceol
\coqdocemptyline
\coqdocemptyline
\end{minipage}
\begin{minipage}[t]{0.5\textwidth}
\coqdocnoindent
\coqdockw{Definition} \coqdef{Top.paper.isZero}{isZero}{\coqdocdefinition{isZero}} (\coqdocvar{n}:\coqref{Top.paper.nat}{\coqdocinductive{nat}}) : \coqdockw{Prop} :=\coqdoceol
\coqdocindent{1.00em}
\coqdockw{match} \coqdocvariable{n} \coqdockw{with}\coqdoceol
\coqdocindent{1.00em}
\ensuremath{|} \coqref{Top.paper.O}{\coqdocconstructor{O}}  \ensuremath{\Rightarrow} \coqexternalref{True}{http://coq.inria.fr/distrib/8.5pl3/stdlib/Coq.Init.Logic}{\coqdocinductive{True}}\coqdoceol
\coqdocindent{1.00em}
\ensuremath{|} \coqref{Top.paper.S}{\coqdocconstructor{S}} \coqdocvar{\_} \ensuremath{\Rightarrow} \coqexternalref{False}{http://coq.inria.fr/distrib/8.5pl3/stdlib/Coq.Init.Logic}{\coqdocinductive{False}}\coqdoceol
\coqdocindent{1.00em}
\coqdockw{end}.\coqdoceol
\coqdocemptyline
\coqdocemptyline
\end{minipage}

\label{sec:anyrel:compare}
\citet{Bernardy.Jansson.ea2012} presented two ways to translate inductive types and pattern matching:
the inductive style translation and the deductive style translation.
The two methods are, according to the authors, isomorphic in their Agda-like
setting where there is no universe analogous to \coqdockw{Prop}.
However, in Coq, as we explain next, the deductive style is more
suitable for translating inductive types, and the inductive style is the only
choice (among the two) for inductive propositions.
Also, in the next subsection,
we will uncover and fix a subtle flaw in the deductive-style translation.
%
Below, for \coqRefInductive{Top.paper}{nat}, we have the inductive-style translation (left) and the deductive-style
translation (right).\\
\noindent
\begin{minipage}[t]{0.6\textwidth}
\coqdocnoindent
\coqdockw{Inductive} \coqdef{Top.paper.nat R}{\trel{nat}}{\coqdocinductive{\trel{nat}}} : \coqref{Top.paper.nat}{\coqdocinductive{nat}} \coqexternalref{:type scope:x '->' x}{http://coq.inria.fr/distrib/8.5pl3/stdlib/Coq.Init.Logic}{\coqdocnotation{\ensuremath{\rightarrow}}} \coqref{Top.paper.nat}{\coqdocinductive{nat}} \coqexternalref{:type scope:x '->' x}{http://coq.inria.fr/distrib/8.5pl3/stdlib/Coq.Init.Logic}{\coqdocnotation{\ensuremath{\rightarrow}}} $\hat{\coqdockw{Set}}$ :=\coqdoceol
\coqdocnoindent
\ensuremath{|} \coqdef{Top.paper.O R}{\trel{O}}{\coqdocconstructor{\trel{O}}} : \coqref{Top.paper.nat R}{\coqdocinductive{\trel{nat}}} \coqref{Top.paper.O}{\coqdocconstructor{O}} \coqref{Top.paper.O}{\coqdocconstructor{O}}\coqdoceol
\coqdocnoindent
\ensuremath{|} \coqdef{Top.paper.S R}{\trel{S}}{\coqdocconstructor{\trel{S}}} : \coqdockw{\ensuremath{\forall}} \coqdocvar{n} \coqdocvar{\tprime{n}} : \coqref{Top.paper.nat}{\coqdocinductive{nat}}, \coqref{Top.paper.nat R}{\coqdocinductive{\trel{nat}}} \coqdocvariable{n} \coqdocvariable{\tprime{n}} \coqexternalref{:type scope:x '->' x}{http://coq.inria.fr/distrib/8.5pl3/stdlib/Coq.Init.Logic}{\coqdocnotation{\ensuremath{\rightarrow}}} \coqref{Top.paper.nat R}{\coqdocinductive{\trel{nat}}} (\coqref{Top.paper.S}{\coqdocconstructor{S}} \coqdocvariable{n}) (\coqref{Top.paper.S}{\coqdocconstructor{S}} \coqdocvariable{\tprime{n}}).\coqdoceol
\end{minipage}
\begin{minipage}[t]{0.6\textwidth}
\coqdocnoindent
\coqdockw{Fixpoint} \coqdef{Top.paper.Ded.nat
R}{\trel{nat}}{\coqdocdefinition{\trel{nat}}} (\coqdocvar{n} \coqdocvar{\tprime{n}} : \coqref{Top.paper.nat}{\coqdocinductive{nat}}) : $\hat{\coqdockw{Set}}$ :=\coqdoceol
\coqdocnoindent
\coqdockw{match} \coqdocvariable{n},\coqdocvariable{\tprime{n}} \coqdockw{with}\coqdoceol
\coqdocnoindent
\ensuremath{|} \coqref{Top.paper.O}{\coqdocconstructor{O}}, \coqref{Top.paper.O}{\coqdocconstructor{O}} \ensuremath{\Rightarrow} \coqexternalref{True}{http://coq.inria.fr/distrib/8.5pl3/stdlib/Coq.Init.Logic}{\coqdocinductive{True}}\coqdoceol
\coqdocnoindent
\ensuremath{|} \coqref{Top.paper.S}{\coqdocconstructor{S}} \coqdocvar{m}, \coqref{Top.paper.S}{\coqdocconstructor{S}} \coqdocvar{\tprime{m}} \ensuremath{\Rightarrow} \coqref{Top.paper.nat R}{\coqdocdefinition{\trel{nat}}} \coqdocvar{m} \coqdocvar{\tprime{m}}\coqdoceol
\coqdocnoindent
\ensuremath{|} \coqdocvar{\_},\coqdocvar{\_} \ensuremath{\Rightarrow} \coqexternalref{False}{http://coq.inria.fr/distrib/8.5pl3/stdlib/Coq.Init.Logic}{\coqdocinductive{False}}
\coqdoceol\coqdocnoindent
\coqdockw{end}.\coqdoceol
\coqdocnoindent
\coqdockw{Definition} \coqdef{Top.paper.Ded.O R}{\trel{O}}{\coqdocdefinition{\trel{O}}} : \coqref{Top.paper.Ded.nat R}{\coqdocdefinition{\trel{nat}}} \coqref{Top.paper.O}{\coqdocconstructor{O}} \coqref{Top.paper.O}{\coqdocconstructor{O}} := \coqexternalref{I}{http://coq.inria.fr/distrib/8.5pl3/stdlib/Coq.Init.Logic}{\coqdocconstructor{I}}.\coqdoceol
\coqdocnoindent
\coqdockw{Definition} \coqdef{Top.paper.Ded.S R}{\trel{S}}{\coqdocdefinition{\trel{S}}} (\coqdocvar{n} \coqdocvar{\tprime{n}} : \coqref{Top.paper.nat}{\coqdocinductive{nat}}) (\coqdocvar{\trel{n}} : \coqref{Top.paper.Ded.nat R}{\coqdocdefinition{\trel{nat}}} \coqdocvariable{n} \coqdocvariable{\tprime{n}}) \coqdoceol
\coqdocindent{0.50em}
: \coqref{Top.paper.Ded.nat R}{\coqdocdefinition{\trel{nat}}} (\coqref{Top.paper.S}{\coqdocconstructor{S}} \coqdocvariable{n}) (\coqref{Top.paper.S}{\coqdocconstructor{S}} \coqdocvariable{\tprime{n}}) := \coqdocvariable{\trel{n}}.\coqdoceol
\coqdocemptyline
\coqdocemptyline
\end{minipage}
The inductive-style translation is straightforward.
Roughly speaking, given an inductive
\coqdocinductive{$I$}:$T$, it introduces a new inductive
\coqdocinductive{\trel{$I$}}:\ptranslate{$T$} \coqdocinductive{$I$}
\coqdocinductive{$I$}.
For each constructor \coqdocconstructor{c}:$C$, 
\coqdocinductive{\trel{$I$}} has the constructor
\coqdocconstructor{\trel{c}}:\ptranslate{$C$} \coqdocconstructor{c}
\coqdocconstructor{c}.
In both the styles, \ptranslate{} is extended to define
\ptranslate{\coqdocinductive{$I$}}:= \coqdocinductive{\trel{$I$}}
and
\ptranslate{\coqdocconstructor{c}}:= \coqdocconstructor{\trel{c}}.

The deductive style translation defines the same relation by structural recursion.
The constructors are translated separately. {\CoqTrueI}:{\CoqTrue} is the
constructor of the inductively defined proposition {\CoqTrue}.
%
%

The translation of Coq's \coqdockw{match} construct 
depends on how the type of the discriminee, which must be inductive (or coinductive), is translated.
Below, we have the inductive-style (left) and the deductive-style (right) 
translation of the above-defined predecessor function.
Again, the inductive-style translation is straightforward. We just translate
each subterm of the \coqdockw{match} construct.
The deductive style translation of an inductive type is \emph{not} an inductively defined type.
Thus, in the deductive style, we cannot do a pattern match on the translation of the discriminee.
Instead, we pattern match on the original discriminee \coqdocvar{n} and \coqdocvar{\tprime{n}}. 
In the cases when the constructors are different, the type of the argument
\coqdocvarR{n} computes to {\CoqFalse} 
(see the last branch in the definition of \coqRefDefn{Top.Paper.Ded}{\trel{nat}}). 
For any type \coqdocvar{T},
and  \coqdocvar{p}:{\CoqFalse}, 
{\CoqFalseRect} \coqdocvar{T} \coqdocvar{p} has type \coqdocvar{T}.
(Readers who find it odd that we apply \coqdocvar{\trel{n}} to the 
\coqdockw{match} term, and lambda bind it with
refined types in each branch
may wish to read
``The One Rule of Dependent Pattern Matching in
Coq''~\citep[Sec 8.2]{Chlipala2011}.)

\noindent
\begin{minipage}[t]{0.45\textwidth}
\coqdockw{Definition} \coqdef{Top.paper.pred R}{\trel{pred}}{\coqdocdefinition{\trel{pred}}} 
(\coqdocvar{n} \coqdocvar{\tprime{n}} : \coqref{Top.paper.nat}{\coqdocinductive{nat}}) 
\coqdoceol\coqdocindent{0.50em}
(\coqdocvar{\trel{n}} : \coqref{Top.paper.nat R}{\coqdocinductive{\trel{nat}}} \coqdocvariable{n} \coqdocvariable{\tprime{n}}) 
: \coqref{Top.paper.nat R}{\coqdocinductive{\trel{nat}}} (\coqref{Top.paper.pred}{\coqdocdefinition{pred}} \coqdocvariable{n}) (\coqref{Top.paper.pred}{\coqdocdefinition{pred}} \coqdocvariable{\tprime{n}}) 
\coqdoceol\coqdocnoindent
:= 
\coqdockw{match} \coqdocvariable{\trel{n}} \coqdockw{with}\coqdoceol
\coqdocnoindent
\ensuremath{|} \coqref{Top.paper.O R}{\coqdocconstructor{\trel{O}}} \ensuremath{\Rightarrow} \coqref{Top.paper.O R}{\coqdocconstructor{\trel{O}}}\coqdoceol
\coqdocnoindent
\ensuremath{|} \coqref{Top.paper.S R}{\coqdocconstructor{\trel{S}}} \coqdocvar{m} \coqdocvar{\tprime{m}} \coqdocvar{\trel{m}} \ensuremath{\Rightarrow} \coqdocvariable{\trel{m}}\coqdoceol
\coqdocnoindent
\coqdockw{end}.\coqdoceol
\end{minipage}
\begin{minipage}[t]{0.55\textwidth}
\coqdocnoindent
\coqdockw{Definition} \coqdef{Top.paper.Ded.pred R}{\trel{pred}}{\coqdocdefinition{\trel{pred}}} (\coqdocvar{n} \coqdocvar{\tprime{n}} : \coqref{Top.paper.nat}{\coqdocinductive{nat}}) (\coqdocvar{\trel{n}}: \coqref{Top.paper.Ded.nat R}{\coqdocdefinition{\trel{nat}}} \coqdocvariable{n} \coqdocvariable{\tprime{n}})\coqdoceol
\coqdocindent{0.50em}
: \coqref{Top.paper.Ded.nat R}{\coqdocdefinition{\trel{nat}}} (\coqref{Top.paper.pred}{\coqdocdefinition{pred}} \coqdocvariable{n}) (\coqref{Top.paper.pred}{\coqdocdefinition{pred}} \coqdocvariable{\tprime{n}}) :=
(\coqdockw{match} \coqdocvariable{n}, \coqdocvariable{\tprime{n}} 
\coqdockw{return} 
\coqdoceol\coqdocnoindent
\coqexternalref{:type scope:x '->' x}{http://coq.inria.fr/distrib/8.5pl3/stdlib/Coq.Init.Logic}{\coqdocnotation{(}}\coqref{Top.paper.Ded.nat R}{\coqdocdefinition{\trel{nat}}} \coqdocvariable{n} \coqdocvariable{\tprime{n}}\coqexternalref{:type scope:x '->' x}{http://coq.inria.fr/distrib/8.5pl3/stdlib/Coq.Init.Logic}{\coqdocnotation{)}} \coqexternalref{:type scope:x '->' x}{http://coq.inria.fr/distrib/8.5pl3/stdlib/Coq.Init.Logic}{\coqdocnotation{\ensuremath{\rightarrow}}} \coqref{Top.paper.Ded.nat R}{\coqdocdefinition{\trel{nat}}} (\coqref{Top.paper.pred}{\coqdocdefinition{pred}} \coqdocvariable{n}) (\coqref{Top.paper.pred}{\coqdocdefinition{pred}} \coqdocvariable{\tprime{n}})
\coqdockw{with}
\coqdoceol\coqdocnoindent
\ensuremath{|} \coqref{Top.paper.O}{\coqdocconstructor{O}}, \coqref{Top.paper.O}{\coqdocconstructor{O}} \ensuremath{\Rightarrow} {\coqdocnotation{\ensuremath{\lambda}}} {\coqdocnotation{(}}\coqdocvar{\trel{n}}: \coqref{Top.paper.Ded.nat R}{\coqdocdefinition{\trel{nat}}} \coqref{Top.paper.O}{\coqdocconstructor{O}} \coqref{Top.paper.O}{\coqdocconstructor{O}}{\coqdocnotation{),}} \coqref{Top.paper.Ded.O R}{\coqdocdefinition{\trel{O}}}\coqdoceol
\coqdocnoindent
\ensuremath{|} \coqref{Top.paper.S}{\coqdocconstructor{S}} \coqdocvar{m}, \coqref{Top.paper.S}{\coqdocconstructor{S}} \coqdocvar{\tprime{m}} \ensuremath{\Rightarrow} {\coqdocnotation{\ensuremath{\lambda}}} {\coqdocnotation{(}}\coqdocvar{\trel{n}}: \coqref{Top.paper.Ded.nat R}{\coqdocdefinition{\trel{nat}}} (\coqref{Top.paper.S}{\coqdocconstructor{S}} \coqdocvar{m}) (\coqref{Top.paper.S}{\coqdocconstructor{S}} \coqdocvar{\tprime{m}}){\coqdocnotation{),}} \coqdocvariable{\trel{n}}\coqdoceol
\coqdocnoindent
\ensuremath{|} \coqdocvar{\_}, \coqdocvar{\_} \ensuremath{\Rightarrow} {\coqdocnotation{\ensuremath{\lambda}}} \coqdocvar{\trel{n}}{\coqdocnotation{,}} \coqexternalref{False}{http://coq.inria.fr/distrib/8.5pl3/stdlib/Coq.Init.Logic}{\coqdocdefinition{False\_rect}} \coqdocvar{\_} \coqdocvariable{\trel{n}}\coqdoceol
\coqdocnoindent
\coqdockw{end}) \coqdocvariable{\trel{n}}.\coqdoceol
\end{minipage}

Note that in the inductive style translation, we pattern match on 
the translated discriminee, whose type is the translated inductive (fully applied).
To ensure the consistency of the proof irrelevance axiom, Coq 
has a proof-elimination restriction that ensures that 
one can pattern 
match on proofs to \emph{only} create proofs. (There is an exception called
singleton elimination, which we describe in the next subsection.)
Recall that a term \coqdocvar{p} is a proof iff its type's type is \coqdockw{Prop} (\coqdocvar{p}'s type is a proposition).
If we define $\hat{\coqdockw{Set}}$ := \coqdockw{Prop}, the above inductive-style
translation of \coqRefDefn{Top.paper}{pred} is well-typed in Coq, 
because it matches on proofs to create proofs.
However, the inductive-style
translation of the above-defined \coqRefDefn{Top.paper}{isZero} predicate 
violates the proof-elimination restriction:
\begin{lemma}
When $\hat{\coqdockw{Set}}$ := \coqdockw{Prop}, the inductive style translation of large elimination
can be ill typed. 
\end{lemma}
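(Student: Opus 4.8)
The plan is to exhibit a single large elimination whose inductive-style translation fails to typecheck once $\hat{\coqdockw{Set}} := \coqdockw{Prop}$, and the natural witness is the predicate \coqdocdefinition{isZero} defined above. Unlike \coqdocdefinition{pred}, which pattern matches on \coqdocvar{n}:\coqdocinductive{nat} to produce more data, \coqdocdefinition{isZero} matches on \coqdocvar{n} to produce a \coqdockw{Prop}; since \coqdockw{Prop} is a universe rather than a datatype, this is a genuine large elimination, and it is exactly the construct the lemma claims the translation mishandles.

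First I would unfold the inductive-style translation of \coqdocdefinition{isZero}. By the $\lambda$- and \coqdockw{match}-cases of the translation, \ptranslate{\coqdocdefinition{isZero}} binds \coqdocvar{n}, \coqdocvar{\tprime{n}} and \coqdocvar{\trel{n}}:\coqdocinductive{\trel{nat}} \coqdocvar{n} \coqdocvar{\tprime{n}}, and then pattern matches on the translated discriminee \coqdocvar{\trel{n}}, returning \ptranslate{\CoqTrue} in the \coqdocconstructor{\trel{O}} branch and \ptranslate{\CoqFalse} in the \coqdocconstructor{\trel{S}} branch. Two universes then need to be read off. With $\hat{\coqdockw{Set}} := \coqdockw{Prop}$, the discriminee's type \coqdocinductive{\trel{nat}} \coqdocvar{n} \coqdocvar{\tprime{n}} is a \coqdockw{Prop}, so \coqdocvar{\trel{n}} is a proof. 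The motive of the match, on the other hand, is the relation for the result of \coqdocdefinition{isZero}, namely \ptranslate{\coqdockw{Prop}} (\coqdocdefinition{isZero} \coqdocvar{n}) (\coqdocdefinition{isZero} \coqdocvar{\tprime{n}}), which $\beta$-reduces to (\coqdocdefinition{isZero} \coqdocvar{n}) $\rightarrow$ (\coqdocdefinition{isZero} \coqdocvar{\tprime{n}}) $\rightarrow$ \coqdockw{Prop} (using $\hat{\coqdockw{Prop}} = \coqdockw{Prop}$). Since \coqdockw{Prop}:\coqdockw{Type}$_1$, this product lives in \coqdockw{Type}$_1$ and is therefore not itself a \coqdockw{Prop}.

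I would then invoke Coq's proof-elimination restriction: a match on a proof may only construct another proof, the sole exception being singleton elimination, which requires the eliminated inductive proposition to have at most one constructor (with all of its arguments being proofs). Here \coqdocinductive{\trel{nat}} has two constructors, \coqdocconstructor{\trel{O}} and \coqdocconstructor{\trel{S}}, mirroring the two constructors of \coqdocinductive{nat}, so the singleton exception does not apply. Hence the translated match eliminates the proof \coqdocvar{\trel{n}} into \coqdockw{Type}$_1$, violating the restriction; the translation is rejected by Coq, which proves the lemma.

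The main obstacle is pinning down the exact sort of the match motive: one must check that the relation attached to a \coqdockw{Prop}-valued result is itself a proper type (in \coqdockw{Type}$_1$) rather than a proposition, and separately confirm that \coqdocinductive{\trel{nat}} genuinely fails the singleton criterion. Both are immediate from the definitions of the translation and of Coq's product-formation and elimination rules, so once these two universe-bookkeeping facts are settled the argument is complete.
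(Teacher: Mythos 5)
Your proposal is correct and follows essentially the same route as the paper: it uses \coqdocdefinition{isZero} as the witness, observes that the inductive-style translation matches on the proof \coqdocvar{\trel{n}} of \coqdocinductive{\trel{nat}} \coqdocvar{n} \coqdocvar{\tprime{n}} (a \coqdockw{Prop} once $\hat{\coqdockw{Set}} := \coqdockw{Prop}$) while the return type (\coqdocdefinition{isZero} \coqdocvar{n}) $\rightarrow$ (\coqdocdefinition{isZero} \coqdocvar{\tprime{n}}) $\rightarrow$ \coqdockw{Prop} is not itself a proposition, so the proof-elimination restriction is violated. Your added check that \coqdocinductive{\trel{nat}} fails the singleton-elimination criterion (two constructors) is a welcome explicit detail that the paper leaves implicit.
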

\begin{proof}
Below is the inductive-style translation of the above-defined \coqRefDefn{Top.paper}{isZero} predicate:

\noindent
\coqdockw{Definition} \coqdef{Top.paper.isZero
R}{\trel{isZero}}{\coqdocdefinition{\trel{isZero}}} (\coqdocvar{n}
\coqdocvar{\tprime{n}} : \coqref{Top.paper.nat}{\coqdocinductive{nat}}) (\coqdocvar{\trel{n}} : \coqref{Top.paper.nat R}{\coqdocinductive{\trel{nat}}} \coqdocvariable{n} \coqdocvariable{\tprime{n}})
: \coqexternalref{:type scope:x '->' x}{http://coq.inria.fr/distrib/8.5pl3/stdlib/Coq.Init.Logic}{\coqdocnotation{(}}\coqref{Top.paper.isZero}{\coqdocdefinition{isZero}} \coqdocvariable{n}\coqexternalref{:type scope:x '->' x}{http://coq.inria.fr/distrib/8.5pl3/stdlib/Coq.Init.Logic}{\coqdocnotation{)}} \coqexternalref{:type scope:x '->' x}{http://coq.inria.fr/distrib/8.5pl3/stdlib/Coq.Init.Logic}{\coqdocnotation{\ensuremath{\rightarrow}}} \coqexternalref{:type scope:x '->' x}{http://coq.inria.fr/distrib/8.5pl3/stdlib/Coq.Init.Logic}{\coqdocnotation{(}}\coqref{Top.paper.isZero}{\coqdocdefinition{isZero}} \coqdocvariable{\tprime{n}}\coqexternalref{:type scope:x '->' x}{http://coq.inria.fr/distrib/8.5pl3/stdlib/Coq.Init.Logic}{\coqdocnotation{)}} \coqexternalref{:type scope:x '->' x}{http://coq.inria.fr/distrib/8.5pl3/stdlib/Coq.Init.Logic}{\coqdocnotation{\ensuremath{\rightarrow}}} \coqdockw{Prop} :=\coqdoceol
\coqdocnoindent
\coqdockw{match} \coqdocvariable{\trel{n}} \coqdockw{with}\coqdoceol
\coqdocnoindent
\ensuremath{|} \coqref{Top.paper.O R}{\coqdocconstructor{\trel{O}}} \ensuremath{\Rightarrow} \coqref{Top.paper.True R}{\coqdocinductive{\trel{True}}}\coqdoceol
\coqdocnoindent
\ensuremath{|} \coqref{Top.paper.S R}{\coqdocconstructor{\trel{S}}} \coqdocvar{\_} \coqdocvar{\_} \coqdocvar{\_} \ensuremath{\Rightarrow} \coqref{Top.paper.False R}{\coqdocinductive{\trel{False}}}\coqdoceol
\coqdocnoindent
\coqdockw{end}.\coqdoceol

\coqdocnoindent
It pattern-matches on a proof
(\coqdocvar{\trel{n}} has type \coqRefDefn{Top.paper}{\trel{nat}} \coqdocvar{n} \coqdocvar{n}, which has type \coqdockw{Prop})
to produce a relation, and not a proof.
Note that 
\coqexternalref{:type scope:x '->' x}{http://coq.inria.fr/distrib/8.5pl3/stdlib/Coq.Init.Logic}{\coqdocnotation{(}}\coqref{Top.paper.isZero}{\coqdocdefinition{isZero}} \coqdocvariable{n}\coqexternalref{:type scope:x '->' x}{http://coq.inria.fr/distrib/8.5pl3/stdlib/Coq.Init.Logic}{\coqdocnotation{)}} \coqexternalref{:type scope:x '->' x}{http://coq.inria.fr/distrib/8.5pl3/stdlib/Coq.Init.Logic}{\coqdocnotation{\ensuremath{\rightarrow}}} \coqexternalref{:type scope:x '->' x}{http://coq.inria.fr/distrib/8.5pl3/stdlib/Coq.Init.Logic}{\coqdocnotation{(}}\coqref{Top.paper.isZero}{\coqdocdefinition{isZero}} \coqdocvariable{\tprime{n}}\coqexternalref{:type scope:x '->' x}{http://coq.inria.fr/distrib/8.5pl3/stdlib/Coq.Init.Logic}{\coqdocnotation{)}} \coqexternalref{:type scope:x '->' x}{http://coq.inria.fr/distrib/8.5pl3/stdlib/Coq.Init.Logic}{\coqdocnotation{\ensuremath{\rightarrow}}} \coqdockw{Prop}
does \emph{not} have type \coqdockw{Prop}.
\end{proof}


%
Indeed, if Coq allowed one to match on proofs and produce the 
{\CoqTrue} proposition on one proof and the {\CoqFalse} proposition on another
(e.g., consider the definition \coqRefDefn{Top.paper}{isZero}
when \coqRefInductive{Top.Paper}{nat} is declared in the  
\coqdockw{Prop} universe), one can easily refute proof irrelevance, which says that
any two proofs of a proposition are equal (logically indistinguishable).

In contrast, the deductive-style translation doesn't suffer from this problem,
because the resultant pattern matches are on discriminees of the original
inductive type, and not the translated one (\coqRefInductive{Top.paper}{nat} has type \coqdockw{Set}, not \coqdockw{Prop}).
Thus, regarding the proof-elimination restriction,
the translatability of pattern matches in the deductive style is independent of
how we define $\hat{\coqdockw{Set}}$. Indeed,
the deductive-style translation of \coqRefDefn{Top.paper}{isZero} happily
typechecks when we define $\hat{\coqdockw{Set}}$ := \coqdockw{Prop}
or $\hat{\coqdockw{Set}}$ := \coqdockw{Set}.
\emph{Thus, the deductive-style translation of inductive types and corresponding pattern matching
allows more flexibility in
the choice of $\hat{\coqdockw{Set}}$}.
Although propositions (members of the universe \coqdockw{Prop}) are also called types in the literature,
in this paper, the word ``type'' usually refers to terms that are members of \coqdockw{Set} or \coqdockw{Type}$_i$, but \emph{not} \coqdockw{Prop}. 

The deductive style has other advantages over the inductive style:
In the inductive style, proofs that are by induction on variables of
the translated inductive type are often difficult.
We explain this at the beginning of \secref{sec:anyrel:ind}.
Also, the deductive-style translation 
enables proofs by computation, e.g., \coqRefDefnR{Top.paper}{nat} computes when
the two numbers are in normal form.

One can also inductively define logical propositions in Coq.
The story for translating inductive propositions is the opposite:
the deductive-style violates the proof-elimination restriction.

\begin{lemma}
The deductive-style translation of inductive propositions can be ill-typed.
\end{lemma}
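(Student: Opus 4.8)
The plan is to produce a single counterexample, since the lemma only claims that the deductive-style translation \emph{can} be ill-typed. I would take the simplest \emph{non-singleton} inductive proposition: a $P$ of sort \coqdockw{Prop} with two constructors, say $c_1 : P$ and $c_2 : P$. The essential difference from \coqdocinductive{nat} (which the paper translated deductively without trouble) is that the inhabitants of $P$ are now \emph{proofs}, so any pattern match on them is subject to Coq's proof-elimination restriction; by contrast, for \coqdocinductive{nat} of sort \coqdockw{Set} the deductive translation matched on \coqdockw{Set}-valued discriminees, where large elimination is unrestricted.

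First I would write out the deductive-style translation following exactly the recipe used for \coqdocdefinition{\trel{nat}}: a recursive function \trel{P} taking two proofs $p, p' : P$, pattern matching on both, and returning \CoqTrue\ on matching constructors and \CoqFalse\ on mismatched ones. Its declared codomain is the target universe $\hat{s}$, so each branch produces a \emph{proposition}, i.e. an inhabitant of that universe.

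Next I would pinpoint where type checking breaks. The body is a \coqdockw{match} on $p : P$ whose branches inhabit $\hat{s}$; hence its motive returns the \emph{sort} $\hat{s}$, and since a universe such as \coqdockw{Prop} itself has type \coqdockw{Type}, the match eliminates a proof of $P$ into \coqdockw{Type} --- that is, it performs a \emph{large} elimination. The step I expect to be the main obstacle, and the one I would take care to spell out, is distinguishing this from the harmless \emph{small} elimination performed by the standard eliminator of a two-constructor proposition, which, given the same proof, produces a \emph{proof} of a \emph{fixed} target proposition and so eliminates only into \coqdockw{Prop}. Here we compute a proposition rather than prove a fixed one, so the target sort is genuinely \coqdockw{Type}.

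Finally I would invoke the proof-elimination restriction via singleton elimination (discussed in the next subsection): large elimination of an inductive proposition is permitted only when it is a singleton, i.e. has at most one constructor all of whose arguments are themselves proofs. Since $P$ has two constructors it is not a singleton, Coq rejects the match, and the deductive-style translation is ill-typed --- and this holds whatever universe one assigns to $\hat{s}$. I would close by noting the contrast that makes ``the story the opposite'' of the inductive-type case: the inductive-style translation sidesteps the restriction entirely by \emph{declaring} a fresh inductive relation rather than \emph{computing} one by matching on proofs.
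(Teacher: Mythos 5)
Your proof is correct and takes essentially the same route as the paper: the paper's counterexample is \coqdocinductive{nat} redeclared in \coqdockw{Prop} (a two-constructor, non-singleton proposition), whose deductive-style translation matches on proofs to produce propositions --- i.e.\ inhabitants of a sort --- which the proof-elimination restriction forbids. Your generic two-constructor $P$ is the same counterexample in minimal form, and your extra care in distinguishing large from small elimination and in noting that the failure is independent of the choice of $\hat{s}$ only makes explicit what the paper leaves implicit.
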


\begin{proof}
Replace \coqdockw{Set} by \coqdockw{Prop} in
the above definition of \coqRefInductive{Top.paper}{nat}.
Then, \coqRefInductive{Top.paper}{nat} can be seen as the ``True'' proposition and
its members, e.g. \coqRefConstr{Top.paper}{O}, can be considered proofs of the proposition.
The deductive-style translation of \coqRefDefn{Top.paper}{nat},
as shown above (\coqRefDefnR{Top.paper}{nat}), would then be ill-typed because it would
then match on proofs (of \coqRefDefn{Top.paper}{nat}) to produce propositions, not
proofs.
\end{proof}

In summary, \emph{the deductive-style translation
is more suitable for translating inductive types,
and the inductive-style translation is the only choice (among the 2) for inductive propositions}.
Thus, unfortunately, one has to implement both styles to translate Coq in a way
that allows $\hat{\coqdockw{Set}}$ := \coqdockw{Prop}.
This is what we do in our {\anyrel} translation, because having
$\hat{\coqdockw{Set}}$ := \coqdockw{Prop} greatly simplifies our {\isorel}
translation.

\citet[Sec. 4.4]{Keller.Lasson2012} show a third approach, which is a hybrid approach, but only for a simple example which has no indices (indices are explained in the next subsection): 
they don't provide a general translation for inductive types.
Thus we exclude (just) that part of their paper from further consideration.
Also, their implementation always uses the 
inductive-style translation and 
always chooses $\hat{\coqdockw{Set}}$ := \coqdockw{Set}.

The inductive-style translation is quite simple and well explained and 
implemented
by \citet[Sec.~4.3]{Keller.Lasson2012}.
In the next subsection, we turn our attention to the deductive-style translation, 
which is more complex than the inductive-style translation.

For the rest of this paper, we define 
$\hat{\coqdockw{Set}}$ := \coqdockw{Prop}
\footnote{
Ensuring $\hat{\coqdockw{Set}}$:=\coqdockw{Prop}, which simplifies the
{\isorel} translation,
is problematic in the presence of universe-polymorphic
inductive types, regardless of whether we choose the deductive-style translation or the inductive-style translation.
The problem arises from limitations in the expressivity of Coq's universe
polymorphism.
We discuss the problem in the Appendix~(\secref{appendix:anyrel:univPoly}).
As mentioned before, our {\isorel} translation does not work for higher
universes anyway, for much more fundamental reasons (\secref{sec:isorel:limitations}). 
}.
%
\subsection{Deductive-style Translation of Indexed-Inductive Types}
\label{sec:anyrel:ind}
The above subsection established that to ensure that $\hat{\coqdockw{Set}}$ :=
\coqdockw{Prop}, inductive types (but not inductive propositions) should be
translated in the deductive style. This section takes a closer look at our
deductive-style translation, especially 
of indexed-inductive definitions.
Our translation is inspired by the translation by \citet{Bernardy.Jansson.ea2012}.
However, while implementing it for Coq,
we found that it is even more complex than the way it was presented
in the literature~\citep{Bernardy2011,Bernardy.Jansson.ea2012}.
The additional complexity is fundamental in nature and has nothing to do with Coq.

One can mutually inductively define an infinite family of
types/propositions using Coq's indexed-inductive definitions. Below is a typical indexed-inductive definition.
The type \coqRefInductive{Top.paper}{Vec} \coqdocvar{T} \coqdocvar{m} is just like the 
type {\CoqList} \coqdocvar{T}, except that its inhabitants must have length exactly \coqdocvar{m}.

\vspace{0.15cm}
\noindent
\begin{minipage}[t]{0.5\textwidth}
\coqdocnoindent
\coqdockw{Inductive} \coqdef{Top.paper.Vec}{Vec}{\coqdocinductive{Vec}} (\coqdocvar{T}: \coqdockw{Set}): \coqdockw{\ensuremath{\forall}} (\coqdocvar{m}: \coqref{Top.paper.nat}{\coqdocinductive{nat}}), \coqdockw{Set} :=\coqdoceol
\coqdocnoindent
\ensuremath{|} \coqdef{Top.paper.nilV}{nilV}{\coqdocconstructor{nilV}}: \coqref{Top.paper.Vec}{\coqdocinductive{Vec}} \coqdocvar{T} \coqref{Top.paper.O}{\coqdocconstructor{O}}\coqdoceol
\coqdocnoindent
\ensuremath{|} \coqdef{Top.paper.consV}{consV}{\coqdocconstructor{consV}}: \coqdockw{\ensuremath{\forall}} (\coqdocvar{n}: \coqref{Top.paper.nat}{\coqdocinductive{nat}}), \coqdocvar{T} \coqexternalref{:type scope:x '->' x}{http://coq.inria.fr/distrib/8.5pl3/stdlib/Coq.Init.Logic}{\coqdocnotation{\ensuremath{\rightarrow}}} \coqref{Top.paper.Vec}{\coqdocinductive{Vec}} \coqdocvar{T} \coqdocvariable{n} \coqexternalref{:type scope:x '->' x}{http://coq.inria.fr/distrib/8.5pl3/stdlib/Coq.Init.Logic}{\coqdocnotation{\ensuremath{\rightarrow}}} \coqref{Top.paper.Vec}{\coqdocinductive{Vec}} \coqdocvar{T} (\coqref{Top.paper.S}{\coqdocconstructor{S}} \coqdocvariable{n}).\coqdoceol
\end{minipage}
\begin{minipage}[t]{0.5\textwidth}
Note that the constructor \coqRefConstr{Top.paper}{consV} takes a 
\coqRefInductive{Top.paper}{Vec} \coqdocvar{T} \coqdocvar{n}
and constructs a \coqRefInductive{Top.paper}{Vec} \coqdocvar{T} (\coqRefConstr{Top.paper}{S} \coqdocvar{n}):
the input and output are \emph{different} members of the family.
\end{minipage}
\vspace{0.15cm}
\noindent

The arguments of the type that vary in the definition are called \emph{indices}.
The other arguments are called parameters. In the above type, \coqdocvar{T} is a
parameter and \coqdocvar{m} is an index. Coq requires that the parameters be listed before ``:'' and the indices be listed
after ``:''.
In general, the definition of a member of the family can depend on the
definition of other members of the family. Thus, even if we have a variable, say \coqdocvar{v} whose type is a specific
member of an inductive family (as determined by the indices), 
to do a proof by induction on \coqdocvar{v}, one has to consider
\emph{all} the members of the family. In particular, the property that is being
proved by induction must be \emph{well-defined} for all members of the family.
This often makes such proofs difficult, because 
one needs to generalize over indices (see \secref{sec:uniformProp:indt} for an
example).
We will see below that the inductive-style translation of an inductive type with
$n$ indices produces an inductive with $3n+2$ indices!

The deductive-style translation of the above type, as presented in previous 
literature
is flawed in a subtle way.
Below, we have first the (correct) inductive-style translation \citep[page 24, middle]{Bernardy.Jansson.ea2012}
and deductive-style translation from the literature (\citep[page 21, top]{Bernardy.Jansson.ea2012},
\citet[page 31]{Bernardy2011}). We have adapted these from Agda-like syntax to
Coq.
The authors claimed that the two styles are isomorphic.

\vspace{0.15cm}
\noindent
\begin{minipage}[t]{\textwidth}
\coqdocnoindent
\coqdockw{Inductive} \coqdef{Top.paper.Vec R}{\trel{Vec}}{\coqdocinductive{\trel{Vec}}} (\coqdocvar{T} \coqdocvar{\tprime{T}} : \coqdockw{Set}) (\coqdocvar{\trel{T}} : \coqdocvariable{T} \coqexternalref{:type scope:x '->' x}{http://coq.inria.fr/distrib/8.5pl3/stdlib/Coq.Init.Logic}{\coqdocnotation{\ensuremath{\rightarrow}}} \coqdocvariable{\tprime{T}} \coqexternalref{:type scope:x '->' x}{http://coq.inria.fr/distrib/8.5pl3/stdlib/Coq.Init.Logic}{\coqdocnotation{\ensuremath{\rightarrow}}} \coqdockw{Prop})\coqdoceol
\coqdocindent{1.00em}
: \coqdockw{\ensuremath{\forall}} (\coqdocvar{m} \coqdocvar{\tprime{m}} : \coqref{Top.paper.nat}{\coqdocinductive{nat}}) (\coqdocvar{\trel{m}} : \coqref{Top.paper.nat R}{\coqdocinductive{\trel{nat}}} \coqdocvariable{m} \coqdocvariable{\tprime{m}}) (\coqdocvar{v} : \coqref{Top.paper.Vec}{\coqdocinductive{Vec}} \coqdocvar{T} \coqdocvariable{m}) (\coqdocvar{\tprime{v}} : \coqref{Top.paper.Vec}{\coqdocinductive{Vec}} \coqdocvar{\tprime{T}} \coqdocvariable{\tprime{m}}), \coqdockw{Prop} :=\coqdoceol
\coqdocnoindent
\ensuremath{|} \coqdef{Top.paper.nilV R}{\trel{nilV}}{\coqdocconstructor{\trel{nilV}}} : \coqref{Top.paper.Vec R}{\coqdocinductive{\trel{Vec}}} \coqdocvar{T} \coqdocvar{\tprime{T}} \coqdocvar{\trel{T}} \coqref{Top.paper.O}{\coqdocconstructor{O}} \coqref{Top.paper.O}{\coqdocconstructor{O}} \coqref{Top.paper.O R}{\coqdocconstructor{\trel{O}}} (\coqref{Top.paper.nilV}{\coqdocconstructor{nilV}} \coqdocvar{T}) (\coqref{Top.paper.nilV}{\coqdocconstructor{nilV}} \coqdocvar{\tprime{T}})\coqdoceol
\coqdocnoindent
\ensuremath{|} \coqdef{Top.paper.consV R}{\trel{consV}}{\coqdocconstructor{\trel{consV}}} : \coqdockw{\ensuremath{\forall}} (\coqdocvar{n} \coqdocvar{\tprime{n}} : \coqref{Top.paper.nat}{\coqdocinductive{nat}}) (\coqdocvar{\trel{n}} : \coqref{Top.paper.nat R}{\coqdocinductive{\trel{nat}}} \coqdocvariable{n} \coqdocvariable{\tprime{n}}) (\coqdocvar{t} : \coqdocvar{T}) (\coqdocvar{\tprime{t}} : \coqdocvar{\tprime{T}}),\coqdoceol
\coqdocindent{1.00em}
\coqdocvar{\trel{T}} \coqdocvariable{t} \coqdocvariable{\tprime{t}} \coqexternalref{:type scope:x '->' x}{http://coq.inria.fr/distrib/8.5pl3/stdlib/Coq.Init.Logic}{\coqdocnotation{\ensuremath{\rightarrow}}} \coqdockw{\ensuremath{\forall}} (\coqdocvar{vn} : \coqref{Top.paper.Vec}{\coqdocinductive{Vec}} \coqdocvar{T} \coqdocvariable{n}) (\coqdocvar{\tprime{vn}} : \coqref{Top.paper.Vec}{\coqdocinductive{Vec}} \coqdocvar{\tprime{T}} \coqdocvariable{\tprime{n}}),
\coqdoceol\coqdocindent{1.00em}
\coqref{Top.paper.Vec R}{\coqdocinductive{\trel{Vec}}} \coqdocvar{T} \coqdocvar{\tprime{T}} \coqdocvar{\trel{T}} \coqdocvariable{n} \coqdocvariable{\tprime{n}} \coqdocvariable{\trel{n}} \coqdocvariable{vn} \coqdocvariable{\tprime{vn}} \coqexternalref{:type scope:x '->' x}{http://coq.inria.fr/distrib/8.5pl3/stdlib/Coq.Init.Logic}{\coqdocnotation{\ensuremath{\rightarrow}}}\coqdoceol
\coqdocindent{1.00em}
\coqref{Top.paper.Vec R}{\coqdocinductive{\trel{Vec}}} \coqdocvar{T} \coqdocvar{\tprime{T}} \coqdocvar{\trel{T}} (\coqref{Top.paper.S}{\coqdocconstructor{S}} \coqdocvariable{n}) (\coqref{Top.paper.S}{\coqdocconstructor{S}} \coqdocvariable{\tprime{n}}) (\coqref{Top.paper.S R}{\coqdocconstructor{\trel{S}}} \coqdocvariable{n} \coqdocvariable{\tprime{n}} \coqdocvariable{\trel{n}}) (\coqref{Top.paper.consV}{\coqdocconstructor{consV}} \coqdocvar{T} \coqdocvariable{n} \coqdocvariable{t} \coqdocvariable{vn}) (\coqref{Top.paper.consV}{\coqdocconstructor{consV}} \coqdocvar{\tprime{T}} \coqdocvariable{\tprime{n}} \coqdocvariable{\tprime{t}} \coqdocvariable{\tprime{vn}}).\coqdoceol
\end{minipage}

\vspace{0.15cm}
\noindent
\begin{minipage}[t]{\textwidth}
\coqdockw{Fixpoint} \coqdef{Top.paper.DedV.Vec R}{\trel{Vec}}{\coqdocdefinition{\trel{Vec}}} (\coqdocvar{T} \coqdocvar{\tprime{T}} : \coqdockw{Set}) (\coqdocvar{\trel{T}} : \coqdocvariable{T} \coqexternalref{:type scope:x '->' x}{http://coq.inria.fr/distrib/8.5pl3/stdlib/Coq.Init.Logic}{\coqdocnotation{\ensuremath{\rightarrow}}}  \coqdocvariable{\tprime{T}} \coqexternalref{:type scope:x '->' x}{http://coq.inria.fr/distrib/8.5pl3/stdlib/Coq.Init.Logic}{\coqdocnotation{\ensuremath{\rightarrow}}} \coqdockw{Prop})\coqdoceol
\coqdocindent{1.00em}
(\coqdocvar{m} \coqdocvar{\tprime{m}} : \coqref{Top.paper.nat}{\coqdocinductive{nat}}) (\coqdocvar{\trel{m}} : \coqref{Top.paper.nat R}{\coqdocabbreviation{\trel{nat}}} \coqdocvariable{m} \coqdocvariable{\tprime{m}}) (\coqdocvar{v} : \coqref{Top.paper.Vec}{\coqdocinductive{Vec}} \coqdocvariable{T} \coqdocvariable{m}) (\coqdocvar{\tprime{v}} : \coqref{Top.paper.Vec}{\coqdocinductive{Vec}} \coqdocvariable{\tprime{T}} \coqdocvariable{\tprime{m}}) : \coqdockw{Prop} :=\coqdoceol
\coqdocnoindent
\coqdockw{match} \coqdocvariable{v},\coqdocvariable{\tprime{v}} \coqdockw{with}\coqdoceol
\coqdocnoindent
\ensuremath{|} \coqref{Top.paper.nilV}{\coqdocconstructor{nilV}}, \coqref{Top.paper.nilV}{\coqdocconstructor{nilV}} \ensuremath{\Rightarrow} \coqexternalref{True}{http://coq.inria.fr/distrib/8.5pl3/stdlib/Coq.Init.Logic}{\coqdocinductive{True}}\coqdoceol
\coqdocnoindent
\ensuremath{|} \coqref{Top.paper.consV}{\coqdocconstructor{consV}} \coqdocvar{n} \coqdocvar{t} \coqdocvar{vn}, \coqref{Top.paper.consV}{\coqdocconstructor{consV}} \coqdocvar{\tprime{n}} \coqdocvar{\tprime{t}} \coqdocvar{\tprime{vn}} \ensuremath{\Rightarrow}
\coqexternalref{:type scope:'x7B' x ':' x '|' x
'x7D'}{http://coq.inria.fr/distrib/8.5pl3/stdlib/Coq.Init.Specif}{\coqdocnotation{\{}}\coqdocvar{\trel{n}}
\coqexternalref{:type scope:'x7B' x ':' x '|' x
'x7D'}{http://coq.inria.fr/distrib/8.5pl3/stdlib/Coq.Init.Specif}{\coqdocnotation{:}}
\coqref{Top.paper.nat R}{\coqdocabbreviation{\trel{nat}}} \coqdocvar{n}
\coqdocvar{\tprime{n}} \coqexternalref{:type scope:'x7B' x ':' x '|' x 'x7D'}{http://coq.inria.fr/distrib/8.5pl3/stdlib/Coq.Init.Specif}{\coqdocnotation{\ensuremath{|}}} \coqdocvariable{\trel{T}} \coqdocvar{t} \coqdocvar{\tprime{t}} \coqexternalref{:type scope:x '/x5C' x}{http://coq.inria.fr/distrib/8.5pl3/stdlib/Coq.Init.Logic}{\coqdocnotation{\ensuremath{\land}}} \coqref{Top.paper.DedV.Vec R}{\coqdocdefinition{\trel{Vec}}} \coqdocvariable{T} \coqdocvariable{\tprime{T}} \coqdocvariable{\trel{T}} \coqdocvar{n} \coqdocvar{\tprime{n}} \coqdocvar{\trel{n}} \coqdocvar{vn} \coqdocvar{\tprime{vn}}\coqexternalref{:type scope:'x7B' x ':' x '|' x 'x7D'}{http://coq.inria.fr/distrib/8.5pl3/stdlib/Coq.Init.Specif}{\coqdocnotation{\}}}\coqdoceol
\coqdocnoindent
\ensuremath{|} \coqdocvar{\_}, \coqdocvar{\_} \ensuremath{\Rightarrow} \coqexternalref{False}{http://coq.inria.fr/distrib/8.5pl3/stdlib/Coq.Init.Logic}{\coqdocinductive{False}}\coqdoceol
\coqdocnoindent
\coqdockw{end}.\coqdoceol
\end{minipage}

\vspace{0.15cm}
\noindent
The argument \coqdocvar{\trel{m}} is \emph{unused}
and \emph{irrelevant} in the deductive-style translation
(\coqRefDefnR{Top.paper.DedV}{Vec}), which is a recursive function (and not an
inductive).
Thus, one can prove by induction on \coqdocvar{v} that forall 
\coqdocvar{m}
\coqdocvar{\tprime{m}},
\coqdocvar{m\ensuremath{_{R1}}}
\coqdocvar{m\ensuremath{_{R2}}},
\coqdocvar{v},
\coqdocvar{\tprime{v}},
the proposition 
\coqRefDefnR{Top.paper.DedV}{Vec} 
\coqdocvar{m}
\coqdocvar{\tprime{m}}
\coqdocvar{m\ensuremath{_{R1}}}
\coqdocvar{v} \coqdocvar{\tprime{v}}
is equal to 
\coqRefDefnR{Top.paper.DedV}{Vec} 
\coqdocvar{m}
\coqdocvar{\tprime{m}}
\coqdocvar{m\ensuremath{_{R2}}}
\coqdocvar{v} \coqdocvar{\tprime{v}}.
This is not the case in the inductive-style translation. For example, 
the constructor \coqRefConstr{Top.paper}{\trel{nilV}} \emph{requires}
\coqdocvar{\trel{m}} to be (definitionally) equal to 
\coqRefConstr{Top.paper}{\trel{O}}.
Thus, to prove that the two styles are isomorphic in this example,
one needs to at least prove that $\forall$ (\coqdocvar{\trel{m}}:
\coqRefDefnR{Top.paper}{nat} \coqRefConstr{Top.paper}{O}
\coqRefConstr{Top.paper}{O}),  \coqdocvar{\trel{m}} =
\coqRefConstrR{Top.paper}{O}.
It just so happens that this is provable for this example of
\coqRefInductive{Top.paper}{Vec}.
However, in general, the index type may not be concrete: it may be a type
variable. Also, it may be in a higher universe, in which case, its relation need
not be in \coqdockw{Prop}.
In that case, \emph{we} get to pick \emph{any} relation
for the type, and we can easily pick a relation $R$ such that for some
\coqdocvar{x} and \coqdocvar{y}, there are multiple distinct inhabitants in the
type $R$ \coqdocvar{x} \coqdocvar{y}. For example, we can pick $R$ := $\lambda$ \coqdocvar{x} \coqdocvar{y}, {\CoqBool}.
Also, \emph{we will see in \secref{sec:anyrel:match} that the translation of \coqdockw{match} terms
requires proofs like the above},  that
$\forall$
(\coqdocvar{\trel{m}}:
\coqRefDefnR{Top.paper}{nat} \coqRefConstr{Top.paper}{O}
\coqRefConstr{Top.paper}{O}),  \coqdocvar{\trel{m}} =
\coqRefConstrR{Top.paper}{O}. Thus, even when
provable, 
\ptranslate{} will need to cook up these proofs: it is not clear
how to do that automatically.

Thus we strengthen the propositions returned in the
deductive-style translation to add the above-mentioned equality constraints.
Here is the corrected version:

\vspace{0.05cm}
\noindent
\coqdocnoindent
\coqdockw{Fixpoint} \coqdef{Top.paper.DedVC.Vec R}{\trel{Vec}}{\coqdocdefinition{\trel{Vec}}} (\coqdocvar{T} \coqdocvar{\tprime{T}} : \coqdockw{Set}) (\coqdocvar{\trel{T}} : \coqdocvariable{T} \coqexternalref{:type scope:x '->' x}{http://coq.inria.fr/distrib/8.5pl3/stdlib/Coq.Init.Logic}{\coqdocnotation{\ensuremath{\rightarrow}}}  \coqdocvariable{\tprime{T}} \coqexternalref{:type scope:x '->' x}{http://coq.inria.fr/distrib/8.5pl3/stdlib/Coq.Init.Logic}{\coqdocnotation{\ensuremath{\rightarrow}}} \coqdockw{Prop})\coqdoceol
\coqdocindent{1.00em}
(\coqdocvar{m} \coqdocvar{\tprime{m}} : \coqref{Top.paper.nat}{\coqdocinductive{nat}}) (\coqdocvar{\trel{m}} : \coqref{Top.paper.nat R}{\coqdocabbreviation{\trel{nat}}} \coqdocvariable{m} \coqdocvariable{\tprime{m}}) (\coqdocvar{v} : \coqref{Top.paper.Vec}{\coqdocinductive{Vec}} \coqdocvariable{T} \coqdocvariable{m}) (\coqdocvar{\tprime{v}} : \coqref{Top.paper.Vec}{\coqdocinductive{Vec}} \coqdocvariable{\tprime{T}} \coqdocvariable{\tprime{m}}) : \coqdockw{Prop} :=\coqdoceol
\coqdocnoindent
(\coqdockw{match} \coqdocvariable{v},\coqdocvariable{\tprime{v}} \coqdockw{with}\coqdoceol
\coqdocnoindent
\ensuremath{|} \coqref{Top.paper.nilV}{\coqdocconstructor{nilV}}, \coqref{Top.paper.nilV}{\coqdocconstructor{nilV}} \ensuremath{\Rightarrow} {\coqdocnotation{\ensuremath{\lambda}}} 
\coqdocvar{\trel{m}}{\coqdocnotation{,}}
\highlight{{\coqdocvariable{\trel{m}} \coqexternalref{:type scope:x '='
x}{http://coq.inria.fr/distrib/8.5pl3/stdlib/Coq.Init.Logic}{\coqdocnotation{=}}
\coqref{Top.paper.Ded.O R}{\coqdocabbreviation{\trel{O}}}}}\coqdoceol

\coqdocnoindent
\ensuremath{|} \coqref{Top.paper.consV}{\coqdocconstructor{consV}} \coqdocvar{n} \coqdocvar{t} \coqdocvar{vn}, \coqref{Top.paper.consV}{\coqdocconstructor{consV}} \coqdocvar{\tprime{n}} \coqdocvar{\tprime{t}} \coqdocvar{\tprime{vn}} \ensuremath{\Rightarrow} {\coqdocnotation{\ensuremath{\lambda}}} \coqdocvar{\trel{m}}{\coqdocnotation{,}}\coqdoceol
\coqdocindent{1.00em}
\coqexternalref{:type scope:'x7B' x ':' x 'x26' x
'x7D'}{http://coq.inria.fr/distrib/8.5pl3/stdlib/Coq.Init.Specif}{\coqdocnotation{\{}}\coqdocvar{\trel{n}}
\coqexternalref{:type scope:'x7B' x ':' x 'x26' x
'x7D'}{http://coq.inria.fr/distrib/8.5pl3/stdlib/Coq.Init.Specif}{\coqdocnotation{:}}
\coqref{Top.paper.nat R}{\coqdocabbreviation{\trel{nat}}} \coqdocvar{n}
\coqdocvar{\tprime{n}} \coqexternalref{:type scope:'x7B' x ':' x 'x26' x
'x7D'}{http://coq.inria.fr/distrib/8.5pl3/stdlib/Coq.Init.Specif}{\coqdocnotation{\&}}
\coqdocvariable{\trel{T}} \coqdocvar{t} \coqdocvar{\tprime{t}}
\coqexternalref{:type scope:x '/x5C'
x}{http://coq.inria.fr/distrib/8.5pl3/stdlib/Coq.Init.Logic}{\coqdocnotation{\ensuremath{\land}}}
\coqref{Top.paper.DedVC.Vec R}{\coqdocdefinition{\trel{Vec}}} \coqdocvariable{T} \coqdocvariable{\tprime{T}} \coqdocvariable{\trel{T}} \coqdocvar{n} \coqdocvar{\tprime{n}} \coqdocvar{\trel{n}} \coqdocvar{vn} \coqdocvar{\tprime{vn}} \coqexternalref{:type scope:x '/x5C' x}{http://coq.inria.fr/distrib/8.5pl3/stdlib/Coq.Init.Logic}{\coqdocnotation{\ensuremath{\land}}} 
\highlight{\coqdocvariable{\trel{m}} \coqexternalref{:type scope:x '='
x}{http://coq.inria.fr/distrib/8.5pl3/stdlib/Coq.Init.Logic}{\coqdocnotation{=}}
\coqexternalref{:type scope:x '=' x}{http://coq.inria.fr/distrib/8.5pl3/stdlib/Coq.Init.Logic}{\coqdocnotation{(}}\coqref{Top.paper.Ded.S R}{\coqdocabbreviation{\trel{S}}} \coqdocvar{n} \coqdocvar{\tprime{n}} \coqdocvar{\trel{n}}\coqexternalref{:type scope:x '=' x}{http://coq.inria.fr/distrib/8.5pl3/stdlib/Coq.Init.Logic}{\coqdocnotation{)}}}\coqdocnotation{\}}
\coqdoceol
\coqdocnoindent
\ensuremath{|} \coqdocvar{\_}, \coqdocvar{\_} \ensuremath{\Rightarrow} {\coqdocnotation{\ensuremath{\lambda}}} \coqdocvar{\_} {\coqdocnotation{,}} \coqexternalref{False}{http://coq.inria.fr/distrib/8.5pl3/stdlib/Coq.Init.Logic}{\coqdocinductive{False}}\coqdoceol
\coqdocnoindent
\coqdockw{end}) \coqdocvariable{\trel{m}}.\coqdoceol
\vspace{0.05cm}
\noindent
\emph{After} adding the equality constraints, the deductive-style
translation is isomorphic to the inductive-style translation.
%
%
If an inductive constructor has recursive arguments that are functions, our
proof of the isomorphism needs the function extensionality axiom.

\emph{The {\anyrel} translation does not use any axiom.}
Preservation of reduction is typically a step in proving the abstraction theorem~\cite[Lemma 2]{Keller.Lasson2012}. 
Thus, we need to be careful in using axioms or opaque definitions because at certain places%
, they may block reduction.
(The {\isorel} translation described after this section uses axioms, but nevertheless achieves preservation of reduction.)

The \emph{only reason we add the equality constraints is that, as mentioned above, the proofs of those constraints are needed in the translation of pattern matches}.
These constraints added significant complexity to our implementation of \ptranslate{}, even after we were able to simplify the constraints a bit,
as explained in the rest of this subsection.

In general, an indexed-inductive type may have several indices.
Also, the \emph{types of} the later indices may be \emph{dependent} on the previous indices
or parameters. Below is an example:

\noindent
\begin{minipage}[t]{\textwidth}
\coqdockw{Inductive} \coqdef{Top.paper.isNil}{isNil}{\coqdocinductive{isNil}} : \coqdockw{\ensuremath{\forall}} (\coqdocvar{n}:\coqref{Top.paper.nat}{\coqdocinductive{nat}}) (\coqdocvar{v}:\coqref{Top.paper.Vec}{\coqdocinductive{Vec}} \coqref{Top.paper.nat}{\coqdocinductive{nat}} \coqdocvariable{n}), \coqdockw{Set} :=\coqdoceol
\coqdocnoindent
\coqdef{Top.paper.isnil}{isnil}{\coqdocconstructor{isnil}} : \coqdockw{\ensuremath{\forall}} (\coqdocvar{vv} : \coqref{Top.paper.Vec}{\coqdocinductive{Vec}} \coqref{Top.paper.nat}{\coqdocinductive{nat}} \coqref{Top.paper.O}{\coqdocconstructor{O}}), \coqref{Top.paper.isNil}{\coqdocinductive{isNil}} \coqref{Top.paper.O}{\coqdocconstructor{O}} \coqdocvariable{vv}.
\coqdoceol
\end{minipage}

\vspace{0.05cm}
\noindent
It is tricky to even state the equality constraints of the dependent indices (e.g. \coqdocvar{v}
in the example above) because
the types of the two sides of the equality will not be definitionally equal.
We will illustrate this soon.
While implementing \ptranslate{}, the main source of complexity came even later,
when implementing the translation of pattern matches. There, we had to
not only ``rewrite'' with the proofs of these equality constraints one by one, but also show that the proofs 
are each equal
to the canonical equality proof (\coqdocconstructor{eq\_refl}).
Fortunately, we found a much simpler way: we define a generalized equality type
that can, in one step, assert the equality of \emph{all} the corresponding indices.
Here is such an equality type for translating \coqRefInductive{Top.paper}{isNil}: 

\coqdocnoindent
\begin{minipage}[t]{\textwidth}
\coqdocnoindent
\coqdockw{Inductive} \coqdef{Top.paper.isNil indicesEq}{isNil\_indicesEq}{\coqdocinductive{isNil\_indicesEq}} (\coqdocvar{n} \coqdocvar{\tprime{n}} : \coqref{Top.paper.nat}{\coqdocinductive{nat}})  (\coqdocvar{\trel{n}} : \coqref{Top.paper.Ded.nat R}{\coqdocabbreviation{\trel{nat}}} \coqdocvariable{n} \coqdocvariable{\tprime{n}}) (\coqdocvar{v} : \coqref{Top.paper.Vec}{\coqdocinductive{Vec}} \coqref{Top.paper.nat}{\coqdocinductive{nat}} \coqdocvariable{n}) (\coqdocvar{\tprime{v}} : \coqref{Top.paper.Vec}{\coqdocinductive{Vec}} \coqref{Top.paper.nat}{\coqdocinductive{nat}} \coqdocvariable{\tprime{n}})\coqdoceol
\coqdocnoindent
(\coqdocvar{\trel{v}} : \coqref{Top.paper.DedV.Vec R}{\coqdocabbreviation{\trel{Vec}}} \coqref{Top.paper.nat}{\coqdocinductive{nat}} \coqref{Top.paper.nat}{\coqdocinductive{nat}} \coqref{Top.paper.Ded.nat R}{\coqdocabbreviation{\trel{nat}}} \coqdocvariable{n} \coqdocvariable{\tprime{n}} \coqdocvariable{\trel{n}} \coqdocvariable{v} \coqdocvariable{\tprime{v}}): \coqdockw{\ensuremath{\forall}} (\coqdocvar{i\trel{n}} : \coqref{Top.paper.Ded.nat R}{\coqdocabbreviation{\trel{nat}}} \coqdocvar{n} \coqdocvar{\tprime{n}}) (\coqdocvar{i\trel{v}} : \coqref{Top.paper.DedV.Vec R}{\coqdocabbreviation{\trel{Vec}}} \coqref{Top.paper.nat}{\coqdocinductive{nat}} \coqref{Top.paper.nat}{\coqdocinductive{nat}} \coqref{Top.paper.Ded.nat R}{\coqdocabbreviation{\trel{nat}}} \coqdocvar{n} \coqdocvar{\tprime{n}} \coqdocvariable{i\trel{n}} \coqdocvar{v} \coqdocvar{\tprime{v}}), \coqdockw{Prop}
\coqdoceol\coqdocnoindent
:=
\coqdef{Top.paper.isNil refl}{isNil\_refl}{\coqdocconstructor{isNil\_refl}} :  \coqref{Top.paper.isNil indicesEq}{\coqdocinductive{isNil\_indicesEq}} \coqdocvar{n} \coqdocvar{\tprime{n}} \coqdocvar{\trel{n}} \coqdocvar{v} \coqdocvar{\tprime{v}} \coqdocvar{\trel{v}} \coqdocvar{\trel{n}} \coqdocvar{\trel{v}}.\coqdoceol
\coqdocemptyline
\coqdocemptyline
\end{minipage}
\coqdocnoindent
This generalized equality type asserts that the indices \coqdocvar{\trel{n}} and \coqdocvar{\trel{v}}
are equal to the indices \coqdocvar{i\trel{n}} and \coqdocvar{i\trel{v}}.
The types of \coqdocvar{\trel{v}} and \coqdocvar{i\trel{v}} are different (not convertible, for the purpose of typechecking). Thus, it is ill-typed to just write \coqdocvar{\trel{v}} = \coqdocvar{i\trel{v}}.
Unlike JMeq~\cite{McBride2002}, our generalized equality type \emph{simultaneously} asserts the equality of sequences of dependent indices.
This greatly simplifies our translation of pattern matching, where now just generating one match on the proof of this
generalized equality type changes \emph{all} the indices,
and changes the \emph{only one} proof to the canonical form, which is 
\coqRefConstr{Top.paper}{isNil\_refl}.

The relation \coqRefInductive{Top.paper.DedM}{isNil\_indicesEq} lives in the \coqdockw{Prop}
universe. Thus, the proof-elimination restrictions may prohibit matching on its proofs
for producing non-proofs. However, Coq has a ``singleton-elimination'' exception for inductive
propositions that have \emph{only one} constructor and all the arguments of the constructor
are proofs. The above generalized equality proposition and those for other inductive types have only one constructor which takes no
arguments.

The inductive \coqRefInductive{Top.paper}{isNil} 
can now be translated in deductive style as follows:

\coqdocnoindent
\coqdockw{Fixpoint} \coqdef{Top.paper.isNil R}{\trel{isNil}}{\coqdocdefinition{\trel{isNil}}} (\coqdocvar{n} \coqdocvar{\tprime{n}} : \coqref{Top.paper.nat}{\coqdocinductive{nat}}) (\coqdocvar{\trel{n}} : \coqref{Top.paper.Ded.nat R}{\coqdocabbreviation{\trel{nat}}} \coqdocvariable{n} \coqdocvariable{\tprime{n}}) (\coqdocvar{v} : \coqref{Top.paper.Vec}{\coqdocinductive{Vec}} \coqref{Top.paper.nat}{\coqdocinductive{nat}} \coqdocvariable{n}) (\coqdocvar{\tprime{v}} : \coqref{Top.paper.Vec}{\coqdocinductive{Vec}} \coqref{Top.paper.nat}{\coqdocinductive{nat}} \coqdocvariable{\tprime{n}}) \coqdoceol
\coqdocnoindent
(\coqdocvar{\trel{v}} : \coqref{Top.paper.DedV.Vec R}{\coqdocabbreviation{\trel{Vec}}} \coqref{Top.paper.nat}{\coqdocinductive{nat}} \coqref{Top.paper.nat}{\coqdocinductive{nat}} \coqref{Top.paper.Ded.nat R}{\coqdocabbreviation{\trel{nat}}} \coqdocvariable{n} \coqdocvariable{\tprime{n}} \coqdocvariable{\trel{n}} \coqdocvariable{v} \coqdocvariable{\tprime{v}}) (\coqdocvar{m} : \coqref{Top.paper.isNil}{\coqdocinductive{isNil}} \coqdocvariable{n} \coqdocvariable{v}) (\coqdocvar{\tprime{m}} : \coqref{Top.paper.isNil}{\coqdocinductive{isNil}} \coqdocvariable{\tprime{n}} \coqdocvariable{\tprime{v}}): \coqdockw{Prop} :=\coqdoceol
\coqdocnoindent
(\coqdockw{match} \coqdocvariable{m},\coqdocvariable{\tprime{m}} \coqdockw{with}\coqdoceol
\coqdocnoindent
\ensuremath{|} \coqref{Top.paper.isnil}{\coqdocconstructor{isnil}} \coqdocvar{vv}, \coqref{Top.paper.isnil}{\coqdocconstructor{isnil}} \coqdocvar{\tprime{vv}} \ensuremath{\Rightarrow} {\coqdocnotation{\ensuremath{\lambda}}} {\coqdocnotation{(}}\coqdocvar{\trel{n}} : \coqref{Top.paper.Ded.nat R}{\coqdocabbreviation{\trel{nat}}} \coqref{Top.paper.O}{\coqdocconstructor{O}} \coqref{Top.paper.O}{\coqdocconstructor{O}}) (\coqdocvar{\trel{v}} : \coqref{Top.paper.DedV.Vec R}{\coqdocabbreviation{\trel{Vec}}} \coqref{Top.paper.nat}{\coqdocinductive{nat}} \coqref{Top.paper.nat}{\coqdocinductive{nat}} \coqref{Top.paper.Ded.nat R}{\coqdocabbreviation{\trel{nat}}} \coqref{Top.paper.O}{\coqdocconstructor{O}} \coqref{Top.paper.O}{\coqdocconstructor{O}} \coqdocvariable{\trel{n}} \coqdocvar{vv} \coqdocvar{\tprime{vv}}{\coqdocnotation{),}}\coqdoceol
\coqdocindent{1.00em}
\coqexternalref{:type scope:'x7B' x ':' x 'x26' x 'x7D'}{http://coq.inria.fr/distrib/8.5pl3/stdlib/Coq.Init.Specif}{\coqdocnotation{\{}}\coqdocvar{\trel{vv}} \coqexternalref{:type scope:'x7B' x ':' x 'x26' x 'x7D'}{http://coq.inria.fr/distrib/8.5pl3/stdlib/Coq.Init.Specif}{\coqdocnotation{:}} \coqref{Top.paper.DedV.Vec R}{\coqdocabbreviation{\trel{Vec}}} \coqref{Top.paper.nat}{\coqdocinductive{nat}} \coqref{Top.paper.nat}{\coqdocinductive{nat}} \coqref{Top.paper.Ded.nat R}{\coqdocabbreviation{\trel{nat}}} \coqref{Top.paper.O}{\coqdocconstructor{O}} \coqref{Top.paper.O}{\coqdocconstructor{O}} \coqref{Top.paper.Ded.O R}{\coqdocabbreviation{\trel{O}}} \coqdocvar{vv} \coqdocvar{\tprime{vv}} \coqexternalref{:type scope:'x7B' x ':' x 'x26' x 'x7D'}{http://coq.inria.fr/distrib/8.5pl3/stdlib/Coq.Init.Specif}{\coqdocnotation{\&}}
\highlight{\coqref{Top.paper.isNil indicesEq}{\coqdocinductive{isNil\_indicesEq}} \coqref{Top.paper.O}{\coqdocconstructor{O}} \coqref{Top.paper.O}{\coqdocconstructor{O}} \coqref{Top.paper.Ded.O R}{\coqdocabbreviation{\trel{O}}} \coqdocvar{vv} \coqdocvar{\tprime{vv}} \coqdocvar{\trel{vv}} \coqdocvariable{\trel{n}} \coqdocvariable{\trel{v}}}\coqexternalref{:type scope:'x7B' x ':' x 'x26' x 'x7D'}{http://coq.inria.fr/distrib/8.5pl3/stdlib/Coq.Init.Specif}{\coqdocnotation{\}}}\coqdoceol
\coqdocnoindent
\coqdockw{end}) \coqdocvariable{\trel{n}} \coqdocvariable{\trel{v}}.\coqdoceol
While translating constructors of an inductive type, we have to furnish a proof of the equality constraint. Fortunately, the canonical proof
always works while translating constructors. Here is the translation of the \coqRefConstr{Top.paper}{isnil} constructor:

\coqdocnoindent
\coqdockw{Definition}  \coqdef{Top.paper.DedM.isnil R}{\trel{isnil}}{\coqdocdefinition{\trel{isnil}}}  (\coqdocvar{vv} \coqdocvar{vv₂} : \coqref{Top.paper.Vec}{\coqdocinductive{Vec}} \coqref{Top.paper.nat}{\coqdocinductive{nat}} \coqref{Top.paper.O}{\coqdocconstructor{O}}) (\coqdocvar{\trel{vv}} : \coqref{Top.paper.DedV.Vec R}{\coqdocabbreviation{\trel{Vec}}} \coqref{Top.paper.nat}{\coqdocinductive{nat}} \coqref{Top.paper.nat}{\coqdocinductive{nat}} \coqref{Top.paper.Ded.nat R}{\coqdocabbreviation{\trel{nat}}} \coqref{Top.paper.O}{\coqdocconstructor{O}} \coqref{Top.paper.O}{\coqdocconstructor{O}} \coqref{Top.paper.Ded.O R}{\coqdocabbreviation{\trel{O}}} \coqdocvariable{vv} \coqdocvariable{vv₂})\coqdoceol
\coqdocnoindent
: \coqref{Top.paper.isNil R}{\coqdocdefinition{\trel{isNil}}} \coqref{Top.paper.O}{\coqdocconstructor{O}} \coqref{Top.paper.O}{\coqdocconstructor{O}} \coqref{Top.paper.Ded.O R}{\coqdocabbreviation{\trel{O}}} \coqdocvariable{vv} \coqdocvariable{vv₂} \coqdocvariable{\trel{vv}} (\coqref{Top.paper.isnil}{\coqdocconstructor{isnil}} \coqdocvariable{vv}) (\coqref{Top.paper.isnil}{\coqdocconstructor{isnil}} \coqdocvariable{vv₂}) := \coqexternalref{existT}{http://coq.inria.fr/distrib/8.5pl3/stdlib/Coq.Init.Specif}{\coqdocconstructor{existT}} \coqdocvariable{\trel{vv}}
(\highlight{\coqref{Top.paper.isNil refl}{\coqdocconstructor{isNil\_refl}} \coqref{Top.paper.O}{\coqdocconstructor{O}} \coqref{Top.paper.O}{\coqdocconstructor{O}} \coqref{Top.paper.Ded.O R}{\coqdocabbreviation{\trel{O}}} \coqdocvariable{vv} \coqdocvariable{vv₂} \coqdocvariable{\trel{vv}}}).\coqdoceol

Except for adding the equality constraints and proofs 
as highlighted above, our translation is largely a straightforward implementation of the description of \citet[Sec. 5.4]{Bernardy.Jansson.ea2012}.
Nevertheless, we show the general construction 
in the Appendix~(\secref{appendix:anyrel:ded:ind}).

\subsection{Pattern Matching (deductive-style)}
\label{sec:anyrel:match}
We already saw some examples of deductive-style translations of 
pattern-matching (e.g. \coqRefDefn{Top.paper}{pred}) on non-indexed inductive types.
Implementing the translation of pattern-matches on indexed-inductives is more complex, as we will illustrate with an example.
The main goal of this subsection is to discharge the claim made in the previous subsection that 
the equality constraints described in the previous subsection are crucial
for translating pattern matches.

We consider the following pattern-matching function over the indexed-inductive \coqRefInductive{Top.paper}{isNil} defined
in the previous section:

\coqdocnoindent
\begin{minipage}[t]{\textwidth}
\coqdocnoindent
\coqdockw{Definition} \coqdef{Top.paper.isNilRec}{isNilRec}{\coqdocdefinition{isNilRec}} (\coqdocvar{P} : \coqexternalref{:type scope:'xE2x88x80' x '..' x ',' x}{http://coq.inria.fr/distrib/8.5pl3/stdlib/Coq.Unicode.Utf8\_core}{\coqdocnotation{∀}} \coqexternalref{:type scope:'xE2x88x80' x '..' x ',' x}{http://coq.inria.fr/distrib/8.5pl3/stdlib/Coq.Unicode.Utf8\_core}{\coqdocnotation{(}}\coqdocvar{n} : \coqref{Top.paper.nat}{\coqdocinductive{nat}}) (\coqdocvar{v} : \coqref{Top.paper.Vec}{\coqdocinductive{Vec}} \coqref{Top.paper.nat}{\coqdocinductive{nat}} \coqdocvariable{n}\coqexternalref{:type scope:'xE2x88x80' x '..' x ',' x}{http://coq.inria.fr/distrib/8.5pl3/stdlib/Coq.Unicode.Utf8\_core}{\coqdocnotation{),}} \coqref{Top.paper.isNil}{\coqdocinductive{isNil}} \coqdocvariable{n} \coqdocvariable{v} \coqexternalref{:type scope:x 'xE2x86x92' x}{http://coq.inria.fr/distrib/8.5pl3/stdlib/Coq.Unicode.Utf8\_core}{\coqdocnotation{→}} \coqdockw{Set}) \coqdoceol
\coqdocnoindent
(\coqdocvar{f} : \coqexternalref{:type scope:'xE2x88x80' x '..' x ',' x}{http://coq.inria.fr/distrib/8.5pl3/stdlib/Coq.Unicode.Utf8\_core}{\coqdocnotation{∀}} \coqdocvar{vv} : \coqref{Top.paper.Vec}{\coqdocinductive{Vec}} \coqref{Top.paper.nat}{\coqdocinductive{nat}} \coqref{Top.paper.O}{\coqdocconstructor{O}}\coqexternalref{:type scope:'xE2x88x80' x '..' x ',' x}{http://coq.inria.fr/distrib/8.5pl3/stdlib/Coq.Unicode.Utf8\_core}{\coqdocnotation{,}} \coqdocvariable{P} \coqref{Top.paper.O}{\coqdocconstructor{O}} \coqdocvariable{vv} (\coqref{Top.paper.isnil}{\coqdocconstructor{isnil}} \coqdocvariable{vv})) (\coqdocvar{n} : \coqref{Top.paper.nat}{\coqdocinductive{nat}}) (\coqdocvar{v} : \coqref{Top.paper.Vec}{\coqdocinductive{Vec}} \coqref{Top.paper.nat}{\coqdocinductive{nat}} \coqdocvariable{n}) (\coqdocvar{d} : \coqref{Top.paper.isNil}{\coqdocinductive{isNil}} \coqdocvariable{n} \coqdocvariable{v}) : \coqdocvariable{P} \coqdocvariable{n} \coqdocvariable{v} \coqdocvariable{d}:=\coqdoceol
\coqdocnoindent
\coqdockw{match} \coqdocvariable{d}  \coqdockw{with}\coqdoceol
\coqdocnoindent
\ensuremath{|} \coqref{Top.paper.isnil}{\coqdocconstructor{isnil}} \coqdocvar{x} \ensuremath{\Rightarrow} \coqdocvariable{f} \coqdocvar{x}\coqdoceol
\coqdocnoindent
\coqdockw{end}.\coqdoceol
\end{minipage}

\vspace{0.1cm}
\noindent
It can be considered an induction/recursion principle for
 \coqRefInductive{Top.paper}{isNil}. It takes an \coqdocvar{f}
that works for the canonical forms (there is only 1), and returns a
function that works for an arbitrary member of the inductive family.
The deductive-style translation of the above function is shown in
\figref{fig:match}.
Unfortunately, it is the most complex example presented in this paper.
In general, for every argument, the translation has three arguments: 
see the clause for $\lambda$ in the definition of \ptranslate{} (\secref{sec:anyrel:core}).
%
As we enter each pattern match, the return type gets refined: the discriminee
is replaced by the constructor applied to its arguments and the indices
are replaced with the indices returned by the constructor.
Inside the first two pattern matches, \coqdocvar{n} and \coqdocvar{\tprime{n}} each become
\coqRefConstr{Top.paper}{O}, \coqdocvar{d} becomes \coqRefConstr{Top.paper}{isnil} \coqdocvar{x}, $\hdots$
. However, \coqdocvar{\trel{nn}}, which is the proof that \coqdocvar{n} and
\coqdocvarP{n} are related, doesn't change to \coqRefDefn{Top.paper}{\trel{O}}

The translation of the original body of the match, \coqdocvar{f} \coqdocvar{x}, is
\coqdocvar{\trel{f}} \coqdocvar{x} \coqdocvar{\tprime{x}} \coqdocvar{\trel{x}}.
The two outermost pattern matches bring \coqdocvar{x} and \coqdocvar{\tprime{x}} in scope, but not \coqdocvarR{x}.
In general, they bring the original constructor arguments and the \coqdocvarP{}
versions in scope.
However, we also have to bring to scope the proofs that the corresponding constructor arguments are related, e.g. that \coqdocvar{x} and \coqdocvar{\tprime{x}} are related.
In the deductive style translation, these proofs are packed as dependent pairs in the proof
that the two discriminees (\coqdocvar{d} and \coqdocvar{\tprime{d}}) are related.
In this case, that proof is \coqdocvar{\trel{dd}}. See the definition of
\coqRefDefnR{Top.paper}{isNil} to understand how the type of \coqdocvar{\trel{dd}} computes to a dependent pair.
In general, if the constructor has $n$ arguments, this type would compute to the type
of nested dependent pairs containing a total of n+1 items.
The first $n$ pattern matches on \coqdocvar{\trel{dd}} will ensure that 
all the free variables of the  translation of the body are in scope.
For example, in \figref{fig:match}, the $3^{rd}$ innermost \coqdockw{match}
brings \coqdocvar{\trel{x}} in scope.
However, the type of the translation of the body needs rewriting.
In \figref{fig:match}, we have shown the type (as checked by Coq), of the innermost pattern
match.
This is the expected return type, which as described above, has
\coqdocvar{\trel{nn}} instead of \coqRefDefn{Top.paper}{\trel{O}}, etc. The type of the translation of the body, which 
is the innermost body in the translation, is also shown and aligned to the 
expected return type. 
Note that \coqdocvar{\trel{nn}} in the outer type needs to change to 
\coqRefDefn{Top.paper}{\trel{O}},
\coqdocvar{\trel{vv}} to \coqdocvar{\trel{x}},
and the dependent pair
(\coqexternalref{existT}{http://coq.inria.fr/distrib/8.5pl3/stdlib/Coq.Init.Specif}{\coqdocconstructor{existT}} \coqdocvar{\trel{x}} \coqdocvar{pdeq})
needs to change to
(\coqref{Top.paper.DedM.isnil R}{\coqdocdefinition{\trel{isnil}}} \coqdocvar{x} \coqdocvar{\tprime{x}} \coqdocvar{\trel{x}}).
The latter computes to the dependent pair 
(\coqexternalref{existT}{http://coq.inria.fr/distrib/8.5pl3/stdlib/Coq.Init.Specif}{\coqdocconstructor{existT}} \coqdocvar{\trel{x}}
(\coqref{Top.paper.isNil refl}{\coqdocconstructor{isNil\_refl}} 
$\hdots$
))
Thus, the last change is essentially to change \coqdocvar{pdeq} to the canonical proof
(\coqref{Top.paper.isNil refl}{\coqdocconstructor{isNil\_refl}} $\hdots$), as hinted
in the previous subsection. All these changes are achieved by \emph{just one}
pattern match on \coqdocvar{pdeq}, the proof of the generalized equality type described in the previous subsection.

\begin{figure}
\coqdocnoindent
\coqdockw{Definition}
\coqdef{Top.paper.isNilRec}{isNilRec}{\coqdocdefinition{\trel{isNilRec}}}
(\coqdocvar{P} : \coqexternalref{:type scope:'xE2x88x80' x '..' x ',' x}{http://coq.inria.fr/distrib/8.5pl3/stdlib/Coq.Unicode.Utf8\_core}{\coqdocnotation{∀}} \coqexternalref{:type scope:'xE2x88x80' x '..' x ',' x}{http://coq.inria.fr/distrib/8.5pl3/stdlib/Coq.Unicode.Utf8\_core}{\coqdocnotation{(}}\coqdocvar{n} : \coqref{Top.paper.nat}{\coqdocinductive{nat}}) (\coqdocvar{v} : \coqref{Top.paper.Vec}{\coqdocinductive{Vec}} \coqref{Top.paper.nat}{\coqdocinductive{nat}} \coqdocvariable{n}\coqexternalref{:type scope:'xE2x88x80' x '..' x ',' x}{http://coq.inria.fr/distrib/8.5pl3/stdlib/Coq.Unicode.Utf8\_core}{\coqdocnotation{),}} \coqref{Top.paper.isNil}{\coqdocinductive{isNil}} \coqdocvariable{n} \coqdocvariable{v} \coqexternalref{:type scope:x 'xE2x86x92' x}{http://coq.inria.fr/distrib/8.5pl3/stdlib/Coq.Unicode.Utf8\_core}{\coqdocnotation{→}} \coqdockw{Set})\coqdoceol
\coqdocindent{0.50em}
(\coqdocvar{\tprime{P}} : \coqexternalref{:type scope:'xE2x88x80' x '..' x ',' x}{http://coq.inria.fr/distrib/8.5pl3/stdlib/Coq.Unicode.Utf8\_core}{\coqdocnotation{∀}} \coqexternalref{:type scope:'xE2x88x80' x '..' x ',' x}{http://coq.inria.fr/distrib/8.5pl3/stdlib/Coq.Unicode.Utf8\_core}{\coqdocnotation{(}}\coqdocvar{\tprime{n}} : \coqref{Top.paper.nat}{\coqdocinductive{nat}}) (\coqdocvar{\tprime{v}} : \coqref{Top.paper.Vec}{\coqdocinductive{Vec}} \coqref{Top.paper.nat}{\coqdocinductive{nat}} \coqdocvariable{\tprime{n}}\coqexternalref{:type scope:'xE2x88x80' x '..' x ',' x}{http://coq.inria.fr/distrib/8.5pl3/stdlib/Coq.Unicode.Utf8\_core}{\coqdocnotation{),}} \coqref{Top.paper.isNil}{\coqdocinductive{isNil}} \coqdocvariable{\tprime{n}} \coqdocvariable{\tprime{v}} \coqexternalref{:type scope:x 'xE2x86x92' x}{http://coq.inria.fr/distrib/8.5pl3/stdlib/Coq.Unicode.Utf8\_core}{\coqdocnotation{→}} \coqdockw{Set})\coqdoceol
\coqdocindent{0.50em}
(\coqdocvar{\trel{P}} : \coqexternalref{:type scope:'xE2x88x80' x '..' x ',' x}{http://coq.inria.fr/distrib/8.5pl3/stdlib/Coq.Unicode.Utf8\_core}{\coqdocnotation{∀}} \coqexternalref{:type scope:'xE2x88x80' x '..' x ',' x}{http://coq.inria.fr/distrib/8.5pl3/stdlib/Coq.Unicode.Utf8\_core}{\coqdocnotation{(}}\coqdocvar{n} \coqdocvar{\tprime{n}} : \coqref{Top.paper.nat}{\coqdocinductive{nat}}) (\coqdocvar{\trel{n}} : \coqref{Top.paper.Ded.nat R}{\coqdocabbreviation{\trel{nat}}} \coqdocvariable{n} \coqdocvariable{\tprime{n}}) (\coqdocvar{v} : \coqref{Top.paper.Vec}{\coqdocinductive{Vec}} \coqref{Top.paper.nat}{\coqdocinductive{nat}} \coqdocvariable{n}) (\coqdocvar{\tprime{v}} : \coqref{Top.paper.Vec}{\coqdocinductive{Vec}} \coqref{Top.paper.nat}{\coqdocinductive{nat}} \coqdocvariable{\tprime{n}})\coqdoceol
\coqdocindent{5.50em}
(\coqdocvar{\trel{v}} : \coqref{Top.paper.DedV.Vec R}{\coqdocabbreviation{\trel{Vec}}} \coqref{Top.paper.nat}{\coqdocinductive{nat}} \coqref{Top.paper.nat}{\coqdocinductive{nat}} \coqref{Top.paper.Ded.nat R}{\coqdocabbreviation{\trel{nat}}} \coqdocvariable{n} \coqdocvariable{\tprime{n}} \coqdocvariable{\trel{n}} \coqdocvariable{v} \coqdocvariable{\tprime{v}}) (\coqdocvar{d} : \coqref{Top.paper.isNil}{\coqdocinductive{isNil}} \coqdocvariable{n} \coqdocvariable{v})\coqdoceol
\coqdocindent{6.00em}
(\coqdocvar{\tprime{d}} : \coqref{Top.paper.isNil}{\coqdocinductive{isNil}} \coqdocvariable{\tprime{n}} \coqdocvariable{\tprime{v}}) (\coqdocvar{\trel{d}} : \coqref{Top.paper.isNil R}{\coqdocdefinition{\trel{isNil}}} \coqdocvariable{n} \coqdocvariable{\tprime{n}} \coqdocvariable{\trel{n}} \coqdocvariable{v} \coqdocvariable{\tprime{v}} \coqdocvariable{\trel{v}} \coqdocvariable{d} \coqdocvariable{\tprime{d}}\coqexternalref{:type scope:'xE2x88x80' x '..' x ',' x}{http://coq.inria.fr/distrib/8.5pl3/stdlib/Coq.Unicode.Utf8\_core}{\coqdocnotation{),}} \coqdocvariable{P} \coqdocvariable{n} \coqdocvariable{v} \coqdocvariable{d} \coqexternalref{:type scope:x 'xE2x86x92' x}{http://coq.inria.fr/distrib/8.5pl3/stdlib/Coq.Unicode.Utf8\_core}{\coqdocnotation{→}} \coqdocvariable{\tprime{P}} \coqdocvariable{\tprime{n}} \coqdocvariable{\tprime{v}} \coqdocvariable{\tprime{d}} \coqexternalref{:type scope:x 'xE2x86x92' x}{http://coq.inria.fr/distrib/8.5pl3/stdlib/Coq.Unicode.Utf8\_core}{\coqdocnotation{→}} \coqdockw{Prop})\coqdoceol
\coqdocindent{0.50em}
(\coqdocvar{f} : \coqexternalref{:type scope:'xE2x88x80' x '..' x ',' x}{http://coq.inria.fr/distrib/8.5pl3/stdlib/Coq.Unicode.Utf8\_core}{\coqdocnotation{∀}} \coqdocvar{vv} : \coqref{Top.paper.Vec}{\coqdocinductive{Vec}} \coqref{Top.paper.nat}{\coqdocinductive{nat}} \coqref{Top.paper.O}{\coqdocconstructor{O}}\coqexternalref{:type scope:'xE2x88x80' x '..' x ',' x}{http://coq.inria.fr/distrib/8.5pl3/stdlib/Coq.Unicode.Utf8\_core}{\coqdocnotation{,}} \coqdocvariable{P} \coqref{Top.paper.O}{\coqdocconstructor{O}} \coqdocvariable{vv} (\coqref{Top.paper.isnil}{\coqdocconstructor{isnil}} \coqdocvariable{vv})) (\coqdocvar{\tprime{f}} : \coqexternalref{:type scope:'xE2x88x80' x '..' x ',' x}{http://coq.inria.fr/distrib/8.5pl3/stdlib/Coq.Unicode.Utf8\_core}{\coqdocnotation{∀}} \coqdocvar{\tprime{vv}} : \coqref{Top.paper.Vec}{\coqdocinductive{Vec}} \coqref{Top.paper.nat}{\coqdocinductive{nat}} \coqref{Top.paper.O}{\coqdocconstructor{O}}\coqexternalref{:type scope:'xE2x88x80' x '..' x ',' x}{http://coq.inria.fr/distrib/8.5pl3/stdlib/Coq.Unicode.Utf8\_core}{\coqdocnotation{,}} \coqdocvariable{\tprime{P}} \coqref{Top.paper.O}{\coqdocconstructor{O}} \coqdocvariable{\tprime{vv}} (\coqref{Top.paper.isnil}{\coqdocconstructor{isnil}} \coqdocvariable{\tprime{vv}}))\coqdoceol
\coqdocindent{0.50em}
(\coqdocvar{\trel{f}} : \coqexternalref{:type scope:'xE2x88x80' x '..' x ',' x}{http://coq.inria.fr/distrib/8.5pl3/stdlib/Coq.Unicode.Utf8\_core}{\coqdocnotation{∀}} \coqexternalref{:type scope:'xE2x88x80' x '..' x ',' x}{http://coq.inria.fr/distrib/8.5pl3/stdlib/Coq.Unicode.Utf8\_core}{\coqdocnotation{(}}\coqdocvar{vv} \coqdocvar{\tprime{vv}} : \coqref{Top.paper.Vec}{\coqdocinductive{Vec}} \coqref{Top.paper.nat}{\coqdocinductive{nat}} \coqref{Top.paper.O}{\coqdocconstructor{O}}) (\coqdocvar{\trel{vv}} : \coqref{Top.paper.DedV.Vec R}{\coqdocabbreviation{\trel{Vec}}} \coqref{Top.paper.nat}{\coqdocinductive{nat}} \coqref{Top.paper.nat}{\coqdocinductive{nat}} \coqref{Top.paper.Ded.nat R}{\coqdocabbreviation{\trel{nat}}} \coqref{Top.paper.O}{\coqdocconstructor{O}} \coqref{Top.paper.O}{\coqdocconstructor{O}} \coqref{Top.paper.Ded.O R}{\coqdocabbreviation{\trel{O}}} \coqdocvariable{vv} \coqdocvariable{\tprime{vv}}\coqexternalref{:type scope:'xE2x88x80' x '..' x ',' x}{http://coq.inria.fr/distrib/8.5pl3/stdlib/Coq.Unicode.Utf8\_core}{\coqdocnotation{),}}\coqdoceol
\coqdocindent{6.00em}
\coqdocvariable{\trel{P}} \coqref{Top.paper.O}{\coqdocconstructor{O}} \coqref{Top.paper.O}{\coqdocconstructor{O}} \coqref{Top.paper.Ded.O R}{\coqdocabbreviation{\trel{O}}} \coqdocvariable{vv} \coqdocvariable{\tprime{vv}} \coqdocvariable{\trel{vv}} (\coqref{Top.paper.isnil}{\coqdocconstructor{isnil}} \coqdocvariable{vv}) (\coqref{Top.paper.isnil}{\coqdocconstructor{isnil}} \coqdocvariable{\tprime{vv}}) (\coqref{Top.paper.DedM.isnil R}{\coqdocdefinition{\trel{isnil}}} \coqdocvariable{vv} \coqdocvariable{\tprime{vv}} \coqdocvariable{\trel{vv}}) (\coqdocvariable{f} \coqdocvariable{vv}) (\coqdocvariable{\tprime{f}} \coqdocvariable{\tprime{vv}})) \coqdoceol
\coqdocindent{0.50em}
(\coqdocvar{n} \coqdocvar{\tprime{n}} : \coqref{Top.paper.nat}{\coqdocinductive{nat}}) (\coqdocvar{\trel{n}} : \coqref{Top.paper.Ded.nat R}{\coqdocabbreviation{\trel{nat}}} \coqdocvariable{n} \coqdocvariable{\tprime{n}}) (\coqdocvar{v} : \coqref{Top.paper.Vec}{\coqdocinductive{Vec}} \coqref{Top.paper.nat}{\coqdocinductive{nat}} \coqdocvariable{n}) (\coqdocvar{\tprime{v}} : \coqref{Top.paper.Vec}{\coqdocinductive{Vec}} \coqref{Top.paper.nat}{\coqdocinductive{nat}} \coqdocvariable{\tprime{n}}) \coqdoceol
\coqdocindent{0.50em}
(\coqdocvar{\trel{v}} : \coqref{Top.paper.DedV.Vec R}{\coqdocabbreviation{\trel{Vec}}} \coqref{Top.paper.nat}{\coqdocinductive{nat}} \coqref{Top.paper.nat}{\coqdocinductive{nat}} \coqref{Top.paper.Ded.nat R}{\coqdocabbreviation{\trel{nat}}} \coqdocvariable{n} \coqdocvariable{\tprime{n}} \coqdocvariable{\trel{n}} \coqdocvariable{v} \coqdocvariable{\tprime{v}}) (\coqdocvar{d} : \coqref{Top.paper.isNil}{\coqdocinductive{isNil}} \coqdocvariable{n} \coqdocvariable{v})\coqdoceol
\coqdocindent{0.50em}
(\coqdocvar{\tprime{d}} : \coqref{Top.paper.isNil}{\coqdocinductive{isNil}} \coqdocvariable{\tprime{n}} \coqdocvariable{\tprime{v}}) (\coqdocvar{\trel{d}} : \coqref{Top.paper.isNil R}{\coqdocdefinition{\trel{isNil}}} \coqdocvariable{n} \coqdocvariable{\tprime{n}} \coqdocvariable{\trel{n}} \coqdocvariable{v} \coqdocvariable{\tprime{v}} \coqdocvariable{\trel{v}} \coqdocvariable{d} \coqdocvariable{\tprime{d}}) :\coqdoceol
\coqdocindent{0.50em}
\coqdocvariable{\trel{P}} \coqdocvar{\_} \coqdocvar{\_} \coqdocvariable{\trel{n}} \coqdocvar{\_} \coqdocvar{\_} \coqdocvariable{\trel{v}} \coqdocvar{\_} \coqdocvar{\_} \coqdocvariable{\trel{d}} (\coqref{Top.paper.isNilRec}{\coqdocdefinition{isNilRec}} \coqdocvar{\_} \coqdocvariable{f} \coqdocvar{\_} \coqdocvar{\_}  \coqdocvariable{d}) (\coqref{Top.paper.isNilRec}{\coqdocdefinition{isNilRec}} \coqdocvar{\_} \coqdocvariable{\tprime{f}} \coqdocvar{\_} \coqdocvar{\_}  \coqdocvariable{\tprime{d}}) :=\coqdoceol
\coqdocnoindent
\coqdockw{match} \coqdocvariable{d} 
\coqdockw{as} $\hdots$ \coqdockw{in} $\hdots$ \coqdockw{return} $\hdots$  \coqdockw{with}\coqdoceol
\coqdocnoindent
\ensuremath{|} \coqref{Top.paper.isnil}{\coqdocconstructor{isnil}} \coqdocvar{x} \ensuremath{\Rightarrow} 
\coqdockw{match} \coqdocvariable{\tprime{d}}
\coqdockw{as} $\hdots$ \coqdockw{in} $\hdots$ \coqdockw{return} $\hdots$  \coqdockw{with}\coqdoceol
\coqdocindent{1.00em}
\ensuremath{|} \coqref{Top.paper.isnil}{\coqdocconstructor{isnil}} \coqdocvar{\tprime{x}} \ensuremath{\Rightarrow}
{\coqdocnotation{\ensuremath{\lambda}}} {\coqdocnotation{(}}\coqdocvar{\trel{nn}} : \coqref{Top.paper.Ded.nat R}{\coqdocabbreviation{\trel{nat}}} \coqref{Top.paper.O}{\coqdocconstructor{O}} \coqref{Top.paper.O}{\coqdocconstructor{O}}) (\coqdocvar{\trel{vv}} : \coqref{Top.paper.DedV.Vec R}{\coqdocabbreviation{\trel{Vec}}} \coqref{Top.paper.nat}{\coqdocinductive{nat}} \coqref{Top.paper.nat}{\coqdocinductive{nat}} \coqref{Top.paper.Ded.nat R}{\coqdocabbreviation{\trel{nat}}} \coqref{Top.paper.O}{\coqdocconstructor{O}} \coqref{Top.paper.O}{\coqdocconstructor{O}} \coqdocvariable{\trel{nn}} \coqdocvar{x} \coqdocvar{\tprime{x}})\coqdoceol
\coqdocindent{2.00em}
(\coqdocvar{\trel{dd}} : \coqref{Top.paper.isNil R}{\coqdocdefinition{\trel{isNil}}} \coqref{Top.paper.O}{\coqdocconstructor{O}} \coqref{Top.paper.O}{\coqdocconstructor{O}} \coqdocvariable{\trel{nn}} \coqdocvar{x} \coqdocvar{\tprime{x}} \coqdocvariable{\trel{vv}} (\coqref{Top.paper.isnil}{\coqdocconstructor{isnil}} \coqdocvar{x}) (\coqref{Top.paper.isnil}{\coqdocconstructor{isnil}} \coqdocvar{\tprime{x}}){\coqdocnotation{),}}
\coqdoceol\coqdocindent{2.00em}
\coqdockw{match} \coqdocvariable{\trel{dd}} \coqdockw{with}\coqdoceol
\coqdocindent{2.00em}
\ensuremath{|} \coqexternalref{existT}{http://coq.inria.fr/distrib/8.5pl3/stdlib/Coq.Init.Specif}{\coqdocconstructor{existT}} \coqdocvar{\trel{x}} \coqdocvar{pdeq} \ensuremath{\Rightarrow} 
\coqdoceol
\coqdocindent{3.00em}
(\coqdockw{match} \coqdocvar{pdeq} 
\coqdockw{as} $\hdots$ \coqdockw{in} $\hdots$ \coqdockw{return} $\hdots$  \coqdockw{with}\coqdoceol
\coqdocindent{3.50em}
\ensuremath{|} \coqref{Top.paper.isNil refl}{\coqdocconstructor{isNil\_refl}} \coqdocvar{\_} \coqdocvar{\_} \coqdocvar{\_} \coqdocvar{\_} \coqdocvar{\_} \coqdocvar{\_} \ensuremath{\Rightarrow} (\coqdocvariable{\trel{f}} \coqdocvar{x} \coqdocvar{\tprime{x}} \coqdocvar{\trel{x}}):\coqdoceol
\coqdocindent{5.90em}
(\coqdocvariable{\trel{P}} \coqref{Top.paper.O}{\coqdocconstructor{O}} \coqref{Top.paper.O}{\coqdocconstructor{O}} 
\highlight{\coqref{Top.paper.Ded.O R}{\coqdocabbreviation{\trel{O}}}}
 $\,$ \coqdocvar{x} \coqdocvar{\tprime{x}} 
 \highlight{\coqdocvar{\trel{x}}} $\,$
(\coqref{Top.paper.isnil}{\coqdocconstructor{isnil}} \coqdocvar{x}) 
(\coqref{Top.paper.isnil}{\coqdocconstructor{isnil}} \coqdocvar{\tprime{x}}) 
\highlight{(\coqref{Top.paper.DedM.isnil R}{\coqdocdefinition{\trel{isnil}}} \coqdocvar{x} \coqdocvar{\tprime{x}} \coqdocvar{\trel{x}})}
(\coqdocvariable{f} \coqdocvar{x}) (\coqdocvariable{\tprime{f}} \coqdocvar{\tprime{x}}))\coqdoceol
\coqdocindent{3.50em}
\coqdockw{end}):
(\coqdocvariable{\trel{P}} \coqref{Top.paper.O}{\coqdocconstructor{O}} \coqref{Top.paper.O}{\coqdocconstructor{O}}
\highlight{\coqdocvariable{\trel{nn}}} 
\coqdocvar{x} \coqdocvar{\tprime{x}} 
\highlight{\coqdocvariable{\trel{vv}}}
(\coqref{Top.paper.isnil}{\coqdocconstructor{isnil}} \coqdocvar{x}) (\coqref{Top.paper.isnil}{\coqdocconstructor{isnil}} \coqdocvar{\tprime{x}}) 
\highlight{(\coqexternalref{existT}{http://coq.inria.fr/distrib/8.5pl3/stdlib/Coq.Init.Specif}{\coqdocconstructor{existT}} \coqdocvar{\trel{x}}
\coqdocvar{pdeq})} 
(\coqdocvariable{f} \coqdocvar{x}) (\coqdocvariable{\tprime{f}} \coqdocvar{\tprime{x}}))\coqdoceol
\coqdocindent{2.00em}
\coqdockw{end}\coqdoceol
\coqdocindent{1.00em}
\coqdockw{end}\coqdoceol
\coqdocnoindent
\coqdockw{end} \coqdocvariable{\trel{n}} \coqdocvariable{\trel{v}} \coqdocvariable{\trel{d}}.
\coqdoceol
\caption{Translation of pattern matching requires the equality constraints}
\label{fig:match}
\end{figure}

The general scheme for translating pattern matches can be found in the
Appendix~(\secref{appendix:anyrel:ded:match}).

\subsection{Fixpoints (recursive functions)}
Our translation of \coqdockw{fix} (or \coqdockw{Fixpoint}) terms is largely 
as described by \citet{Keller.Lasson2012}. 
A minor change was required because we translate inductive types and corresponding
pattern matches in the deductive style.
It is explained in \appref{appendix:anyrel:fix}.
 


\subsection{Summary}
In this section, we presented the {\anyrel} translation that will serve as the
core of the {\isorel} translation described in the rest of the paper.
The main advantage of the translation in this section over the {\anyrel}
translation implemented by \citet{Keller.Lasson2012} is that we have $\hat{\coqdockw{Set}}$:=\coqdockw{Prop}, which means that
relations for types in the \coqdockw{Set} universe enjoy the proof irrelevance
property, which is useful not only in the {\isorel} translation, but in other
applications as well.
Ensuring $\hat{\coqdockw{Set}}$:=\coqdockw{Prop} required a deductive-style
translation of inductive types. 
We found that the deductive-style translation of pattern matches on inhabitants
of indexed-inductive types 
requires strengthening the deductive-style translation of those types with
equality constraints that were erroneously missing in the literature
\citep{Bernardy.Jansson.ea2012,Bernardy2011}.
Stating and using those equality constraints becomes challenging for inductive
types with multiple, dependent indices. We showed how to simplify the
construction.
We also showed that the deductive-style
translation does not work for inductively defined propositions: those need to be translated in the inductive
style, as does pattern matching on proofs of those propositions.

We have implemented our {\anyrel} translation as functions in Coq
itself. Using reification and reflection~\cite{Malecha.Sozeau2014}, 
we have used those Coq functions to
translate several examples.
The translated program is delivered to the reflection mechanism which
ensures that the result is \emph{well-typed} before adding it to Coq's
environment of definitions and declarations. Also, 
our translation produces all the implicit arguments, and is thus immune to the incompleteness of Coq's type inference mechanism.

%


%

\section{Uniformity of Propositions}
\label{sec:uniformProp}
We begin this section by describing why \ptranslate{\coqdockw{Prop}} is too weak to ensure the uniformity of propositions. 
Then and in the next section, 
we develop the main technical lemmas needed to ensure the uniformity. 
In \secref{sec:isorel}, we use these lemmas in the {\isorel} translation \ptranslateIso{}, which ensures the uniformity of propositions.
We believe the lemmas in Section \ref{sec:uniformProp} and \ref{sec:uniformProp:type} are independently interesting and useful.

Recall (\secref{sec:anyrel:core}) that 
\ptranslate{\coqdockw{Prop}} :=
\ensuremath{\lambda(\coqdocvar{P}\;\coqdocvarP{P}:\coqdockw{Prop}),\coqdocvar{P}
→ \coqdocvarP{P} → \coqdockw{Prop}}.
If we have $\theta$:\coqdockw{Prop},
\thmref{Abstraction} says 
\ptranslate{$\theta$}: $\theta$ → \tprime{$\theta$} → \coqdockw{Prop}.
In applications of parametricity, $\theta$ would typically denote
a proposition in one instantiation and \tprime{$\theta$} would denote the corresponding
proposition in the other instantiation. In the example at the beginning of 
\secref{sec:intro}, one instantiation is the cartesian representation of complex numbers,
and the other instantiation is the polar representation of complex numbers.
$\theta$ and \tprime{$\theta$}, in the respective instantiations, could
be the proposition that addition is commutative. 

We want the two propositions to mean the same in both the instantiations.
%
However, the statement (type) of \ptranslate{$\theta$},
which is the proof that  $\theta$  and \tprime{$\theta$} are parametrically related,
is too weak. \ptranslate{$\theta$} is merely a relation between
$\theta$  and \tprime{$\theta$}. As explained in \secref{sec:intro}, there is a
relation even between logically inequivalent propositions, such as 
{\CoqTrue} and {\CoqFalse}. In contrast, if we instead had $\theta$:{\CoqBool},
\thmref{Abstraction} says:
\ptranslate{$\theta$}: \coqRefDefn{Top.paper}{\trel{bool}} $\theta$ \tprime{$\theta$}, 
where \coqRefDefn{Top.paper}{\trel{bool}} is the deductive-style translation of
the inductive type {\CoqBool}, which has only two constructors:
{\CoqBTrue} and {\CoqBFalse}.
We hope that from the previous section, it is clear that 
\coqRefDefn{Top.paper}{\trel{bool}} $\theta$ \tprime{$\theta$} implies that
either \emph{both} $\theta$ and \tprime{$\theta$} reduce to  
{\CoqBTrue}, or \emph{both} reduce to {\CoqBFalse}.

The main goal of this paper is to strengthen the translation of the universe
\coqdockw{Prop} to get uniformity properties similar to the type {\CoqBool}.
In \secref{sec:intro}, we identified and motivated two properties that we wish to have for the relations between (proofs of) propositions:

\coqdocnoindent
\coqdockw{Definition} \coqdef{Top.paper.IffProps}{IffProps}{\coqdocdefinition{IffProps}} \{\coqdocvar{A} \coqdocvar{B} : \coqdockw{Prop}\} (\coqdocvar{R} : \coqdocvariable{A} \coqexternalref{:type scope:x '->' x}{http://coq.inria.fr/distrib/8.5pl3/stdlib/Coq.Init.Logic}{\coqdocnotation{\ensuremath{\rightarrow}}} \coqdocvariable{B} \coqexternalref{:type scope:x '->' x}{http://coq.inria.fr/distrib/8.5pl3/stdlib/Coq.Init.Logic}{\coqdocnotation{\ensuremath{\rightarrow}}} \coqdockw{Prop}) : \coqdockw{Prop} := \coqdocvariable{A} \coqexternalref{:type scope:x '<->' x}{http://coq.inria.fr/distrib/8.5pl3/stdlib/Coq.Init.Logic}{\coqdocnotation{\ensuremath{\leftrightarrow}}} \coqdocvariable{B}.\coqdoceol\coqdocnoindent
\coqdockw{Definition} \coqdef{Top.paper.CompleteRel}{CompleteRel}{\coqdocdefinition{CompleteRel}}  \{\coqdocvar{A} \coqdocvar{B} : \coqdockw{Prop}\} (\coqdocvar{R} : \coqdocvariable{A} \coqexternalref{:type scope:x '->' x}{http://coq.inria.fr/distrib/8.5pl3/stdlib/Coq.Init.Logic}{\coqdocnotation{\ensuremath{\rightarrow}}} \coqdocvariable{B} \coqexternalref{:type scope:x '->' x}{http://coq.inria.fr/distrib/8.5pl3/stdlib/Coq.Init.Logic}{\coqdocnotation{\ensuremath{\rightarrow}}} \coqdockw{Prop}) : \coqdockw{Prop} := \coqdockw{\ensuremath{\forall}} (\coqdocvar{a} : \coqdocvariable{A}) (\coqdocvar{b} : \coqdocvariable{B}), \coqdocvariable{R} \coqdocvariable{a} \coqdocvariable{b}.\coqdoceol
\coqdocnoindent
To ensure these properties, in
the {\isorel} translation, we define the translation of \coqdockw{Prop} in a way that is equivalent to the following:\\
\ptranslateIso{\coqdockw{Prop}} :=
{\coqdocnotation{\ensuremath{\lambda}}} {\coqdocnotation{(}}\coqdocvar{A} \coqdocvar{\tprime{A}}: \coqdockw{Prop}{\coqdocnotation{),}} \coqexternalref{:type scope:'x7B' x ':' x 'x26' x 'x7D'}{http://coq.inria.fr/distrib/8.5pl3/stdlib/Coq.Init.Specif}{\coqdocnotation{\{}}\coqdocvar{R} \coqexternalref{:type scope:'x7B' x ':' x 'x26' x 'x7D'}{http://coq.inria.fr/distrib/8.5pl3/stdlib/Coq.Init.Specif}{\coqdocnotation{:}} \coqdocvariable{A} \coqexternalref{:type scope:x '->' x}{http://coq.inria.fr/distrib/8.5pl3/stdlib/Coq.Init.Logic}{\coqdocnotation{\ensuremath{\rightarrow}}} \coqdocvariable{\tprime{A}} \coqexternalref{:type scope:x '->' x}{http://coq.inria.fr/distrib/8.5pl3/stdlib/Coq.Init.Logic}{\coqdocnotation{\ensuremath{\rightarrow}}} \coqdockw{Prop} \coqexternalref{:type scope:'x7B' x ':' x 'x26' x 'x7D'}{http://coq.inria.fr/distrib/8.5pl3/stdlib/Coq.Init.Specif}{\coqdocnotation{\&}} \coqref{Top.paper.IffProps}{\coqdocdefinition{IffProps}} \coqdocvar{R} \coqexternalref{:type scope:x '/x5C' x}{http://coq.inria.fr/distrib/8.5pl3/stdlib/Coq.Init.Logic}{\coqdocnotation{\ensuremath{\land}}} \coqref{Top.paper.CompleteRel}{\coqdocdefinition{CompleteRel}} \coqdocvar{R}\coqexternalref{:type scope:'x7B' x ':' x 'x26' x 'x7D'}{http://coq.inria.fr/distrib/8.5pl3/stdlib/Coq.Init.Specif}{\coqdocnotation{\}}}.\coqdoceol

Instead of returning an arbitrary relation,
the {\isorel} translation requires the relation to come bundled (as a dependent
pair) with proofs of the above two properties (of the relation) of interest.
This paper is mainly about tackling the far-reaching consequences of the above change.
The contributions of the previous section, although independently interesting, were made to ensure
that we can have
$\hat{\coqdockw{Set}}$ := \coqdockw{Prop}
(instead of
$\hat{\coqdockw{Set}}$ := \coqdockw{Set}), which makes it easy to tackle some of the consequences.
In \ptranslateIso{}, other parts of \ptranslate{} also need to be updated to cope with the change in the translation of \coqdockw{Prop}, so that we get essentially the same
abstraction theorem as before (\thmref{Abstraction}).
For example, as we will see in this section, we also need 
to bundle relations for types with some properties.
We will see in the next subsections that 
the relations produced by translating the types mentioned in propositions may need to have one or both
of the following properties:

\coqdocnoindent
\coqdockw{Definition} \coqdef{Top.paper.OneToOne}{OneToOne}{\coqdocdefinition{OneToOne}}  \{\coqdocvar{A} \coqdocvar{B} : \coqdockw{Set}\} (\coqdocvar{R} : \coqdocvariable{A} \coqexternalref{:type scope:x '->' x}{http://coq.inria.fr/distrib/8.5pl3/stdlib/Coq.Init.Logic}{\coqdocnotation{\ensuremath{\rightarrow}}} \coqdocvariable{B} \coqexternalref{:type scope:x '->' x}{http://coq.inria.fr/distrib/8.5pl3/stdlib/Coq.Init.Logic}{\coqdocnotation{\ensuremath{\rightarrow}}} \coqdockw{Prop}) : \coqdockw{Prop} :=
\coqdoceol\coqdocindent{1em}
\coqexternalref{:type scope:x '/x5C' x}{http://coq.inria.fr/distrib/8.5pl3/stdlib/Coq.Init.Logic}{\coqdocnotation{(}}\coqdockw{\ensuremath{\forall}} (\coqdocvar{a}:\coqdocvariable{A}) (\coqdocvar{b1} \coqdocvar{b2}: \coqdocvariable{B}), \coqdocvariable{R} \coqdocvariable{a} \coqdocvariable{b1} \coqexternalref{:type scope:x '->' x}{http://coq.inria.fr/distrib/8.5pl3/stdlib/Coq.Init.Logic}{\coqdocnotation{\ensuremath{\rightarrow}}} \coqdocvariable{R} \coqdocvariable{a} \coqdocvariable{b2} \coqexternalref{:type scope:x '->' x}{http://coq.inria.fr/distrib/8.5pl3/stdlib/Coq.Init.Logic}{\coqdocnotation{\ensuremath{\rightarrow}}}  \coqdocvariable{b1}\coqexternalref{:type scope:x '=' x}{http://coq.inria.fr/distrib/8.5pl3/stdlib/Coq.Init.Logic}{\coqdocnotation{=}}\coqdocvariable{b2}\coqexternalref{:type scope:x '/x5C' x}{http://coq.inria.fr/distrib/8.5pl3/stdlib/Coq.Init.Logic}{\coqdocnotation{)}} \coqexternalref{:type scope:x '/x5C' x}{http://coq.inria.fr/distrib/8.5pl3/stdlib/Coq.Init.Logic}{\coqdocnotation{\ensuremath{\land}}} \coqexternalref{:type scope:x '/x5C' x}{http://coq.inria.fr/distrib/8.5pl3/stdlib/Coq.Init.Logic}{\coqdocnotation{(}}\coqdockw{\ensuremath{\forall}} (\coqdocvar{b}:\coqdocvariable{B}) (\coqdocvar{a1} \coqdocvar{a2}: \coqdocvariable{A}), \coqdocvariable{R} \coqdocvariable{a1} \coqdocvariable{b} \coqexternalref{:type scope:x '->' x}{http://coq.inria.fr/distrib/8.5pl3/stdlib/Coq.Init.Logic}{\coqdocnotation{\ensuremath{\rightarrow}}} \coqdocvariable{R} \coqdocvariable{a2} \coqdocvariable{b} \coqexternalref{:type scope:x '->' x}{http://coq.inria.fr/distrib/8.5pl3/stdlib/Coq.Init.Logic}{\coqdocnotation{\ensuremath{\rightarrow}}}  \coqdocvariable{a1}\coqexternalref{:type scope:x '=' x}{http://coq.inria.fr/distrib/8.5pl3/stdlib/Coq.Init.Logic}{\coqdocnotation{=}}\coqdocvariable{a2}\coqexternalref{:type scope:x '/x5C' x}{http://coq.inria.fr/distrib/8.5pl3/stdlib/Coq.Init.Logic}{\coqdocnotation{)}}.\coqdoceol
%
\coqdocnoindent
\coqdockw{Definition} \coqdef{Top.paper.Total}{Total}{\coqdocdefinition{Total}} \{\coqdocvar{A} \coqdocvar{B} : \coqdockw{Set}\} (\coqdocvar{R}: \coqdocvariable{A} \coqexternalref{:type scope:x '->' x}{http://coq.inria.fr/distrib/8.5pl3/stdlib/Coq.Init.Logic}{\coqdocnotation{\ensuremath{\rightarrow}}} \coqdocvariable{B} \coqexternalref{:type scope:x '->' x}{http://coq.inria.fr/distrib/8.5pl3/stdlib/Coq.Init.Logic}{\coqdocnotation{\ensuremath{\rightarrow}}} \coqdockw{Prop}) : \coqdockw{Type} :=
\coqdoceol\coqdocindent{1em}
\coqexternalref{:type scope:x '*' x}{http://coq.inria.fr/distrib/8.5pl3/stdlib/Coq.Init.Datatypes}{\coqdocnotation{(}}\coqdockw{\ensuremath{\forall}} (\coqdocvar{a}:\coqdocvariable{A}), \coqexternalref{:type scope:'x7B' x ':' x 'x26' x 'x7D'}{http://coq.inria.fr/distrib/8.5pl3/stdlib/Coq.Init.Specif}{\coqdocnotation{\{}}\coqdocvar{b}\coqexternalref{:type scope:'x7B' x ':' x 'x26' x 'x7D'}{http://coq.inria.fr/distrib/8.5pl3/stdlib/Coq.Init.Specif}{\coqdocnotation{:}}\coqdocvariable{B} \coqexternalref{:type scope:'x7B' x ':' x 'x26' x 'x7D'}{http://coq.inria.fr/distrib/8.5pl3/stdlib/Coq.Init.Specif}{\coqdocnotation{\&}} \coqexternalref{:type scope:'x7B' x ':' x 'x26' x 'x7D'}{http://coq.inria.fr/distrib/8.5pl3/stdlib/Coq.Init.Specif}{\coqdocnotation{(}}\coqdocvariable{R} \coqdocvariable{a} \coqdocvar{b}\coqexternalref{:type scope:'x7B' x ':' x 'x26' x 'x7D'}{http://coq.inria.fr/distrib/8.5pl3/stdlib/Coq.Init.Specif}{\coqdocnotation{)\}}}\coqexternalref{:type scope:x '*' x}{http://coq.inria.fr/distrib/8.5pl3/stdlib/Coq.Init.Datatypes}{\coqdocnotation{)}} \coqexternalref{:type scope:x '*' x}{http://coq.inria.fr/distrib/8.5pl3/stdlib/Coq.Init.Datatypes}{\coqdocnotation{\ensuremath{\times}}} \coqexternalref{:type scope:x '*' x}{http://coq.inria.fr/distrib/8.5pl3/stdlib/Coq.Init.Datatypes}{\coqdocnotation{(}}\coqdockw{\ensuremath{\forall}} (\coqdocvar{b}:\coqdocvariable{B}), \coqexternalref{:type scope:'x7B' x ':' x 'x26' x 'x7D'}{http://coq.inria.fr/distrib/8.5pl3/stdlib/Coq.Init.Specif}{\coqdocnotation{\{}}\coqdocvar{a}\coqexternalref{:type scope:'x7B' x ':' x 'x26' x 'x7D'}{http://coq.inria.fr/distrib/8.5pl3/stdlib/Coq.Init.Specif}{\coqdocnotation{:}}\coqdocvariable{A} \coqexternalref{:type scope:'x7B' x ':' x 'x26' x 'x7D'}{http://coq.inria.fr/distrib/8.5pl3/stdlib/Coq.Init.Specif}{\coqdocnotation{\&}} \coqexternalref{:type scope:'x7B' x ':' x 'x26' x 'x7D'}{http://coq.inria.fr/distrib/8.5pl3/stdlib/Coq.Init.Specif}{\coqdocnotation{(}}\coqdocvariable{R} \coqdocvar{a} \coqdocvariable{b}\coqexternalref{:type scope:'x7B' x ':' x 'x26' x 'x7D'}{http://coq.inria.fr/distrib/8.5pl3/stdlib/Coq.Init.Specif}{\coqdocnotation{)\}}}\coqexternalref{:type scope:x '*' x}{http://coq.inria.fr/distrib/8.5pl3/stdlib/Coq.Init.Datatypes}{\coqdocnotation{)}}.\coqdoceol

The \coqRefDefn{Top.paper}{Total} property says that for all \coqdocvar{a}:\coqdocvariable{A} there exists a related \coqdocvar{b}:\coqdocvariable{B} and vice versa.
A relation satisfying both of the above properties 
can be considered an \emph{isomorphism}. Thus, in the worst case, the {\isorel}
translation produces free Coq proofs justifying the commonly held belief that
isomorphic instantiations of interfaces have the same logical properties.
\emph{However, as we will see in this section,
many propositions need \emph{neither} of the above properties to behave uniformly.} 
Here is an example where \emph{any} relation works for the first argument \coqdocvar{T}:

\coqdocnoindent
\coqdockw{Definition} \coqdef{Top.paper.PNone}{PNone}{\coqdocdefinition{PNone}} := {\coqdocnotation{\ensuremath{\lambda}}} {\coqdocnotation{(}}\coqdocvar{T}:\coqdockw{Set}) (\coqdocvar{f}:\coqdocvariable{T}\coqexternalref{:type scope:x '->' x}{http://coq.inria.fr/distrib/8.5pl3/stdlib/Coq.Init.Logic}{\coqdocnotation{\ensuremath{\rightarrow}}}\coqref{Top.paper.nat}{\coqdocinductive{nat}}) (\coqdocvar{a} \coqdocvar{b} :\coqdocvariable{T}{\coqdocnotation{)}} {\coqdocnotation{,}} {\coqdocnotation{(}}\coqdocvariable{f} \coqdocvariable{a} \coqexternalref{:type scope:x '=' x}{http://coq.inria.fr/distrib/8.5pl3/stdlib/Coq.Init.Logic}{\coqdocnotation{=}} \coqdocvariable{f} \coqdocvariable{b}{\coqdocnotation{)}}.\coqdoceol

\coqdocnoindent
The {\anyrel} translation of the argument \coqdocvar{f} already implies that on
related inputs, \coqdocvar{f} produces equal numbers.
Some need \emph{only one}: the next two polymorphic propositions respectively
only need the \coqRefDefn{Top.paper}{Total} and \coqRefDefn{Top.paper}{OneToOne}
properties for the first argument \coqdocvar{T}.

\coqdocnoindent
\coqdockw{Definition} \coqdef{Top.paper.PTot}{PTot}{\coqdocdefinition{PTot}} := {\coqdocnotation{\ensuremath{\lambda}}} {\coqdocnotation{(}}\coqdocvar{T}:\coqdockw{Set}) (\coqdocvar{f}:\coqdocvariable{T}\coqexternalref{:type scope:x '->' x}{http://coq.inria.fr/distrib/8.5pl3/stdlib/Coq.Init.Logic}{\coqdocnotation{\ensuremath{\rightarrow}}}\coqref{Top.paper.nat}{\coqdocinductive{nat}}{\coqdocnotation{),}} \coqdockw{\ensuremath{\forall}} (\coqdocvar{t}:\coqdocvariable{T}), \coqdocvariable{f} \coqdocvariable{t} \coqexternalref{:type scope:x '=' x}{http://coq.inria.fr/distrib/8.5pl3/stdlib/Coq.Init.Logic}{\coqdocnotation{=}} \coqref{Top.paper.O}{\coqdocconstructor{O}}.\coqdoceol
\coqdocnoindent
\coqdockw{Definition} \coqdef{Top.paper.POne}{POne}{\coqdocdefinition{POne}} := 
{\coqdocnotation{\ensuremath{\lambda}}} {\coqdocnotation{(}}\coqdocvar{T}:\coqdockw{Set}) (\coqdocvar{f}:\coqref{Top.paper.nat}{\coqdocinductive{nat}}\coqexternalref{:type scope:x '->' x}{http://coq.inria.fr/distrib/8.5pl3/stdlib/Coq.Init.Logic}{\coqdocnotation{\ensuremath{\rightarrow}}}\coqdocvariable{T}{\coqdocnotation{),}} 
\coqdocnotation{$\forall$} (\coqdocvar{n}: {\CoqNat}), \coqdocvariable{f} \coqdocvar{n} \coqexternalref{:type scope:x '=' x}{http://coq.inria.fr/distrib/8.5pl3/stdlib/Coq.Init.Logic}{\coqdocnotation{=}} \coqdocvariable{f} (\coqref{Top.paper.S}{\coqdocconstructor{S}} (\coqref{Top.paper.S}{\coqdocconstructor{S}} \coqdocvar{n})).\coqdoceol

\coqdocnoindent
We will see that we need the \coqRefDefn{Top.paper}{Total} property for universally quantified types and types of arguments of inductive constructors.
Also, we need the \coqRefDefn{Top.paper}{OneToOne} property for index types of inductively defined propositions, such as the equality proposition.
To allow such fine-grained analysis,
for now, unlike for propositions, we don't globally assume the
\coqRefDefn{Top.paper}{Total} and \coqRefDefn{Top.paper}{OneToOne} properties for
relations produced by translating types. We could have done that by defining:\\
\ptranslateIso{\coqdockw{Set}}:=
{\coqdocnotation{\ensuremath{\lambda}}} {\coqdocnotation{(}}\coqdocvar{A} \coqdocvar{\tprime{A}}: \coqdockw{Set}{\coqdocnotation{),}} \coqexternalref{:type scope:'x7B' x ':' x 'x26' x 'x7D'}{http://coq.inria.fr/distrib/8.5pl3/stdlib/Coq.Init.Specif}{\coqdocnotation{\{}}\coqdocvar{R} \coqexternalref{:type scope:'x7B' x ':' x 'x26' x 'x7D'}{http://coq.inria.fr/distrib/8.5pl3/stdlib/Coq.Init.Specif}{\coqdocnotation{:}} \coqdocvariable{A} \coqexternalref{:type scope:x '->' x}{http://coq.inria.fr/distrib/8.5pl3/stdlib/Coq.Init.Logic}{\coqdocnotation{\ensuremath{\rightarrow}}} \coqdocvariable{\tprime{A}} \coqexternalref{:type scope:x '->' x}{http://coq.inria.fr/distrib/8.5pl3/stdlib/Coq.Init.Logic}{\coqdocnotation{\ensuremath{\rightarrow}}} \coqdockw{Prop} \coqexternalref{:type scope:'x7B' x ':' x 'x26' x 'x7D'}{http://coq.inria.fr/distrib/8.5pl3/stdlib/Coq.Init.Specif}{\coqdocnotation{\&}} \coqref{Top.paper.Total}{\coqdocdefinition{Total}} \coqdocvar{R} \coqref{Top.paper.::x 'xC3x97' x}{\coqdocnotation{×}} \coqref{Top.paper.OneToOne}{\coqdocdefinition{OneToOne}} \coqdocvar{R}\coqexternalref{:type scope:'x7B' x ':' x 'x26' x 'x7D'}{http://coq.inria.fr/distrib/8.5pl3/stdlib/Coq.Init.Specif}{\coqdocnotation{\}}}.\coqdoceol

In the {\anyrel} translation described in the previous section, had we not ensured that $\hat{\coqdockw{Set}}$ := \coqdockw{Prop},
and instead chosen $\hat{\coqdockw{Set}}$ := \coqdockw{Set}, we would also need
(\secref{sec:uniformProp:pnode}) to consider and compositionally build proofs of a third property about parametricity relations of types, which seems hard but doable:

\coqdocnoindent
\coqdockw{Definition} \coqdef{Top.paper.irrelevant}{irrelevant}{\coqdocdefinition{irrelevant}}  \{\coqdocvar{A} \coqdocvar{\tprime{A}} : \coqdockw{Set}\} (\coqdocvar{R} : \coqdocvariable{A} \coqexternalref{:type scope:x '->' x}{http://coq.inria.fr/distrib/8.5pl3/stdlib/Coq.Init.Logic}{\coqdocnotation{\ensuremath{\rightarrow}}} \coqdocvariable{\tprime{A}} \coqexternalref{:type scope:x '->' x}{http://coq.inria.fr/distrib/8.5pl3/stdlib/Coq.Init.Logic}{\coqdocnotation{\ensuremath{\rightarrow}}} \coqdockw{Set})  :=
\coqdockw{\ensuremath{\forall}} (\coqdocvar{a}:\coqdocvariable{A}) (\coqdocvar{\tprime{a}}:\coqdocvariable{\tprime{A}}) (\coqdocvar{p1} \coqdocvar{p2}: \coqdocvariable{R} \coqdocvariable{a} \coqdocvariable{\tprime{a}}), \coqdocvariable{p1} \coqexternalref{:type scope:x '=' x}{http://coq.inria.fr/distrib/8.5pl3/stdlib/Coq.Init.Logic}{\coqdocnotation{=}} \coqdocvariable{p2}.\coqdoceol

\coqdocnoindent
Because we have $\hat{\coqdockw{Set}}$ := \coqdockw{Prop}, the type of
\coqdocvar{R} in the above definition becomes
\coqdocvariable{A} {\coqdocnotation{\ensuremath{\rightarrow}}} \coqdocvariable{\tprime{A}} {\coqdocnotation{\ensuremath{\rightarrow}}} \coqdockw{Prop}
and thus the above property becomes a trivial consequence of the proof irrelevance axiom.
Also, in \secref{sec:uniformProp:indt}, we will see that our {\isorel}
translation needs that (or a similar) axiom anyway.

The abstraction theorem (\thmref{Abstraction}) for the {\anyrel} translation says that 
for closed terms $t$ and $T$, 
if $t$ : $T$, then \ptranslate{$t$}: (\ptranslate{$T$} $t$ $t$).
The main change
now is that in some cases, \ptranslateIso{$T$} may be a dependent pair: of a
relation and some proofs about the relation.
Thus, we may need to project out the relation from \ptranslateIso{$T$}.

The above change in the translation of \coqdockw{Prop} means that for 
relations of propositional variables, we get to assume the two extra properties.
However, for composite propositions, we must build the proofs of those two properties
while assuming the property for the subcomponents, if any. Fortunately, 
starting from the universes, 
there are only two ways
to construct new types or propositions in Coq: dependent function types and inductive types.
Although one can also construct propositions or types by pattern matching and
returning different types in each branch, recursively, those types always
originate from the two primitive mechanisms mentioned above.\fxnote{How can this be made formal? Can we say that it suffices to consider normal forms of types
because the translation preserves reduction?}
When viewed through the lens of logic, 
dependent function types correspond to universal quantification,
and one can construct inductive types that correspond to familiar logical
constructs such as existential quantification.

In the next two subsections, we see how to compositionally build the
proofs for the two ways to build new propositions, and the additional assumptions (\coqRefDefn{Top.paper}{Total} or  
\coqRefDefn{Top.paper}{OneToOne} property) needed about
relations of types mentioned in the propositions.
Then, in the next section, we will see how to compositionally build the
proofs of \coqRefDefn{Top.paper}{Total} and \coqRefDefn{Top.paper}{OneToOne} properties 
for {\anyrel} translations of types.
Finally, in \secref{sec:isorel}, we use these constructions of proofs in \ptranslateIso{}.
Propositions where types of higher universes (\coqdockw{Type}$_i$ for
$i>0$) occur at certain places are excluded for fundamental reasons~(\secref{sec:isorel:limitations}).

\emph{All the proofs in this and the next section (except \secref{sec:uniformProp:necessity}) were originally done in Coq.} 
The appendix~(\secref{appendix:suppl}) has pointers to Coq proofs submitted
as anonymous supplementary material.
\appref{appendix:table} summarizes the main lemmas of this and the next section as tables.
\subsection{Universal Quantification}
\label{sec:uniformProp:piProp}

Consider \coqdocvar{A}:\coqdockw{Set}, \coqdocvar{B}:\coqdocvar{A}
$\rightarrow$ \coqdockw{Prop} $\vdash$ ($\forall$ (\coqdocvar{a}:\coqdocvar{A}), \coqdocvar{B} \coqdocvar{a}):\coqdockw{Prop}.
By \thmref{Abstraction}, we have
\ptranslate{\coqdocvar{A}:\coqdockw{Set}, \coqdocvar{B}:\coqdocvar{A}
$\rightarrow$ \coqdockw{Prop}} $\vdash$ 
\ptranslate{$\forall$ (\coqdocvar{a}:\coqdocvar{A}), \coqdocvar{B} \coqdocvar{a})}
: \ptranslate{Prop}
($\forall$ (\coqdocvar{a}:\coqdocvar{A}), \coqdocvar{B} \coqdocvar{a})
($\forall$ (\coqdocvarP{a}:\coqdocvarP{A}), \coqdocvarP{B} \coqdocvarP{a}).
By unfolding definitions and $\beta$ reduction, we get:
\coqdocvar{A}:\coqdockw{Set}, 
\coqdocvarP{A}:\coqdockw{Set}, 
\coqdocvarR{A}:\coqdocvar{A}~$\rightarrow$~\coqdocvarP{A}~$\rightarrow$~\coqdockw{Set}, 
\coqdocvar{B}:\coqdocvar{A} $\rightarrow$ \coqdockw{Prop}, 
\coqdocvarP{B}:\coqdocvarP{A} $\rightarrow$ \coqdockw{Prop}, 
\coqdocvarR{B}:$\forall$~(\coqdocvar{a}:\coqdocvar{A})~(\coqdocvar{\tprime{a}}:\coqdocvar{\tprime{A}}),%
~\coqdocvar{\trel{A}}~\coqdocvar{a}~\coqdocvar{\tprime{a}}~$\rightarrow$%
(\coqdocvar{B}~\coqdocvar{a}) $\rightarrow$%
~(\coqdocvar{\tprime{B}}~\coqdocvar{\tprime{a}})~$\rightarrow$~\coqdockw{Prop}
$\vdash$
\ptranslate{$\forall$ (\coqdocvar{a}:\coqdocvar{A}), \coqdocvar{B} \coqdocvar{a})}
: 
($\forall$ (\coqdocvar{a}:\coqdocvar{A}), \coqdocvar{B} \coqdocvar{a})
$\rightarrow$
($\forall$ (\coqdocvarP{a}:\coqdocvarP{A}), \coqdocvarP{B} \coqdocvarP{a})
$\rightarrow$
\coqdockw{Prop}

We wish to prove that the relation
\ptranslate{$\forall$ (\coqdocvar{a}:\coqdocvar{A}), \coqdocvar{B} \coqdocvar{a})} has the
\coqRefDefn{Top.paper}{IffProps} and
\coqRefDefn{Top.paper}{CompleteRel} properties.
Because \ptranslateIso{} will be structurally recursive~(\secref{sec:isorel}),
we have 2 hypotheses asserting that \coqdocvarR{B} already has the 2 properties, i.e,
\coqdocvar{iHrec}: $\forall$ (\coqdocvar{a}:\coqdocvar{A}) (\coqdocvarP{a}:\coqdocvarP{A})
(\coqdocvarR{a}:\coqdocvarR{A} \coqdocvar{a} \coqdocvarP{a}),
\coqRefDefn{Top.paper}{IffProps} (\coqdocvarR{B} \coqdocvar{a} \coqdocvarP{a} \coqdocvarR{a})
and  
\coqdocvar{cHrec}: $\forall$ (\coqdocvar{a}:\coqdocvar{A}) (\coqdocvarP{a}:\coqdocvarP{A})
(\coqdocvarR{a}:\coqdocvarR{A} \coqdocvar{a} \coqdocvarP{a}),
\coqRefDefn{Top.paper}{CompleteRel} (\coqdocvarR{B} \coqdocvar{a} \coqdocvarP{a} \coqdocvarR{a}).
Using \coqdocvar{cHrec}, it is trivial to prove the following:
\begin{lemma}
\label{lemma:piComplete}
\coqRefDefn{Top.paper}{CompleteRel}
(\ptranslate{$\forall$ (\coqdocvar{a}:\coqdocvar{A}), \coqdocvar{B} \coqdocvar{a})})
\end{lemma}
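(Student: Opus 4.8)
The plan is to unfold both definitions and reduce the goal to a single application of the hypothesis \coqdocvar{cHrec}. First I would expand \coqRefDefn{Top.paper}{CompleteRel} together with the {\anyrel} translation of the dependent function type. By the $\forall$ clause of \ptranslate{}, and its application clause (which sends \ptranslate{\coqdocvar{B} \coqdocvar{a}} to \coqdocvarR{B} \coqdocvar{a} \coqdocvarP{a} \coqdocvarR{a}), the relation \ptranslate{$\forall$ (\coqdocvar{a}:\coqdocvar{A}), \coqdocvar{B} \coqdocvar{a}}, applied to two functions \coqdocvar{f} and \coqdocvarP{f}, $\beta$-reduces to the proposition $\forall$ (\coqdocvar{a}:\coqdocvar{A}) (\coqdocvarP{a}:\coqdocvarP{A}) (\coqdocvarR{a}:\coqdocvarR{A} \coqdocvar{a} \coqdocvarP{a}), \coqdocvarR{B} \coqdocvar{a} \coqdocvarP{a} \coqdocvarR{a} (\coqdocvar{f} \coqdocvar{a}) (\coqdocvarP{f} \coqdocvarP{a}). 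Hence, after unfolding \coqRefDefn{Top.paper}{CompleteRel} and introducing \coqdocvar{f}, \coqdocvarP{f}, \coqdocvar{a}, \coqdocvarP{a}, and \coqdocvarR{a}, the goal is to inhabit \coqdocvarR{B} \coqdocvar{a} \coqdocvarP{a} \coqdocvarR{a} (\coqdocvar{f} \coqdocvar{a}) (\coqdocvarP{f} \coqdocvarP{a}).

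The key step is then immediate. Instantiating \coqdocvar{cHrec} at \coqdocvar{a}, \coqdocvarP{a}, and \coqdocvarR{a} yields a proof of \coqRefDefn{Top.paper}{CompleteRel} (\coqdocvarR{B} \coqdocvar{a} \coqdocvarP{a} \coqdocvarR{a}), which by definition unfolds to $\forall$ (\coqdocvar{p}:\coqdocvar{B} \coqdocvar{a}) (\coqdocvarP{p}:\coqdocvarP{B} \coqdocvarP{a}), \coqdocvarR{B} \coqdocvar{a} \coqdocvarP{a} \coqdocvarR{a} \coqdocvar{p} \coqdocvarP{p}. Applying this to the two proofs \coqdocvar{f} \coqdocvar{a} (of type \coqdocvar{B} \coqdocvar{a}) and \coqdocvarP{f} \coqdocvarP{a} (of type \coqdocvarP{B} \coqdocvarP{a}) produces exactly the required inhabitant, closing the goal.

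There is essentially no obstacle: completeness of a universally quantified proposition holds pointwise and is inherited directly from completeness of the body. The only thing worth checking is that the translation of the application \coqdocvar{B} \coqdocvar{a} is the relation \coqdocvarR{B} \coqdocvar{a} \coqdocvarP{a} \coqdocvarR{a}, which is precisely the relation to which \coqdocvar{cHrec} supplies the \coqRefDefn{Top.paper}{CompleteRel} property. Note that this direction uses neither the \coqRefDefn{Top.paper}{Total} nor the \coqRefDefn{Top.paper}{OneToOne} property of \coqdocvarR{A}; totality of the domain relation is only expected to be needed for the companion \coqRefDefn{Top.paper}{IffProps} lemma, where one must genuinely transport proofs between \coqdocvar{B} \coqdocvar{a} and \coqdocvarP{B} \coqdocvarP{a}.
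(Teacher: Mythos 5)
Your proof is correct and matches the paper's intent exactly: the paper dismisses this lemma as ``trivial to prove using \coqdocvar{cHrec}'', and your unfolding-then-pointwise-application of \coqdocvar{cHrec} to \coqdocvar{f} \coqdocvar{a} and \coqdocvarP{f} \coqdocvarP{a} is precisely that argument. Your closing remark that neither \coqRefDefn{Top.paper}{Total} nor \coqRefDefn{Top.paper}{OneToOne} is needed here also agrees with the paper, whose statement of the lemma carries no such hypotheses.
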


In contrast, \coqRefDefn{Top.paper}{IffProps} \ptranslate{($\forall$ (\coqdocvar{a}:\coqdocvar{A}), \coqdocvar{B} \coqdocvar{a})},
which $\beta$-reduces to
($\forall$ (\coqdocvar{a}:\coqdocvar{A}), \coqdocvar{B} \coqdocvar{a})
$\leftrightarrow$
($\forall$ (\coqdocvar{\tprime{a}}:\coqdocvar{\tprime{A}}), \coqdocvar{\tprime{B}} \coqdocvar{\tprime{a}}), is impossible to prove without additional assumption(s).
As a counterexample, take \coqdocvar{A} to be a
non-empty type, \coqdocvar{\tprime{A}} to be an empty type (e.g. {\CoqFalse}),
and \coqdocvar{B} and \coqdocvar{\tprime{B}} to be {$\lambda$ \_,\CoqFalse}.
A simple and sufficient assumption is 
\coqRefDefn{Top.paper}{Total} \coqdocvar{\trel{A}}:
\begin{lemma}
\label{lemma:piIff}
\coqRefDefn{Top.paper}{Total} \coqdocvar{\trel{A}}
$\rightarrow$
\coqRefDefn{Top.paper}{IffProps}
(\ptranslate{$\forall$ (\coqdocvar{a}:\coqdocvar{A}), \coqdocvar{B} \coqdocvar{a})})
\end{lemma}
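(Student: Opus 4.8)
The plan is to unfold \coqRefDefn{Top.paper}{IffProps} at the relation \ptranslate{$\forall$ (\coqdocvar{a}:\coqdocvar{A}), \coqdocvar{B} \coqdocvar{a})} and, as the text already observes, reduce the goal to the biconditional ($\forall$ (\coqdocvar{a}:\coqdocvar{A}), \coqdocvar{B} \coqdocvar{a}) $\leftrightarrow$ ($\forall$ (\coqdocvar{\tprime{a}}:\coqdocvar{\tprime{A}}), \coqdocvar{\tprime{B}} \coqdocvar{\tprime{a}}). I would prove the two implications separately; they are mirror images of one another, each consuming one of the two components of the assumption \coqRefDefn{Top.paper}{Total} \coqdocvar{\trel{A}}.

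For the forward direction I would assume \coqdocvar{h}:($\forall$ (\coqdocvar{a}:\coqdocvar{A}), \coqdocvar{B} \coqdocvar{a}) and fix an arbitrary \coqdocvar{\tprime{a}}:\coqdocvar{\tprime{A}}, leaving the goal \coqdocvar{\tprime{B}} \coqdocvar{\tprime{a}}. The crucial step is to feed \coqdocvar{\tprime{a}} to the second component of \coqRefDefn{Top.paper}{Total} \coqdocvar{\trel{A}}, obtaining (after destructing the resulting dependent pair) some \coqdocvar{a}:\coqdocvar{A} together with a witness \coqdocvarR{a}:\coqdocvarR{A} \coqdocvar{a} \coqdocvar{\tprime{a}}. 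Instantiating the recursion hypothesis \coqdocvar{iHrec} at \coqdocvar{a}, \coqdocvar{\tprime{a}}, \coqdocvarR{a} gives \coqRefDefn{Top.paper}{IffProps} (\coqdocvarR{B} \coqdocvar{a} \coqdocvar{\tprime{a}} \coqdocvarR{a}), which unfolds to the pointwise equivalence (\coqdocvar{B} \coqdocvar{a}) $\leftrightarrow$ (\coqdocvar{\tprime{B}} \coqdocvar{\tprime{a}}). Since \coqdocvar{h} \coqdocvar{a} proves \coqdocvar{B} \coqdocvar{a}, pushing it through the left-to-right half of this equivalence yields \coqdocvar{\tprime{B}} \coqdocvar{\tprime{a}}, as required. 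The backward direction is completely dual: assume \coqdocvar{\tprime{h}}:($\forall$ (\coqdocvar{\tprime{a}}:\coqdocvar{\tprime{A}}), \coqdocvar{\tprime{B}} \coqdocvar{\tprime{a}}), fix \coqdocvar{a}:\coqdocvar{A}, use the first component of \coqRefDefn{Top.paper}{Total} \coqdocvar{\trel{A}} to produce a related \coqdocvar{\tprime{a}} and witness \coqdocvarR{a}, and then transport \coqdocvar{\tprime{h}} \coqdocvar{\tprime{a}} across the right-to-left half of the equivalence supplied by \coqdocvar{iHrec}.

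There is no genuinely hard step here --- the whole argument is a transport along a pointwise iff --- so the only thing worth flagging is precisely where \coqRefDefn{Top.paper}{Total} \coqdocvar{\trel{A}} is indispensable: it is what lets us convert a quantifier ranging over \coqdocvar{A} into one ranging over \coqdocvar{\tprime{A}} and back. Without it the biconditional genuinely fails, as witnessed by the counterexample already mentioned in which \coqdocvar{\tprime{A}} is empty and both \coqdocvar{B} and \coqdocvar{\tprime{B}} are $\lambda$ \_, {\CoqFalse}. I note also that, unlike \lemref{lemma:piComplete}, this proof uses only \coqdocvar{iHrec}; the \coqRefDefn{Top.paper}{CompleteRel} hypothesis \coqdocvar{cHrec} plays no role.
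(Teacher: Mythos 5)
Your proof is correct and matches the paper's intent: the paper only remarks that the result is ``straightforward using \coqdocvar{iHrec}'', and your argument --- using the two halves of \coqRefDefn{Top.paper}{Total} \coqdocvar{\trel{A}} to swap quantification domains and then transporting along the pointwise equivalence supplied by \coqdocvar{iHrec} --- is exactly that straightforward proof, including the correct observation that \coqdocvar{cHrec} is not needed here.
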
%
\noindent
Using \coqdocvar{iHrec}, the proof is straighforward.  We defer the discussion 
of the necessity of the \coqRefDefn{Top.paper}{Total} assumption to \secref{sec:uniformProp:necessity}.
In summary, universal quantifications behave uniformly if the relation corresponding to the quantified type
is \coqRefDefn{Top.paper}{Total}.

\subsection{Inductively defined propositions}
\label{sec:uniformProp:pnode}
We already saw an indexed-inductive proposition (the polymorphic equality proposition) in \secref{sec:intro}. 
In Coq, relations and predicates are often defined using indexed-induction.
Here is the definition of $\le$ on natural numbers:

\coqdocnoindent
\coqdockw{Inductive} \coqdef{Top.paper.le}{le}{\coqdocinductive{le}} (\coqdocvar{n} : \coqref{Top.paper.nat}{\coqdocinductive{nat}}) : \coqref{Top.paper.nat}{\coqdocinductive{nat}} \coqexternalref{:type scope:x '->' x}{http://coq.inria.fr/distrib/8.5pl3/stdlib/Coq.Init.Logic}{\coqdocnotation{\ensuremath{\rightarrow}}} \coqdockw{Prop} :=\coqdoceol
\coqdocnoindent
\ensuremath{|}\coqdef{Top.paper.le n}{le\_n}{\coqdocconstructor{le\_n}} : \coqref{Top.paper.le}{\coqdocinductive{le}} \coqdocvar{n} \coqdocvar{n} \coqdoceol
\coqdocnoindent
\ensuremath{|}\coqdef{Top.paper.le S}{le\_S}{\coqdocconstructor{le\_S}} : \coqdockw{\ensuremath{\forall}} \coqdocvar{m} : \coqref{Top.paper.nat}{\coqdocinductive{nat}}, \coqref{Top.paper.le}{\coqdocinductive{le}} \coqdocvar{n} \coqdocvariable{m} \coqexternalref{:type scope:x '->' x}{http://coq.inria.fr/distrib/8.5pl3/stdlib/Coq.Init.Logic}{\coqdocnotation{\ensuremath{\rightarrow}}} \coqref{Top.paper.le}{\coqdocinductive{le}} \coqdocvar{n} (\coqref{Top.paper.S}{\coqdocconstructor{S}} \coqdocvariable{m}).\coqdoceol 

Unlike universal quantification, inductively defined propositions
come in infinitely many shapes. For example, there can be an arbitrary number of parameters, indices, constructors and arguments of constructors.
To explain the key ideas, we consider just one type, which is an indexed version
of the W type~\cite{Martin-Lof1984} and can be understood as trees with possibly infinite branching.
W types can be used to encode a large class of inductively defined
types~\cite{Dybjer1997, Abbott.Altenkirch.ea2004}.

\coqdocnoindent
\coqdockw{Inductive} \coqdef{Top.paper.IWP}{IWP}{\coqdocinductive{IWP}} (\coqdocvar{I} \coqdocvar{A} : \coqdockw{Set}) (\coqdocvar{B} : \coqdocvariable{A} \coqexternalref{:type scope:x '->' x}{http://coq.inria.fr/distrib/8.5pl3/stdlib/Coq.Init.Logic}{\coqdocnotation{\ensuremath{\rightarrow}}} \coqdockw{Set}) (\coqdocvar{AI} : \coqdocvariable{A} \coqexternalref{:type scope:x '->' x}{http://coq.inria.fr/distrib/8.5pl3/stdlib/Coq.Init.Logic}{\coqdocnotation{\ensuremath{\rightarrow}}} \coqdocvariable{I})  (\coqdocvar{BI} : \coqdockw{\ensuremath{\forall}} (\coqdocvar{a} : \coqdocvariable{A}), \coqdocvariable{B} \coqdocvariable{a} \coqexternalref{:type scope:x '->' x}{http://coq.inria.fr/distrib/8.5pl3/stdlib/Coq.Init.Logic}{\coqdocnotation{\ensuremath{\rightarrow}}} \coqdocvariable{I}) : \coqdockw{\ensuremath{\forall}} (\coqdocvar{i}:\coqdocvar{I}), \coqdockw{Prop} :=\coqdoceol
\coqdocnoindent
\coqdef{Top.paper.pnode}{pnode}{\coqdocconstructor{pnode}} : \coqdockw{\ensuremath{\forall}} (\coqdocvar{a} : \coqdocvar{A}) (\coqdocvar{branches} : \coqdockw{\ensuremath{\forall}} \coqdocvar{b} : \coqdocvar{B} \coqdocvariable{a}, \coqref{Top.paper.IWP}{\coqdocinductive{IWP}} \coqdocvar{I} \coqdocvar{A} \coqdocvar{B} \coqdocvar{AI} \coqdocvar{BI} (\coqdocvar{BI} \coqdocvariable{a} \coqdocvariable{b})), \coqref{Top.paper.IWP}{\coqdocinductive{IWP}} \coqdocvar{I} \coqdocvar{A} \coqdocvar{B} \coqdocvar{AI} \coqdocvar{BI} (\coqdocvar{AI} \coqdocvariable{a}).\coqdoceol

\coqdocvar{I} is the type of indices. There is only one index type. This may be a loss of convenience, but is not a loss of generality, because one can use (dependent) pairs to
encode multiple, dependent indices. The type \coqdocvar{A} encodes the non-recursive arguments. 
Given any \coqdocvar{a}:\coqdocvar{A}, \coqdocvar{B} \coqdocvar{a} denotes the branching factor 
of a node of the tree: see the \coqdocvar{branches} argument of the constructor \coqRefConstr{Top.paper}{pnode}.
For example, we can choose \coqdocvar{B} := $\lambda$ (\coqdocvar{a}: \coqdocvar{A}), {\CoqBool} for binary (proof) trees.
The function \coqdocvar{AI} determines the index of the return type of the constructor. 
Similarly, the function \coqdocvar{BI} determines the indices of the subtrees in
the \coqdocvar{branches} argument of the constructor \coqRefConstr{Top.paper}{pnode}.
\appref{appendix:suppl} shows how to encode the above-defined
relation \coqRefInductive{Top.paper}{le} as an instance of \coqRefInductive{Top.paper}{IWP}.

Using \coqRefInductive{Top.paper}{IWP}, we proved in Coq the uniformity properties for the large class of inductive propositions encodable using \coqRefInductive{Top.paper}{IWP}.
Otherwise, this proof may have needed reasoning about a deep embedding of Coq's inductives.
Our implementation, which although is inspired by the uniformity proofs
for \coqRefInductive{Top.paper}{IWP}, directly translates each inductive,
without using the encoding. 
%
This has several advantages.
Users don't have to use unnatural encodings of their inductive propositions.
Even if the encoding could be automated, 
users may prefer to directly understand how the translation works for their definitions, instead of understanding how it is obtained via an encoding.
Below, although we mainly focus on the uniformity proof for \coqRefInductive{Top.paper}{IWP}, we include hints for generalizing the construction to other 
inductive propositions.

As in the previous subsection, in the translated context, we need to prove the
\coqRefDefn{Top.paper}{IffProps} and the
\coqRefDefn{Top.paper}{CompleteRel} properties for
\ptranslate{\coqRefInductive{Top.paper}{IWP} \coqdocvar{I}
\coqdocvar{A} \coqdocvar{B} \coqdocvar{AI} \coqdocvar{BI} \coqdocvar{i}}.
Because \coqRefInductive{Top.paper}{IWP}
 returns a \coqdockw{Prop},
it is translated in the inductive style~(\secref{sec:anyrel:compare}).
Let {\coqdocinductive{\trel{IWP}}} denote the inductive-style translation of \coqRefInductive{Top.paper}{IWP}. 
%
We explain the proof of the \coqRefDefn{Top.paper}{CompleteRel} property and one direction of the \coqRefDefn{Top.paper}{IffProps} property.
We conveniently prove both the properties simultaneously.
We will use the following abbreviations in a translated context:
\coqdef{Top.paper.W}{W}{\coqdocdefinition{W}} := 
\coqRefInductive{Top.paper}{IWP} \coqdocvar{I}
\coqdocvar{A} \coqdocvar{B} \coqdocvar{AI} \coqdocvar{BI},
\tprime{\coqdocdefinition{W}} := 
\coqRefInductive{Top.paper}{IWP} \coqdocvarP{I}
\coqdocvarP{A} \coqdocvarP{B} \coqdocvarP{AI} \coqdocvarP{BI},
\trel{\coqdocdefinition{W}} :=
\coqref{Top.paper.IWP R}{\coqdocinductive{\trel{IWP}}} \coqdocvariable{I} \coqdocvariable{\tprime{I}} \coqdocvariable{\trel{I}} \coqdocvariable{A} \coqdocvariable{\tprime{A}} \coqdocvariable{\trel{A}} \coqdocvariable{B} \coqdocvariable{\tprime{B}} \coqdocvariable{\trel{B}} \coqdocvariable{AI} \coqdocvariable{\tprime{AI}} \coqdocvariable{\trel{AI}} \coqdocvariable{BI} \coqdocvariable{\tprime{BI}} \coqdocvariable{\trel{BI}},
\coqdef{Top.paper.pnodew}{pnodew}{\coqdocdefinition{pnodew}} := 
\coqRefConstr{Top.paper}{pnode} \coqdocvar{I}
\coqdocvar{A} \coqdocvar{B} \coqdocvar{AI} \coqdocvar{BI},
\tprime{\coqdef{Top.paper.pnodew}{pnodew}{\coqdocdefinition{pnodew}}} := 
\coqRefConstr{Top.paper}{pnode} \coqdocvarP{I}
\coqdocvarP{A} \coqdocvarP{B} \coqdocvarP{AI} \coqdocvarP{BI}

\begin{lemma}
\label{lemma:iwp}
\coqRefDefn{Top.paper}{Total}  \coqdocvarR{A}
$\rightarrow$
 ($\forall$ (\coqdocvar{a}:\coqdocvar{A}) (\coqdocvarP{a}:\coqdocvarP{A})
(\coqdocvarR{a}:\coqdocvarR{A} \coqdocvar{a} \coqdocvarP{a}),
\coqRefDefn{Top.paper}{Total} (\coqdocvarR{B} \coqdocvar{a} \coqdocvarP{a} \coqdocvarR{a}))
$\rightarrow$
\coqRefDefn{Top.paper}{OneToOne} \coqdocvarR{I}
\coqdoceol
$\rightarrow$
$\forall$ (\coqdocvar{p}:\coqRefDefn{Top.paper}{W} \coqdocvar{i}),
(%
\tprime{\coqRefDefn{Top.paper}{W}} \coqdocvariable{\tprime{i}}
{\CoqConj} 
\coqdockw{\ensuremath{\forall}} \coqdocvar{y} : \tprime{\coqRefDefn{Top.paper}{W}} \coqdocvariable{\tprime{i}}, 
 \trel{\coqRefDefn{Top.paper}{W}} \coqdocvariable{i} \coqdocvariable{\tprime{i}} \coqdocvariable{\trel{i}} \coqdocvariable{p} \coqdocvariable{y}).\coqdoceol
\end{lemma}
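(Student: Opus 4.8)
The plan is to prove both conjuncts of \lemref{lemma:iwp} simultaneously, by induction on the tree $p : W\ i$. Since $i$ is an \emph{index} of \emph{IWP}, before inducting I would revert $i'$ and $i_R$ so that the induction motive becomes $P\ i\ p := \forall (i':I')(i_R : I_R\ i\ i'),\ W'\ i' \wedge (\forall y : W'\ i',\ W_R\ i\ i'\ i_R\ p\ y)$; reverting the index relation is exactly what lets the induction hypothesis apply to each subtree at \emph{its own} index. There is a single constructor case, $p = \mathtt{pnode}\ a\ \mathit{branches}$, in which the type of $p$ forces $i = \mathit{AI}\ a$.

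For the first conjunct I would build an inhabitant of $W'\ i'$ explicitly. First, \emph{Total} $A_R$ supplies some $a' : A'$ with $a_R : A_R\ a\ a'$. To produce the branches of the mirror tree, for each $b' : B'\ a'$ I would use the pointwise \emph{Total} $(B_R\ a\ a'\ a_R)$ hypothesis to obtain a $b : B\ a$ and $b_R : B_R\ a\ a'\ a_R\ b\ b'$; feeding $\mathit{branches}\ b$ to the induction hypothesis at the index triple $(\mathit{BI}\ a\ b,\ \mathit{BI'}\ a'\ b',\ \mathit{BI}_R\ a\ a'\ a_R\ b\ b'\ b_R)$ and taking its first conjunct yields the required subtree in $W'\ (\mathit{BI'}\ a'\ b')$. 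Assembling these with $\mathtt{pnode}$ gives a term of type $W'\ (\mathit{AI'}\ a')$. Finally I would invoke \emph{OneToOne} $I_R$ (its right-uniqueness half) on the two proofs $\mathit{AI}_R\ a\ a'\ a_R : I_R\ (\mathit{AI}\ a)\ (\mathit{AI'}\ a')$ and $i_R : I_R\ (\mathit{AI}\ a)\ i'$ to deduce $\mathit{AI'}\ a' = i'$, and transport the constructed term along this equality into $W'\ i'$.

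For the second conjunct I would exploit that $W$, $W'$ and $W_R$ all live in \coqdockw{Prop}. Let $w'' : W'\ i'$ be the witness just built. Using $\mathtt{pnode}_R$ (the translated constructor of the inductive-style $W_R$) I would first prove the single instance $W_R\ i\ i'\ i_R\ p\ w''$: the root labels are related by $a_R$; the subtrees are related by the \emph{completeness} part of the induction hypothesis (again routing each $b'$ through \emph{Total} $(B_R\ a\ a'\ a_R)$); and the index-relation component demanded by the constructor, $\mathit{AI}_R\ a\ a'\ a_R$, is reconciled with $i_R$ by proof irrelevance, since after the transport above both inhabit the same proposition $I_R\ (\mathit{AI}\ a)\ i'$. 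Then, for an \emph{arbitrary} $y : W'\ i'$, proof irrelevance gives $y = w''$, and I rewrite to transfer the proof to $W_R\ i\ i'\ i_R\ p\ y$. This is precisely why \emph{CompleteRel} is obtainable even though $A_R$ is only assumed \emph{Total} rather than complete: we never have to relate $p$ to a tree with an \emph{unrelated} root label, because every element of the proposition $W'\ i'$ is forced to coincide with our canonical witness $w''$.

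The main obstacle is the dependently-typed index bookkeeping, not any combinatorial content. Because $W_R$ is the \emph{indexed} inductive-style relation, its constructor $\mathtt{pnode}_R$ fixes the index triple to $(\mathit{AI}\ a,\ \mathit{AI'}\ a',\ \mathit{AI}_R\ a\ a'\ a_R)$, whereas the goal presents $(\mathit{AI}\ a,\ i',\ i_R)$; matching them requires first transporting along the \emph{OneToOne}-derived equality $\mathit{AI'}\ a' = i'$ and only then collapsing the residual obligation on the two $I_R$-proofs by proof irrelevance. Getting these transports, together with the \coqdockw{as}/\coqdockw{in}/\coqdockw{return} annotations of the underlying \coqdockw{match}, to typecheck is the delicate step. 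Each hypothesis plays one clearly delimited role (\emph{Total} $A_R$ for the root label, \emph{Total} $B_R$ for the branches, \emph{OneToOne} $I_R$ for aligning indices), and the \coqdockw{Prop}-valuedness of \emph{IWP} supplies the proof irrelevance that drives both the uniqueness of $y$ and the index-proof reconciliation.
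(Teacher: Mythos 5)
Your proposal is correct and follows essentially the same route as the paper's proof: induction on $p$ with the index data generalized, \emph{Total} $A_R$ for the root label, pointwise \emph{Total} $B_R$ for the branches, \emph{OneToOne} $I_R$ to derive $\mathit{AI'}\ a' = i'$ and rewrite, proof irrelevance of $W'\ i'$ to collapse an arbitrary $y$ onto the constructed witness, and proof irrelevance of $I_R$ to reconcile $i_R$ with $\mathit{AI}_R\ a\ a'\ a_R$ before applying the translated constructor and the induction hypothesis. The only differences are presentational (you build the witness before substituting the index equality everywhere, whereas the paper rewrites first), so there is nothing substantive to flag.
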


\coqdocnoindent
Note that \coqdocvarR{i} has type \coqdocvarR{I} \coqdocvar{i} \coqdocvarP{i}.  
We proceed by induction on \coqdocvar{p}. The corresponding proof term is a structurally recursive function which pattern matches on \coqdocvar{p}.
(Our translation directly produces fully elaborated Gallina proof terms, and
not LTac proof scripts which have less well-defined semantics.) In the inductive step, we have, for some \coqdocvar{a} and
\coqdocvar{branches},  \coqdocvar{p} := 
(\coqRefDefn{Top.paper}{pnodew} \coqdocvar{a} \coqdocvar{branches}):
\coqRefDefn{Top.paper}{W} (\coqdocvar{AI} \coqdocvar{a}).
Note that the pattern matching (induction) refines the index of the discriminee \coqdocvar{p} from \coqdocvar{i} to (\coqdocvar{AI} \coqdocvar{a}).
Also, \coqdocvarR{i} now has type \coqdocvarR{I} (\coqdocvar{AI} \coqdocvar{a}) \coqdocvarP{i}.
It is straightforward to use the induction hypothesis and the 
\coqRefDefn{Top.paper}{Total} property for \coqdocvarR{A} and \coqdocvarR{B} to
obtain \coqdocvar{\tprime{a}} and \coqdocvarP{branches} such that
(\tprime{\coqRefDefn{Top.paper}{pnodew}} \coqdocvarP{a} \coqdocvarP{branches}):
\tprime{\coqRefDefn{Top.paper}{W}} (\coqdocvarP{AI} \coqdocvarP{a}).
\coqRefDefn{Top.paper}{Total} \coqdocvarR{A} also provides an \coqdocvarR{a}:(\coqdocvarR{A} \coqdocvar{a} \coqdocvarP{a}).
We are not done yet even for the left conjunct because it needs something of
type \tprime{\coqRefDefn{Top.paper}{W}} \coqdocvar{\tprime{i}}.
Thus we need a proof of \coqdocvarP{AI} \coqdocvarP{a} = \coqdocvar{\tprime{i}}. 
This is where the \coqRefDefn{Top.paper}{OneToOne} property of
\coqdocvarR{I} comes to the rescue.
Recall that we have \coqdocvarR{i}:(\coqdocvarR{I} (\coqdocvar{AI} \coqdocvar{a}) \coqdocvarP{i}). Also
\ptranslate{\coqdocvar{AI} \coqdocvar{a}}:= 
\coqdocvarR{AI} \coqdocvar{a} \coqdocvarP{a} \coqdocvarR{a}, which has type 
\coqdocvarR{I} (\coqdocvar{AI} \coqdocvar{a}) (\coqdocvarP{AI} \coqdocvarP{a}).
Thus, we get the needed equality by invoking the
hypothesis \coqRefDefn{Top.paper}{OneToOne} \coqdocvarR{I}.
Now we can substitute \coqdocvar{\tprime{i}} with (\coqdocvarP{AI} \coqdocvarP{a}) \emph{everywhere} (all hypotheses and the conclusion).
In general, this rewriting step has to be done for each index of an inductive
proposition and rewriting everywhere becomes important, especially while implementing the translation, when the later indices are dependent. 
Now
\tprime{\coqRefDefn{Top.paper}{pnodew}} \coqdocvarP{a} \coqdocvarP{branches} is a proof of the left conjunct.

The right conjunct now has type:
%
\coqexternalref{:type scope:'xE2x88x80' x '..' x ',' x}{http://coq.inria.fr/distrib/8.5pl3/stdlib/Coq.Unicode.Utf8\_core}{\coqdocnotation{∀}} 
\coqdocvar{y} : \tprime{\coqRefDefn{Top.paper}{W}} (\coqdocvar{\tprime{AI}} \coqdocvar{\tprime{a}})\coqexternalref{:type scope:'xE2x88x80' x '..' x ',' x}{http://coq.inria.fr/distrib/8.5pl3/stdlib/Coq.Unicode.Utf8\_core}{\coqdocnotation{,}} 
\trel{\coqRefDefn{Top.paper}{W}}  (\coqdocvar{AI} \coqdocvar{a}) (\coqdocvar{\tprime{AI}} \coqdocvar{\tprime{a}}) \coqdocvar{\trel{i}} ({\coqRefDefn{Top.paper}{pnodew}} \coqdocvar{a} \coqdocvar{branches}) \coqdocvariable{y}.\coqdoceol
\coqdocnoindent
Now we pick an arbitrary \coqdocvar{y} and
use proof irrelevance for the proposition \tprime{\coqRefDefn{Top.paper}{W}} (\coqdocvarP{AI} \coqdocvarP{a})
to produce a proof
that \coqdocvar{y} = \tprime{\coqRefDefn{Top.paper}{pnodew}} \coqdocvarP{a} \coqdocvarP{branches} and then substitute the former with the latter.
This step is crucial: we don't have the \coqRefDefn{Top.paper}{CompleteRel} property for \coqdocvarR{A}: \coqdocvar{A} is not a proposition.
We are only assuming the \coqRefDefn{Top.paper}{Total} property for \coqdocvarR{A}. Thus, if we had analyzed the original \coqdocvar{y} by pattern
matching on it, we would have obtained an {\coqdocvarP{a}\ensuremath{'}} that
may be different from \coqdocvarP{a} and \emph{unrelated} to \coqdocvar{a}.
 
Next, we use proof irrelevance 
for the proposition
\coqdocvarR{I} (\coqdocvar{AI} \coqdocvar{a}) (\coqdocvarP{AI} \coqdocvarP{a})
to replace \coqdocvar{\trel{i}} with (\coqdocvarR{AI} \coqdocvar{a} \coqdocvarP{a} \coqdocvarR{a}).
Had we not ensured that $\hat{\coqdockw{Set}}$ := \coqdockw{Prop},
and instead chosen $\hat{\coqdockw{Set}}$ := \coqdockw{Set}, we would be
unable to invoke proof irrelevance here and may need to explicitly assume 
\coqRefDefn{Top.paper}{irrelevant} \coqdocvarR{I}.
In general, this rewriting has to be done for each index, from the leftmost index to the rightmost index, because that is the order of dependencies.
%
Then we can use the constructor \coqRefConstrR{Top.paper}{{pnode}} and
the induction hypothesis to finish the proof.
\begin{corollary}
\label{corr:iwp}
\coqRefDefn{Top.paper}{Total}  \coqdocvarR{A}
$\rightarrow$
 ($\forall$ (\coqdocvar{a}:\coqdocvar{A}) (\coqdocvarP{a}:\coqdocvarP{A})
(\coqdocvarR{a}:\coqdocvarR{A} \coqdocvar{a} \coqdocvarP{a}),
\coqRefDefn{Top.paper}{Total} (\coqdocvarR{B} \coqdocvar{a} \coqdocvarP{a} \coqdocvarR{a}))
$\rightarrow$
\coqRefDefn{Top.paper}{OneToOne} \coqdocvarR{I}
\coqdoceol
$\rightarrow$
(
\coqRefDefn{Top.paper}{IffProps}
(\trel{\coqRefDefn{Top.paper}{W}} \coqdocvariable{i} \coqdocvariable{\tprime{i}} \coqdocvariable{\trel{i}})
{\CoqConj}
\coqRefDefn{Top.paper}{CompleteRel}
(\trel{\coqRefDefn{Top.paper}{W}} \coqdocvariable{i} \coqdocvariable{\tprime{i}} \coqdocvariable{\trel{i}})
)
\end{corollary}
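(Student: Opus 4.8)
The plan is to read off the corollary from \lemref{lemma:iwp} by projecting its two conjuncts, and to obtain the one missing direction of \coqRefDefn{Top.paper}{IffProps} by invoking \lemref{lemma:iwp} a second time on the symmetrically flipped instantiation. Recall that \coqRefDefn{Top.paper}{IffProps} (\trel{\coqRefDefn{Top.paper}{W}} \coqdocvar{i} \coqdocvar{\tprime{i}} \coqdocvar{\trel{i}}) unfolds to \coqRefDefn{Top.paper}{W} \coqdocvar{i} $\leftrightarrow$ \tprime{\coqRefDefn{Top.paper}{W}} \coqdocvar{\tprime{i}}, and \coqRefDefn{Top.paper}{CompleteRel} (\trel{\coqRefDefn{Top.paper}{W}} \coqdocvar{i} \coqdocvar{\tprime{i}} \coqdocvar{\trel{i}}) unfolds to $\forall$ (\coqdocvar{p}:\coqRefDefn{Top.paper}{W} \coqdocvar{i}) (\coqdocvar{y}:\tprime{\coqRefDefn{Top.paper}{W}} \coqdocvar{\tprime{i}}), \trel{\coqRefDefn{Top.paper}{W}} \coqdocvar{i} \coqdocvar{\tprime{i}} \coqdocvar{\trel{i}} \coqdocvar{p} \coqdocvar{y}.

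First I would dispatch \coqRefDefn{Top.paper}{CompleteRel} and the forward direction of \coqRefDefn{Top.paper}{IffProps} from a single application of \lemref{lemma:iwp}. Given the three hypotheses, for an arbitrary \coqdocvar{p}:\coqRefDefn{Top.paper}{W} \coqdocvar{i}, \lemref{lemma:iwp} yields a pair whose left component is a proof of \tprime{\coqRefDefn{Top.paper}{W}} \coqdocvar{\tprime{i}} (this is exactly the forward implication) and whose right component, applied to any \coqdocvar{y}, is a proof of \trel{\coqRefDefn{Top.paper}{W}} \coqdocvar{i} \coqdocvar{\tprime{i}} \coqdocvar{\trel{i}} \coqdocvar{p} \coqdocvar{y} (this is exactly \coqRefDefn{Top.paper}{CompleteRel}).

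Second, and this is the only genuinely new ingredient, I would obtain the backward implication \tprime{\coqRefDefn{Top.paper}{W}} \coqdocvar{\tprime{i}} $\rightarrow$ \coqRefDefn{Top.paper}{W} \coqdocvar{i} by re-applying \lemref{lemma:iwp} with the two instantiations swapped: \coqdocvar{A}, \coqdocvar{\tprime{A}}, \coqdocvarR{A} become \coqdocvar{\tprime{A}}, \coqdocvar{A} and the converse of \coqdocvarR{A}, and likewise for \coqdocvar{I}, \coqdocvar{B}, \coqdocvar{AI}, \coqdocvar{BI}, with \coqdocvar{\trel{i}} reused as an inhabitant of the converse of \coqdocvarR{I} at the swapped indices (which is definitionally the same proposition). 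For this flipped instance to typecheck I must supply the three hypotheses in flipped form, and here I use that each property is symmetric under converting the relation: \coqRefDefn{Top.paper}{Total} \coqdocvarR{A} is a product of a left-totality and a right-totality conjunct that merely swap roles, so it entails \coqRefDefn{Top.paper}{Total} of the converse; \coqRefDefn{Top.paper}{OneToOne} \coqdocvarR{I} is symmetric for the same reason, conjoining a left- and a right-injectivity clause; and the \coqRefDefn{Top.paper}{Total} hypothesis on \coqdocvarR{B} transports pointwise to the converse. The left conjunct of the flipped instance is precisely \coqRefDefn{Top.paper}{W} \coqdocvar{i}, giving the backward implication.

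Pairing the two implications yields \coqRefDefn{Top.paper}{IffProps}, and pairing that with the \coqRefDefn{Top.paper}{CompleteRel} proof from the first step discharges the corollary. The step I expect to require the most care is the flipping argument: verifying that the converse relations typecheck (in particular that reusing \coqdocvar{\trel{i}} is legitimate, since the converse of \coqdocvarR{I} at the swapped indices is the same proposition) and that the three symmetry facts are genuinely derivable rather than merely plausible. Everything else is routine projection and application.
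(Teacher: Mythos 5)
Your proof is correct and takes essentially the route the paper intends: \lemref{lemma:iwp} directly yields \coqdocdefinition{CompleteRel} and the forward half of \coqdocdefinition{IffProps} (the left conjunct, quantified over \coqdocvar{p}, is exactly the implication \coqdocdefinition{W} \coqdocvar{i} $\rightarrow$ \tprime{\coqdocdefinition{W}} \coqdocvar{\tprime{i}}), and the reverse implication is obtained by symmetry. The paper leaves that symmetric step implicit (it explicitly says it explains only ``one direction of the IffProps property''), and your concrete realization of it---re-instantiating the lemma at the converse relations, using that \coqdocdefinition{Total} and \coqdocdefinition{OneToOne} are each conjunctions of two clauses that merely swap under converse, and keeping only the left conjunct of the flipped conclusion---is a legitimate and economical way to discharge it.
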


\subsection{Necessity of our assumptions}
\label{sec:uniformProp:necessity}
In the previous two subsections, to prove the uniformity of the two canonical constructions of propositions, 
we sometimes needed to assume the
\coqRefDefn{Top.paper}{Total} and/or \coqRefDefn{Top.paper}{OneToOne} property for the translations of types mentioned
in those propositions. Now we consider the necessity of the two assumptions.

\begin{lemma}
Suppose $U$:\coqdockw{Set}, $V$:\coqdockw{Set} are closed, 
and 
that there is a tool T than can, for \emph{any} closed $P$: 
\coqdockw{Set} $\rightarrow$ \coqdockw{Prop} whose body does not mention types of higher universes, 
produce a proof of $P$ $U$ $\leftrightarrow$ 
$P$ $V$.
Then there exists a \coqRefDefn{Top.paper}{Total} and 
\coqRefDefn{Top.paper}{OneToOne} relation between
 $U$ and $V$.
\end{lemma}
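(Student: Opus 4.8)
The plan is to hand the tool T a single, carefully chosen closed proposition $P$ that encodes ``$X$ is isomorphic to $U$'', extract the resulting isomorphism $U \cong V$, and take its graph as the desired relation. Concretely, I would set
\[
  P := \lambda (X : \text{\coqdockw{Set}}),\ \exists\, (f : X \to U)\,(g : U \to X),\ (\forall x,\ g\,(f\,x) = x) \land (\forall u,\ f\,(g\,u) = u).
\]
The crucial point is that this $P$ is tool-applicable: it is closed (it mentions only the closed type $U$), it lands in \coqdockw{Prop} (propositional $\exists$, equalities, and $\land$), and --- this is where the universe restriction bites --- its body quantifies only over the function types $X \to U$ and $U \to X$, both of which live in \coqdockw{Set} rather than in a higher universe. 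This is precisely why the encoding is phrased with functions rather than with the more obvious quantification over a relation $X \to U \to \coqdockw{Prop}$: the latter would mention \coqdockw{Type}$_1$ in the body and so would fall outside the hypotheses on T.

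Next I would observe that $P\,U$ holds, witnessed by the identity functions together with reflexivity. Applying T to $P$ gives a proof of $P\,U \leftrightarrow P\,V$, so from $P\,U$ I obtain $P\,V$: functions $f : V \to U$ and $g : U \to V$ that are mutually inverse. I then set $R := \lambda (u : U)(v : V),\ g\,u = v$, the graph of $g$, and verify the two properties. Proving \coqRefDefn{Top.paper}{Total} $R$ is routine: every $u$ is related to $g\,u$, and every $v$ is related to $f\,v$ because $g\,(f\,v) = v$. Proving \coqRefDefn{Top.paper}{OneToOne} $R$ is equally direct: its first component is transitivity of equality, and its second component is injectivity of $g$, obtained by applying $f$ to both sides of $g\,u_1 = g\,u_2$ and using $f \circ g = \mathrm{id}$.

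The main obstacle is that $P\,V$ supplies the isomorphism only \emph{propositionally} (the $\exists$ lives in \coqdockw{Prop}), whereas the graph construction requires one fixed $g$. One cannot repair this by existentially quantifying the bijection inside $R$, since \coqRefDefn{Top.paper}{OneToOne} would then fail outright: two distinct bijections can send the same $u$ to different elements (for instance the identity and negation on \CoqBool). Because this subsection is argued in the metatheory --- unlike the rest of the development, which is formalized in Coq --- I would discharge the difficulty by extracting an actual $g$ from its propositional existence classically and then proceeding as above. This is the single point at which the argument leaves a purely constructive, in-Coq setting, and it is essentially forced: T can only be given a \coqdockw{Prop}-valued $P$, so the witness it returns is necessarily truncated, and recovering a concrete isomorphism is exactly the definite-description step that plain Coq cannot perform.
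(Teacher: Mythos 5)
Your proof is correct and takes essentially the same route as the paper: both feed the tool the closed predicate ``is isomorphic to $U$'' (the paper's \coqdocdefinition{isoTypes} $U$), use the trivial self-isomorphism to conclude $U \cong V$, and read off a \coqRefDefn{Top.paper}{Total} and \coqRefDefn{Top.paper}{OneToOne} relation. Your additional care about extracting a witness from the \coqdockw{Prop}-level existential is a real subtlety the paper silently glosses over, but since this lemma is explicitly argued in the metatheory (it is the one result in these sections not formalized in Coq), the classical extraction step you invoke is exactly what the paper implicitly relies on.
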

\begin{proof}
Define
\coqdocdefinition{isoTypes} :=
$\lambda$ \coqdocvar{A} \coqdocvarP{A} : \coqdockw{Set},
  $\exists$ (\coqdocvar{f} : \coqdocvar{A} $\rightarrow$ \coqdocvarP{A}) (\coqdocvar{g} : \coqdocvarP{A} $\rightarrow$ \coqdocvar{A}),
   $\forall$ (\coqdocvar{s} : \coqdocvar{A}, \coqdocvar{g} (\coqdocvar{f} \coqdocvar{s}) = \coqdocvar{s}) $\wedge$ ($\forall$ (\coqdocvar{s} : \coqdocvarP{A}, 
   \coqdocvar{f} (\coqdocvar{g} \coqdocvar{s}) = \coqdocvar{s}).
Now, invoke the tool T on \coqdocvar{P} := (\coqdocdefinition{isoTypes} $U$): (\coqdockw{Set} $\rightarrow$ \coqdockw{Prop}) to get a proof of
(\coqdocdefinition{isoTypes} $U$ $U$) $\leftrightarrow$ (\coqdocdefinition{isoTypes} $U$ $V$), which implies
\coqdocdefinition{isoTypes} $U$ $V$, which implies that there exists
a \coqRefDefn{Top.paper}{Total} and 
\coqRefDefn{Top.paper}{OneToOne} relation between $U$ and $V$.
\end{proof}

In contrast, there are examples where our translations will make unnecessary assumptions.
Suppose $f$:{\CoqNat}$\rightarrow${\CoqBool} is a closed function that always returns {\CoqBFalse}.
Now consider
($\lambda$ (\coqdocvar{T}:\coqdockw{Set}), $\forall$ (\coqdocvar{n}:{\CoqNat}) 
\coqdockw{if} $f$ \coqdocvar{n} \coqdockw{then} ($\forall$ (\coqdocvar{t}: \coqdocvar{T}), \coqdocvar{t} = \coqdocvar{t}) \coqdockw{else} {\CoqTrue}): 
(\coqdockw{Set} $\rightarrow$ \coqdockw{Prop}).
In this case, because the returned proposition has a quantification on \coqdocvar{T},
our translation would require \coqRefDefn{Top.paper}{Total} \coqdocvarR{T}.
However, a smarter translation could figure out in some cases that 
$f$ always returns {\CoqBFalse} and thus make the quantification disappear.
However, it is impossible to determine whether an arbitrary closed function of type  
{\CoqNat}$\rightarrow${\CoqBool} always returns {\CoqBFalse}.
Thus, there will be examples where every such tool makes unnecessary assumptions.
 



\section{Total and One-to-one  Properties of Relations of Types}
\label{sec:uniformProp:type}
In the above section, we saw that to ensure the uniformity of
propositions, the {\anyrel} translation of types appearing in propositions may need to have 
the \coqRefDefn{Top.paper}{Total} or \coqRefDefn{Top.paper}{OneToOne} properties.
In this section, we consider all the ways to construct new canonical types in the universe \coqdockw{Set}
and show how to build the compositional proofs of
the \coqRefDefn{Top.paper}{Total} and \coqRefDefn{Top.paper}{OneToOne} properties.
As mentioned before, we only consider the lowermost universe (\coqdockw{Set}) in the {\isorel} translation.

\subsection{Dependent Function Types}
\label{sec:uniformProp:funt}
We have \coqdocvar{A}:\coqdockw{Set}, \coqdocvar{B}:\coqdocvar{A} $\rightarrow$ \coqdockw{Set},
$\vdash$ ($\forall$ (\coqdocvar{a}:\coqdocvar{A}), \coqdocvar{B} \coqdocvar{a}):\coqdockw{Set}.
In the translated context, we need to prove
\coqRefDefn{Top.paper}{Total} 
\ptranslate{$\forall$ (\coqdocvar{a}:\coqdocvar{A}), \coqdocvar{B}
\coqdocvar{a}}
and \coqRefDefn{Top.paper}{OneToOne}
\ptranslate{$\forall$ (\coqdocvar{a}:\coqdocvar{A}), \coqdocvar{B}
\coqdocvar{a}}.
%
%
The assumptions 
\coqRefDefn{Top.paper}{Total} \coqdocvarR{A} and
($\forall$ (\coqdocvar{a}:\coqdocvar{A}) (\coqdocvarP{a}:\coqdocvarP{A})
(\coqdocvarR{a}:\coqdocvarR{A} \coqdocvar{a} \coqdocvarP{a}),
\coqRefDefn{Top.paper}{Total} (\coqdocvarR{B} \coqdocvar{a} \coqdocvarP{a} \coqdocvarR{a}))
are not sufficient to prove
\coqRefDefn{Top.paper}{Total} \ptranslate{$\forall$ (\coqdocvar{a}:\coqdocvar{A}), \coqdocvar{B}
\coqdocvar{a}}.
As a counterexample, consider
\coqdocvar{A}, \coqdocvarP{A} := {\CoqBool}; \coqdocvar{B}, \coqdocvarP{B}:= $\lambda$ \_, {\CoqBool};
\coqdocvarR{A} := $\lambda$ (\coqdocvar{a} \coqdocvarP{a} : {\CoqBool}), {\CoqTrue}; and
\coqdocvarR{B} := $\lambda$ \_ \_ \_ (\coqdocvar{b} \coqdocvarP{b} : {\CoqBool}), \coqdocvar{b} = \coqdocvarP{b}.
\ptranslate{$\forall$ (\coqdocvar{a}:\coqdocvar{A}), \coqdocvar{B}
\coqdocvar{a}} := 
λ(\coqdocvarFour{x}:∀\coqdocvar{x}:\coqdocvar{A}.\coqdocvar{B})(\coqdocvarFive{x}:∀\coqdocvarP{x}:\coqdocvarP{A}.\coqdocvarP{B}),
  ∀(\coqdocvar{x}:\coqdocvar{A})(\coqdocvarP{x}:\coqdocvarP{A})(\coqdocvarR{x}:\coqdocvarR{A}
  \coqdocvar{x}\,\coqdocvarP{x}), \coqdocvarR{B} \coqdocvar{x} \coqdocvarP{x} \coqdocvarR{x} (\coqdocvarFour{x} \coqdocvar{x}) (\coqdocvarFive{x} \coqdocvarP{x})
relates nothing to $\lambda$(\coqdocvar{x}:{\CoqBool}), \coqdocvar{x}.
Intuitively, because \coqdocvarR{A} is a complete relation,
\ptranslate{$\forall$ (\coqdocvar{a}:\coqdocvar{A}), \coqdocvar{B}
\coqdocvar{a}} only relates constant functions.
The above counterexample was mainly enabled by the coarseness of \coqdocvarR{A}:
\coqdocvarR{A} is \coqRefDefn{Top.paper}{Total} but \emph{not} \coqRefDefn{Top.paper}{OneToOne}.
Indeed, the proof is easy
after adding the assumption
\coqRefDefn{Top.paper}{OneToOne} \coqdocvarR{A}:
\begin{lemma}
\label{lemma:piTot}
\coqRefDefn{Top.paper}{Total} \coqdocvar{\trel{A}}
$\rightarrow$
($\forall$ (\coqdocvar{a}:\coqdocvar{A}) (\coqdocvarP{a}:\coqdocvarP{A})
(\coqdocvarR{a}:\coqdocvarR{A} \coqdocvar{a} \coqdocvarP{a}),
\coqRefDefn{Top.paper}{Total} (\coqdocvarR{B} \coqdocvar{a} \coqdocvarP{a} \coqdocvarR{a}))
$\rightarrow$ 
\coqRefDefn{Top.paper}{OneToOne} \coqdocvar{\trel{A}}
\coqdoceol
$\rightarrow$
\coqRefDefn{Top.paper}{Total}
(\ptranslate{$\forall$ (\coqdocvar{a}:\coqdocvar{A}), \coqdocvar{B} \coqdocvar{a}})
\end{lemma}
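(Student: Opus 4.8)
The plan is to unfold \coqRefDefn{Top.paper}{Total} into its two conjuncts and prove each. Recall that the relation \ptranslate{$\forall$ (\coqdocvar{a}:\coqdocvar{A}), \coqdocvar{B} \coqdocvar{a}} relates two functions $f$ and $g$ (over \coqdocvar{A} and \coqdocvarP{A} respectively) exactly when, for all \coqdocvar{x}, \coqdocvarP{x}, and \coqdocvarR{x}:\coqdocvarR{A} \coqdocvar{x} \coqdocvarP{x}, the proposition \coqdocvarR{B} \coqdocvar{x} \coqdocvarP{x} \coqdocvarR{x} ($f$ \coqdocvar{x}) ($g$ \coqdocvarP{x}) holds. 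I focus on the forward conjunct (given $f$, produce a related $g$); the backward conjunct is symmetric, using the mirror halves of the \coqRefDefn{Top.paper}{Total} and \coqRefDefn{Top.paper}{OneToOne} hypotheses.

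To build $g$ pointwise, fix \coqdocvarP{x}:\coqdocvarP{A}. The second component of \coqRefDefn{Top.paper}{Total} \coqdocvarR{A} supplies, from \coqdocvarP{x}, a witness \coqdocvar{x}:\coqdocvar{A} together with a proof \coqdocvarR{x}:\coqdocvarR{A} \coqdocvar{x} \coqdocvarP{x}. Because these witnesses are packaged in a sigma type in \coqdockw{Type} (rather than an existential in \coqdockw{Prop}), I may project them out, so $g$ is defined without any choice axiom. Now $f$ \coqdocvar{x} has type \coqdocvar{B} \coqdocvar{x}, so applying the first component of the hypothesis \coqRefDefn{Top.paper}{Total} (\coqdocvarR{B} \coqdocvar{x} \coqdocvarP{x} \coqdocvarR{x}) to $f$ \coqdocvar{x} yields an element of \coqdocvarP{B} \coqdocvarP{x} paired with a proof that it is \coqdocvarR{B}-related to $f$ \coqdocvar{x}; I set $g$ \coqdocvarP{x} to be this element and retain the proof, call it \coqdocvar{p}.

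The crux is verifying the relation for an \emph{arbitrary} triple \coqdocvar{y}, \coqdocvarP{x}, \coqdocvarR{y}:\coqdocvarR{A} \coqdocvar{y} \coqdocvarP{x}, rather than only for the witness chosen when $g$ \coqdocvarP{x} was defined. This is exactly where \coqRefDefn{Top.paper}{OneToOne} \coqdocvarR{A} is needed: its second (right-injectivity) component, applied to \coqdocvarR{y} and the stored \coqdocvarR{x} (both relating some element of \coqdocvar{A} to the same \coqdocvarP{x}), yields \coqdocvar{y} $=$ \coqdocvar{x}. This is precisely the property that the preceding counterexample, in which \coqdocvarR{A} was \coqRefDefn{Top.paper}{Total} but not \coqRefDefn{Top.paper}{OneToOne}, failed to satisfy. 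After rewriting \coqdocvar{y} to \coqdocvar{x}, the proofs \coqdocvarR{y} and the stored \coqdocvarR{x} share the type \coqdocvarR{A} \coqdocvar{x} \coqdocvarP{x}, which is a \coqdockw{Prop} because $\hat{\coqdockw{Set}}$ := \coqdockw{Prop}; hence proof irrelevance identifies them. The goal \coqdocvarR{B} \coqdocvar{y} \coqdocvarP{x} \coqdocvarR{y} ($f$ \coqdocvar{y}) ($g$ \coqdocvarP{x}) then reduces to \coqdocvarR{B} \coqdocvar{x} \coqdocvarP{x} \coqdocvarR{x} ($f$ \coqdocvar{x}) ($g$ \coqdocvarP{x}), which is discharged by the stored proof \coqdocvar{p}.

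I expect the main obstacle to be bureaucratic rather than conceptual: managing the dependent rewrite along \coqdocvar{y} $=$ \coqdocvar{x}, which is a transport over \coqdocvarR{A}, \coqdocvar{B}, and \coqdocvarR{B}, and aligning it with the proof-irrelevance step so that the stored proof \coqdocvar{p} type-checks against the transported goal. The choice of $\hat{\coqdockw{Set}}$ := \coqdockw{Prop} made in the previous section is exactly what makes this proof-irrelevance step free; had we chosen $\hat{\coqdockw{Set}}$ := \coqdockw{Set}, we would instead have to assume \coqRefDefn{Top.paper}{irrelevant} \coqdocvarR{A} and thread it through this argument.
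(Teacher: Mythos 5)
Your proposal is correct and follows essentially the same route as the paper: construct the companion function pointwise from the totality of \coqdocvar{\trel{A}} and \coqdocvar{\trel{B}}, then discharge the relatedness obligation for an arbitrary witness by using the injectivity half of \coqRefDefn{Top.paper}{OneToOne} \coqdocvar{\trel{A}} to collapse it onto the chosen witness and proof irrelevance of \coqdocvar{\trel{A}} (available since $\hat{\coqdockw{Set}}$ := \coqdockw{Prop}) to identify the relation proofs. The paper's own proof is only a two-sentence sketch citing exactly these ingredients, so your write-up is a faithful elaboration of it rather than a different argument.
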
%
\noindent
Consider the proof of one side. Given an arbitrary \coqdocvar{f}:({$\forall$ (\coqdocvar{a}:\coqdocvar{A}), \coqdocvar{B} \coqdocvar{a})},
using the totality of \coqdocvarR{A} and \coqdocvarR{B},
it is easy to cook up an \coqdocvarP{f}:({$\forall$ (\coqdocvarP{a}:\coqdocvarP{A}), \coqdocvarP{B} \coqdocvarP{a})}.
Then we need to prove
\ptranslate{$\forall$ (\coqdocvar{a}:\coqdocvar{A}), \coqdocvar{B} \coqdocvar{a}}
\coqdocvar{f} \coqdocvarP{f}. For this part, we needed 
the hypothesis \coqRefDefn{Top.paper}{OneToOne} \coqdocvar{\trel{A}} and proof irrelevance of the relation 
\coqdocvar{\trel{A}}.

\begin{lemma}
\label{lemma:piOne}
\coqRefDefn{Top.paper}{Total} \coqdocvar{\trel{A}}
$\rightarrow$
($\forall$ (\coqdocvar{a}:\coqdocvar{A}) (\coqdocvarP{a}:\coqdocvarP{A})
(\coqdocvarR{a}:\coqdocvarR{A} \coqdocvar{a} \coqdocvarP{a}),
\coqRefDefn{Top.paper}{OneToOne} (\coqdocvarR{B} \coqdocvar{a} \coqdocvarP{a} \coqdocvarR{a}))
\coqdoceol
$\rightarrow$ 
\coqRefDefn{Top.paper}{OneToOne}
(\ptranslate{$\forall$ (\coqdocvar{a}:\coqdocvar{A}), \coqdocvar{B} \coqdocvar{a})})
\end{lemma}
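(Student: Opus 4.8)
The goal is to establish the two conjuncts of \coqRefDefn{Top.paper}{OneToOne} for the relation \ptranslate{$\forall (a:A), B\,a$}, which relates two dependent functions $f$ and $g$ exactly when their outputs are related on related inputs. Concretely, \ptranslate{$\forall (a:A), B\,a$} relates $f$ and $g$ precisely when, for all $a:A$, $a':A'$ and $a_R : A_R\,a\,a'$, the outputs $f\,a$ and $g\,a'$ are related by $B_R\,a\,a'\,a_R$. The plan is to prove each injectivity direction \emph{pointwise}, appeal to function extensionality to pass from pointwise equality to equality of functions, and use \coqRefDefn{Top.paper}{Total} $A_R$ to manufacture, for any argument on one side, a related partner on the other side together with a witness in $A_R$.

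For the first conjunct, suppose $f$ (a function over $A$) is related to both $g_1$ and $g_2$ (functions over $A'$); I must show $g_1 = g_2$. By function extensionality it suffices to show $g_1\,a' = g_2\,a'$ for an arbitrary $a':A'$, an equation that is well typed since both sides live in $B'\,a'$. Applying the second projection of \coqRefDefn{Top.paper}{Total} $A_R$ to $a'$ yields some $a:A$ together with a witness $a_R : A_R\,a\,a'$. Instantiating the two relatedness hypotheses at $(a,a',a_R)$ gives $B_R\,a\,a'\,a_R\,(f\,a)\,(g_1\,a')$ and $B_R\,a\,a'\,a_R\,(f\,a)\,(g_2\,a')$. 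Now the left-injectivity half of the hypothesis \coqRefDefn{Top.paper}{OneToOne} $(B_R\,a\,a'\,a_R)$ --- two elements related to the common $f\,a$ are equal --- yields exactly $g_1\,a' = g_2\,a'$.

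The second conjunct is symmetric. Given $f_1,f_2$ (over $A$) both related to a common $g$ (over $A'$), I show $f_1\,a = f_2\,a$ for an arbitrary $a:A$ (both sides in $B\,a$) and conclude by function extensionality. Here I obtain the required partner $a':A'$ with $a_R : A_R\,a\,a'$ from the \emph{first} projection of \coqRefDefn{Top.paper}{Total} $A_R$. The relatedness hypotheses give $B_R\,a\,a'\,a_R\,(f_1\,a)\,(g\,a')$ and $B_R\,a\,a'\,a_R\,(f_2\,a)\,(g\,a')$, and the right-injectivity half of \coqRefDefn{Top.paper}{OneToOne} $(B_R\,a\,a'\,a_R)$ gives $f_1\,a = f_2\,a$.

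I expect the only genuinely delicate point to be the dependency of $B$ on $a$: because $f\,a : B\,a$ and $g\,a' : B'\,a'$ inhabit different, index-dependent types, every equality must be kept homogeneous. This is exactly why I compare outputs only at a fixed index on one side ($g_1\,a'$ versus $g_2\,a'$, both in $B'\,a'$; and $f_1\,a$ versus $f_2\,a$, both in $B\,a$) and let \coqRefDefn{Top.paper}{Total} $A_R$ supply the cross-side witness $a_R$ needed to invoke the fiber's one-to-one property. Notably, unlike Lemma~\ref{lemma:piTot}, this argument needs neither \coqRefDefn{Top.paper}{OneToOne} $A_R$ nor proof irrelevance: the relatedness data is \emph{consumed} rather than \emph{constructed}, so a single arbitrary witness $a_R$ from totality suffices.
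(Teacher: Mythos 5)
Your proof is correct and matches the paper's (very terse) argument: the paper only says the proof is ``straightforward'' and relies on dependent function extensionality, and your pointwise reduction --- using the appropriate half of \coqRefDefn{Top.paper}{Total} \coqdocvarR{A} to obtain a cross-side witness and then the fiberwise \coqRefDefn{Top.paper}{OneToOne} hypothesis --- is exactly that straightforward route. Your closing observation that neither \coqRefDefn{Top.paper}{OneToOne} \coqdocvarR{A} nor proof irrelevance is needed also agrees with the paper's summary table in \appref{appendix:table}, which lists only function extensionality as the axiom used for this lemma.
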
%
The proof is straightforward.
To prove equality of functions, it uses the dependent function extensionality axiom, which is believed to be consistent with the proof irrelevance
axiom in Coq:

\coqdocnoindent
\coqdockw{\ensuremath{\forall}} \{\coqdocvar{A}:\coqdockw{Type}\} \{\coqdocvar{B} : \coqdocvariable{A} \coqexternalref{:type scope:x '->' x}{http://coq.inria.fr/distrib/8.5pl3/stdlib/Coq.Init.Logic}{\coqdocnotation{\ensuremath{\rightarrow}}} \coqdockw{Type}\}, \coqdockw{\ensuremath{\forall}} (\coqdocvar{f} \coqdocvar{g} : \coqdockw{\ensuremath{\forall}} \coqdocvar{x} : \coqdocvariable{A}, \coqdocvariable{B} \coqdocvariable{x}), \coqexternalref{:type scope:x '->' x}{http://coq.inria.fr/distrib/8.5pl3/stdlib/Coq.Init.Logic}{\coqdocnotation{(}}\coqdockw{\ensuremath{\forall}} \coqdocvar{x}, \coqdocvariable{f} \coqdocvariable{x} \coqexternalref{:type scope:x '=' x}{http://coq.inria.fr/distrib/8.5pl3/stdlib/Coq.Init.Logic}{\coqdocnotation{=}} \coqdocvariable{g} \coqdocvariable{x}\coqexternalref{:type scope:x '->' x}{http://coq.inria.fr/distrib/8.5pl3/stdlib/Coq.Init.Logic}{\coqdocnotation{)}} \coqexternalref{:type scope:x '->' x}{http://coq.inria.fr/distrib/8.5pl3/stdlib/Coq.Init.Logic}{\coqdocnotation{\ensuremath{\rightarrow}}} \coqdocvariable{f} \coqexternalref{:type scope:x '=' x}{http://coq.inria.fr/distrib/8.5pl3/stdlib/Coq.Init.Logic}{\coqdocnotation{=}} \coqdocvariable{g}.\coqdoceol

\subsection{Inductive Types}
\label{sec:uniformProp:indt}
The \coqRefDefn{Top.paper}{Total} and the \coqRefDefn{Top.paper}{OneToOne} properties of the {\anyrel} translations of inductive types boil down to
the same properties for the types of arguments of their constructors. 
Let \coqdocconstructor{c} be a constructor of an inductive type (family)
\coqdocinductive{$I$}.
It is useful to classify the arguments of \coqdocconstructor{c} into two categories: those that are recursive (whose types mention
\coqdocinductive{$I$}) and those that are not.
For example, in the constructor \coqRefConstr{Top.paper}{pnode} in
\secref{sec:uniformProp:pnode}, \coqdocvar{a} is a non-recursive argument and
\coqdocvar{branches} is a recursive argument.
The non-recursive arguments are easy to tackle. Because \ptranslateIso{} will
be~(\secref{sec:isorel}) structurally recursive, we can assume that we
already have the \coqRefDefn{Top.paper}{Total} and the \coqRefDefn{Top.paper}{OneToOne} properties for the types of those arguments.
The recursive arguments are harder to tackle. Their types mention members of the type family \coqdocinductive{$I$}, and we don't yet have 
their proofs of the \coqRefDefn{Top.paper}{Total} and
\coqRefDefn{Top.paper}{OneToOne} properties yet: we are in the process of building that.
Thus we need to carefully analyse the types of the recursive arguments and 
build the recursive proofs of the \coqRefDefn{Top.paper}{Total} and the
\coqRefDefn{Top.paper}{OneToOne} properties in a way that satisfies Coq's
termination (well-definedness) checker for recursive functions.

Fortunately, Coq has a strict-positivity restriction on the  shape of the types of recursive arguments of constructors.
These types must be of the form\footnote{
Coq's strict-positivity restriction is a bit more permissive. For example, the type {\CoqNat $\rightarrow$ {\CoqList} (\coqdocinductive{$I$} $\hdots$)}
is acceptable as a type of a constructor argument.
Inductives with such constructors are called nested inductives. Our theory and implementation don't support them yet.
However, nested inductives can be encoded as mutual-inductive definitions. We do support mutual inductive definitions.}
:
$\forall$ (\coqdocvar{t}\ensuremath{_1} : \coqdocvar{T}\ensuremath{_1}) (\coqdocvar{t}\ensuremath{_2} : \coqdocvar{T}\ensuremath{_2})
 $\hdots$ 
(\coqdocvar{t}\ensuremath{_m} : \coqdocvar{T}\ensuremath{_m}),
(\coqdocinductive{$I$} $\hdots$), where
\coqdocinductive{$I$}~$\hdots$ represents \coqdocinductive{$I$} applied to enough arguments so that it becomes a type. 
Also, the types \coqdocvar{T}\ensuremath{_i} must not mention \coqdocinductive{$I$}. 
(Thus, we can assume \coqRefDefn{Top.paper}{Total} \ptranslate{\coqdocvar{T}\ensuremath{_i}} and
\coqRefDefn{Top.paper}{OneToOne} \ptranslate{\coqdocvar{T}\ensuremath{_i}}.)
\fxnote{put the subscripts inside coqdocvar. Ts are not vars}
So, the types of recursive arguments are \emph{(dependent) function types} returning
the inductive to which the constructor belongs. 
$m$ can be 0, as in the definition of natural numbers or lists.

Fortunately, in the previous subsection, we already saw how to compositionally construct the 
\coqRefDefn{Top.paper}{Total} and \coqRefDefn{Top.paper}{OneToOne} properties
for (dependent) function types. Those proofs were non-trivial. 
Thus, we encapsulate those constructions as reusable lemmas and use them in the {\isorel} translation of inductives. 
For example, the lemma \coqRefDefn{Top.paper}{totalPiHalf} below is the
combinator for one direction of the  \coqRefDefn{Top.paper}{Total} property.

\coqdocnoindent
\coqdockw{Definition} \coqdef{Top.paper.IsoRel}{IsoRel}{\coqdocdefinition{IsoRel}} :=
{\coqdocnotation{\ensuremath{\lambda}}} {\coqdocnotation{(}}\coqdocvar{A} \coqdocvar{\tprime{A}}: \coqdockw{Set}{\coqdocnotation{),}} \coqexternalref{:type scope:'x7B' x ':' x 'x26' x 'x7D'}{http://coq.inria.fr/distrib/8.5pl3/stdlib/Coq.Init.Specif}{\coqdocnotation{\{}}\coqdocvar{\trel{A}} \coqexternalref{:type scope:'x7B' x ':' x 'x26' x 'x7D'}{http://coq.inria.fr/distrib/8.5pl3/stdlib/Coq.Init.Specif}{\coqdocnotation{:}} \coqdocvariable{A} \coqexternalref{:type scope:x '->' x}{http://coq.inria.fr/distrib/8.5pl3/stdlib/Coq.Init.Logic}{\coqdocnotation{\ensuremath{\rightarrow}}} \coqdocvariable{\tprime{A}} \coqexternalref{:type scope:x '->' x}{http://coq.inria.fr/distrib/8.5pl3/stdlib/Coq.Init.Logic}{\coqdocnotation{\ensuremath{\rightarrow}}} \coqdockw{Prop} \coqexternalref{:type scope:'x7B' x ':' x 'x26' x 'x7D'}{http://coq.inria.fr/distrib/8.5pl3/stdlib/Coq.Init.Specif}{\coqdocnotation{\&}} \coqref{Top.paper.::x 'xC3x97' x}{\coqdocnotation{(}}\coqref{Top.paper.Total}{\coqdocdefinition{Total}} \coqdocvar{\trel{A}}\coqref{Top.paper.::x 'xC3x97' x}{\coqdocnotation{)}} \coqref{Top.paper.::x 'xC3x97' x}{\coqdocnotation{×}} \coqref{Top.paper.::x 'xC3x97' x}{\coqdocnotation{(}}\coqref{Top.paper.OneToOne}{\coqdocdefinition{OneToOne}} \coqdocvar{\trel{A}}\coqref{Top.paper.::x 'xC3x97' x}{\coqdocnotation{)}}\coqexternalref{:type scope:'x7B' x ':' x 'x26' x 'x7D'}{http://coq.inria.fr/distrib/8.5pl3/stdlib/Coq.Init.Specif}{\coqdocnotation{\}}}.\coqdoceol
\coqdocemptyline
\coqdocnoindent
\coqdockw{Definition} \coqdef{Top.paper.TotalHalf}{TotalHalf}{\coqdocdefinition{TotalHalf}} \{\coqdocvar{A} \coqdocvar{\tprime{A}} : \coqdockw{Set}\} (\coqdocvar{\trel{A}}: \coqdocvariable{A} \coqexternalref{:type scope:x '->' x}{http://coq.inria.fr/distrib/8.5pl3/stdlib/Coq.Init.Logic}{\coqdocnotation{\ensuremath{\rightarrow}}} \coqdocvariable{\tprime{A}} \coqexternalref{:type scope:x '->' x}{http://coq.inria.fr/distrib/8.5pl3/stdlib/Coq.Init.Logic}{\coqdocnotation{\ensuremath{\rightarrow}}} \coqdockw{Prop}) : \coqdockw{Type} := \coqdockw{\ensuremath{\forall}} (\coqdocvar{a}:\coqdocvariable{A}), \coqexternalref{:type scope:'x7B' x ':' x 'x26' x 'x7D'}{http://coq.inria.fr/distrib/8.5pl3/stdlib/Coq.Init.Specif}{\coqdocnotation{\{}}\coqdocvar{\tprime{a}}\coqexternalref{:type scope:'x7B' x ':' x 'x26' x 'x7D'}{http://coq.inria.fr/distrib/8.5pl3/stdlib/Coq.Init.Specif}{\coqdocnotation{:}}\coqdocvariable{\tprime{A}} \coqexternalref{:type scope:'x7B' x ':' x 'x26' x 'x7D'}{http://coq.inria.fr/distrib/8.5pl3/stdlib/Coq.Init.Specif}{\coqdocnotation{\&}} \coqexternalref{:type scope:'x7B' x ':' x 'x26' x 'x7D'}{http://coq.inria.fr/distrib/8.5pl3/stdlib/Coq.Init.Specif}{\coqdocnotation{(}}\coqdocvariable{\trel{A}} \coqdocvariable{a} \coqdocvar{\tprime{a}}\coqexternalref{:type scope:'x7B' x ':' x 'x26' x 'x7D'}{http://coq.inria.fr/distrib/8.5pl3/stdlib/Coq.Init.Specif}{\coqdocnotation{)\}}}.\coqdoceol
\coqdocemptyline
\coqdocnoindent
\coqdockw{Definition} \coqdef{Top.paper.anyRelPi}{anyRelPi}{\coqdocdefinition{anyRelPi}} \{\coqdocvar{A} \coqdocvar{\tprime{A}} :\coqdockw{Set}\} (\coqdocvar{\trel{A}}: \coqdocvariable{A} \coqexternalref{:type scope:x '->' x}{http://coq.inria.fr/distrib/8.5pl3/stdlib/Coq.Init.Logic}{\coqdocnotation{\ensuremath{\rightarrow}}} \coqdocvariable{\tprime{A}} \coqexternalref{:type scope:x '->' x}{http://coq.inria.fr/distrib/8.5pl3/stdlib/Coq.Init.Logic}{\coqdocnotation{\ensuremath{\rightarrow}}} \coqdockw{Prop}) \{\coqdocvar{B}: \coqdocvariable{A} \coqexternalref{:type scope:x '->' x}{http://coq.inria.fr/distrib/8.5pl3/stdlib/Coq.Init.Logic}{\coqdocnotation{\ensuremath{\rightarrow}}} \coqdockw{Set}\} \{\coqdocvar{\tprime{B}}: \coqdocvariable{\tprime{A}} \coqexternalref{:type scope:x '->' x}{http://coq.inria.fr/distrib/8.5pl3/stdlib/Coq.Init.Logic}{\coqdocnotation{\ensuremath{\rightarrow}}} \coqdockw{Set}\} \coqdoceol
\coqdocindent{1.00em}
(\coqdocvar{\trel{B}}: \coqdockw{\ensuremath{\forall}} \coqdocvar{a} \coqdocvar{\tprime{a}}, \coqdocvariable{\trel{A}} \coqdocvariable{a} \coqdocvariable{\tprime{a}} \coqexternalref{:type scope:x '->' x}{http://coq.inria.fr/distrib/8.5pl3/stdlib/Coq.Init.Logic}{\coqdocnotation{\ensuremath{\rightarrow}}} \coqexternalref{:type scope:x '->' x}{http://coq.inria.fr/distrib/8.5pl3/stdlib/Coq.Init.Logic}{\coqdocnotation{(}}\coqdocvariable{B} \coqdocvariable{a}\coqexternalref{:type scope:x '->' x}{http://coq.inria.fr/distrib/8.5pl3/stdlib/Coq.Init.Logic}{\coqdocnotation{)}} \coqexternalref{:type scope:x '->' x}{http://coq.inria.fr/distrib/8.5pl3/stdlib/Coq.Init.Logic}{\coqdocnotation{\ensuremath{\rightarrow}}} \coqexternalref{:type scope:x '->' x}{http://coq.inria.fr/distrib/8.5pl3/stdlib/Coq.Init.Logic}{\coqdocnotation{(}}\coqdocvariable{\tprime{B}} \coqdocvariable{\tprime{a}}\coqexternalref{:type scope:x '->' x}{http://coq.inria.fr/distrib/8.5pl3/stdlib/Coq.Init.Logic}{\coqdocnotation{)}} \coqexternalref{:type scope:x '->' x}{http://coq.inria.fr/distrib/8.5pl3/stdlib/Coq.Init.Logic}{\coqdocnotation{\ensuremath{\rightarrow}}} \coqdockw{Prop}) (\coqdocvar{f}: \coqdockw{\ensuremath{\forall}} \coqdocvar{a}, \coqdocvariable{B} \coqdocvariable{a}) (\coqdocvar{\tprime{f}}: \coqdockw{\ensuremath{\forall}} \coqdocvar{\tprime{a}}, \coqdocvariable{\tprime{B}} \coqdocvariable{\tprime{a}}) \coqdoceol
\coqdocindent{1.00em}
: \coqdockw{Prop} := \coqdockw{\ensuremath{\forall}} \coqdocvar{a} \coqdocvar{\tprime{a}} (\coqdocvar{\trel{a}}: \coqdocvariable{\trel{A}} \coqdocvariable{a} \coqdocvariable{\tprime{a}}), \coqdocvariable{\trel{B}} \coqdocvar{\_} \coqdocvar{\_} \coqdocvariable{\trel{a}} (\coqdocvariable{f} \coqdocvariable{a}) (\coqdocvariable{\tprime{f}} \coqdocvariable{\tprime{a}}).\coqdoceol
\coqdocemptyline
\coqdocnoindent
\coqdockw{Lemma} \coqdef{Top.paper.totalPiHalf}{totalPiHalf}{\coqdoclemma{totalPiHalf}}: \coqdockw{\ensuremath{\forall}} \{\coqdocvar{A} \coqdocvar{\tprime{A}} :\coqdockw{Set}\} (\coqdocvar{\trel{A}}: \coqref{Top.paper.IsoRel}{\coqdocdefinition{IsoRel}} \coqdocvariable{A} \coqdocvariable{\tprime{A}}) \{\coqdocvar{B}: \coqdocvariable{A} \coqexternalref{:type scope:x '->' x}{http://coq.inria.fr/distrib/8.5pl3/stdlib/Coq.Init.Logic}{\coqdocnotation{\ensuremath{\rightarrow}}} \coqdockw{Set}\} \{\coqdocvar{\tprime{B}}: \coqdocvariable{\tprime{A}} \coqexternalref{:type scope:x '->' x}{http://coq.inria.fr/distrib/8.5pl3/stdlib/Coq.Init.Logic}{\coqdocnotation{\ensuremath{\rightarrow}}} \coqdockw{Set}\} \coqdoceol
\coqdocindent{1.00em}
(\coqdocvar{\trel{B}}: \coqdockw{\ensuremath{\forall}} \coqdocvar{a} \coqdocvar{\tprime{a}}, ({\CoqSigTProj} \coqdocvariable{\trel{A}}) \coqdocvariable{a} \coqdocvariable{\tprime{a}} \coqexternalref{:type scope:x '->' x}{http://coq.inria.fr/distrib/8.5pl3/stdlib/Coq.Init.Logic}{\coqdocnotation{\ensuremath{\rightarrow}}} \coqexternalref{:type scope:x '->' x}{http://coq.inria.fr/distrib/8.5pl3/stdlib/Coq.Init.Logic}{\coqdocnotation{(}}\coqdocvariable{B} \coqdocvariable{a}\coqexternalref{:type scope:x '->' x}{http://coq.inria.fr/distrib/8.5pl3/stdlib/Coq.Init.Logic}{\coqdocnotation{)}} \coqexternalref{:type scope:x '->' x}{http://coq.inria.fr/distrib/8.5pl3/stdlib/Coq.Init.Logic}{\coqdocnotation{\ensuremath{\rightarrow}}} \coqexternalref{:type scope:x '->' x}{http://coq.inria.fr/distrib/8.5pl3/stdlib/Coq.Init.Logic}{\coqdocnotation{(}}\coqdocvariable{\tprime{B}} \coqdocvariable{\tprime{a}}\coqexternalref{:type scope:x '->' x}{http://coq.inria.fr/distrib/8.5pl3/stdlib/Coq.Init.Logic}{\coqdocnotation{)}} \coqexternalref{:type scope:x '->' x}{http://coq.inria.fr/distrib/8.5pl3/stdlib/Coq.Init.Logic}{\coqdocnotation{\ensuremath{\rightarrow}}} \coqdockw{Prop})\coqdoceol
\coqdocindent{1.00em}
(\coqdocvar{BTot} : \coqdockw{\ensuremath{\forall}} \coqdocvar{a} \coqdocvar{\tprime{a}} (\coqdocvar{\trel{a}}:({\CoqSigTProj} \coqdocvariable{\trel{A}}) \coqdocvariable{a} \coqdocvariable{\tprime{a}}), \coqref{Top.paper.TotalHalf}{\coqdocdefinition{TotalHalf}} (\coqdocvariable{\trel{B}} \coqdocvar{\_} \coqdocvar{\_} \coqdocvariable{\trel{a}})), \coqref{Top.paper.TotalHalf}{\coqdocdefinition{TotalHalf}} (\coqref{Top.paper.anyRelPi}{\coqdocdefinition{anyRelPi}} ({\CoqSigTProj} \coqdocvariable{\trel{A}}) \coqdocvariable{\trel{B}}).\coqdoceol

\coqdocnoindent
We have a similar combinator for the other direction, and similar combinators, one for each direction of the \coqRefDefn{Top.paper}{OneToOne}
property.
If the type of the recursive constructor argument has nested function types, we nest the appropriate combinator to get the 
proof of one direction of the \coqRefDefn{Top.paper}{Total} or  \coqRefDefn{Top.paper}{OneToOne} property.
For example, in the type $\forall$ (\coqdocvar{t}\ensuremath{_1} : \coqdocvar{T}\ensuremath{_1}) (\coqdocvar{t}\ensuremath{_2} : \coqdocvar{T}\ensuremath{_2})
 $\hdots$ 
(\coqdocvar{t}\ensuremath{_m} : \coqdocvar{T}\ensuremath{_m}),
(\coqdocinductive{$I$} $\hdots$) mentioned above, there will be an $m$-level nesting.
In the base case, when the type is just (\coqdocinductive{$I$} $\hdots$), we recursively call the proof (of one half of 
the \coqRefDefn{Top.paper}{Total} or \coqRefDefn{Top.paper}{OneToOne} property) currently being recursively defined. 

In the above discussion, we saw how to construct the proofs of one direction of the 
\coqRefDefn{Top.paper}{Total} and the \coqRefDefn{Top.paper}{OneToOne}
properties of types of all arguments (both recursive and non-recursive) of all
constructors. Now we explain how we use these proofs to build the proofs of the
same properties of the {\anyrel} translations of inductive types.
As in \secref{sec:uniformProp:pnode}, we use a W type to illustrate the
construction. However, our implementation \emph{directly} translates inductive types.
The type below is the same as the proposition
\coqRefInductive{Top.paper}{IWP} in \secref{sec:uniformProp:pnode}, except that we change its universe \coqdockw{Prop} to \coqdockw{Set} and change names to avoid clashes.

\coqdocnoindent
\coqdockw{Inductive} \coqdef{Top.paper.IWT}{IWT}{\coqdocinductive{IWT}} (\coqdocvar{I} \coqdocvar{A} : \coqdockw{Set}) (\coqdocvar{B} : \coqdocvariable{A} \coqexternalref{:type scope:x '->' x}{http://coq.inria.fr/distrib/8.5pl3/stdlib/Coq.Init.Logic}{\coqdocnotation{\ensuremath{\rightarrow}}} \coqdockw{Set}) (\coqdocvar{AI} : \coqdocvariable{A} \coqexternalref{:type scope:x '->' x}{http://coq.inria.fr/distrib/8.5pl3/stdlib/Coq.Init.Logic}{\coqdocnotation{\ensuremath{\rightarrow}}} \coqdocvariable{I})  (\coqdocvar{BI} : \coqdockw{\ensuremath{\forall}} (\coqdocvar{a} : \coqdocvariable{A}), \coqdocvariable{B} \coqdocvariable{a} \coqexternalref{:type scope:x '->' x}{http://coq.inria.fr/distrib/8.5pl3/stdlib/Coq.Init.Logic}{\coqdocnotation{\ensuremath{\rightarrow}}} \coqdocvariable{I}) : \coqdockw{\ensuremath{\forall}} (\coqdocvar{i}:\coqdocvar{I}), \coqdockw{Set} :=\coqdoceol
\coqdocnoindent
\coqdef{Top.paper.tnode}{tnode}{\coqdocconstructor{tnode}} : \coqdockw{\ensuremath{\forall}} (\coqdocvar{a} : \coqdocvar{A}) (\coqdocvar{branches} : \coqdockw{\ensuremath{\forall}} \coqdocvar{b} : \coqdocvar{B} \coqdocvariable{a}, \coqref{Top.paper.IWT}{\coqdocinductive{IWT}} \coqdocvar{I} \coqdocvar{A} \coqdocvar{B} \coqdocvar{AI} \coqdocvar{BI} (\coqdocvar{BI} \coqdocvariable{a} \coqdocvariable{b})), \coqref{Top.paper.IWT}{\coqdocinductive{IWT}} \coqdocvar{I} \coqdocvar{A} \coqdocvar{B} \coqdocvar{AI} \coqdocvar{BI} (\coqdocvar{AI} \coqdocvariable{a}).\coqdoceol

Again, we use the following abbreviations in a translated context:
\coqdef{Top.paper.WT}{WT}{\coqdocdefinition{WT}} := 
\coqRefInductive{Top.paper}{IWT} \coqdocvar{I}
\coqdocvar{A} \coqdocvar{B} \coqdocvar{AI} \coqdocvar{BI},
\tprime{\coqdocdefinition{WT}} := 
\coqRefInductive{Top.paper}{IWT} \coqdocvarP{I}
\coqdocvarP{A} \coqdocvarP{B} \coqdocvarP{AI} \coqdocvarP{BI},
\trel{\coqdocdefinition{WT}} :=
\coqref{Top.paper.IWT R}{\coqdocdefinition{\trel{IWT}}} \coqdocvariable{I} \coqdocvariable{\tprime{I}} \coqdocvariable{\trel{I}} \coqdocvariable{A} \coqdocvariable{\tprime{A}} \coqdocvariable{\trel{A}} \coqdocvariable{B} \coqdocvariable{\tprime{B}} \coqdocvariable{\trel{B}} \coqdocvariable{AI} \coqdocvariable{\tprime{AI}} \coqdocvariable{\trel{AI}} \coqdocvariable{BI} \coqdocvariable{\tprime{BI}} \coqdocvariable{\trel{BI}},
\coqdef{Top.paper.tnodew}{tnodew}{\coqdocdefinition{tnodew}} := 
\coqRefConstr{Top.paper}{tnode} \coqdocvar{I}
\coqdocvar{A} \coqdocvar{B} \coqdocvar{AI} \coqdocvar{BI},
\tprime{{\coqdocdefinition{tnodew}}} := 
\coqRefConstr{Top.paper}{tnode} \coqdocvarP{I}
\coqdocvarP{A} \coqdocvarP{B} \coqdocvarP{AI} \coqdocvarP{BI}

\begin{lemma}
\label{lemma:iwtTot}
\coqRefDefn{Top.paper}{Total}  \coqdocvarR{A}
$\rightarrow$
($\forall$ (\coqdocvar{a}:\coqdocvar{A}) (\coqdocvarP{a}:\coqdocvarP{A})
(\coqdocvarR{a}:\coqdocvarR{A} \coqdocvar{a} \coqdocvarP{a}),
\coqRefDefn{Top.paper}{Total} (\coqdocvarR{B} \coqdocvar{a} \coqdocvarP{a} \coqdocvarR{a}))
\coqdoceol\coqdocindent{2em}
$\rightarrow$
($\forall$ (\coqdocvar{a}:\coqdocvar{A}) (\coqdocvarP{a}:\coqdocvarP{A})
(\coqdocvarR{a}:\coqdocvarR{A} \coqdocvar{a} \coqdocvarP{a}),
\coqRefDefn{Top.paper}{OneToOne} (\coqdocvarR{B} \coqdocvar{a} \coqdocvarP{a} \coqdocvarR{a}))
$\rightarrow$
\coqRefDefn{Top.paper}{OneToOne} \coqdocvarR{I}
$\rightarrow$
\coqRefDefn{Top.paper}{Total}
(\trel{\coqRefDefn{Top.paper}{WT}} \coqdocvariable{i} \coqdocvariable{\tprime{i}} \coqdocvariable{\trel{i}})
\end{lemma}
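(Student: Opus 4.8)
The plan is to follow the proof of \lemref{lemma:iwp} almost step for step, taking advantage of the fact that \coqRefDefn{Top.paper}{Total} only asks for the \emph{existence} of a related tree in each direction, rather than the completeness we needed there. I would establish the two halves of \coqRefDefn{Top.paper}{Total} separately and symmetrically, and describe the forward one. Proceeding by structural induction on \coqdocvar{p}:\coqRefDefn{Top.paper}{WT} \coqdocvar{i}, I produce a \coqdocvar{y}:\tprime{\coqRefDefn{Top.paper}{WT}} \coqdocvarP{i} together with a proof of \trel{\coqRefDefn{Top.paper}{WT}} \coqdocvar{i} \coqdocvarP{i} \coqdocvarR{i} \coqdocvar{p} \coqdocvar{y}; the corresponding proof term is a structurally recursive function matching on \coqdocvar{p}. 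As in \secref{sec:uniformProp:pnode}, the match refines the unprimed index from \coqdocvar{i} to \coqdocvar{AI} \coqdocvar{a}, while \coqdocvarR{i} keeps type \coqdocvarR{I} (\coqdocvar{AI} \coqdocvar{a}) \coqdocvarP{i}.

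In the node case \coqdocvar{p} = \coqRefDefn{Top.paper}{tnodew} \coqdocvar{a} \coqdocvar{branches}, I first use the forward half of \coqRefDefn{Top.paper}{Total} \coqdocvarR{A} to obtain \coqdocvarP{a} and \coqdocvarR{a}:\coqdocvarR{A} \coqdocvar{a} \coqdocvarP{a}. The recursive argument \coqdocvar{branches}:$\forall$ \coqdocvar{b}:\coqdocvar{B} \coqdocvar{a}, \coqRefDefn{Top.paper}{WT} (\coqdocvar{BI} \coqdocvar{a} \coqdocvar{b}) is a \emph{function} type, so I build \coqdocvarP{branches} and the proof that \coqdocvar{branches} and \coqdocvarP{branches} are related using the combinator \coqRefDefn{Top.paper}{totalPiHalf} (\lemref{lemma:piTot}): I instantiate its domain with the branching type \coqdocvar{B} \coqdocvar{a} and relation \coqdocvarR{B} \coqdocvar{a} \coqdocvarP{a} \coqdocvarR{a}, and its codomain with \coqdocvar{b} $\mapsto$ \coqRefDefn{Top.paper}{WT} (\coqdocvar{BI} \coqdocvar{a} \coqdocvar{b}). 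Crucially, \coqRefDefn{Top.paper}{totalPiHalf} requires its domain to be an \coqRefDefn{Top.paper}{IsoRel}, i.e. the branching relation must be both \coqRefDefn{Top.paper}{Total} \emph{and} \coqRefDefn{Top.paper}{OneToOne} --- this is exactly why the lemma assumes both properties of \coqdocvarR{B}. Its \coqRefDefn{Top.paper}{TotalHalf} premise on the codomain is discharged by the induction hypothesis on the subtrees \coqdocvar{branches} \coqdocvar{b}, at the index relation \ptranslate{\coqdocvar{BI} \coqdocvar{a} \coqdocvar{b}}; the witness \coqdocvarP{branches} and its relatedness proof are the two projections of the resulting \coqRefDefn{Top.paper}{TotalHalf} applied to \coqdocvar{branches}.

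The candidate \tprime{\coqRefDefn{Top.paper}{tnodew}} \coqdocvarP{a} \coqdocvarP{branches} has primed index \coqdocvarP{AI} \coqdocvarP{a}, whereas I must land at \coqdocvarP{i}. Exactly as in \lemref{lemma:iwp}, I reconcile the indices with \coqRefDefn{Top.paper}{OneToOne} \coqdocvarR{I}: the term \ptranslate{\coqdocvar{AI} \coqdocvar{a}} = \coqdocvarR{AI} \coqdocvar{a} \coqdocvarP{a} \coqdocvarR{a} has type \coqdocvarR{I} (\coqdocvar{AI} \coqdocvar{a}) (\coqdocvarP{AI} \coqdocvarP{a}), while \coqdocvarR{i} has type \coqdocvarR{I} (\coqdocvar{AI} \coqdocvar{a}) \coqdocvarP{i}, so \coqRefDefn{Top.paper}{OneToOne} \coqdocvarR{I} yields \coqdocvarP{AI} \coqdocvarP{a} = \coqdocvarP{i}, which I substitute everywhere. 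Since we fixed $\hat{\coqdockw{Set}}$ := \coqdockw{Prop}, the relation \coqdocvarR{I} lands in \coqdockw{Prop}, so proof irrelevance further lets me replace \coqdocvarR{i} by \coqdocvarR{AI} \coqdocvar{a} \coqdocvarP{a} \coqdocvarR{a}. The related tree and its proof are then assembled by the translated constructor \coqRefConstrR{Top.paper}{tnode} applied to \coqdocvarR{a}, \coqdocvar{branches}, \coqdocvarP{branches} and the branch-relation proof, with the generalized index-equality constraint of the corrected deductive-style translation discharged by its canonical proof. The backward direction is symmetric, using the backward halves of \coqRefDefn{Top.paper}{Total} \coqdocvarR{A} and \coqRefDefn{Top.paper}{Total} \coqdocvarR{B} and the other projection of \coqRefDefn{Top.paper}{OneToOne} \coqdocvarR{I}.

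The main obstacle is not the existential content, which is routine, but the dependent bookkeeping around the index: rewriting \coqdocvarP{i} to \coqdocvarP{AI} \coqdocvarP{a} throughout the goal and context, identifying \coqdocvarR{i} with \coqdocvarR{AI} \coqdocvar{a} \coqdocvarP{a} \coqdocvarR{a} by proof irrelevance, and discharging the generalized equality constraint with its canonical proof --- all inside a single structurally recursive definition that Coq's termination checker must accept. This is the same difficulty highlighted in \secref{sec:anyrel:match} and \secref{sec:uniformProp:pnode}, here specialized to building \coqRefDefn{Top.paper}{Total} witnesses, and it is precisely the \coqRefDefn{Top.paper}{OneToOne} \coqdocvarR{B} hypothesis that makes the \coqRefDefn{Top.paper}{IsoRel} packaging of the branching relation available to \coqRefDefn{Top.paper}{totalPiHalf}.
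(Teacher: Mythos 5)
Your proof is correct and follows essentially the same route as the paper's: induction on the tree, using \coqRefDefn{Top.paper}{totalPiHalf} (hence both \coqRefDefn{Top.paper}{Total} and \coqRefDefn{Top.paper}{OneToOne} for \coqdocvarR{B}, as the domain relation must be an \coqRefDefn{Top.paper}{IsoRel}) to produce \coqdocvarP{branches}, and \coqRefDefn{Top.paper}{OneToOne} \coqdocvarR{I} to rewrite the index exactly as in \lemref{lemma:iwp}. The additional details you supply (proof irrelevance on \coqdocvarR{i}, the canonical proof of the generalized index-equality constraint) are consistent with the paper's treatment and fill in steps the paper delegates to the earlier lemmas.
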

\noindent
For one direction of totality, given a 
\coqdocvar{t}:(\coqRefDefn{Top.paper}{WT} \coqdocvar{i}), we need to produce a
\coqdocvarP{t}:(\tprime{\coqRefDefn{Top.paper}{WT}} \coqdocvarP{i}),
and prove 
\trel{\coqRefDefn{Top.paper}{WT}} \coqdocvariable{i} \coqdocvariable{\tprime{i}} \coqdocvariable{\trel{i}} \coqdocvar{t} \coqdocvarP{t}.
This proof is by induction on \coqdocvar{t}.
Note that 
\coqdocvar{B} serves as a domain type in the type of \coqdocvar{branches} in \coqRefInductive{Top.paper}{IWT} 
and 
that in the combinator \coqRefDefn{Top.paper}{totalPiHalf} shown above,
both the \coqRefDefn{Top.paper}{Total} and \coqRefDefn{Top.paper}{OneToOne} properties are needed for the relation for the domain type. 
This is because we needed both properties for the domain type in \lemref{lemma:piTot}.
Therefore, here we needed both properties for 
\coqdocvarR{B} to produce the argument \coqdocvarP{branches} in \coqdocvarP{t}.
We also needed \coqRefDefn{Top.paper}{OneToOne} \coqdocvarR{I}, 
for the same reason we needed it in \lemref{lemma:iwp}: to do
rewriting in indices.
%
The construction generalizes to other inductives, subject to the limitations discussed in \secref{sec:isorel:limitations}.

The proof of the \coqRefDefn{Top.paper}{OneToOne} property is straightforward, except at one place:
\begin{lemma}
\label{lemma:iwtOne}
\coqRefDefn{Top.paper}{OneToOne}  \coqdocvarR{A}
$\rightarrow$
($\forall$ (\coqdocvar{a}:\coqdocvar{A}) (\coqdocvarP{a}:\coqdocvarP{A})
(\coqdocvarR{a}:\coqdocvarR{A} \coqdocvar{a} \coqdocvarP{a}),
\coqRefDefn{Top.paper}{Total} (\coqdocvarR{B} \coqdocvar{a} \coqdocvarP{a} \coqdocvarR{a}))
\coqdoceol\coqdocindent{2em}
$\rightarrow$
\coqRefDefn{Top.paper}{OneToOne}
(\trel{\coqRefDefn{Top.paper}{WT}} \coqdocvariable{i} \coqdocvariable{\tprime{i}} \coqdocvariable{\trel{i}})
\end{lemma}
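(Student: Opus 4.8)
The plan is to establish both conjuncts of \coqRefDefn{Top.paper}{OneToOne} (\trel{\coqRefDefn{Top.paper}{WT}} \coqdocvariable{i} \coqdocvariable{\tprime{i}} \coqdocvariable{\trel{i}}) by the same argument with the two instantiations interchanged; I describe the left-unique conjunct. Unfolding \coqRefDefn{Top.paper}{OneToOne}, the goal is: for \coqdocvar{t}:(\coqRefDefn{Top.paper}{WT} \coqdocvar{i}) and \coqdocvar{u} \coqdocvar{w}:(\tprime{\coqRefDefn{Top.paper}{WT}} \coqdocvarP{i}) both related to \coqdocvar{t}, show \coqdocvar{u} = \coqdocvar{w}. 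As usual for an indexed family, I first generalize over \coqdocvar{i}, \coqdocvarP{i}, \coqdocvarR{i} so that the statement is well-defined for every member of the family, then proceed by induction on \coqdocvar{t}.

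Since \coqRefInductive{Top.paper}{IWT} has the single constructor \coqRefConstr{Top.paper}{tnode}, write \coqdocvar{t} = \coqRefConstr{Top.paper}{tnode} \coqdocvar{a} \coqdocvar{branches}; then \coqdocvar{u} and \coqdocvar{w} are each forced to be \coqRefConstr{Top.paper}{tnode} applied to a second-instantiation label and a branch function. Destructing the two relatedness proofs---which in the deductive style compute to nested dependent pairs (\secref{sec:anyrel:ind})---yields, on each side, a proof in \coqdocvarR{A} relating \coqdocvar{a} to the respective label, the relatedness of the branch functions, and the index-equality constraint. Applying the left-unique half of the hypothesis \coqRefDefn{Top.paper}{OneToOne} \coqdocvarR{A} to the two \coqdocvarR{A}-proofs equates the two labels; rewriting with this equality gives a common label \coqdocvarP{a}, and proof irrelevance---available because $\hat{\coqdockw{Set}}$ := \coqdockw{Prop} makes \coqdocvarR{A} \coqdocvar{a} \coqdocvarP{a} a proposition---identifies the two \coqdocvarR{A}-proofs as a single \coqdocvarR{a}, so that both branch relations now refer to the \emph{same} relation \coqdocvarR{B} \coqdocvar{a} \coqdocvarP{a} \coqdocvarR{a}.

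It remains to prove the two branch functions equal, which by the function extensionality axiom reduces to evaluating both at an arbitrary \coqdocvarP{b}:(\coqdocvarP{B} \coqdocvarP{a}). This is the one genuinely subtle step, and it is precisely where \coqRefDefn{Top.paper}{Total} \coqdocvarR{B} enters. The branch-relatedness hypotheses constrain the two second-instantiation subtrees at \coqdocvarP{b} only through some \coqdocvar{b}:(\coqdocvar{B} \coqdocvar{a}) that is \coqdocvarR{B}-related to \coqdocvarP{b}; so I invoke the second projection of \coqRefDefn{Top.paper}{Total} (\coqdocvarR{B} \coqdocvar{a} \coqdocvarP{a} \coqdocvarR{a}) to pull \coqdocvarP{b} back to such a \coqdocvar{b} together with a witness \coqdocvarR{b}. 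Feeding \coqdocvar{b}, \coqdocvarP{b}, \coqdocvarR{b} to both branch relations shows that the subtree \coqdocvar{branches} \coqdocvar{b} (at index \coqdocvar{BI} \coqdocvar{a} \coqdocvar{b}) is related to both second-instantiation subtrees at \coqdocvarP{b}; the induction hypothesis for \coqdocvar{branches} \coqdocvar{b} then equates them. Function extensionality yields equality of the branch functions, hence \coqdocvar{u} = \coqdocvar{w}.

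The main obstacle is exactly this last step. Whereas the node labels are pinned down directly by \coqRefDefn{Top.paper}{OneToOne} \coqdocvarR{A}, the branch \emph{functions} can only be compared pointwise, and an arbitrary argument \coqdocvarP{b} in their common domain is reachable from the subtrees of \coqdocvar{t} only if it has a \coqdocvarR{B}-preimage---exactly the surjectivity supplied by \coqRefDefn{Top.paper}{Total} \coqdocvarR{B}. This explains why, perhaps surprisingly for a \coqRefDefn{Top.paper}{OneToOne} conclusion, totality of \coqdocvarR{B} appears among the hypotheses; the symmetric conjunct consumes the other projection of \coqRefDefn{Top.paper}{Total} \coqdocvarR{B} and the right-unique half of \coqRefDefn{Top.paper}{OneToOne} \coqdocvarR{A}. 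In contrast to \lemref{lemma:iwtTot}, no hypothesis on \coqdocvarR{I} is required here, because we never build a tree at a prescribed index but only compare existing ones, so the index-equality constraints are discharged by the pattern matches themselves.
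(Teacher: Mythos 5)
Your overall strategy matches the paper's: induction on \coqdocvar{t} after generalizing the indices, \coqRefDefn{Top.paper}{OneToOne} \coqdocvarR{A} to pin down the node label, and \coqRefDefn{Top.paper}{Total} \coqdocvarR{B} plus function extensionality plus the induction hypothesis to compare the branch functions pointwise (mirroring \lemref{lemma:piOne}), with no hypothesis on \coqdocvarR{I}. Your accounting of which half of which hypothesis feeds which conjunct is also right.

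However, there is a genuine gap at the step you pass over in one clause: ``then \coqdocvar{u} and \coqdocvar{w} are each forced to be \coqRefConstr{Top.paper}{tnode} applied to \ldots''. To expose the constructor forms (and to make the deductive-style relation compute to its nested dependent pairs) you must pattern match on both second-instantiation trees. Matching on the first one, say \coqdocvar{u}, refines the index \coqdocvarP{i} to \coqdocvarP{AI} \coqdocvarP{a}. But then \coqdocvar{w} has type \tprime{\coqRefDefn{Top.paper}{WT}} (\coqdocvarP{AI} \coqdocvarP{a}), whose index is no longer a variable; Coq's dependent \coqdockw{match} on \coqdocvar{w} must generalize that index to a fresh \coqdocvarP{i2}, and the goal \coqdocvar{u} \coqexternalref{:type scope:x '=' x}{http://coq.inria.fr/distrib/8.5pl3/stdlib/Coq.Init.Logic}{\coqdocnotation{=}} \coqdocvar{w} becomes ill-typed, since the two sides would then live in different members of the family. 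This is exactly the difficulty the paper's proof is mostly about: it is resolved by first generalizing the goal to an equality of dependent pairs in \{ \coqdocvarP{i2} : \coqdocvarP{I} \coqdocnotation{\&} \tprime{\coqRefDefn{Top.paper}{WT}} \coqdocvarP{i2} \} (one nested pair per index in general), performing the second match there, and afterwards recovering the original equality with \coqref{Top.paper.inj pair2}{\coqdoclemma{inj\_pair2}} --- which is unprovable without axioms and is obtained from UIP/proof irrelevance. Your proposal neither performs this generalization nor invokes \coqref{Top.paper.inj pair2}{\coqdoclemma{inj\_pair2}}, so as written the nested case analysis would be rejected; you have identified the \coqRefDefn{Top.paper}{Total} \coqdocvarR{B} step as the only subtle one, whereas the index-generalization step is the one the paper singles out as the real difficulty and the one that forces an additional axiomatic dependency.
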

\noindent
The difficulty unsurprisingly involves indices.
First, in the above lemma, note that we don't need any property about \coqdocvarR{I}.
Also, recall that in \lemref{lemma:piOne}, 
we only needed the \coqRefDefn{Top.paper}{Total} property for the domain type.
Therefore, here, we need only the \coqRefDefn{Top.paper}{Total} property for
\coqdocvarR{B}.

Given \coqdocvar{t}:(\coqRefDefn{Top.paper}{WT} \coqdocvar{i}),
\coqdocvarP{t}:(\tprime{\coqRefDefn{Top.paper}{WT}} \coqdocvarP{i}),
\coqdocvarP{t2}:(\tprime{\coqRefDefn{Top.paper}{WT}} \coqdocvarP{i}),
\coqdocvarR{t}:\trel{\coqRefDefn{Top.paper}{WT}} \coqdocvariable{i} \coqdocvariable{\tprime{i}} \coqdocvariable{\trel{i}} \coqdocvar{t} \coqdocvarP{t},
and 
\coqdocvarR{t2}:\trel{\coqRefDefn{Top.paper}{WT}} \coqdocvariable{i} \coqdocvariable{\tprime{i}} \coqdocvariable{\trel{i}} \coqdocvar{t} \coqdocvarP{t2},
we need to produce a proof of \coqdocvarP{t} = \coqdocvarP{t2}.
The proof begins by pattern matching (induction) on \coqdocvar{t} and then another (nested) pattern match on \coqdocvarP{t}.
In general, inductives may have several constructors. In cases where the
constructors from the two pattern matches are different, we're done because
\coqdocvarR{t} computes to {\CoqFalse} (see \secref{sec:anyrel:ind}). We are now left only with cases that have the same constructor.
Back to the concrete example, we now have 
for some 
\coqdocvar{a},
\coqdocvar{branches},
\coqdocvarP{a}, and
\coqdocvarP{branches},
\coqdocvar{t} := \coqRefDefn{Top.paper}{tnodew} \coqdocvar{a} \coqdocvar{branches}
and 
\coqdocvarP{t} := \tprime{\coqRefDefn{Top.paper}{tnodew}} \coqdocvarP{a} \coqdocvarP{branches}.
\coqdocvarP{t} and \coqdocvarP{t2} now have type
\tprime{\coqRefDefn{Top.paper}{WT}} (\coqdocvarP{AI} \coqdocvarP{a}),
and we need to prove \coqdocvarP{t} = \coqdocvarP{t2}.
The obvious step now is to do a (nested) pattern match on \coqdocvarP{t2}. However, this is illegal.
As explained in \secref{sec:anyrel:ind}, for indexed inductive types, the definition of the type for one index may depend on the definition for other indices.
Therefore, to do induction on an indexed inductive type, the property being
proved by induction must be well-typed for \emph{all} indices.
Also, an equality is only well-typed if both sides have the same type. 
Thus, when we do a pattern match on \coqdocvarP{t2}, the index  (\coqdocvarP{AI} \coqdocvarP{a}) of its type gets generalized to a fresh variable, say
\coqdocvarP{i2}. Then the type of \coqdocvarP{t2} becomes 
\tprime{\coqRefDefn{Top.paper}{WT}} \coqdocvarP{i2},
and thus the types of \coqdocvarP{t} and 
\coqdocvarP{t2} become non-convertible.

A common solution to such problems is to state the equality in a more general type.
We can generalize the statement \coqdocvarP{t} = \coqdocvarP{t2} to the statement that
the dependent pair of (\coqdocvarP{AI} \coqdocvarP{a}) and \coqdocvarP{t} and
the dependent pair of (\coqdocvarP{AI} \coqdocvarP{a}) and \coqdocvarP{t2}
are equal in the sigma type \{ \coqdocvarP{i2} : \coqdocvarP{I} \coqdocnotation{\&} \tprime{\coqRefDefn{Top.paper}{WT}} \coqdocvarP{i2} \}.
Now when we pattern match on \coqdocvarP{t2}, the type of the RHS of the equality remains unchanged.
The rest of the proof is straightforward.

Finally, we have to undo the generalization of the equality statement. 
For that, we use the following lemma from Coq's standard library, which although unprovable~\cite{Hofmann.Streicher1998},
is a consequence of proof irrelevance (or the UIP (Unicity of Identity Proofs) axiom).

\coqdocnoindent
\coqdockw{Lemma} \coqdef{Top.paper.inj pair2}{inj\_pair2}{\coqdoclemma{inj\_pair2}}: \coqdockw{\ensuremath{\forall}} (\coqdocvar{U} : \coqdockw{Type}) (\coqdocvar{P} : \coqdocvariable{U} \coqexternalref{:type scope:x '->' x}{http://coq.inria.fr/distrib/8.5pl3/stdlib/Coq.Init.Logic}{\coqdocnotation{\ensuremath{\rightarrow}}} \coqdockw{Type}) (\coqdocvar{p} : \coqdocvariable{U}) (\coqdocvar{x} \coqdocvar{y} : \coqdocvariable{P} \coqdocvariable{p}), \coqexternalref{existT}{http://coq.inria.fr/distrib/8.5pl3/stdlib/Coq.Init.Specif}{\coqdocconstructor{existT}} \coqdocvariable{p} \coqdocvariable{x} \coqexternalref{:type scope:x '=' x}{http://coq.inria.fr/distrib/8.5pl3/stdlib/Coq.Init.Logic}{\coqdocnotation{=}} \coqexternalref{existT}{http://coq.inria.fr/distrib/8.5pl3/stdlib/Coq.Init.Specif}{\coqdocconstructor{existT}} \coqdocvariable{p} \coqdocvariable{y} \coqexternalref{:type scope:x '->' x}{http://coq.inria.fr/distrib/8.5pl3/stdlib/Coq.Init.Logic}{\coqdocnotation{\ensuremath{\rightarrow}}} \coqdocvariable{x} \coqexternalref{:type scope:x '=' x}{http://coq.inria.fr/distrib/8.5pl3/stdlib/Coq.Init.Logic}{\coqdocnotation{=}} \coqdocvariable{y}.\coqdoceol 

\coqdocnoindent
In general, an inductive type may have several (say $n$) indices. Our
translation then uses $n$ nested dependent pairs. Also, the above lemma is then
invoked $n$ times.

\section{{\isorel} Translation}
\label{sec:isorel}
Now we use the lemmas developed in the previous two sections to define the {\isorel} translation.
Those lemmas are summarized in tables in \appref{appendix:table}.
 
In \secref{sec:uniformProp}, we saw how to systematically produce proofs of
the two desirable properties (\coqRefDefn{Top.paper}{IffProps} and \coqRefDefn{Top.paper}{CompleteRel}) for {\anyrel} translations of propositions.
In the {\isorel} translation, we augment the {\anyrel} translation to ensure
that parametricity relations of propositions always come bundled with those
two properties. We wish to define:

\coqdocnoindent
\ptranslateIso{\coqdockw{Prop}} :=
{\coqdocnotation{\ensuremath{\lambda}}} {\coqdocnotation{(}}\coqdocvar{A} \coqdocvar{\tprime{A}}: \coqdockw{Prop}{\coqdocnotation{),}} \coqexternalref{:type scope:'x7B' x ':' x 'x26' x 'x7D'}{http://coq.inria.fr/distrib/8.5pl3/stdlib/Coq.Init.Specif}{\coqdocnotation{\{}}\coqdocvar{R} \coqexternalref{:type scope:'x7B' x ':' x 'x26' x 'x7D'}{http://coq.inria.fr/distrib/8.5pl3/stdlib/Coq.Init.Specif}{\coqdocnotation{:}} \coqdocvariable{A} \coqexternalref{:type scope:x '->' x}{http://coq.inria.fr/distrib/8.5pl3/stdlib/Coq.Init.Logic}{\coqdocnotation{\ensuremath{\rightarrow}}} \coqdocvariable{\tprime{A}} \coqexternalref{:type scope:x '->' x}{http://coq.inria.fr/distrib/8.5pl3/stdlib/Coq.Init.Logic}{\coqdocnotation{\ensuremath{\rightarrow}}} \coqdockw{Prop} \coqexternalref{:type scope:'x7B' x ':' x 'x26' x 'x7D'}{http://coq.inria.fr/distrib/8.5pl3/stdlib/Coq.Init.Specif}{\coqdocnotation{\&}} \coqref{Top.paper.IffProps}{\coqdocdefinition{IffProps}} \coqdocvar{R} \coqexternalref{:type scope:x '/x5C' x}{http://coq.inria.fr/distrib/8.5pl3/stdlib/Coq.Init.Logic}{\coqdocnotation{\ensuremath{\land}}} \coqref{Top.paper.CompleteRel}{\coqdocdefinition{CompleteRel}} \coqdocvar{R}\coqexternalref{:type scope:'x7B' x ':' x 'x26' x 'x7D'}{http://coq.inria.fr/distrib/8.5pl3/stdlib/Coq.Init.Specif}{\coqdocnotation{\}}}.\coqdoceol

\begin{lemma}
\label{lemma:unify}
For any \coqdocvar{A}:\coqdockw{Prop}, \coqdocvar{B}:\coqdockw{Prop}, and \coqdocvar{R}:(\coqdocvar{A}~$\rightarrow$~\coqdocvar{B}~$\rightarrow$~\coqdockw{Prop}),
(\coqRefDefn{Top.paper}{IffProps} \coqdocvar{R} $\wedge$ \coqRefDefn{Top.paper}{CompleteRel} \coqdocvar{R}) $\leftrightarrow$
(\coqRefDefn{Top.paper}{Total} \coqdocvar{R} $\times$ \coqRefDefn{Top.paper}{OneToOne} \coqdocvar{R}).
\end{lemma}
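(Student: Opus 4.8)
The plan is to exploit the hypothesis that \coqdocvar{A} and \coqdocvar{B} live in \coqdockw{Prop}, so that the proof-irrelevance axiom (already assumed by the {\isorel} translation) collapses any two inhabitants of \coqdocvar{A}, and any two inhabitants of \coqdocvar{B}, to a single point. The first observation I would record is that, under proof irrelevance, \coqRefDefn{Top.paper}{OneToOne} \coqdocvar{R} holds \emph{unconditionally}: given \coqdocvar{a} and \coqdocvar{b1}, \coqdocvar{b2} with \coqdocvar{R} \coqdocvar{a} \coqdocvar{b1} and \coqdocvar{R} \coqdocvar{a} \coqdocvar{b2}, proof irrelevance on \coqdocvar{B} already forces \coqdocvar{b1} = \coqdocvar{b2}, and symmetrically for the conjunct over \coqdocvar{A}. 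Hence \coqRefDefn{Top.paper}{OneToOne} \coqdocvar{R} carries no information on either side of the biconditional: on the right it can always be handed back for free, and on the left it may simply be discarded. The genuine content thus reduces to the equivalence, for relations between propositions, of (\coqRefDefn{Top.paper}{IffProps} \coqdocvar{R} $\wedge$ \coqRefDefn{Top.paper}{CompleteRel} \coqdocvar{R}) with \coqRefDefn{Top.paper}{Total} \coqdocvar{R}. (The $\leftrightarrow$ in the statement is read as maps in both directions, since the right-hand side inhabits \coqdockw{Type}.)

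For the forward direction I would unfold \coqRefDefn{Top.paper}{Total}, whose two halves demand, for each \coqdocvar{a}:\coqdocvar{A}, a witness \coqdocvar{b}:\coqdocvar{B} together with \coqdocvar{R} \coqdocvar{a} \coqdocvar{b}, and dually for each \coqdocvar{b}:\coqdocvar{B}. Given \coqdocvar{a}:\coqdocvar{A}, the forward implication of \coqRefDefn{Top.paper}{IffProps} \coqdocvar{R} (namely \coqdocvar{A} $\rightarrow$ \coqdocvar{B}) produces a \coqdocvar{b}:\coqdocvar{B}, and \coqRefDefn{Top.paper}{CompleteRel} \coqdocvar{R} immediately supplies \coqdocvar{R} \coqdocvar{a} \coqdocvar{b}, giving the required dependent pair; the second half is symmetric, using the backward implication of \coqRefDefn{Top.paper}{IffProps}. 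Pairing this with the free \coqRefDefn{Top.paper}{OneToOne} proof yields the right-hand side.

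For the backward direction, \coqRefDefn{Top.paper}{IffProps} \coqdocvar{R}, i.e. the equivalence \coqdocvar{A} $\leftrightarrow$ \coqdocvar{B}, is read off directly by projecting out the existential witness from each half of \coqRefDefn{Top.paper}{Total}: the first half maps any \coqdocvar{a}:\coqdocvar{A} to some \coqdocvar{b}:\coqdocvar{B}, and the second maps any \coqdocvar{b}:\coqdocvar{B} to some \coqdocvar{a}:\coqdocvar{A}. For \coqRefDefn{Top.paper}{CompleteRel} \coqdocvar{R}, I would fix arbitrary \coqdocvar{a}:\coqdocvar{A} and \coqdocvar{b}:\coqdocvar{B}; the first half of \coqRefDefn{Top.paper}{Total} then produces some \coqdocvar{c}:\coqdocvar{B} with \coqdocvar{R} \coqdocvar{a} \coqdocvar{c}, and proof irrelevance on \coqdocvar{B} gives \coqdocvar{b} = \coqdocvar{c}, so rewriting transports \coqdocvar{R} \coqdocvar{a} \coqdocvar{c} to the desired \coqdocvar{R} \coqdocvar{a} \coqdocvar{b}.

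The only step calling for any care — and therefore the main, albeit mild, obstacle — is exactly this last transport: one must use proof irrelevance to identify the witness \coqdocvar{c} delivered by totality with the given \coqdocvar{b} before rewriting, and the symmetric identification on \coqdocvar{A} is what secures \coqRefDefn{Top.paper}{OneToOne} for free. Everything else is routine unfolding of the four definitions and reshuffling of witnesses; in particular the proof needs no induction and, because \coqdocvar{A},\coqdocvar{B}:\coqdockw{Prop}, no appeal to function extensionality — proof irrelevance alone suffices.
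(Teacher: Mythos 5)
Your proof is correct and follows essentially the same route as the paper's: the paper likewise observes that \coqRefDefn{Top.paper}{OneToOne} \coqdocvar{R} is a trivial consequence of proof irrelevance and then reduces the claim to \coqRefDefn{Top.paper}{Total} \coqdocvar{R} $\leftrightarrow$ (\coqRefDefn{Top.paper}{IffProps} \coqdocvar{R} $\wedge$ \coqRefDefn{Top.paper}{CompleteRel} \coqdocvar{R}), proved via proof irrelevance exactly as you spell out. Your write-up simply fills in the witness-shuffling and the proof-irrelevance rewrite that the paper leaves as ``straightforward.''
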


\begin{proof}
\coqRefDefn{Top.paper}{OneToOne} \coqdocvar{R} is a trivial consequence of proof irrelevance.
Also, using proof irrelevance, it is straightforward to prove 
\coqRefDefn{Top.paper}{Total} \coqdocvar{R} $\leftrightarrow$ 
(\coqRefDefn{Top.paper}{IffProps} \coqdocvar{R} $\wedge$ \coqRefDefn{Top.paper}{CompleteRel} \coqdocvar{R}).
\end{proof}

Thus we instead choose the following equivalent definition:\coqdoceol
\coqdocnoindent
\coqdockw{Definition} \coqdef{Top.paper.IsoRel}{IsoRel}{\coqdocdefinition{IsoRel}} :=
{\coqdocnotation{\ensuremath{\lambda}}} {\coqdocnotation{(}}\coqdocvar{A} \coqdocvar{\tprime{A}}: \coqdockw{Set}{\coqdocnotation{),}} \coqexternalref{:type scope:'x7B' x ':' x 'x26' x 'x7D'}{http://coq.inria.fr/distrib/8.5pl3/stdlib/Coq.Init.Specif}{\coqdocnotation{\{}}\coqdocvar{\trel{A}} \coqexternalref{:type scope:'x7B' x ':' x 'x26' x 'x7D'}{http://coq.inria.fr/distrib/8.5pl3/stdlib/Coq.Init.Specif}{\coqdocnotation{:}} \coqdocvariable{A} \coqexternalref{:type scope:x '->' x}{http://coq.inria.fr/distrib/8.5pl3/stdlib/Coq.Init.Logic}{\coqdocnotation{\ensuremath{\rightarrow}}} \coqdocvariable{\tprime{A}} \coqexternalref{:type scope:x '->' x}{http://coq.inria.fr/distrib/8.5pl3/stdlib/Coq.Init.Logic}{\coqdocnotation{\ensuremath{\rightarrow}}} \coqdockw{Prop} \coqexternalref{:type scope:'x7B' x ':' x 'x26' x 'x7D'}{http://coq.inria.fr/distrib/8.5pl3/stdlib/Coq.Init.Specif}{\coqdocnotation{\&}} \coqref{Top.paper.::x 'xC3x97' x}{\coqdocnotation{(}}\coqref{Top.paper.Total}{\coqdocdefinition{Total}} \coqdocvar{\trel{A}}\coqref{Top.paper.::x 'xC3x97' x}{\coqdocnotation{)}} \coqref{Top.paper.::x 'xC3x97' x}{\coqdocnotation{×}} \coqref{Top.paper.::x 'xC3x97' x}{\coqdocnotation{(}}\coqref{Top.paper.OneToOne}{\coqdocdefinition{OneToOne}} \coqdocvar{\trel{A}}\coqref{Top.paper.::x 'xC3x97' x}{\coqdocnotation{)}}\coqexternalref{:type scope:'x7B' x ':' x 'x26' x 'x7D'}{http://coq.inria.fr/distrib/8.5pl3/stdlib/Coq.Init.Specif}{\coqdocnotation{\}}}.\coqdoceol
\coqdocnoindent 
\ptranslateIso{\coqdockw{Prop}} :=
{\coqdocnotation{\ensuremath{\lambda}}} 
{\coqdocnotation{(}}\coqdocvar{A} \coqdocvar{\tprime{A}}: \coqdockw{Prop}
{\coqdocnotation{),}} 
\coqRefDefn{Top.paper}{IsoRel} \coqdocvar{A} \coqdocvar{\tprime{A}}.\coqdoceol

\coqdocnoindent 
Also, when propositions mention types, we may need the {\anyrel} parametricity relations of 
those types to have the \coqRefDefn{Top.paper}{Total} or \coqRefDefn{Top.paper}{OneToOne} property.
In \secref{sec:uniformProp:type}, we saw how to systematically build these properties for types in \coqdockw{Set}.
Thus, we can choose to define:

\coqdocnoindent 
\ptranslateIso{\coqdockw{Set}}:=
{\coqdocnotation{\ensuremath{\lambda}}} {\coqdocnotation{(}}\coqdocvar{A} \coqdocvar{\tprime{A}}: \coqdockw{Set}{\coqdocnotation{),}} 
\coqRefDefn{Top.paper}{IsoRel} \coqdocvar{A} \coqdocvar{\tprime{A}}.\coqdoceol

This choice is not ideal because the proofs of the desirable properties of many propositions 
don't need one or both of the two bundled properties of the types mentioned in the
propositions. We saw three examples (\coqRefDefn{Top.paper}{PNone}, \coqRefDefn{Top.paper}{PTot}, \coqRefDefn{Top.paper}{POne}) in \secref{sec:uniformProp}.
We use a 2-stage process in our {\isorel}
translation. In the first stage, which we call the weak {\isorel} translation and denote by \ptranslateIso{}, we \emph{always}
bundle the relations for types with \emph{both} the two properties. \ptranslateIso{} is structurally recursive and implemented in Coq (Gallina).
In the 2$^{nd}$ stage (\secref{sec:isorel:unused}),
we attempt to remove unused assumptions from the generated abstraction theorems. For efficiency, this 2$^{nd}$ stage is implemented as an OCaml plugin for Coq.
We denote the composition of the two stages by \ptranslateStrongIso{}, and call it the (strong) {\isorel} translation. 

It is natural to consider a 1-phase approach where the main translation itself
determines the minimally needed assumptions on type variables.
We considered and rejected that approach because it seemed very complex to implement.
A discussion can nevertheless be found in \appref{appendix:onephase}.

\subsection{\ptranslateIso{} (weak {\isorel} translation)}
\label{sec:isorel:weak}
First we define the following functions to construct and destruct
\coqRefDefn{Top.paper}{IsoRel}s.

\coqdocnoindent
\coqdockw{Definition} \coqdef{Top.paper.mkIsoRel}{mkIsoRel}{\coqdocdefinition{mkIsoRel}} (\coqdocvar{A} \coqdocvar{\tprime{A}} : \coqdockw{Set}) (\coqdocvar{\trel{A}}: \coqdocvariable{A} \coqexternalref{:type scope:x '->' x}{http://coq.inria.fr/distrib/8.5pl3/stdlib/Coq.Init.Logic}{\coqdocnotation{\ensuremath{\rightarrow}}} \coqdocvariable{\tprime{A}} \coqexternalref{:type scope:x '->' x}{http://coq.inria.fr/distrib/8.5pl3/stdlib/Coq.Init.Logic}{\coqdocnotation{\ensuremath{\rightarrow}}} \coqdockw{Prop}) (\coqdocvar{\trel{A}tot}: \coqref{Top.paper.Total}{\coqdocdefinition{Total}} \coqdocvariable{\trel{A}}) 
\coqdoceol\coqdocindent{1em}
(\coqdocvar{\trel{A}one}: \coqref{Top.paper.OneToOne}{\coqdocdefinition{OneToOne}} \coqdocvariable{\trel{A}}) : \coqref{Top.paper.IsoRel}{\coqdocdefinition{IsoRel}} \coqdocvariable{A} \coqdocvariable{\tprime{A}}
:= {\CoqExistT} \coqdocvarR{A} (\coqdocvar{\trel{A}tot}, \coqdocvar{\trel{A}one}).
\coqdoceol
\coqdocnoindent
\coqdockw{Definition} \coqdef{Top.paper.projRel}{projRel}{\coqdocdefinition{projRel}} (\coqdocvar{A} \coqdocvar{\tprime{A}} : \coqdockw{Set}) (\coqdocvar{\trel{A}iso} : \coqref{Top.paper.IsoRel}{\coqdocdefinition{IsoRel}} \coqdocvariable{A} \coqdocvariable{\tprime{A}}) : \coqdocvariable{A} \coqexternalref{:type scope:x '->' x}{http://coq.inria.fr/distrib/8.5pl3/stdlib/Coq.Init.Logic}{\coqdocnotation{\ensuremath{\rightarrow}}} \coqdocvariable{\tprime{A}} \coqexternalref{:type scope:x '->' x}{http://coq.inria.fr/distrib/8.5pl3/stdlib/Coq.Init.Logic}{\coqdocnotation{\ensuremath{\rightarrow}}} \coqdockw{Prop}
:= {\CoqSigTProj} \coqdocvar{\trel{A}iso}.\coqdoceol
\fxnote{subscripts should come at the end of variable names}
\coqdocnoindent
W.r.t. \ptranslate{}, the main change in \ptranslateIso{}
is that the parametricity relations of types and propositions come bundled with proofs. As a result, we often have
to project out relations from bundles before applying them.
Let
\projTyRel{A}{$t$} denote \coqRefDefn{Top.paper}{projRel} $A$ \tprime{$A$} $t$
if $A$ has type \coqdockw{Prop} or \coqdockw{Set}, and just  $t$ otherwise.
In our implementation, wherever needed, our reifier invokes Coq's typechecker and
includes this information (a flag indicating that a term has type \coqdockw{Prop} or \coqdockw{Set}) in the reified terms.
\ptranslateIso{} needs this information for the domain and codomain types of $\Pi$
types, the argument types of $\lambda$ terms, and the return types of \coqdockw{match} and
\coqdockw{fix} terms.
\coqdocnoindent
The desired correctness property of \ptranslateIso{} is: for closed $t$ and
$T$, if $t$ : $T$,
then we must have \ptranslateIso{$t$}:
((\projTyRel{T}{\ptranslateIso{$T$}}) $t$ $t$).
\fxnote{add let bindings to the translation. in template-coq, need add sort
annotations for its type as well}

\coqdocemptyline

\ptranslateIso{\coqdockw{Prop}} :=
{\coqdocnotation{\ensuremath{\lambda}}} 
{\coqdocnotation{(}}\coqdocvar{c} \coqdocvar{\tprime{c}}: \coqdockw{Prop}
{\coqdocnotation{),}} 
\coqRefDefn{Top.paper}{IsoRel} \coqdocvar{c} \coqdocvar{\tprime{c}}.\coqdoceol

\ptranslateIso{\coqdockw{Set}}:=
{\coqdocnotation{\ensuremath{\lambda}}} {\coqdocnotation{(}}\coqdocvar{c} \coqdocvar{\tprime{c}}: \coqdockw{Set}
{\coqdocnotation{),}} 
\coqRefDefn{Top.paper}{IsoRel} \coqdocvar{c} \coqdocvar{\tprime{c}}.\coqdoceol

\coqdocnoindent
For $i>0$, we have:
 
\ptranslateIso{\coqdockw{Type}$_i$} :=
{\coqdocnotation{\ensuremath{\lambda}}} 
{\coqdocnotation{(}}\coqdocvar{c} \coqdocvar{\tprime{c}}: \coqdockw{Type}$_i$
{\coqdocnotation{),}} 
\coqdocvar{c} $\rightarrow$ \coqdocvar{\tprime{c}} $\rightarrow$ \coqdockw{Type}$_i$

\coqdocemptyline
\ptranslateIso{\coqdocvar{x}}:= \coqdocvarR{x}

\ptranslateIso{λ\coqdocvar{x}:A.B} := λ
(\coqdocvar{x}: A)
(\coqdocvarP{x}: \tprime{A}) 
(\coqdocvarR{x}: (\projTyRel{A}{\ptranslateIso{A}}) \coqdocvar{x}
\coqdocvarP{x}), \ptranslateIso{B}

\ptranslateIso{(A\,B)} := (\ptranslateIso{A}\,B\,\tprime{B}\,\ptranslateIso{B})

\coqdocnoindent
The translation of dependent function types/propositions has two cases. First,
we define the following relation, which is the same as the {\anyrel}
translation, except that if necessary, it projects out the relations of the domain 
and the codomain type. 

$\ptranslateC{∀\coqdocvar{x}\!:\! A.B}{\Pi} :=
  λ(\coqdocvarFour{x}:∀\coqdocvar{x}:A.B)(\coqdocvarFive{x}:∀\coqdocvarP{x}:\tprime{A}.\tprime{B}),$
\coqdoceol
\coqdocindent{8em}
$∀(\coqdocvar{x}:A)(\coqdocvarP{x}:\tprime{A})(\coqdocvarR{x}:
  (\projTyRel{A}{\ptranslateIso{A}})\,\coqdocvar{x}\,{\coqdocvarP{x}}), \;
  (\projTyRel{B}\ptranslateIso{B})
  {(\coqdocvarFour{x}\;\coqdocvar{x})}{(\coqdocvarFive{x}\,\coqdocvarP{x})}$

\coqdocnoindent
If $∀x\!:\! A.B$ has type \coqdockw{Type}$_i$ where $i>0$, then we have

\ptranslateIso{$∀ \coqdocvar{x}\!:\! A.B$} := 
\ptranslateC{$∀
\coqdocvar{x}\!:\! A.B$}{\Pi}

\coqdocnoindent
If $∀x\!:\! A.B$ has type \coqdockw{Set} or \coqdockw{Prop} (depending on the type of $B$) then we have

\ptranslateIso{$∀ \coqdocvar{x}\!:\! A.B$} :=
\coqRefDefn{Top.paper}{mkIsoRel}
({$∀ \coqdocvar{x}\!:\! A.B$})
({$∀ \coqdocvarP{x}\!:\! \tprime{A}.\tprime{B}$})
(\ptranslateC{$∀ \coqdocvar{x}\!:\! A.B$}{\Pi})
$ptot$
$pone$

\coqdocnoindent
Here $ptot$ and $pone$ respectively are the proofs of the
\coqRefDefn{Top.paper}{Total} and \coqRefDefn{Top.paper}{OneToOne} properties,
whose construction was explained in \secref{sec:uniformProp:piProp} (if $B$:\coqdockw{Prop}) or
\secref{sec:uniformProp:funt} (otherwise): It is important to prefer the construction
in \secref{sec:uniformProp:piProp} (also see \lemref{lemma:unify}) because that uses fewer assumptions and thus increases
the potency of the $2^{nd}$ phase described in the next subsection.
More details can be found in \appref{appendix:correctness:typing:pi}.
\coqdocnoindent

Just like the case for $\Pi$ type, if an inductive type is in the 
\coqdockw{Set} or \coqdockw{Prop} universe, we bundle its relation
with the two proof terms produced as explained in \secref{sec:uniformProp:pnode} (for inductive propositions) 
or \secref{sec:uniformProp:indt} (otherwise).

The translation of the \coqdockw{match} and the \coqdockw{fix} constructs are
nearly the same as in the {\anyrel} translation. There was a small change needed
in the return types. Coq's kernel requires every pattern \coqdockw{match} to
include a return type (which is a function of the discriminee and its indices). 
The {\anyrel} translation of a 
\coqdockw{match} term (say $t$) of type $T$, is a \coqdockw{match} term
whose return type is \ptranslate{$T$} $t$ \tprime{$t$}.
In the {\isorel} translation, the return type is
(\projTyRel{T}{\ptranslateIso{T}}) $t$ \tprime{$t$}. A similar change was needed
in the translation of fixpoints.

\subsubsection{Correctness}
\label{sec:isorel:correctness}
As explained before,  
w.r.t. \ptranslate{}, the only changes in \ptranslateIso{} are:
1) The relations produced by \ptranslate{} of types/propositions in
\coqdockw{Set} or \coqdockw{Prop} are now paired with proofs of \coqRefDefn{Top.paper}{Total} and \coqRefDefn{Top.paper}{OneToOne} properties.
2) As a result, at some places, we project the relations out of the pairs.
In Sections~\ref{sec:uniformProp} and \ref{sec:uniformProp:type}, we explained in detail how to construct the proofs of 
\coqRefDefn{Top.paper}{Total} and \coqRefDefn{Top.paper}{OneToOne} properties.
Those constructions were originally done and proved correct in Coq.
Except for the construction of those proofs, 
the correctness argument for \ptranslateIso{} is almost identical to the correctness argument for 
\ptranslate{}: one proves that the translation preserves substitution, then reduction, and finally typehood~\cite[Lemma 2, Theorem 1]{Keller.Lasson2012}. 

In \appref{appendix:correctness}, we discuss a formal (but not machine checked) proof of correctness of \ptranslateIso{} for a
CoC-like core calculus. 
%
%
%
%
The formalized calculus excludes inductive types and associated constructs such as pattern matching and fixpoints.
However, we illustrate that \ptranslateIso{} correctly translates the W type (which can encode inductive types)
and its recursion principle (which can encode pattern matching and fixpoints).
We also show that \ptranslateIso{} preserves $\iota$ reduction of that recursion principle:
intuitively, axioms don't block preservation of reduction because we use axioms \emph{only} in proofs of the 
\coqRefDefn{Top.paper}{Total} and \coqRefDefn{Top.paper}{OneToOne} properties.
We have also tested \ptranslateIso{} on a large variety of inductives (e.g. multiple and dependent indices, multiple constructors, various shapes of arguments of constructors).
Recall that Coq typechecks the result of \ptranslateIso{} (after reflection): so soundness is not a concern.



\subsection{Eliminating Unused Hypotheses}
\label{sec:isorel:unused}
As mentioned before, \ptranslateStrongIso{} has a
post-processing stage where the user can ask the system to strengthen an
abstraction theorem generated by \ptranslateIso{}.
In \secref{sec:uniformProp}, we saw that our proofs of the desirable properties (\coqRefDefn{Top.paper}{IffProps}, \coqRefDefn{Top.paper}{CompleteRel}) of
propositions may not need one or both of the two properties
(\coqRefDefn{Top.paper}{Total}, \coqRefDefn{Top.paper}{OneToOne}) about the
relations of types mentioned in the propositions.
Similarly, the proof of the \coqRefDefn{Top.paper}{Total} or \coqRefDefn{Top.paper}{OneToOne} property for
relations of composite types may not need one or both 
of the two properties (\coqRefDefn{Top.paper}{Total}, \coqRefDefn{Top.paper}{OneToOne}) of
subcomponents (\secref{sec:uniformProp:type}).
Thus, we expect the proofs produced by \ptranslateIso{} to not mention some of
the hypotheses. We want to strengthen the statements of the theorems produced by
\ptranslateIso{} by pruning the unused hypotheses.  

There are many ways to define what it means for a variable \coqdocvar{x} (e.g. a hypothesis) to be unused
in a term (e.g. a proof) $p$. We say that a variable \coqdocvar{x} is \emph{definitionally unused} in $p$ if $\exists$ a term $p'$ such that $p'$ is \emph{definitionally} 
equal to $p$ and the free variables of $p'$ does not include \coqdocvar{x}.
It is easy to \emph{exactly} determine whether \coqdocvar{x} is definitionally unused in $p$: just strongly normalize $p$ and check if 
\coqdocvar{x} occurs in the free variables of the normal form.
However, for some realistic applications, strong normalization often ran for hours and then ran out of memory on our machines with 32GB RAM.
So, we use a publically available Coq plugin 
that avoids normalizing many subterms (e.g. whose free variables do not include \coqdocvar{x}), and 
is yet guaranteed to return the \emph{exact} answer.
This plugin runs within a few seconds in all our applications so far.
If it succeeds in eliminating \coqdocvar{x}, it also returns the term $p'$ where \coqdocvar{x} does not occur free.
$p'$ can be considered a proof of a stronger theorem which does not have the hypothesis \coqdocvar{x}.

As an example, consider a polymorphic proposition of the form $\lambda$
(\coqdocvar{T}:\coqdockw{Set}), $\theta$, where $\theta$ is some term.
\ptranslateIso{\ensuremath{\lambda} (\coqdocvar{T}:\coqdockw{Set}),
\ensuremath{\theta}} := $\lambda$ 
(\coqdocvar{T}:\coqdockw{Set})
(\coqdocvarP{T}:\coqdockw{Set})
(\coqdocvarR{T}: \coqRefDefn{Top.paper}{IsoRel} \coqdocvar{T} \coqdocvarP{T})
, \ptranslateIso{$\theta$}.
We $\eta$-expand \coqdocvarR{T}, say as variables \coqdocvar{R},
\coqdocvar{RTot}, \coqdocvar{ROne}, and then use the above-mentioned plugin, 
hoping one or both of \coqdocvar{RTot},
\coqdocvar{ROne} disappear in \ptranslateIso{$\theta$}. 

A more effective approach would be to aim for the following definition:
a variable \coqdocvar{x} is \emph{logically unused} in $p$ if 
$\exists$ a term $p'$ such that $p'$ is \emph{propositionally} 
equal to $p$ and the free variables of $p'$ does not include \coqdocvar{x}.
It is impossible to solve this variant exactly, but we believe there are heuristics that would
yield better results (stronger theorems) than the above approach in some applications.

\subsection{Limitations of the {\isorel} translation}
\label{sec:isorel:limitations}
\ptranslateIso{} fails for propositions where types of higher universes occur at certain places.
In universal quantification, the quantified type
must be in \coqdockw{Set} or \coqdockw{Prop}. 
In inductively defined propositions, the types of indices
and the types of arguments of constructors (except the parameters of the type)
must be in \coqdockw{Set} or \coqdockw{Prop}.
In \secref{sec:uniformProp}, we saw that the relations of the types at those positions may need
to have the 
\coqRefDefn{Top.paper}{Total} and/or the \coqRefDefn{Top.paper}{OneToOne} properties,
Unfortunately, it is not possible to systematically produce the proofs of those 
properties for types in higher universes:

Suppose we redefined, for $i>0$, \ptranslateIso{\coqdockw{Type}$_i$} to be just
like \ptranslateIso{\coqdockw{Set}}.
Then, the abstraction theorem for \coqdockw{Set}:\coqdockw{Type}$_1$ fails.
Now, \ptranslateIso{\coqdockw{Set}} needs to be augmented to also produce the
proofs of the \coqRefDefn{Top.paper}{Total} and the
\coqRefDefn{Top.paper}{OneToOne} property for the relation 
{\coqdocnotation{\ensuremath{\lambda}}} {\coqdocnotation{(}}\coqdocvar{A} \coqdocvar{\tprime{A}}: \coqdockw{Set}{\coqdocnotation{),}} 
\coqRefDefn{Top.paper}{IsoRel} \coqdocvar{A} \coqdocvar{\tprime{A}}.
The latter property is not provable:

\begin{lemma}
There is no axiom-free proof of \coqRefDefn{Top.paper}{OneToOne}
({\coqdocnotation{\ensuremath{\lambda}}} {\coqdocnotation{(}}\coqdocvar{A} \coqdocvar{\tprime{A}}: \coqdockw{Set}{\coqdocnotation{),}} 
\coqRefDefn{Top.paper}{IsoRel} \coqdocvar{A} \coqdocvar{\tprime{A}})
\end{lemma}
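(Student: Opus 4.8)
The plan is to show that a proof of \coqRefDefn{Top.paper}{OneToOne} of the relation in question would entail a statement that is independent of axiom-free Coq, namely that isomorphic types in \coqdockw{Set} are propositionally equal. Write $R := \lambda (A\ \tprime{A} : \coqdockw{Set}),\ \coqRefDefn{Top.paper}{IsoRel}\ A\ \tprime{A}$; note that here \coqRefDefn{Top.paper}{OneToOne} is used at a higher universe (the related objects $A, \tprime{A}$ are members of \coqdockw{Set}, itself in \coqdockw{Type}$_1$), and its content is equality of the related \coqdockw{Set}-level types. Since \coqRefDefn{Top.paper}{OneToOne}~$R$ is a conjunction, a proof of it projects to a proof of its first conjunct, so it suffices to refute the provability of that conjunct, which unfolds to $\forall (A\ B_1\ B_2 : \coqdockw{Set}),\ \coqRefDefn{Top.paper}{IsoRel}\ A\ B_1 \to \coqRefDefn{Top.paper}{IsoRel}\ A\ B_2 \to B_1 = B_2$. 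Call this $D_1$.

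First I would observe that \coqRefDefn{Top.paper}{IsoRel}~$A$~$A$ is inhabited axiom-free for every $A$: the graph of equality is both \coqRefDefn{Top.paper}{Total} and \coqRefDefn{Top.paper}{OneToOne}, by reflexivity and transitivity of $=$. Specialising $D_1$ at $B_1 := A$ therefore yields, axiom-free, the implication $\coqRefDefn{Top.paper}{IsoRel}\ A\ B_2 \to A = B_2$; that is, $D_1$ proves that any two isomorphic members of \coqdockw{Set} are equal. Next I would exhibit a concrete isomorphic-but-distinct pair. Take $A := \CoqBool$ and let $T$ be a freshly declared inductive type with exactly two nullary constructors. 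The evident bijection $\CoqBool \to T$ gives an axiom-free inhabitant of \coqRefDefn{Top.paper}{IsoRel}~\CoqBool~$T$ (the \coqRefDefn{Top.paper}{Total} and \coqRefDefn{Top.paper}{OneToOne} obligations both discharge by case analysis on the two elements, using only reflexivity and injectivity). Hence a proof of $D_1$ would yield an axiom-free proof of $\CoqBool = T$.

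The crux, and the only genuinely nontrivial step, is to argue that $\CoqBool = T$ has no axiom-free proof. This is exactly the failure of univalence at the level of \coqdockw{Set}, and it is the reason the construction cannot be extended to higher universes. I would justify it by soundness with respect to the standard set-theoretic model of CIC, in which a closed type is interpreted as a set, the identity type is interpreted strictly (a singleton when the two sides denote the same set and empty otherwise), and distinct inductive declarations receive distinct interpretations. In that model $\CoqBool$ and $T$ denote two different two-element sets, so the interpretation of $\CoqBool = T$ is empty; since every axiom-free proof is validated by the model, $\CoqBool = T$ is underivable. Equivalently, the axiom $\CoqBool \neq T$ is consistent with Coq, whereas $D_1$ refutes it, so $D_1$ — and a fortiori \coqRefDefn{Top.paper}{OneToOne}~$R$ — cannot be proved axiom-free. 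I expect the main obstacle to be pinning down this consistency/model claim precisely. It is worth stressing that this independence is orthogonal to the proof-irrelevance axiom used elsewhere in the \isorel{} translation: because $B_1, B_2 : \coqdockw{Set}$ are \emph{types} rather than proofs of a proposition, the ``trivial consequence of proof irrelevance'' argument of \lemref{lemma:unify} is unavailable here, which is precisely what makes the statement fail.
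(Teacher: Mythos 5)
Your overall strategy coincides with the paper's: instantiate the first projection of \coqRefDefn{Top.paper}{OneToOne} at two closed, isomorphic-but-distinct types in \coqdockw{Set} (you use {\CoqBool} and a fresh two-constructor inductive $T$; the paper uses \coqRefInductive{Top.paper}{nat} and {\CoqList}\,{\CoqTrue}), derive their propositional equality, and then argue that this equality is unprovable. The first two steps are fine, and the remark that proof irrelevance is unavailable because the related objects are types rather than proofs is a correct and worthwhile observation.

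The gap is in the crucial last step. You justify the unprovability of ${\CoqBool} = T$ by appeal to ``the standard set-theoretic model of CIC, in which \ldots distinct inductive declarations receive distinct interpretations.'' That property does not hold of the standard models (Werner-style or proof-irrelevant set models): there, an inductive type is interpreted by a set determined by its \emph{signature} (e.g.\ as an initial algebra or a canonical least fixed point), so {\CoqBool} and a freshly declared two-constructor inductive would typically receive the \emph{same} two-element set as interpretation. Such a model then \emph{validates} ${\CoqBool} = T$ rather than refuting it, so soundness with respect to it gives you nothing. One could in principle cook up a nonstandard model that tags each inductive interpretation with its name, but that would have to be constructed and verified, and is not the ``main obstacle'' you can defer. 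The paper instead uses a purely syntactic argument: by canonicity and normalization, any closed proof of \coqRefInductive{Coq.Init.Logic}{eq} $u$ $v$ reduces to \coqRefConstr{Coq.Init.Logic}{eq\_refl}, which typechecks only if $u$ and $v$ are \emph{definitionally} equal; since the two closed types are not convertible, the equality has no axiom-free proof. You should replace your model-theoretic appeal with this canonicity argument (or with a model explicitly shown to separate the two types).
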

\begin{proof}
It is easy to produce a \coqRefDefn{Top.paper}{Total} and
\coqRefDefn{Top.paper}{OneToOne} relation between
the types
\coqRefInductive{Top.paper}{nat} and \coqRefInductive{Top.paper}{nat},
and between the types
\coqRefInductive{Top.paper}{nat} and {\CoqList} {\CoqTrue}.
Then, it is easy to see that 
\coqRefDefn{Top.paper}{OneToOne} 
({{\coqdocnotation{\ensuremath{\lambda}}} {\coqdocnotation{(}}\coqdocvar{A}
\coqdocvar{\tprime{A}}: \coqdockw{Set}{\coqdocnotation{),}}
\coqRefDefn{Top.paper}{IsoRel} \coqdocvar{A} \coqdocvar{\tprime{A}}})
implies \coqRefInductive{Top.paper}{nat} = {\CoqList} {\CoqTrue}, which is
\emph{unprovable} in Coq:
looking at the definition of = (\secref{sec:intro}), it is obvious that if $u=v$ and $u$ and $v$ are closed,
they must be \emph{definitionally} equal. 
\end{proof}
\coqRefInductive{Top.paper}{nat} = {\CoqList} {\CoqTrue} may be provable using the univalence
axiom~\cite[Sec.~2.10]{TheUnivalentFoundationsProgram2013}. 
However, that axiom refutes UIP (Unicity of Identity Proofs)
which is useful in many Coq developments. 
For example, the proof of the 
\coqref{Top.paper.inj pair2}{\coqdocdefinition{inj\_pair2}} 
lemma used in
\secref{sec:uniformProp:indt} uses the UIP axiom (proof irrelevance implies UIP).
Also, UIP is needed for the justification of erasing (equality) proofs
during the compilation of Coq programs~\cite{Letouzey2004}.

As an example of the above limitation, the {\isorel} translation fails on the following because the index type is in a higher universe.\coqdoceol
\coqdocnoindent
\coqdockw{Inductive}  \coqdef{Top.paper.isNat}{isNat}{\coqdocinductive{isNat}} : \coqdockw{\ensuremath{\forall}} (\coqdocvar{A}:\coqdockw{Set}), \coqdockw{Prop} := 
\coqdef{Top.paper.isnat}{isnat}{\coqdocconstructor{isnat}} : \coqref{Top.paper.isNat}{\coqdocinductive{isNat}} \coqref{Top.paper.nat}{\coqdocinductive{nat}}.\coqdoceol
\coqdocnoindent
\coqdockw{Set} \cancel{:} \coqdockw{Set}. Indeed,
the {\isorel} abstraction theorem for \coqRefInductive{Top.paper}{isNat}
is easily refutable.

Nevertheless, as explained in \secref{sec:intro}, \coqdockw{Set} and \coqdockw{Prop} suffice for many practical application domains, 
especially verification of computer and physical systems. 

As explained in \secref{sec:isorel:weak}, \ptranslateIso{$∀ \coqdocvar{x}\!:\! A.B$} works differently in the cases when $B$:\coqdockw{Prop} and $B$:\coqdockw{Set}.
Thus, \ptranslateIso{} may produce ill-typed results for terms whose typing derivations use the rule \coqdockw{Prop} :> \coqdockw{Set}
at that position of $∀$.
See \appref{appendix:correctness:subst:subtyping} for an example. Using \coqdockw{Prop} :> \coqdockw{Set} is not always problematic.
Also, instead of using \coqdockw{Prop} :> \coqdockw{Set}, one can duplicate definitions for \coqdockw{Prop} and \coqdockw{Set}:
for example, Coq has different pair constructions for propositions and types.

  
There are also some fixable limitations: \ptranslateIso{} currently does not handle nested inductive types or propositions. 
Also, it fails on terms whose typehood derivation uses the
property \coqdockw{Set} :> \coqdockw{Type}$_1$.
The relations for types in \coqdockw{Set} are bundled with proofs,
while the relations for types in \coqdockw{Type}$_i$ aren't ($i>0$).
If the reification mechanism marked the 
places where the subtyping property was
used, \ptranslateIso{} can insert projections to remove the proofs.

\fxnote{Mutual inductives with mixed sorts: Set/Type vs Prop: see examples/mutIndDiffSort.v
This is a problem with the {\anyrel} translation. Fix the title.}

\section{Applications}
\label{sec:app}

The parametricity translation presented and implemented in previous
work~\cite{Keller.Lasson2012} 
can already be used to obtain for free many Coq proofs that
Coq users often do manually, often spending several hours, if not days.
First, we illustrate this with a simplified version of an actual use case from
our ongoing compiler-verification project.
Then, we extend the example with a free theorem 
that our {\isorel} translation produces but the translations in previous work
were not designed to produce.

When using named variable bindings, we often have to prove that various
concepts, e.g. big-step operational semantics, respect $\alpha$ equality.
These proofs are tedious, especially if the language has several kinds of
reductions, such as $\beta$, $\zeta$ (let-bindings), $\iota$ (pattern-matching). 
However, all these proofs
mainly boil down to one fact: that substitution behaves uniformly, i.e.,
on related ($\alpha$ equal) inputs, it produces related ($\alpha$ equal)
outputs.
First, we show that by polymorphically defining the operational semantics over
an abstract interface, we can use parametricity ({\anyrel} translation) to obtain the
proof for free.\footnote{Our interface abstracts over both named and de Bruijn style 
variable bindings, and thus we were able to use parametricity ({\anyrel}
translation) to also obtain the proof that the big-step operational semantics is
preserved when changing the representation from de Bruijn indices to
named-variable representation.}
Then we show that using our {\isorel} translation, we also obtain for free that
a notion of observational equality, which is an undecidable relation, respects
$\alpha$ equality. The {\anyrel} translation and the translations in previous works
produce useless abstraction theorems for this polymorphic
proposition.The language in our example is the simply typed lambda calculus with natural
numbers.

\coqdocnoindent
\begin{minipage}[t]{0.4\textwidth}
\coqdocnoindent
\coqdockw{Variables} (\coqdef{Top.paper.Squiggle4.Tm}{Tm}{\coqdocvariable{Tm}} \coqdef{Top.paper.Squiggle4.BTm}{BTm}{\coqdocvariable{BTm}} : \coqdockw{Set}).\coqdoceol
\coqdocnoindent
\coqdockw{Variable} \coqdef{Top.paper.Squiggle4.applyBtm}{applyBtm}{\coqdocvariable{applyBtm}}: \coqdocvariable{BTm} \coqexternalref{:type scope:x '->' x}{http://coq.inria.fr/distrib/8.5pl3/stdlib/Coq.Init.Logic}{\coqdocnotation{\ensuremath{\rightarrow}}} \coqdocvariable{Tm} \coqexternalref{:type scope:x '->' x}{http://coq.inria.fr/distrib/8.5pl3/stdlib/Coq.Init.Logic}{\coqdocnotation{\ensuremath{\rightarrow}}} \coqdocvariable{Tm}.\coqdoceol
\coqdockw{Inductive} \coqdef{Top.paper.TmKind}{TmKind}{\coqdocinductive{TmKind}} :=\coqdoceol
\coqdocnoindent
\ensuremath{|} \coqdef{Top.paper.elam}{elam}{\coqdocconstructor{elam}} (\coqdocvar{bt}: \coqdocvariable{BTm}) \coqdoceol
\coqdocnoindent
\ensuremath{|} \coqdef{Top.paper.eapp}{eapp}{\coqdocconstructor{eapp}} (\coqdocvar{f} \coqdocvar{a}: \coqdocvariable{Tm}) \coqdoceol
\coqdocnoindent
\ensuremath{|} \coqdef{Top.paper.enum}{enum}{\coqdocconstructor{enum}} (\coqdocvar{n}: \coqref{Top.paper.nat}{\coqdocinductive{nat}})\coqdoceol
\coqdocnoindent
\ensuremath{|} \coqdef{Top.paper.evar}{evar}{\coqdocconstructor{evar}}.\coqdoceol
\coqdocnoindent
\coqdockw{Variable} \coqdef{Top.paper.Squiggle4.tmKind}{tmKind}{\coqdocvariable{tmKind}}:  \coqdocvariable{Tm} \coqexternalref{:type scope:x '->' x}{http://coq.inria.fr/distrib/8.5pl3/stdlib/Coq.Init.Logic}{\coqdocnotation{\ensuremath{\rightarrow}}} \coqref{Top.paper.TmKind}{\coqdocinductive{TmKind}}.\coqdoceol
\coqdocemptyline
\coqdocemptyline
\end{minipage}
\begin{minipage}[t]{0.6\textwidth}
The interface has two type variables: \coqRefVar{Top.paper.Squiggle4}{Tm} for
the type of terms and \coqRefVar{Top.paper.Squiggle4}{BTm} for the type of bound
terms~\citep[Sec. 2]{Howe1989}. 
In the $\lambda$ term $\lambda x. t$, $(x,t)$
can be considered a bound term. Bound terms only support the
\coqRefDefn{Top.paper.Squiggle4}{applyBtm} operation.
\coqRefDefn{Top.paper.Squiggle4}{applyBtm} $(x,t)$ $u$
represents $t[u/x]$. 
To define big step evaluation,
given a term
(\coqRefVar{Top.paper.Squiggle4}{Tm}),
we need to figure out what kind of a term it is: a $\lambda$, an
application, a number, or a variable.
The \coqRefDefn{Top.paper.Squiggle4}{tmKind} operation does just that.
It also allows limited access to subterms of a term.\coqdoceol
\coqdocemptyline
\coqdocemptyline
\end{minipage}
Note that the interface never allows
direct access to variables and can be instantiated even with de Bruijn terms and
de Bruijn substitution (for \coqRefDefn{Top.paper.Squiggle4}{applyBtm}).
Now, as shown in \figref{fig:app:evalObs}, we can polymorphically define not
only the big-step evaluation semantics
(\coqRefDefn{Top.paper.Squiggle4}{evaln}), but also a notion of observational
equivalence (\coqRefDefn{Top.paper.Squiggle4}{obseq}).

\ptranslate{\coqRefDefn{Top.paper.Squiggle4}{evaln}} is a proof that
on related instantiations of the above interface,
on related inputs \coqRefDefn{Top.paper.Squiggle4}{evaln} produces related
outputs. Given two concrete implementations of lambda terms and bound terms, say
\coqdocdefinition{LTm} and \coqdocdefinition{LBTm}, we instantiate 
\coqRefVar{Top.paper.Squiggle4}{Tm}, \tprime{\coqRefVar{Top.paper.Squiggle4}{Tm}}:= \coqdocdefinition{LTm};
\coqRefVar{Top.paper.Squiggle4}{BTm}, \tprime{\coqRefVar{Top.paper.Squiggle4}{BTm}}:= \coqdocdefinition{LBTm};
\trel{\coqRefVar{Top.paper.Squiggle4}{Tm}}, \trel{\coqRefVar{Top.paper.Squiggle4}{BTm}} := $\alpha$ equality.
This instantiation of \ptranslate{\coqRefDefn{Top.paper.Squiggle4}{evaln}} is a proof that on $\alpha$ equal
inputs, \coqRefDefn{Top.paper.Squiggle4}{evaln} produces $\alpha$ equal outputs.

In contrast, \ptranslate{\coqRefDefn{Top.paper.Squiggle4}{obseq}} is useless: it is a proof that on related
inputs, there is a relation between the output propositions
(\secref{sec:uniformProp}).
However, the {\isorel} translation also produces a proof that the relation has
the \coqRefDefn{Top.paper}{IffProps} (and the \coqRefDefn{Top.paper}{CompleteRel}) property, 
which means that the two
propositions are \emph{logically equivalent}. 
\appref{appendix:applications} shows the types of 
\ptranslate{\coqRefDefn{Top.paper.Squiggle4}{obseq}}, \ptranslateIso{\coqRefDefn{Top.paper.Squiggle4}{obseq}}, and \ptranslateStrongIso{\coqRefDefn{Top.paper.Squiggle4}{obseq}}. 

\ptranslateIso{\coqRefDefn{Top.paper.Squiggle4}{obsEq}} requires the relation 
\trel{\coqRefVar{Top.paper.Squiggle4}{Tm}}
to have the \coqRefDefn{Top.paper}{OneToOne} (and the
\coqRefDefn{Top.paper}{Total}) property.
This is problematic because the chosen relation, which is $\alpha$-equality,
is not \coqRefDefn{Top.paper}{OneToOne}: it is coarser than syntactic equality (=).
Fortunately, the 2$^{nd}$ stage
(\secref{sec:isorel:unused}) finds out that the
\coqRefDefn{Top.paper}{OneToOne} assumption is unused, and removes it.
Note that the definition of \coqRefDefn{Top.paper.Squiggle4}{obsEq} has
a universal quantification over the type \coqRefVar{Top.paper.Squiggle4}{Tm}.
Thus, \ptranslateIso{\coqRefDefn{Top.paper.Squiggle4}{obsEq}} (and \ptranslateIso{\coqRefDefn{Top.paper.Squiggle4}{obseq}}) 
\emph{does} use the assumption \coqRefDefn{Top.paper}{Total} \trel{\coqRefVar{Top.paper.Squiggle4}{Tm}} (\lemref{lemma:piIff}).
Fortunately, it is easy to prove that
$\alpha$ equality is a \coqRefDefn{Top.paper}{Total} relation, as are all
reflexive relations.

For closed $t$:$T$, unlike \ptranslate{$t$} and \ptranslateIso{$t$}, the type of
\ptranslateStrongIso{$t$} depends not only on $T$, but also on $t$: the potency
of the 2$^{nd}$ stage (\secref{sec:isorel:unused}) depends on how a proposition is expressed, not just its type.
For example, we could have expressed \coqRefDefn{Top.paper.Squiggle4}{obsEq} as an indexed-inductive proposition, where
the two arguments of type \coqRefVar{Top.paper.Squiggle4}{Tm} would be indices.
However, \ptranslateIso{\coqRefDefn{Top.paper.Squiggle4}{obsEq}} would then actually use 
\coqRefDefn{Top.paper}{OneToOne} \trel{\coqRefVar{Top.paper.Squiggle4}{Tm}}
because \coqRefVar{Top.paper.Squiggle4}{Tm} would be an index type (\corref{corr:iwp}).
Thus, the 2$^{nd}$ stage would then fail to remove it. 
The tables in Appendix~\ref{appendix:table}, which summarize the assumptions
of our uniformity lemmas, may be useful while trying to express a proposition in ways to get stronger abstraction theorems
from \ptranslateStrongIso{}.

\begin{figure}
\coqdocnoindent
\begin{minipage}[t]{0.49\textwidth}
\coqdockw{Fixpoint} \coqdef{Top.paper.evaln}{evaln}{\coqdocdefinition{evaln}} (\coqdocvar{n}:\coqref{Top.paper.nat}{\coqdocinductive{nat}}) (\coqdocvar{t}:\coqdocvariable{Tm}): \coqref{Top.paper.option}{\coqdocinductive{option}} \coqdocvariable{Tm} :=\coqdoceol
\coqdocnoindent
\coqdockw{match} \coqdocvariable{n} \coqdockw{with}\coqdoceol
\coqdocnoindent
\ensuremath{|} \coqref{Top.paper.O}{\coqdocconstructor{O}} \ensuremath{\Rightarrow} \coqref{Top.paper.None}{\coqdocconstructor{None}} \ensuremath{|} \coqref{Top.paper.S}{\coqdocconstructor{S}} \coqdocvar{n} \ensuremath{\Rightarrow} \coqdoceol
\coqdocindent{1.00em}
\coqdockw{match} (\coqdocvariable{tmKind} \coqdocvariable{t}) \coqdockw{with}\coqdoceol
\coqdocindent{1.00em}
\ensuremath{|} \coqref{Top.paper.evar}{\coqdocconstructor{evar}} \ensuremath{\Rightarrow} {\CoqNone}\coqdoceol
\coqdocindent{1.00em}
\ensuremath{|} \coqref{Top.paper.elam}{\coqdocconstructor{elam}} \coqdocvar{\_} \ensuremath{|} \coqref{Top.paper.enum}{\coqdocconstructor{enum}} \coqdocvar{\_} \ensuremath{\Rightarrow} \coqref{Top.paper.Some}{\coqdocconstructor{Some}} \coqdocvariable{t}\coqdoceol
\coqdocindent{1.00em}
\ensuremath{|} \coqref{Top.paper.eapp}{\coqdocconstructor{eapp}} \coqdocvar{f} \coqdocvar{a} \ensuremath{\Rightarrow}\coqdoceol
\coqdocindent{2.00em}
\coqdockw{match} \coqref{Top.paper.evaln}{\coqdocdefinition{evaln}} \coqdocvariable{n} \coqdocvar{f}, \coqref{Top.paper.evaln}{\coqdocdefinition{evaln}} \coqdocvariable{n} \coqdocvar{a} \coqdockw{with}\coqdoceol
\coqdocindent{2.00em}
\ensuremath{|} \coqref{Top.paper.Some}{\coqdocconstructor{Some}} \coqdocvar{f}, \coqref{Top.paper.Some}{\coqdocconstructor{Some}} \coqdocvar{a} \ensuremath{\Rightarrow}\coqdoceol
\coqdocindent{3.00em}
\coqdockw{match} (\coqdocvariable{tmKind} \coqdocvar{f}) \coqdockw{with}\coqdoceol
\coqdocindent{3.00em}
\ensuremath{|} \coqref{Top.paper.elam}{\coqdocconstructor{elam}} \coqdocvar{bt} \ensuremath{\Rightarrow} 
\coqref{Top.paper.evaln}{\coqdocdefinition{evaln}} \coqdocvariable{n} (\coqdocvariable{applyBtm} \coqdocvar{bt} \coqdocvar{a})\coqdoceol
\coqdocindent{3.00em}
\ensuremath{|} \coqdocvar{\_} \ensuremath{\Rightarrow} \coqref{Top.paper.None}{\coqdocconstructor{None}}\coqdoceol
\coqdocindent{3.00em}
\coqdockw{end}\coqdoceol
\coqdocindent{2.00em}
\ensuremath{|} \coqdocvar{\_},\coqdocvar{\_} \ensuremath{\Rightarrow} \coqref{Top.paper.None}{\coqdocconstructor{None}}\coqdoceol
\coqdocindent{2.00em}
\coqdockw{end}\coqdoceol
\coqdocindent{1.00em}
\coqdockw{end}\coqdoceol
\coqdocnoindent
\coqdockw{end}.\coqdoceol
\end{minipage}
\begin{minipage}[t]{0.49\textwidth}
\coqdockw{Definition} \coqdef{Top.paper.divergesIff}{divergesIff}{\coqdocdefinition{divergesIff}} (\coqdocvar{tl} \coqdocvar{tr}:\coqdocvariable{Tm}) : \coqdockw{Prop} :=\coqdoceol
\coqdocindent{0.0em}
\coqexternalref{:type scope:x '<->' x}{http://coq.inria.fr/distrib/8.5pl3/stdlib/Coq.Init.Logic}{\coqdocnotation{(}}\coqdockw{\ensuremath{\forall}} (\coqdocvar{nsteps}:\coqref{Top.paper.nat}{\coqdocinductive{nat}}), \coqref{Top.paper.isNone}{\coqdocdefinition{isNone}} (\coqRefDefn{Top.paper.Squiggle4}{evaln} \coqdocvariable{nsteps} \coqdocvariable{tl}) \coqref{Top.paper.:type scope:x '=' x}{\coqdocnotation{=}} \coqexternalref{true}{http://coq.inria.fr/distrib/8.5pl3/stdlib/Coq.Init.Datatypes}{\coqdocconstructor{true}}\coqexternalref{:type scope:x '<->' x}{http://coq.inria.fr/distrib/8.5pl3/stdlib/Coq.Init.Logic}{\coqdocnotation{)}} \coqdoceol
\coqdocindent{0.0em}
\coqexternalref{:type scope:x '<->' x}{http://coq.inria.fr/distrib/8.5pl3/stdlib/Coq.Init.Logic}{\coqdocnotation{\ensuremath{\leftrightarrow}}} 
\coqdockw{\ensuremath{\forall}} (\coqdocvar{nsteps}:\coqref{Top.paper.nat}{\coqdocinductive{nat}}), \coqref{Top.paper.isNone}{\coqdocdefinition{isNone}} (\coqRefDefn{Top.paper.Squiggle4}{evaln} \coqdocvariable{nsteps} \coqdocvariable{tr})\coqref{Top.paper.:type scope:x '=' x}{\coqdocnotation{=}} \coqexternalref{true}{http://coq.inria.fr/distrib/8.5pl3/stdlib/Coq.Init.Datatypes}{\coqdocconstructor{true}}.\coqdoceol
\coqdocemptyline
\coqdocnoindent
\coqdockw{Fixpoint} \coqdef{Top.paper.obsEq}{obsEq}{\coqdocdefinition{obsEq}} (\coqdocvar{k}:\coqref{Top.paper.nat}{\coqdocinductive{nat}}) 
(\coqdocvar{tl} \coqdocvar{tr}:\coqdocvariable{Tm}) : \coqdockw{Prop} :=\coqdoceol
\coqdocnoindent
\coqref{Top.paper.divergesIff}{\coqdocdefinition{divergesIff}} \coqdocvariable{tl} \coqdocvariable{tr} \coqexternalref{:type scope:x '/x5C' x}{http://coq.inria.fr/distrib/8.5pl3/stdlib/Coq.Init.Logic}{\coqdocnotation{\ensuremath{\land}}} \coqdockw{\ensuremath{\forall}} (\coqdocvar{nsteps}:\coqref{Top.paper.nat}{\coqdocinductive{nat}}), \coqdoceol
\coqdocnoindent
\coqdockw{match} \coqdocvariable{k} \coqdockw{with} \ensuremath{|} \coqref{Top.paper.O}{\coqdocconstructor{O}} \ensuremath{\Rightarrow} {\CoqTrue} \ensuremath{|} \coqref{Top.paper.S}{\coqdocconstructor{S}} \coqdocvar{k} \ensuremath{\Rightarrow}\coqdoceol
\coqdocindent{0.50em}
\coqdockw{match} \coqRefDefn{Top.paper.Squiggle4}{evaln} \coqdocvariable{nsteps} \coqdocvariable{tl}, \coqRefDefn{Top.paper.Squiggle4}{evaln} \coqdocvariable{nsteps} \coqdocvariable{tr} \coqdockw{with}\coqdoceol
\coqdocindent{0.50em}
\ensuremath{|}\coqref{Top.paper.Some}{\coqdocconstructor{Some}} \coqdocvar{vl}, \coqref{Top.paper.Some}{\coqdocconstructor{Some}} \coqdocvar{vr} \ensuremath{\Rightarrow} \coqdoceol
\coqdocindent{1.00em}
\coqdockw{match} \coqdocvariable{tmKind} \coqdocvar{vl}, \coqdocvariable{tmKind} \coqdocvar{vr} \coqdockw{with}\coqdoceol
\coqdocindent{1.00em}
\ensuremath{|} \coqref{Top.paper.enum}{\coqdocconstructor{enum}} \coqdocvar{nl} , \coqref{Top.paper.enum}{\coqdocconstructor{enum}} \coqdocvar{nr} \ensuremath{\Rightarrow} \coqdocvar{nl} \coqref{Top.paper.:type scope:x '=' x}{\coqdocnotation{=}} \coqdocvar{nr}\coqdoceol
\coqdocindent{1.00em}
\ensuremath{|} \coqref{Top.paper.elam}{\coqdocconstructor{elam}} \coqdocvar{btl} , \coqref{Top.paper.elam}{\coqdocconstructor{elam}} \coqdocvar{btr} \ensuremath{\Rightarrow} \coqdockw{\ensuremath{\forall}} (\coqdocvar{ta}: \coqdocvariable{Tm}), \coqdoceol
\coqdocindent{2.00em}
\coqref{Top.paper.obsEq}{\coqdocdefinition{obsEq}} \coqdocvariable{k} (\coqdocvariable{applyBtm} \coqdocvar{btl} \coqdocvariable{ta}) (\coqdocvariable{applyBtm} \coqdocvar{btr} \coqdocvariable{ta})\coqdoceol
\coqdocindent{1.00em}
\ensuremath{|} \coqdocvar{\_},\coqdocvar{\_} \ensuremath{\Rightarrow} {\CoqFalse}\coqdoceol
\coqdocindent{1.00em}
\coqdockw{end}\coqdoceol
\coqdocindent{0.50em}
\ensuremath{|}\coqdocvar{\_}, \coqdocvar{\_}  \ensuremath{\Rightarrow} {\CoqTrue}
\coqdoceol\coqdocindent{0.50em}
\coqdockw{end}\coqdoceol
\coqdocnoindent
\coqdockw{end}.\coqdoceol
\end{minipage}
\coqdocemptyline
\coqdocindent{17em}
\coqdockw{Definition} \coqdef{Top.paper.Squiggle4.obseq}{obseq}{\coqdocdefinition{obseq}} (\coqdocvar{tl} \coqdocvar{tr}:\coqdocvariable{Tm}) := 
\coqdockw{\ensuremath{\forall}} (\coqdocvar{k}:\coqref{Top.paper.nat}{\coqdocinductive{nat}}), \coqref{Top.paper.obsEq}{\coqdocdefinition{obsEq}} \coqdocvariable{k} \coqdocvariable{tl} \coqdocvariable{tr}.\coqdoceol
\caption{Left: big-step evaluation with fuel \coqdocvar{n}. Right: observational
equivalence.
}
\label{fig:app:evalObs}
\end{figure}

\section{Related Work and Conclusion}

The idea of globally enforcing that parametricity relations satisfy some
desirable properties was inspired by
\citet{Krishnaswami.Dreyer2013}: they globally enforce a \emph{zigzag-completeness} property.
However, that property is unrelated to our work which enforces
\coqRefDefn{Top.paper}{Total} and  \coqRefDefn{Top.paper}{OneToOne} properties.


For applications described in \secref{sec:app}, Coq developments typically
employ rewriting~\cite{Sozeau2010} and other proof-search mechanisms. For
example, \citet{Cohen.Denes.ea2013} use a library of proof search hints 
to semi-automatically refine algorithms (e.g. Strassen’s
matrix product) from simple data structures to complex but efficient ones.
Proof search mechanisms have less well-defined semantics and reliability
properties.
Our translation directly produces fully elaborated proof terms.
It is more automatic and preserves the
meaning of polymorphic propositions.


\citet{Zimmermann.Herbelin2015} built a Coq plugin to transfer
theorems across isomorphisms. Instead of using proof-search mechanisms,
they structurally recurse over the statement of the to-be-transferred theorem.
However, they consider a smaller class of propositions.
Inductively defined
propositions were not considered. Also, propositions produced by
pattern-matching (e.g. \coqRefDefn{Top.paper}{obsEq} in \secref{sec:app}) were not considered.

Transfer tools also exist for other proof assistants such as
Isabelle/HOL~\cite{Huffman.Kuncar2013}. However, our problem is more general
because HOL doesn't have dependent types.

Several works \cite{Krishnaswami.Dreyer2013, Atkey.Ghani.ea2014, Bernardy.Coquand.ea2015a} have constructed \emph{meta-theoretic} parametric models of variants of dependent type theory. Such models
may be useful in proving the consistency of such type theories and justify various useful extensions.
Our focus is not on the consistency of Coq or justifying extensions to Coq.
Like \citet{Keller.Lasson2012}, 
our translation produces proofs \emph{expressed in} Coq (Gallina) that are useful (\secref{sec:app}) without needing any extension to Coq.

There is one approach that is even more general than our \emph{weak} {\isorel} translation (\ptranslateIso{}):
Homotopy Type Theory (HoTT) \cite{TheUnivalentFoundationsProgram2013} is an area
of active research. 
It aims to serve as a foundation for full-fledged proof
assistants like Coq. The main advantage of HoTT is that it validates the
\emph{univalence principle} which says that, isomorphic types (more generally, equivalent types), even those in higher universes, are
\emph{equal}.
Also, as usual, every function, including the ones that return propositions, produces
equal outputs on equal inputs. Equal propositions are, of course, logically
equivalent!
Thus, HoTT may be able to provide some of the benefits that our weak {\isorel} translation
provides in Coq.
However, those benefits come at a cost.
\secref{sec:isorel:limitations} explained that univalence refutes UIP and why that can be problematic in Coq.

As mentioned before, our \emph{strong} {\isorel} translation (\ptranslateStrongIso{}) does not always require the two
instantiations to be isomorphic.
In \secref{sec:uniformProp}, we saw examples 
where one or \emph{both} of the \coqRefDefn{Top.paper}{Total} and 
\coqRefDefn{Top.paper}{OneToOne} assumptions are not needed.
In contrast, to use univalence to conclude that two types are equal, one needs
to always provide an isomorphism (more generally, an
equivalence~\cite[Sec.~4]{TheUnivalentFoundationsProgram2013}).
Thus, even in HoTT, a version of our \emph{strong} {\isorel} translation may be useful. Also, there it may be able to work for \emph{all} universes (\secref{sec:isorel:limitations}).

\paragraph{Conclusion}
We presented a new parametricity translation for a significant fragment of Coq.
Unlike the existing translations, it ensures that
parametrically related propositions are logically equivalent.
This allows us to obtain free proofs that polymorphic propositions behave uniformly.

Our goal was to develop a principled way to get free Coq proofs for our compiler-verification project.
We believe that our translation would be useful in many other application domains as well. 
%
Our implementation and test-suite are publically available on
Github:\\ \url{https://github.com/aa755/paramcoq-iff}.

\begin{acks}                            
  We thank Marc Lasson for help with understanding his and Chantal Keller's
  implementation of the paramcoq plugin, in particular the proof obligations that it generates.
  We also thank the anonymous ICFP 2017 reviewers for their detailed and constructive feedback. 

  This material is based upon work supported by the
  \grantsponsor{GS100000001}{National Science
    Foundation}{http://dx.doi.org/10.13039/100000001}. 
  Any opinions, findings, and
  conclusions or recommendations expressed in this material are those
  of the authors and do not necessarily reflect the views of the
  National Science Foundation.
\end{acks}

\bibliography{anand.bib}

\clearpage
\appendix
\section{Appendix}


\subsection{Universe-polymorphic inductive types are problematic for $\hat{\coqdockw{Set}}$ := \coqdockw{Prop}} 
\label{appendix:anyrel:univPoly}
Consider the following universe polymorphic inductive type:

\coqdocnoindent
\coqdockw{Inductive} \coqdef{Top.paper.list}{list@}{\coqdocinductive{list@}}\{\coqdocvar{i}\} (\coqdocvar{A} : \coqdocvar{Type@}\{\coqdocvar{i}\}) : \coqdocvar{Type@}\{\coqdocvar{i}\} :=  \coqdoceol
\coqdocnoindent
\ensuremath{|} \coqdef{Top.paper.nil}{nil}{\coqdocconstructor{nil}} : \coqref{Top.paper.list}{\coqdocinductive{list}} \coqdocvar{A} \coqdoceol
\coqdocnoindent
\ensuremath{|} \coqdef{Top.paper.cons}{cons}{\coqdocconstructor{cons}} : \coqdocvar{A} \coqexternalref{:type scope:x '->' x}{http://coq.inria.fr/distrib/8.5pl3/stdlib/Coq.Init.Logic}{\coqdocnotation{\ensuremath{\rightarrow}}} \coqref{Top.paper.list}{\coqdocinductive{list}} \coqdocvar{A} \coqexternalref{:type scope:x '->' x}{http://coq.inria.fr/distrib/8.5pl3/stdlib/Coq.Init.Logic}{\coqdocnotation{\ensuremath{\rightarrow}}} \coqref{Top.paper.list}{\coqdocinductive{list}} \coqdocvar{A}.\coqdoceol
\coqdocnoindent
The definition can be considered quantified over the universe \coqdocvar{i}.
The {\anyrel} parametricity translation of \coqdocinductive{list}, 
whether in the inductive style or in the deductive style, would also have to be polymorphic.
Recall from \secref{sec:anyrel} that we have
$\hat{{\CoqTypeSet}}$ := \coqdockw{Prop} and $\hat{s}$ := $s$ otherwise.
To the best of our knowledge, Coq's syntax for universe polymorphism is too restrictive to allow a definition like
the following (see the type of \coqdocvarR{A}):

\coqdocnoindent
\coqdockw{Fixpoint} \coqdef{Top.paper.list R}{list\_R@}{\coqdocdefinition{list\_R@}}\{\coqdocvar{i}\} (\coqdocvar{A}:
\coqdockw{Type@}\{\coqdocvar{i}\}) (\coqdocvar{A₂} : \coqdockw{Type@}\{\coqdocvar{i}\}) 
(\coqdocvarR{A} : \coqdocvariable{A} {\coqdocnotation{\ensuremath{\rightarrow}}} 
\coqdocvariable{A₂} {\coqdocnotation{\ensuremath{\rightarrow}}} 
if \coqdocvar{i} is 0 then \coqdockw{Prop} else \coqdockw{Type@}\{\coqdocvar{i}\})
$\hdots$ 
\coqdoceol
\coqdocnoindent
An inductive-style translation would also suffer from the same problem.
The problem doesn't arise if we choose $\hat{s}$~:=~$s$ for every universe.
Then, the type of \coqdocvarR{A} would simply be
\coqdocvariable{A} {\coqdocnotation{\ensuremath{\rightarrow}}} 
\coqdocvariable{A₂} {\coqdocnotation{\ensuremath{\rightarrow}}} 
\coqdockw{Type@}\{\coqdocvar{i}\}.


\subsection{Deductive-style {\anyrel} translation of inductive types: the
general case}
\label{appendix:anyrel:ded:ind}
Consider a general inductive type \coqdocinductive{$T$} of the form:

\coqdocnoindent
\coqdockw{Inductive} \coqdocinductive{$T$} (\coqdocvar{$\indexedVar{p}{1}$}: $\indexedVar{P}{1}$) $\hdots$
(\coqdocvar{$\indexedVar{p}{n}$}: $\indexedVar{P}{n}$) : 
$\forall$
(\coqdocvar{$\indexedVar{i}{1}$}: $\indexedVar{I}{1}$)
$\hdots$ 
(\coqdocvar{$\indexedVar{i}{k}$}: $\indexedVar{I}{k}$), $s$ 
:=\coqdoceol
\coqdocnoindent
$|$ \coqdocconstructor{$\indexedVar{c}{1}$} : $\indexedVar{C}{1}$
\coqdoceol\coqdocnoindent
$\vdots$
\coqdoceol\coqdocnoindent
$|$ \coqdocconstructor{$\indexedVar{c}{m}$} : $\indexedVar{C}{m}$.\coqdoceol

\coqdocnoindent
Recall from \secref{sec:anyrel:core} that $s$ denotes a universe
(\coqdockw{Prop} or \coqdockw{Type}$_i$).
Now we describe the deductive-style translation of \coqdocinductive{$T$}. 
First, we define the corresponding generalized equality proposition, as explained in \secref{sec:anyrel:ind}:\\

\coqdocnoindent
\coqdockw{Inductive} \coqdef{Top.indicesEq}{indicesEq}{\coqdocinductive{$T$\_indicesEq}}
\coqdoceol
(\coqdocvar{$\indexedVar{p}{1}$}: $\indexedVar{P}{1}$)
(\coqdocvarP{$\indexedVar{p}{1}$}: \tprime{$\indexedVar{P}{1}$})
(\coqdocvarR{$\indexedVar{p}{1}$}: \ptranslate{$\indexedVar{P}{1}$} \coqdocvar{$\indexedVar{p}{1}$} \coqdocvarP{$\indexedVar{p}{1}$})
$\hdots$
(\coqdocvar{$\indexedVar{p}{n}$}: $\indexedVar{P}{n}$)
(\coqdocvarP{$\indexedVar{p}{n}$}: \tprime{$\indexedVar{P}{n}$})
(\coqdocvarR{$\indexedVar{p}{n}$}: \ptranslate{$\indexedVar{P}{n}$} \coqdocvar{$\indexedVar{p}{n}$}
\coqdocvarP{$\indexedVar{p}{n}$})\coqdoceol
(\coqdocvar{$\indexedVar{i}{1}$}: $\indexedVar{I}{1}$) 
(\coqdocvarP{$\indexedVar{i}{1}$}: \tprime{$\indexedVar{I}{1}$}) 
(\coqdocvarR{$\indexedVar{i}{1}$}: \ptranslate{$\indexedVar{I}{1}$} \coqdocvar{$\indexedVar{i}{1}$} \coqdocvarP{$\indexedVar{i}{1}$}) 
$\hdots$ 
(\coqdocvar{$\indexedVar{i}{k}$}: $\indexedVar{I}{k}$) 
(\coqdocvarP{$\indexedVar{i}{k}$}: \tprime{$\indexedVar{I}{k}$}) 
(\coqdocvarR{$\indexedVar{i}{k}$}: \ptranslate{$\indexedVar{I}{k}$} \coqdocvar{$\indexedVar{i}{k}$} \coqdocvarP{$\indexedVar{i}{k}$})
:\coqdoceol
$\forall$
(\coqdocvarR{$\indexedVar{i}{1}$}\coqdocvar{$'$}: \ptranslate{$\indexedVar{I}{1}$} \coqdocvar{$\indexedVar{i}{1}$} \coqdocvarP{$\indexedVar{i}{1}$})
$\hdots$
(\coqdocvarR{$\indexedVar{i}{k}$}\coqdocvar{$'$}: \ptranslate{$\indexedVar{I}{k}$} \coqdocvar{$\indexedVar{i}{k}$} \coqdocvarP{$\indexedVar{i}{k}$})
: \coqdockw{Prop}
:=
\coqdoceol\coqdocnoindent
$|$ \coqdocconstructor{$T$\_refl} : \coqdocinductive{$T$\_indicesEq} 
\coqdocvar{$\indexedVar{p}{1}$}
\coqdocvarP{$\indexedVar{p}{1}$}
\coqdocvarR{$\indexedVar{p}{1}$}
$\hdots$
\coqdocvar{$\indexedVar{p}{n}$}
\coqdocvarP{$\indexedVar{p}{n}$}
\coqdocvarR{$\indexedVar{p}{n}$}
$\,$
\coqdocvar{$\indexedVar{i}{1}$}
\coqdocvarP{$\indexedVar{i}{1}$}
\coqdocvarR{$\indexedVar{i}{1}$}
$\hdots$
\coqdocvar{$\indexedVar{i}{k}$}
\coqdocvarP{$\indexedVar{i}{k}$}
\coqdocvarR{$\indexedVar{i}{k}$}
$\,$
\coqdocvarR{$\indexedVar{i}{1}$}\coqdocvar{$'$}
$\hdots$
\coqdocvarR{$\indexedVar{i}{k}$}\coqdocvar{$'$}.

\coqdocnoindent
In the future, instead of generating one such inductive proposition for each inductive type, 
we plan to have only one for each class of inductives that have the same number of indices (e.g. \coqdocinductive{$T$} has $k$ indices).

Let \coqdocvar{t} be a variable of the first class~(\secref{sec:anyrel:core})
such that \coqdocvar{t} is distinct from any variable in the above definitions.
Now, we can define the {\anyrel} relation for the above inductive type (\coqdocinductive{$T$}):

\begin{minipage}[t]{\textwidth}
\coqdocnoindent
\coqdockw{Fixpoint} \coqdocdefinition{\trel{$T$}}
(\coqdocvar{$\indexedVar{p}{1}$}: $\indexedVar{P}{1}$)
(\coqdocvarP{$\indexedVar{p}{1}$}: \tprime{$\indexedVar{P}{1}$})
(\coqdocvarR{$\indexedVar{p}{1}$}: \ptranslate{$\indexedVar{P}{1}$} \coqdocvar{$\indexedVar{p}{1}$} \coqdocvarP{$\indexedVar{p}{1}$})
$\hdots$
(\coqdocvar{$\indexedVar{p}{n}$}: $\indexedVar{P}{n}$)
(\coqdocvarP{$\indexedVar{p}{n}$}: \tprime{$\indexedVar{P}{n}$})
(\coqdocvarR{$\indexedVar{p}{n}$}: \ptranslate{$\indexedVar{P}{n}$} \coqdocvar{$\indexedVar{p}{n}$}
\coqdocvarP{$\indexedVar{p}{n}$})
\coqdoceol\coqdocindent{1em}
(\coqdocvar{$\indexedVar{i}{1}$}: $\indexedVar{I}{1}$) 
(\coqdocvarP{$\indexedVar{i}{1}$}: \tprime{$\indexedVar{I}{1}$}) 
(\coqdocvarR{$\indexedVar{i}{1}$}: \ptranslate{$\indexedVar{I}{1}$} \coqdocvar{$\indexedVar{i}{1}$} \coqdocvarP{$\indexedVar{i}{1}$}) 
$\hdots$ 
(\coqdocvar{$\indexedVar{i}{k}$}: $\indexedVar{I}{k}$) 
(\coqdocvarP{$\indexedVar{i}{k}$}: \tprime{$\indexedVar{I}{k}$}) 
(\coqdocvarR{$\indexedVar{i}{k}$}: \ptranslate{$\indexedVar{I}{k}$} \coqdocvar{$\indexedVar{i}{k}$} \coqdocvarP{$\indexedVar{i}{k}$}) 
\coqdoceol\coqdocindent{1em}
(\coqdocvar{t} : \coqdocinductive{$T$}
\coqdocvar{$\indexedVar{p}{1}$} $\hdots$ \coqdocvar{$\indexedVar{p}{n}$} $\,$
\coqdocvar{$\indexedVar{i}{1}$} $\hdots$ \coqdocvar{$\indexedVar{i}{k}$}) 
(\coqdocvarP{t} : \coqdocinductive{$T$}
\coqdocvarP{$\indexedVar{p}{1}$} $\hdots$ \coqdocvarP{$\indexedVar{p}{n}$} $\,$
\coqdocvarP{$\indexedVar{i}{1}$} $\hdots$ \coqdocvarP{$\indexedVar{i}{k}$})
\{\coqdockw{struct} \coqdocvar{t}\} 
: $\hat{s}$ :=\coqdoceol\coqdocnoindent
\coqdockw{match} \coqdocvar{t} \coqdockw{in}
\coqdocinductive{$T$}
\coqdocvar{\_} $\hdots$ \coqdocvar{\_} $\,$
\coqdocvar{$\indexedVar{i}{1}$} $\hdots$ \coqdocvar{$\indexedVar{i}{k}$}
\coqdockw{return} $Ret_{out}$ \coqdockw{with}
\coqdoceol\coqdocnoindent
$\vdots$
\coqdoceol\coqdocnoindent
$|$ \coqdocconstructor{$\indexedVar{c}{u}$} \coqdocvar{$\indexedVar{a}{1}$} $\hdots$ \coqdocvar{$\indexedVar{a}{l}$} 
$\Rightarrow$
\coqdoceol\coqdocindent{1.5em}
\coqdockw{match} \coqdocvarP{t} \coqdockw{in}
\coqdocinductive{$T$}
\coqdocvar{\_} $\hdots$ \coqdocvar{\_} $\,$
\coqdocvarP{$\indexedVar{i}{1}$} $\hdots$ \coqdocvarP{$\indexedVar{i}{k}$}
\coqdockw{return} $Ret_{in}$ \coqdockw{with}
\coqdoceol\coqdocindent{1.5em}
$\vdots$
\coqdoceol\coqdocindent{1.5em}
$|$ \coqdocconstructor{$\indexedVar{c}{u}$} \coqdocvarP{$\indexedVar{a}{1}$} $\hdots$ \coqdocvarP{$\indexedVar{a}{l}$} 
$\Rightarrow$ $\lambda$ \coqdocvarR{$\indexedVar{i}{1}$} $\hdots$ \coqdocvarR{$\indexedVar{i}{k}$},
\coqdoceol\coqdocindent{3em}
\{\coqdocvarR{$\indexedVar{a}{1}$} : \ptranslate{$\indexedVar{A}{1}$} \coqdocvar{$\indexedVar{a}{1}$} \coqdocvarP{$\indexedVar{a}{1}$} \& \{ $\hdots$
\&
\{
\coqdocvarR{$\indexedVar{a}{l}$} : \ptranslate{$\indexedVar{A}{l}$} \coqdocvar{$\indexedVar{a}{l}$} \coqdocvarP{$\indexedVar{a}{l}$}
\&
\coqdoceol\coqdocindent{4em}
\coqref{Top.indicesEq}{\coqdocinductive{$T$\_indicesEq}}
\coqdocvar{$\indexedVar{p}{1}$}
\coqdocvarP{$\indexedVar{p}{1}$}
\coqdocvarR{$\indexedVar{p}{1}$}
$\hdots$
\coqdocvar{$\indexedVar{p}{n}$}
\coqdocvarP{$\indexedVar{p}{n}$}
\coqdocvarR{$\indexedVar{p}{n}$}
$\,$
$CI1$
\tprime{$CI1$}
\ptranslate{$CI1$}
$\hdots$
$CIk$
\tprime{$CIk$}
\ptranslate{$CIk$}
$\,$
\coqdocvarR{$\indexedVar{i}{1}$}
$\hdots$
\coqdocvarR{$\indexedVar{i}{k}$}
\}
\coqdoceol\coqdocindent{1.5em}
$\vdots$
\coqdoceol\coqdocindent{1.5em}
$|$ \coqdocconstructor{$\indexedVar{c}{v}$} $\hdots$ 
$\Rightarrow$ $\lambda$ \coqdocvarR{$\indexedVar{i}{1}$} $\hdots$ \coqdocvarR{$\indexedVar{i}{k}$}, {\CoqFalse} 
\coqdoceol\coqdocindent{1.5em}
$\vdots$
\coqdoceol\coqdocindent{1.5em}
\coqdockw{end}
\coqdoceol\coqdocnoindent
$\vdots$
\coqdoceol\coqdocnoindent
\coqdockw{end} \coqdocvarR{$\indexedVar{i}{1}$} $\hdots$ \coqdocvarR{$\indexedVar{i}{k}$}.
\coqdoceol\coqdocnoindent
\end{minipage}

\coqdocnoindent
In the above, $1 \le {\ensuremath{u}} \le m$, $1 \le {\ensuremath{v}} \le m$, and 
${\ensuremath{u}}\neq {\ensuremath{v}}$. 
Also, $Cu$, the type declaration for \coqdocconstructor{cu} in the definition of \coqdocinductive{$T$} is:\\
$\forall$ 
(\coqdocvar{$\indexedVar{a}{1}$} : {$\indexedVar{A}{1}$})
$\hdots$
(\coqdocvar{$\indexedVar{a}{l}$} : {$\indexedVar{A}{l}$}),
\coqdocinductive{$T$} 
\coqdocvar{$\indexedVar{p}{1}$}
$\hdots$
\coqdocvar{$\indexedVar{p}{n}$}
$CI1$
$\hdots$
$CIk$.\\
$Ret_{out}$ is $\forall$ 
(\coqdocvarR{$\indexedVar{i}{1}$}: \ptranslate{$\indexedVar{I}{1}$} \coqdocvar{$\indexedVar{i}{1}$} \coqdocvarP{$\indexedVar{i}{1}$}) 
$\hdots$ 
(\coqdocvarR{$\indexedVar{i}{k}$}: \ptranslate{$\indexedVar{I}{k}$} \coqdocvar{$\indexedVar{i}{k}$} \coqdocvarP{$\indexedVar{i}{k}$}),
$\hat{s}$.\\
$Ret_{in}$ is a refined version of $Ret_{out}$, where the variables \coqdocvar{$\indexedVar{i}{1}$},$\hdots$,\coqdocvar{$\indexedVar{i}{k}$}
are respectively substituted with  
$CI1$,
$\hdots$,
$CIk$.\\

The translation of constructors (e.g. \coqdocconstructor{cu}) of \coqdocinductive{$T$} is straightforward.
Note that the abstraction theorem (\thmref{Abstraction}) already determines the type of the result of the translation.
The result of translating a constructor is a function
that packages some of its arguments into dependent pairs whose types were shown in the above definition.
The innermost member of such dependent pairs is always the canonical proof of the corresponding generalized equality proposition. 
For example, for the constructors of \coqdocinductive{$T$}, 
the innermost member is always of the form 
\coqref{Top.indicesEq}{\coqdocconstructor{$T$\_refl}} $\hdots$.

In the case of mutual inductive definitions, we produce mutually recursive functions.

\subsection{Deductive-style {\anyrel} translation of pattern matching on
inductive types:
the general case}
\label{appendix:anyrel:ded:match}
Now we will see how to translate a pattern match on 
a discriminee of the inductive type \coqdocinductive{$T$} defined above (\secref{appendix:anyrel:ded:ind}).
Consider a term\\
$m$ := \coqdoceol
\coqdocnoindent
\coqdockw{match} ($d$:$D$) \coqdockw{as} \coqdocvar{t} \coqdockw{in} \coqdocinductive{$T$}
\coqdocvar{\_} $\hdots$ \coqdocvar{\_} $\,$
\coqdocvar{$\indexedVar{i}{1}$} $\hdots$ \coqdocvar{$\indexedVar{i}{k}$}
\coqdockw{return} $R$ \coqdockw{with}
\coqdocnoindent
\coqdoceol\coqdocnoindent
$\vdots$
\coqdoceol\coqdocnoindent
$|$ \coqdocconstructor{$\indexedVar{c}{u}$} \coqdocvar{$\indexedVar{a}{1}$} $\hdots$ \coqdocvar{$\indexedVar{a}{l}$} $\Rightarrow$
$\indexedVar{b}{u}$ \coqdoceol\coqdocnoindent
$\vdots$
\coqdoceol\coqdocnoindent
\coqdockw{end}.
\coqdoceol

\coqdocnoindent
Note that the discriminee $d$ has type $D$. 
In Coq, $d$ need not be a variable.
In the representation of terms in Coq's kernel, the type of discriminee is not stored.
Our reifier computes that type and includes it in the reified terms.
Below, we will see that $D$ is needed in the translation.
Intuitively, the translation uses \ptranslate{$d$}, the translation of the discriminee.
Note that \ptranslate{$d$}:\ptranslate{$D$} $d$ \tprime{$d$}.
$D$ must be of the form \coqdocinductive{$T$} $\indexedVar{dP}{1}$ $\hdots$ $\indexedVar{dP}{n}$
$\indexedVar{dI}{1}$ $\hdots$ $\indexedVar{dI}{k}$.

Recall~\citep[Sec 8.2]{Chlipala2011} that in $m$, the return type $R$ can mention the variables
\coqdocvar{$\indexedVar{i}{1}$} $\hdots$ \coqdocvar{$\indexedVar{i}{k}$} and \coqdocvar{t}.
(Also, those variables are bound only in $R$.)
In other words, the return type of a \coqdockw{match} is a function of the 
discriminee and the indices of the (co-)inductive type of the discriminee.
While checking each branch, Coq substitutes those variables in $R$ to values corresponding to the constructor of the branch.
For example, $\indexedVar{b}{u}$ must be of type:\\
$R$ [ $\indexedVar{CI}{1}$ / \coqdocvar{$\indexedVar{i}{1}$}, $\hdots$,  
$\indexedVar{CI}{k}$ / \coqdocvar{$\indexedVar{i}{k}$}, 
(\coqdocconstructor{$\indexedVar{c}{u}$} 
$\indexedVar{dP}{1}$ $\hdots$ $\indexedVar{dP}{n}$
\coqdocvar{$\indexedVar{a}{1}$} $\hdots$ \coqdocvar{$\indexedVar{a}{l}$}
) / \coqdocvar{t}
]

\coqdocnoindent
The translation of the \coqdockw{match} term $m$ shown above is:

\coqdocnoindent
\begin{minipage}[t]{\textwidth}
\coqdocnoindent
\coqdockw{match} $d$ \coqdockw{as} \coqdocvar{t} \coqdockw{in} \coqdocinductive{$T$}
\coqdocvar{\_} $\hdots$ \coqdocvar{\_} $\,$
\coqdocvar{$\indexedVar{i}{1}$} $\hdots$ \coqdocvar{$\indexedVar{i}{k}$}
\coqdockw{return} $Ret_{out}$ \coqdockw{with}
\coqdocnoindent
\coqdoceol\coqdocnoindent
$\vdots$
\coqdoceol\coqdocnoindent
$|$ \coqdocconstructor{$\indexedVar{c}{u}$} \coqdocvar{$\indexedVar{a}{1}$} $\hdots$ \coqdocvar{$\indexedVar{a}{l}$} $\Rightarrow$
\coqdoceol\coqdocindent{1.5em}
\coqdockw{match} \tprime{$d$} \coqdockw{as} \coqdocvarP{t} \coqdockw{in} \coqdocinductive{$T$}
\coqdocvar{\_} $\hdots$ \coqdocvar{\_} $\,$
\coqdocvarP{$\indexedVar{i}{1}$} $\hdots$ \coqdocvarP{$\indexedVar{i}{k}$}
\coqdockw{return} $Ret_{in}$ \coqdockw{with}
\coqdoceol\coqdocindent{1.5em}
$\vdots$
\coqdoceol\coqdocindent{1.5em}
$|$ \coqdocconstructor{$\indexedVar{c}{u}$} \coqdocvarP{$\indexedVar{a}{1}$} $\hdots$ \coqdocvarP{$\indexedVar{a}{l}$} 
$\Rightarrow$ $\lambda$ \coqdocvarR{$\indexedVar{i}{1}$} $\hdots$ \coqdocvarR{$\indexedVar{i}{k}$} {\coqdocvarR{t}},
\coqdoceol\coqdocindent{3em}
\coqdockw{match} \coqdocvarR{t} \coqdockw{in} $\hdots$ \coqdockw{return} $\hdots$ \coqdockw{with} 
\coqdoceol\coqdocindent{3em}
$|$ {\CoqExistT} \coqdocvarR{\indexedVar{a}{1}} \coqdocvarR{t} $\Rightarrow$
\coqdoceol\coqdocindent{4em}
$\ddots$
\coqdoceol\coqdocindent{5em}
\coqdockw{match} \coqdocvarR{t} \coqdockw{in} $\hdots$ \coqdockw{return} $\hdots$ \coqdockw{with} 
\coqdoceol\coqdocindent{5em}
$|$ {\CoqExistT} \coqdocvarR{\indexedVar{a}{l}} \coqdocvar{pdeq} $\Rightarrow$
\coqdoceol\coqdocindent{6em}
\coqdockw{match} \coqdocvar{pdeq} \coqdockw{as} $\hdots$ \coqdockw{in} $\hdots$ \coqdockw{return} $\hdots$ \coqdockw{with} 
\coqdoceol\coqdocindent{6em}
$|$ \coqref{Top.indicesEq}{\coqdocconstructor{$T$\_refl}} $\Rightarrow$ \ptranslate{$\indexedVar{b}{u}$}
\coqdoceol\coqdocindent{6em}
\coqdockw{end}
\coqdoceol\coqdocindent{5em}
\coqdockw{end}
\coqdoceol\coqdocindent{4em}
$\iddots$
\coqdoceol\coqdocindent{3em}
\coqdockw{end}
\coqdoceol\coqdocindent{1.5em}
$\vdots$
\coqdoceol\coqdocindent{1.5em}
$|$ \coqdocconstructor{$\indexedVar{c}{v}$} $\hdots$ 
$\Rightarrow$ $\lambda$ \coqdocvarR{$\indexedVar{i}{1}$} $\hdots$ \coqdocvarR{$\indexedVar{i}{k}$}
\coqdocvarR{$t$}, {\CoqFalseRect} \_ $\;$ {\coqdocvarR{t}}
\coqdoceol\coqdocindent{1.5em}
$\vdots$
\coqdoceol\coqdocindent{1.5em}
\coqdockw{end}
\coqdoceol\coqdocnoindent
$\vdots$
\coqdoceol\coqdocnoindent
\coqdockw{end} \ptranslate{$\indexedVar{dI}{1}$} $\hdots$ \ptranslate{$\indexedVar{dI}{k}$} \ptranslate{d}.
\coqdoceol
\coqdocemptyline
\coqdocemptyline
\coqdocemptyline
\end{minipage}
Next, we describe the terms $Ret_{out}$ and $Ret_{in}$ mentioned in the above definition.
Given these, it should be easy to figure out the return types (of the inner \coqdockw{match} terms) that have been denoted by $\hdots$ for brevity.
Also, our implementation (as a Coq function) is publically available in a Github repository (\url{https://github.com/aa755/paramcoq-iff}).\\
First we define the term $ma$ which is obtained by replacing the discriminee $d$ in $m$ by the variable \coqdocvar{t}:

\coqdocnoindent
\begin{minipage}[t]{\textwidth}
$ma$ := \coqdoceol
\coqdocnoindent
\coqdockw{match} \coqdocvar{t} \coqdockw{as} \coqdocvar{t} \coqdockw{in} \coqdocinductive{$T$}
\coqdocvar{\_} $\hdots$ \coqdocvar{\_} $\,$
\coqdocvar{$\indexedVar{i}{1}$} $\hdots$ \coqdocvar{$\indexedVar{i}{k}$}
\coqdockw{return} $R$ \coqdockw{with}
\coqdocnoindent
\coqdoceol\coqdocnoindent
$\vdots$
\coqdoceol\coqdocnoindent
$|$ \coqdocconstructor{$\indexedVar{c}{u}$} \coqdocvar{$\indexedVar{a}{1}$} $\hdots$ \coqdocvar{$\indexedVar{a}{l}$} $\Rightarrow$
$\indexedVar{b}{u}$ \coqdoceol\coqdocnoindent
$\vdots$
\coqdoceol\coqdocnoindent
\coqdockw{end}.
\coqdoceol
\coqdocemptyline
\coqdocemptyline
\coqdocemptyline
\end{minipage}

\coqdocnoindent
Note that in the above definition of the term $ma$,
the occurrence of \coqdocvar{t} after \coqdockw{as} is a bound variable and not substitutable.
The occurrence at the position of discriminee is substitutable. 
\coqdocnoindent
$Ret_{out}$ and $Ret_{in}$ are obtained by performing substitutions in the following term:\\ 
$Ret$ := $\forall$ 
(\coqdocvarR{$\indexedVar{i}{1}$}: \ptranslate{$\indexedVar{I}{1}$} \coqdocvar{$\indexedVar{i}{1}$} \coqdocvarP{$\indexedVar{i}{1}$}) 
$\hdots$ 
(\coqdocvarR{$\indexedVar{i}{k}$}: \ptranslate{$\indexedVar{I}{k}$} \coqdocvar{$\indexedVar{i}{k}$} \coqdocvarP{$\indexedVar{i}{k}$})
(\coqdocvarR{$t$}: \ptranslate{$D$} \coqdocvar{$t$} \coqdocvarP{$t$}),
\ptranslate{R} $ma$ \tprime{$ma$} .\\
Now, we can define $Ret_{out}$ and $Ret_{in}$ as follows:\\
$Ret_{out}$ :=  $Ret$ [
\tprime{$\indexedVar{dI}{1}$} / \coqdocvarP{$\indexedVar{i}{1}$},
$\hdots$,
\tprime{$\indexedVar{dI}{k}$} / \coqdocvarP{$\indexedVar{i}{k}$},
\tprime{$d$} / \coqdocvarP{$t$}
]
\coqdoceol\coqdocnoindent
$Ret_{in}$ :=  $Ret$ [ $\indexedVar{CI}{1}$ / \coqdocvar{$\indexedVar{i}{1}$}, $\hdots$,  
$\indexedVar{CI}{k}$ / \coqdocvar{$\indexedVar{i}{k}$}, 
(\coqdocconstructor{$\indexedVar{c}{u}$} 
$\indexedVar{dP}{1}$ $\hdots$ $\indexedVar{dP}{n}$
\coqdocvar{$\indexedVar{a}{1}$} $\hdots$ \coqdocvar{$\indexedVar{a}{l}$}
) / \coqdocvar{t}
]

\subsection{{\anyrel} translation of fixpoints}
\label{appendix:anyrel:fix}

Our translation of \coqdockw{fix} (or \coqdockw{Fixpoint}) terms is largely 
as described by \citet{Keller.Lasson2012}. Roughly speaking,  
\ptranslate{\coqdockw{fix} $F$} is just \coqdockw{fix} \ptranslate{$F$}.
The translation of \coqdockw{fix} terms
depends a tiny bit on how the inductives are translated. 
Unlike in Agda, each \coqdockw{fix} term in Coq has a designated \coqdockw{struct} argument of an inductive type. 
Coq requires that any recursive call should be made on a structural subterm of the
\coqdockw{struct} argument. 
Coq can often infer the \coqdockw{struct} argument and in this paper, 
we have usually omitted the annotations stating the \coqdockw{struct} argument.
In \coqref{Top.paper.DedV.Vec R}{\coqdocdefinition{\trel{Vec}}}~(\secref{sec:anyrel:ind}), which is the
deductive-style translation of the type
\coqRefInductive{Top.paper}{Vec}, \coqdocvar{v} is the \coqdockw{struct}
argument.
Suppose we are translating \coqdockw{fix} $F$, where $F$ is
of the form $\lambda \hdots$ (\coqdocvar{v}:\coqdocinductive{$I$}) $\hdots$,
$\hdots$.
Suppose the \coqdockw{struct} argument is \coqdocvar{v}.
If \coqdocinductive{$I$} was translated in inductive style (e.g. when
\coqdocinductive{$I$} is a proposition), we must pick \coqdocvarR{v} as the
\coqdockw{struct} argument in the translation of \coqdockw{fix} $F$.
Coq guarantees that $F$ only makes recursive calls on subterms
of \coqdocvar{v}, which are obtained by pattern matching on \coqdocvar{v}.
In the inductive-style translation of $F$, those matches will be translated
to pattern matching on \coqdocvarR{v}.
In contrast, if 
\coqdocinductive{$I$} was translated in deductive style (e.g. when
\coqdocinductive{$I$} is a type), those matches will be translated into
matches on \coqdocvar{v} and \coqdocvar{\tprime{v}}. Thus we can choose either \coqdocvar{v} or
\coqdocvar{\tprime{v}} as the \coqdockw{struct} argument. We choose  \coqdocvar{v}.

A problem not mentioned in the literature, but partially addressed in the
implementation by \citet{Keller.Lasson2012}, is that the translation 
of \coqdockw{fix} $F$ needs to generate unfolding equations of the form
\coqdockw{fix} $F$ = $F$ (\coqdockw{fix} $F$).
For some pathological programs, these equations are
\emph{unprovable}.

Marc Lasson gave us the following example where the unfolding equation is unprovable:

\coqdocnoindent
\coqdockw{Fixpoint} \coqdef{Top.paper.zero}{zero}{\coqdocdefinition{zero}} (\coqdocvar{A} : \coqdockw{Type}) (\coqdocvar{x} : \coqdocvariable{A}) (\coqdocvar{p} : \coqdocvariable{x} \coqexternalref{:type scope:x '=' x}{http://coq.inria.fr/distrib/8.5pl3/stdlib/Coq.Init.Logic}{\coqdocnotation{=}} \coqdocvariable{x}) \{\coqdockw{struct} \coqdocvar{p}\}:= 0.\coqdoceol
\coqdocnoindent
To ensure strong normalization, a \coqdockw{fix} term only reduces (unfolds) when the \coqdockw{struct} argument is in head normal form.
In the definition above, it is impossible to prove that \coqdocvar{p} is equal to something in the head normal form~\cite{Hofmann.Streicher1998}.

\subsection{Locating proofs in the supplementary material}
\label{appendix:suppl}
For proofs in \secref{sec:uniformProp} and \secref{sec:uniformProp:type}, see the files Pi.v and IWTP.v in the supplementary material.

\subsubsection{The importance of the triviality property} 
\label{appendix:intro:triv}
See the admitted lemma in the file triviality.v in the supplementary material.

\subsection{1-phase strong {\isorel} translation}
\label{appendix:onephase}
As mentioned in \secref{sec:isorel}, it is natural to consider another design, where the main translation itself
determines the minimally needed assumptions on type variables (or variables
denoting type families) by, e.g., analysing the bodies of lambda terms, and
directly uses the appropriately minimal type for type variables.

Such a translation seems hard to implement for several reasons.
It would be non-compositional, while translating an application of 
some function $F$ to some type $T$,
we may need to prune the translation of $T$
depending on the translation of $F$.



Also, we are only interested in removing the top level arguments of an abstraction theorem.
It is not clear whether there is an advantage (disadvantage?) to removing the arguments
of $\lambda$ subterms that appear elsewhere.
%

\subsection{Abstraction theorems for {\protect\coqRefDefn{Top.paper.Squiggle4}{obseq}}}
\label{appendix:applications}

\subsubsection{\protect\ptranslate{{\coqdocdefinition{obseq}}}  ({\anyrel} translation)}
\coqdoceol\coqdocnoindent
\begin{minipage}[t]{\textwidth}
\coqdoceol
\coqdocnoindent
\coqdocnoindent
\coqdockw{\ensuremath{\forall}} (\coqdocvar{Tm} \coqdocvar{Tm₂} : \coqdockw{Set}) (\coqdocvar{Tmᵣ} : \coqdocvariable{Tm} \coqexternalref{:type scope:x '->' x}{http://coq.inria.fr/distrib/8.6/stdlib/Coq.Init.Logic}{\coqdocnotation{\ensuremath{\rightarrow}}} \coqdocvariable{Tm₂} \coqexternalref{:type scope:x '->' x}{http://coq.inria.fr/distrib/8.6/stdlib/Coq.Init.Logic}{\coqdocnotation{\ensuremath{\rightarrow}}} \coqdockw{Prop})
(\coqdocvar{BTm} \coqdocvar{BTm₂} : \coqdockw{Set}) (\coqdocvar{BTmᵣ} : \coqdocvariable{BTm} \coqexternalref{:type scope:x '->' x}{http://coq.inria.fr/distrib/8.6/stdlib/Coq.Init.Logic}{\coqdocnotation{\ensuremath{\rightarrow}}} \coqdocvariable{BTm₂} \coqexternalref{:type scope:x '->' x}{http://coq.inria.fr/distrib/8.6/stdlib/Coq.Init.Logic}{\coqdocnotation{\ensuremath{\rightarrow}}} \coqdockw{Prop})\coqdoceol
\coqdocindent{1.00em}
(\coqdocvar{applyBtm} : \coqdocvariable{BTm} \coqexternalref{:type scope:x '->' x}{http://coq.inria.fr/distrib/8.6/stdlib/Coq.Init.Logic}{\coqdocnotation{\ensuremath{\rightarrow}}} \coqdocvariable{Tm} \coqexternalref{:type scope:x '->' x}{http://coq.inria.fr/distrib/8.6/stdlib/Coq.Init.Logic}{\coqdocnotation{\ensuremath{\rightarrow}}} \coqdocvariable{Tm})
(\coqdocvar{applyBtm₂} : \coqdocvariable{BTm₂} \coqexternalref{:type scope:x '->' x}{http://coq.inria.fr/distrib/8.6/stdlib/Coq.Init.Logic}{\coqdocnotation{\ensuremath{\rightarrow}}} \coqdocvariable{Tm₂} \coqexternalref{:type scope:x '->' x}{http://coq.inria.fr/distrib/8.6/stdlib/Coq.Init.Logic}{\coqdocnotation{\ensuremath{\rightarrow}}} \coqdocvariable{Tm₂})\coqdoceol
\coqdocindent{1.00em}
(\coqdocvar{applyBtmᵣ} : \coqdockw{\ensuremath{\forall}} (\coqdocvar{b} : \coqdocvariable{BTm}) (\coqdocvar{b₂} : \coqdocvariable{BTm₂}) (\coqdocvar{bᵣ}: \coqdocvariable{BTm}\coqdocvariable{ᵣ} \coqdocvariable{b} \coqdocvariable{b₂})
(\coqdocvar{a} : \coqdocvariable{Tm}) (\coqdocvar{a₂} : \coqdocvariable{Tm₂}) (\coqdocvar{aᵣ} : \coqdocvariable{Tm}\coqdocvariable{ᵣ} \coqdocvariable{a} \coqdocvariable{a₂}),\coqdoceol
\coqdocindent{8.00em}
\coqdocvariable{Tm}\coqdocvariable{ᵣ} (\coqdocvariable{applyBtm} \coqdocvariable{b} \coqdocvariable{a}) (\coqdocvariable{applyBtm₂} \coqdocvariable{b₂} \coqdocvariable{a₂}))\coqdoceol
\coqdocindent{1.00em}
(\coqdocvar{tmKind} : \coqdocvariable{Tm} \coqexternalref{:type scope:x '->' x}{http://coq.inria.fr/distrib/8.6/stdlib/Coq.Init.Logic}{\coqdocnotation{\ensuremath{\rightarrow}}} \coqref{Top.paper.TmKind}{\coqdocinductive{TmKind}} \coqdocvariable{Tm} \coqdocvariable{BTm})
(\coqdocvar{tmKind₂} : \coqdocvariable{Tm₂} \coqexternalref{:type scope:x '->' x}{http://coq.inria.fr/distrib/8.6/stdlib/Coq.Init.Logic}{\coqdocnotation{\ensuremath{\rightarrow}}} \coqref{Top.paper.TmKind}{\coqdocinductive{TmKind}} \coqdocvariable{Tm₂} \coqdocvariable{BTm₂})\coqdoceol
\coqdocindent{1.00em}
(\coqdocvar{tmKindᵣ} : \coqdockw{\ensuremath{\forall}} (\coqdocvar{a} : \coqdocvariable{Tm}) (\coqdocvar{a₂} : \coqdocvariable{Tm₂}) (\coqdocvar{aᵣ}: \coqdocvariable{Tm}\coqdocvariable{ᵣ} \coqdocvariable{a} \coqdocvariable{a₂}),\coqdoceol
\coqdocindent{8.00em}
{\trel{\coqdocdefinition{TmKind}}} \coqdocvariable{Tm} \coqdocvariable{Tm₂} \coqdocvariable{Tm}\coqdocvariable{ᵣ} \coqdocvariable{BTm} \coqdocvariable{BTm₂} \coqdocvariable{BTm}\coqdocvariable{ᵣ}
(\coqdocvariable{tmKind} \coqdocvariable{a})
(\coqdocvariable{tmKind₂} \coqdocvariable{a₂}))\coqdoceol
\coqdocindent{1.00em}
(\coqdocvar{tl} : \coqdocvariable{Tm}) (\coqdocvar{tl₂} : \coqdocvariable{Tm₂}) (\coqdocvar{tlᵣ}: \coqdocvariable{Tm}\coqdocvariable{ᵣ} \coqdocvariable{tl} \coqdocvariable{tl₂})
(\coqdocvar{tr} : \coqdocvariable{Tm}) (\coqdocvar{tr₂} : \coqdocvariable{Tm₂}) (\coqdocvar{trᵣ}: \coqdocvariable{Tm}\coqdocvariable{ᵣ} \coqdocvariable{tr} \coqdocvariable{tr₂}),\coqdoceol
\coqdocindent{2.50em}
(\coqref{Top.paper.Squiggle4.obseq}{\coqdocdefinition{obseq}} \coqdocvariable{Tm} \coqdocvariable{BTm} \coqdocvariable{applyBtm} \coqdocvariable{tmKind} \coqdocvariable{tl} \coqdocvariable{tr})
\coqdoceol\coqdocindent{2.50em}
$\rightarrow$
(\coqref{Top.paper.Squiggle4.obseq}{\coqdocdefinition{obseq}} \coqdocvariable{Tm₂} \coqdocvariable{BTm₂} \coqdocvariable{applyBtm₂} \coqdocvariable{tmKind₂} \coqdocvariable{tl₂} \coqdocvariable{tr₂})
\coqdoceol\coqdocindent{2.50em}
$\rightarrow$\coqdockw{Prop}.\coqdoceol
\coqdocemptyline
\coqdocemptyline
\coqdocemptyline
\end{minipage}

\subsubsection{\protect\ptranslateIso{{\coqdocdefinition{obseq}}}  (weak {\isorel} translation)}
Recall that for any
relation \coqdocvar{R} between any two propositions \coqdocvar{A} and \coqdocvar{B}, 
\coqRefDefn{Top.paper}{Total} \coqdocvar{R} is logically equivalent to 
(\coqRefDefn{Top.paper}{IffProps} \coqdocvar{R} $\wedge$ \coqRefDefn{Top.paper}{CompleteRel} \coqdocvar{R}).

\coqdoceol\coqdocnoindent
\begin{minipage}[t]{\textwidth}
\coqdoceol
\coqdocnoindent
\coqdocnoindent
\coqdockw{\ensuremath{\forall}} (\coqdocvar{Tm} \coqdocvar{Tm₂} : \coqdockw{Set}) 
(\coqdocvar{Tmᵣ} : \coqRefDefn{Top.paper}{IsoRel} \coqdocvar{Tm} \coqdocvar{Tm₂})
(\coqdocvar{BTm} \coqdocvar{BTm₂} : \coqdockw{Set}) 
(\coqdocvar{BTmᵣ} : \coqRefDefn{Top.paper}{IsoRel} \coqdocvar{BTm} \coqdocvar{BTm₂})\coqdoceol
\coqdocindent{1.00em}
(\coqdocvar{applyBtm} : \coqdocvariable{BTm} \coqexternalref{:type scope:x '->' x}{http://coq.inria.fr/distrib/8.6/stdlib/Coq.Init.Logic}{\coqdocnotation{\ensuremath{\rightarrow}}} \coqdocvariable{Tm} \coqexternalref{:type scope:x '->' x}{http://coq.inria.fr/distrib/8.6/stdlib/Coq.Init.Logic}{\coqdocnotation{\ensuremath{\rightarrow}}} \coqdocvariable{Tm})
(\coqdocvar{applyBtm₂} : \coqdocvariable{BTm₂} \coqexternalref{:type scope:x '->' x}{http://coq.inria.fr/distrib/8.6/stdlib/Coq.Init.Logic}{\coqdocnotation{\ensuremath{\rightarrow}}} \coqdocvariable{Tm₂} \coqexternalref{:type scope:x '->' x}{http://coq.inria.fr/distrib/8.6/stdlib/Coq.Init.Logic}{\coqdocnotation{\ensuremath{\rightarrow}}} \coqdocvariable{Tm₂})\coqdoceol
\coqdocindent{1.00em}
(\coqdocvar{applyBtmᵣ} : \coqdockw{\ensuremath{\forall}} (\coqdocvar{b} : \coqdocvariable{BTm}) (\coqdocvar{b₂} : \coqdocvariable{BTm₂}) (\coqdocvar{bᵣ}: \coqdocvariable{BTm}\coqdocvariable{ᵣ} \coqdocvariable{b} \coqdocvariable{b₂})
(\coqdocvar{a} : \coqdocvariable{Tm}) (\coqdocvar{a₂} : \coqdocvariable{Tm₂}) (\coqdocvar{aᵣ} : {\CoqSigTProj} \coqdocvariable{Tm}\coqdocvariable{ᵣ} \coqdocvariable{a} \coqdocvariable{a₂}),\coqdoceol
\coqdocindent{8.00em}
{\CoqSigTProj} \coqdocvariable{Tm}\coqdocvariable{ᵣ} (\coqdocvariable{applyBtm} \coqdocvariable{b} \coqdocvariable{a}) (\coqdocvariable{applyBtm₂} \coqdocvariable{b₂} \coqdocvariable{a₂}))\coqdoceol
\coqdocindent{1.00em}
(\coqdocvar{tmKind} : \coqdocvariable{Tm} \coqexternalref{:type scope:x '->' x}{http://coq.inria.fr/distrib/8.6/stdlib/Coq.Init.Logic}{\coqdocnotation{\ensuremath{\rightarrow}}} \coqref{Top.paper.TmKind}{\coqdocinductive{TmKind}} \coqdocvariable{Tm} \coqdocvariable{BTm})
(\coqdocvar{tmKind₂} : \coqdocvariable{Tm₂} \coqexternalref{:type scope:x '->' x}{http://coq.inria.fr/distrib/8.6/stdlib/Coq.Init.Logic}{\coqdocnotation{\ensuremath{\rightarrow}}} \coqref{Top.paper.TmKind}{\coqdocinductive{TmKind}} \coqdocvariable{Tm₂} \coqdocvariable{BTm₂})\coqdoceol
\coqdocindent{1.00em}
(\coqdocvar{tmKindᵣ} : \coqdockw{\ensuremath{\forall}} (\coqdocvar{a} : \coqdocvariable{Tm}) (\coqdocvar{a₂} : \coqdocvariable{Tm₂}) (\coqdocvar{aᵣ}: {\CoqSigTProj} \coqdocvariable{Tm}\coqdocvariable{ᵣ} \coqdocvariable{a} \coqdocvariable{a₂}),\coqdoceol
\coqdocindent{8.00em}
{\trel{\coqdocdefinition{TmKind}}} \coqdocvariable{Tm} \coqdocvariable{Tm₂} ({\CoqSigTProj} \coqdocvariable{Tm}\coqdocvariable{ᵣ}) \coqdocvariable{BTm} \coqdocvariable{BTm₂} ({\CoqSigTProj} \coqdocvariable{BTm}\coqdocvariable{ᵣ})
(\coqdocvariable{tmKind} \coqdocvariable{a})
(\coqdocvariable{tmKind₂} \coqdocvariable{a₂}))\coqdoceol
\coqdocindent{1.00em}
(\coqdocvar{tl} : \coqdocvariable{Tm}) (\coqdocvar{tl₂} : \coqdocvariable{Tm₂}) (\coqdocvar{tlᵣ}: {\CoqSigTProj} \coqdocvariable{Tm}\coqdocvariable{ᵣ} \coqdocvariable{tl} \coqdocvariable{tl₂})
(\coqdocvar{tr} : \coqdocvariable{Tm}) (\coqdocvar{tr₂} : \coqdocvariable{Tm₂}) (\coqdocvar{trᵣ}: {\CoqSigTProj} \coqdocvariable{Tm}\coqdocvariable{ᵣ} \coqdocvariable{tr} \coqdocvariable{tr₂}),\coqdoceol
\coqdocindent{2.50em}
\coqref{Top.paper.Squiggle4.IsoRel}{\coqdocdefinition{IsoRel}} (\coqref{Top.paper.Squiggle4.obseq}{\coqdocdefinition{obseq}} \coqdocvariable{Tm} \coqdocvariable{BTm} \coqdocvariable{applyBtm} \coqdocvariable{tmKind} \coqdocvariable{tl} \coqdocvariable{tr})
(\coqref{Top.paper.Squiggle4.obseq}{\coqdocdefinition{obseq}} \coqdocvariable{Tm₂} \coqdocvariable{BTm₂} \coqdocvariable{applyBtm₂} \coqdocvariable{tmKind₂} \coqdocvariable{tl₂} \coqdocvariable{tr₂}).\coqdoceol
\coqdocemptyline
\coqdocemptyline
\coqdocemptyline
\end{minipage}

\subsubsection{\protect\ptranslateStrongIso{{\coqdocdefinition{obseq}}}  (strong {\isorel} translation)}
The conclusion is the same as before (\ptranslateIso{}) but 3 assumptions (\coqRefDefn{Top.paper}{OneToOne} \coqdocvarR{Tm}, \coqRefDefn{Top.paper}{Total} \coqdocvarR{BTm}, \coqRefDefn{Top.paper}{OneToOne} \coqdocvarR{BTm}) were removed by the second phase~(\secref{sec:isorel:unused}).
\coqdoceol\coqdocnoindent
\begin{minipage}[t]{\textwidth}
\coqdoceol
\coqdocnoindent
\coqdocnoindent
\coqdockw{\ensuremath{\forall}} (\coqdocvar{Tm} \coqdocvar{Tm₂} : \coqdockw{Set}) (\coqdocvar{Tmᵣ} : \coqdocvariable{Tm} \coqexternalref{:type scope:x '->' x}{http://coq.inria.fr/distrib/8.6/stdlib/Coq.Init.Logic}{\coqdocnotation{\ensuremath{\rightarrow}}} \coqdocvariable{Tm₂} \coqexternalref{:type scope:x '->' x}{http://coq.inria.fr/distrib/8.6/stdlib/Coq.Init.Logic}{\coqdocnotation{\ensuremath{\rightarrow}}} \coqdockw{Prop})
(\coqdocvar{Tmᵣ}\coqdocvar{tot}: \coqref{Top.paper.Total}{\coqdocdefinition{Total}} \coqdocvariable{Tm}\coqdocvariable{ᵣ})\coqdoceol
\coqdocindent{1.00em}
(\coqdocvar{BTm} \coqdocvar{BTm₂} : \coqdockw{Set}) (\coqdocvar{BTmᵣ} : \coqdocvariable{BTm} \coqexternalref{:type scope:x '->' x}{http://coq.inria.fr/distrib/8.6/stdlib/Coq.Init.Logic}{\coqdocnotation{\ensuremath{\rightarrow}}} \coqdocvariable{BTm₂} \coqexternalref{:type scope:x '->' x}{http://coq.inria.fr/distrib/8.6/stdlib/Coq.Init.Logic}{\coqdocnotation{\ensuremath{\rightarrow}}} \coqdockw{Prop})\coqdoceol
\coqdocindent{1.00em}
(\coqdocvar{applyBtm} : \coqdocvariable{BTm} \coqexternalref{:type scope:x '->' x}{http://coq.inria.fr/distrib/8.6/stdlib/Coq.Init.Logic}{\coqdocnotation{\ensuremath{\rightarrow}}} \coqdocvariable{Tm} \coqexternalref{:type scope:x '->' x}{http://coq.inria.fr/distrib/8.6/stdlib/Coq.Init.Logic}{\coqdocnotation{\ensuremath{\rightarrow}}} \coqdocvariable{Tm})
(\coqdocvar{applyBtm₂} : \coqdocvariable{BTm₂} \coqexternalref{:type scope:x '->' x}{http://coq.inria.fr/distrib/8.6/stdlib/Coq.Init.Logic}{\coqdocnotation{\ensuremath{\rightarrow}}} \coqdocvariable{Tm₂} \coqexternalref{:type scope:x '->' x}{http://coq.inria.fr/distrib/8.6/stdlib/Coq.Init.Logic}{\coqdocnotation{\ensuremath{\rightarrow}}} \coqdocvariable{Tm₂})\coqdoceol
\coqdocindent{1.00em}
(\coqdocvar{applyBtmᵣ} : \coqdockw{\ensuremath{\forall}} (\coqdocvar{b} : \coqdocvariable{BTm}) (\coqdocvar{b₂} : \coqdocvariable{BTm₂}) (\coqdocvar{bᵣ}: \coqdocvariable{BTm}\coqdocvariable{ᵣ} \coqdocvariable{b} \coqdocvariable{b₂})
(\coqdocvar{a} : \coqdocvariable{Tm}) (\coqdocvar{a₂} : \coqdocvariable{Tm₂}) (\coqdocvar{aᵣ} : \coqdocvariable{Tm}\coqdocvariable{ᵣ} \coqdocvariable{a} \coqdocvariable{a₂}),\coqdoceol
\coqdocindent{8.00em}
\coqdocvariable{Tm}\coqdocvariable{ᵣ} (\coqdocvariable{applyBtm} \coqdocvariable{b} \coqdocvariable{a}) (\coqdocvariable{applyBtm₂} \coqdocvariable{b₂} \coqdocvariable{a₂}))\coqdoceol
\coqdocindent{1.00em}
(\coqdocvar{tmKind} : \coqdocvariable{Tm} \coqexternalref{:type scope:x '->' x}{http://coq.inria.fr/distrib/8.6/stdlib/Coq.Init.Logic}{\coqdocnotation{\ensuremath{\rightarrow}}} \coqref{Top.paper.TmKind}{\coqdocinductive{TmKind}} \coqdocvariable{Tm} \coqdocvariable{BTm})
(\coqdocvar{tmKind₂} : \coqdocvariable{Tm₂} \coqexternalref{:type scope:x '->' x}{http://coq.inria.fr/distrib/8.6/stdlib/Coq.Init.Logic}{\coqdocnotation{\ensuremath{\rightarrow}}} \coqref{Top.paper.TmKind}{\coqdocinductive{TmKind}} \coqdocvariable{Tm₂} \coqdocvariable{BTm₂})\coqdoceol
\coqdocindent{1.00em}
(\coqdocvar{tmKindᵣ} : \coqdockw{\ensuremath{\forall}} (\coqdocvar{a} : \coqdocvariable{Tm}) (\coqdocvar{a₂} : \coqdocvariable{Tm₂}) (\coqdocvar{aᵣ}: \coqdocvariable{Tm}\coqdocvariable{ᵣ} \coqdocvariable{a} \coqdocvariable{a₂}),\coqdoceol
\coqdocindent{8.00em}
{\trel{\coqdocdefinition{TmKind}}} \coqdocvariable{Tm} \coqdocvariable{Tm₂} \coqdocvariable{Tm}\coqdocvariable{ᵣ} \coqdocvariable{BTm} \coqdocvariable{BTm₂} \coqdocvariable{BTm}\coqdocvariable{ᵣ}
(\coqdocvariable{tmKind} \coqdocvariable{a})
(\coqdocvariable{tmKind₂} \coqdocvariable{a₂}))\coqdoceol
\coqdocindent{1.00em}
(\coqdocvar{tl} : \coqdocvariable{Tm}) (\coqdocvar{tl₂} : \coqdocvariable{Tm₂}) (\coqdocvar{tlᵣ}: \coqdocvariable{Tm}\coqdocvariable{ᵣ} \coqdocvariable{tl} \coqdocvariable{tl₂})
(\coqdocvar{tr} : \coqdocvariable{Tm}) (\coqdocvar{tr₂} : \coqdocvariable{Tm₂}) (\coqdocvar{trᵣ}: \coqdocvariable{Tm}\coqdocvariable{ᵣ} \coqdocvariable{tr} \coqdocvariable{tr₂}),\coqdoceol
\coqdocindent{2.50em}
\coqref{Top.paper.Squiggle4.IsoRel}{\coqdocdefinition{IsoRel}} (\coqref{Top.paper.Squiggle4.obseq}{\coqdocdefinition{obseq}} \coqdocvariable{Tm} \coqdocvariable{BTm} \coqdocvariable{applyBtm} \coqdocvariable{tmKind} \coqdocvariable{tl} \coqdocvariable{tr})
(\coqref{Top.paper.Squiggle4.obseq}{\coqdocdefinition{obseq}} \coqdocvariable{Tm₂} \coqdocvariable{BTm₂} \coqdocvariable{applyBtm₂} \coqdocvariable{tmKind₂} \coqdocvariable{tl₂} \coqdocvariable{tr₂}).\coqdoceol
\coqdocemptyline
\coqdocemptyline
\coqdocemptyline
\end{minipage}

\subsection{Tabulation of assumptions in lemmas in Section \ref{sec:uniformProp} and \ref{sec:uniformProp:type}}
\label{appendix:table}
\subsubsection{Canonical Propositions}
Recall that for any
relation \coqdocvar{R} between any two propositions \coqdocvar{A} and \coqdocvar{B}, 
\coqRefDefn{Top.paper}{Total} \coqdocvar{R} is logically equivalent to 
(\coqRefDefn{Top.paper}{IffProps} \coqdocvar{R} $\wedge$ \coqRefDefn{Top.paper}{CompleteRel} \coqdocvar{R}).
Also, \coqRefDefn{Top.paper}{OneToOne} \coqdocvar{R} is a trivial consequence of  
{\coqexternalref{proof irrelevance}{https://coq.inria.fr/library/Coq.Logic.ProofIrrelevance}{proof irrelevance}}.
\paragraph{Universal Quantification}
($\forall$\coqdocvar{x}:$A$,$B$):\coqdockw{Prop}\coqdoceol
\begin{tabular}{| l | l | l  | l |}
\hline
proof of & assumptions on \ptranslate{A} & axioms & lemma\\
\hline
\coqRefDefn{Top.paper}{IffProps} \ptranslate{$\forall$\coqdocvar{x}:$A$,$B$} & \coqRefDefn{Top.paper}{Total} & & \ref{lemma:piIff}\\
\hline
\coqRefDefn{Top.paper}{CompleteRel} \ptranslate{$\forall$\coqdocvar{x}:$A$,$B$} &  & 
& \ref{lemma:piComplete}\\
\hline
\end{tabular}

\paragraph{Inductive propositions}
(\coqRefDefn{Top.paper}{IWP} $I$ $A$ $B$ $AI$ $BI$ $i$):\coqdockw{Prop}\coqdoceol
\begin{tabular}{| l | l | l | l | l | l | l |}
\hline
proof of & \multicolumn{3}{|c|}{assumptions on} & axioms & lemma\\
 & \ptranslate{I} & \ptranslate{A} & \ptranslate{B} &  & \\
\hline
\coqRefDefn{Top.paper}{IffProps} & \coqRefDefn{Top.paper}{OneToOne}& \coqRefDefn{Top.paper}{Total}& \coqRefDefn{Top.paper}{Total} & & \ref{corr:iwp}\\
\hline
\coqRefDefn{Top.paper}{CompleteRel} & \coqRefDefn{Top.paper}{OneToOne}& \coqRefDefn{Top.paper}{Total}& \coqRefDefn{Top.paper}{Total}  & 
{\coqexternalref{proof irrelevance}{https://coq.inria.fr/library/Coq.Logic.ProofIrrelevance}{proof irrelevance}} & \ref{corr:iwp}\\
\hline
\end{tabular}
\coqdocemptyline
\coqdoceol
\coqdocemptyline
\coqdocnoindent
For general inductive propositions, index types behave like (regarding the use of assumptions) $I$,
types of non-recursive arguments (\secref{sec:uniformProp:indt}) to constructors (except parameters) behave like $A$,
and the domain types in recursive arguments behave like $B$. 

\subsubsection{Canonical Types}
\paragraph{dependent function types}
($\forall$\coqdocvar{x}:$A$,$B$):\coqdockw{Set}\coqdoceol
\begin{tabular}{| l | l | l | l | l | l |}
\hline
proof of & \multicolumn{2}{|c|}{assumptions on} & axioms & lemma\\
  & \ptranslate{A} & \ptranslate{B} &  & \\
\hline
\coqRefDefn{Top.paper}{Total} & \coqRefDefn{Top.paper}{Total}, \coqRefDefn{Top.paper}{OneToOne} & \coqRefDefn{Top.paper}{Total} & 
{\coqexternalref{proof irrelevance}{https://coq.inria.fr/library/Coq.Logic.ProofIrrelevance}{proof irrelevance}}
 & \ref{lemma:piTot}\\
\hline
\coqRefDefn{Top.paper}{OneToOne} & \coqRefDefn{Top.paper}{Total}& \coqRefDefn{Top.paper}{OneToOne}  & 
{\coqexternalref{functional_extensionality_dep}{https://coq.inria.fr/library/Coq.Logic.FunctionalExtensionality}{function extensionality}} & \ref{lemma:piOne}\\
\hline
\end{tabular}
\coqdocemptyline
\coqdoceol
\coqdocemptyline
\coqdocnoindent

\paragraph{Inductive types}
(\coqRefDefn{Top.paper}{IWT} $I$ $A$ $B$ $AI$ $BI$ $i$):\coqdockw{Set}\coqdoceol
\begin{tabular}{| l | l | l | l | l | l | l |}
\hline
proof of & \multicolumn{3}{|c|}{assumptions on} & axioms & lemma\\
 & \ptranslate{I} & \ptranslate{A} & \ptranslate{B} &  & \\
\hline
\coqRefDefn{Top.paper}{Total} & \coqRefDefn{Top.paper}{OneToOne}& \coqRefDefn{Top.paper}{Total}& \coqRefDefn{Top.paper}{Total}, \coqRefDefn{Top.paper}{OneToOne} & 
{\coqexternalref{proof irrelevance}{https://coq.inria.fr/library/Coq.Logic.ProofIrrelevance}{proof irrelevance}} & \ref{lemma:iwtTot}\\
\hline
\coqRefDefn{Top.paper}{OneToOne} & & \coqRefDefn{Top.paper}{OneToOne}& \coqRefDefn{Top.paper}{Total}  & 
{\coqexternalref{proof irrelevance}{https://coq.inria.fr/library/Coq.Logic.ProofIrrelevance}{proof irrelevance}},
 & \ref{lemma:iwtOne}\\
& & & & {\coqexternalref{functional_extensionality_dep}{https://coq.inria.fr/library/Coq.Logic.FunctionalExtensionality}{function extensionality}} & \\
\hline
\end{tabular}
\coqdocemptyline
\coqdoceol
\coqdocemptyline
\coqdocnoindent

\section{Correctness of the weak {\isorel} translation}
\label{appendix:correctness}
In this section, we discuss a formal proof of correctness of \ptranslateIso{} for {\cocminus}, a
CoC-like core calculus.

\figref{fig:subtyping} (adapted from \cite{Keller.Lasson2012}) shows the subtyping rules of {\cocminus}.
W.r.t, CoC, the omissions are
 \coqdockw{Type}$_0$ :> \coqdockw{Type}$_1$ and 
 \coqdockw{Prop} :> \coqdockw{Set}.
Recall that \coqdockw{Type}$_0$ is written as \coqdockw{Set} in Coq.
We can add back the former rule if Coq gave us terms that make explicit all uses of that subtyping rule~(\secref{sec:isorel:limitations}).
For example, it would be sensible for a future version of Coq's typechecker to furnish a typing derivation for terms that it deems well-typed. 
The problem with the latter rule is explained below in \appref{appendix:correctness:subst:subtyping}.

\figref{fig:typing} (adapted from \cite{Keller.Lasson2012}) shows the typing rules of {\cocminus}.
$\equiv_{\Gamma}$ is essentially Coq's $\beta$-equivalence, except that it maintains an invariant (\appref{appendix:correctness:capture}) that prevents capture
during the translation.
The only omission (highlighted) is that when constructing a proposition using universal quantification, one can only quantify 
over types in \coqdockw{Set} or 
\coqdockw{Prop}.  
Our proof of the uniformity of universal quantification (\lemref{lemma:piIff}) needs the \coqRefDefn{Top.paper}{Total} property
for the relation of the quantified type. We were unable to systematically build that property for types in higher universes~(\secref{sec:isorel:limitations}, \secref{sec:uniformProp:type}). 

Recall that \ptranslateIso{} is implemented as a \emph{structurally recursive} function in Coq (Gallina). Its input is obtained by a reifier that translates the OCaml representation of Coq terms to 
a Coq datatype. We use the inverse operation (reflection) to declare the output \ptranslateIso{} in Coq's environment, but only after Coq typechecks the output
of reflection. (We use a monad to automate these steps.)

The grammar of {\cocminus} is essentially the grammar of CoC presented in \secref{sec:anyrel:core}, except that we make explicit some implementation details:
Recall (\secref{sec:isorel:weak}) that \ptranslateIso{} needs to make different choices depending on whether a type is in the universe \coqdockw{Set}, \coqdockw{Prop}
or \coqdockw{Type}$_i$ ($i > 0$). For example, 
it needs to pair the relations of types/propositions in
\coqdockw{Set} or \coqdockw{Prop} with proofs of \coqRefDefn{Top.paper}{Total} and \coqRefDefn{Top.paper}{OneToOne} properties.
As a result, at some places (e.g. \ptranslateIso{$\lambda \hdots$}), it needs to project the relations out of such pairs.
To ensure the simplicity of \ptranslateIso{}, we push the task of determining the universe of types to the reifier, which has access to Coq's typechecker.
The terms produced by the reifier has flags indicating the universe information wherever needed~(\secref{sec:isorel:weak}).
 
We make these flags explicit in the grammar of {\cocminus}: In ($\forall$ (\coqdocvar{x}:$A$), $B$) , we use a 2-letter subscript respectively 
denoting the universes of $A$ and $B$. The letters are: $S$ for \coqdockw{Set}, $P$ for \coqdockw{Prop}, and $T$ for \coqdockw{Type}$_i$ ($i > 0$).
For example, the syntax ($\forall_{SP}$ (\coqdocvar{x}:$A$), $B$) implies $A$:\coqdockw{Set} and $B$:\coqdockw{Prop}.
Similarly, ($\lambda_{S}$ (\coqdocvar{x}:$A$), $b$) implies $A$:\coqdockw{Set}.
We will omit the subscripts in contexts where they do not matter.

Unlike \ptranslate{}, even for terms in {\cocminus}, \ptranslateIso{} produces terms that are \emph{not} in {\cocminus} (not even in CoC).
For example, 
\ptranslateIso{\coqdockw{Set}}:=
{\coqdocnotation{\ensuremath{\lambda}}} {\coqdocnotation{(}}\coqdocvar{A} \coqdocvar{\tprime{A}}: \coqdockw{Set}
{\coqdocnotation{),}} 
\coqRefDefn{Top.paper}{IsoRel} \coqdocvar{A} \coqdocvar{\tprime{A}}. 
\coqRefDefn{Top.paper}{IsoRel} is defined using
$\Sigma$ types, which are missing in {\cocminus}.
Instead of defining an extended core calculus for interpreting the \emph{output} of \ptranslateIso{}, we take the luxury of interpreting 
it in Coq (CiC). 
Also, our translation invokes (transparent) lemmas proved in Coq. In the proofs in this section, we assume that the proof terms
corresponding to those lemmas indeed have the types proven in Coq.

\begin{figure*}
\begin{prooftree}

 \AXC{}
 \LeftLabel{$\highlight{0 < i} < j$}
 \RightLabel{\textsc{(Sub$_{T}$)}}
 \UIC{$\Type_i <: \Type_j$}

 \AXC{$A <: B$}
 \RightLabel{\textsc{(Sub$_\pi$)}}
 \UIC{$∀x:C.A <: ∀x:C.B$}
 
 \noLine
 \BIC{}
\end{prooftree}
\caption{Subtyping rules of {\cocminus}}
\label{fig:subtyping}
\end{figure*}

\begin{figure*}
\begin{prooftree}
 \AXC{}
 \RightLabel{\textsc{(Ax$_1$)}}
 \UIC{$⊢ \Prop : \Type_1$}

 \AXC{}
 \RightLabel{\textsc{(Ax$_2$)}}
 \UIC{$⊢ \Type_i : \Type_{i+1}$}

 \noLine
 \BIC{}
\end{prooftree}

\begin{prooftree}
 \AXC{$Γ ⊢ A:s$}
 \LeftLabel{$x \not\in Γ, s \in \mathcal{S}$}
 \RightLabel{\textsc{(Var)}}
 \UIC{$Γ,x:A ⊢ x:A$}

 \AXC{$Γ ⊢ B : C$}
 \AXC{$Γ ⊢ A:s$}
 \LeftLabel{$x \not\in Γ, s \in \mathcal{S}$}
 \RightLabel{\textsc{(Weak)}}
 \BIC{$Γ,x:A ⊢ B : C$}

 \noLine
 \BIC{}
\end{prooftree}

\begin{prooftree}
 \AXC{$Γ ⊢  A : C$}
 \AXC{$Γ ⊢  B : s$}
 \LeftLabel{$B ≡_Γ C, s \in \mathcal{S}$}
 \RightLabel{\textsc{(Conv)}}
 \BIC{$Γ ⊢  A : B$}

 \AXC{$Γ ⊢ A : B$}
 \LeftLabel{$B <: C$}
 \RightLabel{\textsc{(Cum)}}
 \UIC{$Γ ⊢ A : C$}

 \noLine
 \BIC{}
\end{prooftree}

\begin{prooftree}
  \AXC{$Γ ⊢  A : \Type_i$}
  \AXC{$Γ, x : A ⊢  B : \Type_i$}
  \RightLabel{\textsc{($∀_1$)}}
  \BIC{$Γ ⊢ ∀x:A.B : \Type_i$}
\end{prooftree}

\begin{prooftree}
  \AXC{$Γ ⊢  A : s $}
  \AXC{$Γ, x : A ⊢  B : \Prop$}
 \LeftLabel{$s \in \{\highlight{\Type_0, \coqdockw{Prop}}\}$}
  \RightLabel{\textsc{($∀_2$)}}
  \BIC{$Γ ⊢ ∀x:A.B : \Prop$}
\end{prooftree}

\begin{prooftree}
  \AXC{$Γ ⊢ M : ∀x:A.B$}
  \AXC{$Γ ⊢ N : A$}
  \RightLabel{\textsc{(App)}}
  \BIC{$Γ ⊢ M\,N : B[N/x]$}

  \AXC{$Γ, x : A ⊢  B : C$}
  \RightLabel{\textsc{(Abs)}}
  \UIC{$Γ ⊢ λx :A.B : ∀x:A.C$}
 \noLine
 \BIC{}
\end{prooftree}
\caption{typing rules of {\cocminus}. the set $\mathcal{S}$ contains all universes (\coqdockw{Prop} and \coqdockw{Type}$_i$ for all $i$)}
\label{fig:typing}
\end{figure*}

\subsection{Avoiding variable capture in parametricity translations}
\label{appendix:correctness:capture}
As mentioned before~(\secref{sec:isorel:correctness}), 
except for the construction of proofs of the \coqRefDefn{Top.paper}{Total} and \coqRefDefn{Top.paper}{OneToOne} properties, 
the correctness argument for \ptranslateIso{} is almost identical to the correctness argument for 
\ptranslate{}: one proves that the translation preserves substitution, then reduction, and finally typehood~\cite[Lemma 2, Theorem 1]{Keller.Lasson2012}. 
However, we needed to make some assumptions of those theorems explicit. (We have done some parts of the proof in Coq, just to increase confidence in our paper proof.) 
For example, Theorem 1 of \citet{Keller.Lasson2012} doesn't hold for input terms that have shadowed bound variables:
\ptranslate{$\lambda$ (\coqdocvar{x}: {\CoqNat}) (\coqdocvar{x}: \coqRefInductive{Top.paper}{Vec} {\CoqNat} \coqdocvar{x}), \coqdocvar{x}}
:= 
\\
$\lambda$ (\coqdocvar{x}: {\CoqNat}) (\coqdocvarP{x}: {\CoqNat}) (\coqdocvarR{x}: \coqref{Top.paper.Ded.nat R}{\coqdocdefinition{\trel{nat}}} 
\coqdocvar{x} \coqdocvarP{x}) 
(\coqdocvar{x}: \coqRefInductive{Top.paper}{Vec} {\CoqNat} \coqdocvar{x})
(\coqdocvarP{x}: \coqRefInductive{Top.paper}{Vec} {\CoqNat} \coqdocvarP{x})
(\coqdocvarR{x}: \coqref{Top.paper.DedV.Vec R}{\trel{\coqdocdefinition{Vec}}} {\CoqNat} {\CoqNat} \coqref{Top.paper.Ded.nat R}{\coqdocdefinition{\trel{nat}}}
 \highlight{\coqdocvar{x}} \highlight{\coqdocvarP{x}} \coqdocvarR{x} \coqdocvar{x} \coqdocvarP{x}), \coqdocvarR{x}\\
, which is ill-typed: the highlighted arguments to
 \coqref{Top.paper.DedV.Vec R}{\trel{\coqdocdefinition{Vec}}} have incorrect type.
 The problem is easily rectified by $\alpha$ renaming the input:
\ptranslate{$\lambda$ (\coqdocvar{y}: {\CoqNat}) (\coqdocvar{x}: \coqRefInductive{Top.paper}{Vec} {\CoqNat} \coqdocvar{y}), \coqdocvar{x}}
:= 
\\
$\lambda$ (\coqdocvar{y}: {\CoqNat}) (\coqdocvarP{y}: {\CoqNat}) (\coqdocvarR{y}: \coqref{Top.paper.Ded.nat R}{\coqdocdefinition{\trel{nat}}} 
\coqdocvar{y} \coqdocvarP{y}) 
(\coqdocvar{x}: \coqRefInductive{Top.paper}{Vec} {\CoqNat} \coqdocvar{y})
(\coqdocvarP{x}: \coqRefInductive{Top.paper}{Vec} {\CoqNat} \coqdocvarP{y})
(\coqdocvarR{x}: \coqref{Top.paper.DedV.Vec R}{\trel{\coqdocdefinition{Vec}}} {\CoqNat} {\CoqNat} \coqref{Top.paper.Ded.nat R}{\coqdocdefinition{\trel{nat}}}
 \highlight{\coqdocvar{y}} \highlight{\coqdocvarP{y}} \coqdocvarR{y} \coqdocvar{x} \coqdocvarP{x}), \coqdocvarR{x}

Because \ptranslateIso{} uses \ptranslate{} at its core, it suffers from the same problem.
In general, a natural way to fix the problem is to $\alpha$-rename the input to ensure that there are no repeated bound variables. 
However, a weaker condition suffices: the input must be in Barendregt's convention. Formally, a closed term should have no shadowed bound variables 
(nested bound variables with same name).  Open terms in a typing context, say $\Gamma$, must satisfy an additional property: their bound variables should
be distinct from variables in  $\Gamma$. 
We believe that using the weaker condition simplified some of our proofs in the next two subsections.

Recall that we have 5 disjoint classes of variables (\secref{sec:anyrel:core}). The input must only have variables of the first
class: to avoid capture the other classes are reserved for use by the translation. Also, the variable \coqdocvar{c} must
not occur in the input because it is reserved for translating universes.
We believe that having separate classes of variables resulted in simpler proofs.
Similar techniques have been used before in mechanized proofs about CPS translation~\cite{Dargaye.Leroy2007}. 

\BCGamma{l}{$t$} denotes a conjunction of such capture-safety conditions on the input $t$ in the context that binds variables $l$: 
bound variables of $t$ are disjoint from the
variables $l$, there is no shadowing of bound variables in $t$, \fvars{$t$} $\subseteq$ $l$, all variables
in $t$ are of the first class, and the variable \coqdocvar{c} does not occur in $t$. 
In $\lambda$\coqdocvar{x}:$A$.$B$ and $\forall$\coqdocvar{x}:$A$.$B$, the ``no shadowing'' condition ensures that the variable 
\coqdocvar{x} does not occur in the 
bound variables of $B$.
 \BCGamma{l}{} additionally requires that \coqdocvar{x} does not occur in the bound variables of $A$. We believe this additional
condition is not necessary, but our current proof of \lemref{lemma:alphaeqTranslate} uses it. 

\BCGamma{l}{$t$} is sufficient for \ptranslateIso{} to be well-defined upto $=_\alpha$, thus eliminating the possibility of capture:
\begin{lemma}
\label{lemma:alphaeqTranslate}
 \BCGamma{l}{$t_1$} $\rightarrow$ \BCGamma{l}{$t_2$} $\rightarrow$ \alphaeq{$t_1$}{$t_2$ } 
 $\rightarrow$ \alphaeq{\ptranslateIso{$t_1$}}{\ptranslateIso{$t_2$}} 
\end{lemma}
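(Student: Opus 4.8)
The plan is to prove the lemma by structural induction, first reformulating $\alpha$-equality in the standard ``nominal'' style as a relation $t_1 \sim_\rho t_2$ indexed by a partial bijection $\rho$ between (first-class) bound variables, so that \alphaeq{$t_1$}{$t_2$} is the instance with $\rho$ empty. The heart of the argument is a strengthened commutation claim: whenever $t_1 \sim_\rho t_2$ and both terms are capture-safe in the relevant context, then \ptranslateIso{$t_1$} $\sim_{\hat\rho}$ \ptranslateIso{$t_2$}, where $\hat\rho$ extends $\rho$ by adjoining, for every pair $(x,y)\in\rho$, the induced pairs $(\tprime{x},\tprime{y})$ and $(\trel{x},\trel{y})$, together with the fourth- and fifth-class pairs introduced when a $\forall$ or $\lambda$ binder is crossed. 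Because \coqdocvarP{}, \coqdocvarR{}, \coqdocvarFour{}, \coqdocvarFive{} are injective and map into pairwise-disjoint variable classes, $\hat\rho$ is again a partial bijection, so this extension is always well-defined; this is the precise point at which the disjoint-class discipline of \secref{sec:anyrel:core} is used. Setting $\rho$ empty at the end recovers the stated conclusion, since the translation of a closed/open term never introduces free occurrences outside $\hat\rho$.

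I would then dispatch the cases by inversion on $t_1 \sim_\rho t_2$. The universe case is constant and immediate. For a variable, a \emph{free} occurrence is unchanged by $\alpha$-equality while a \emph{bound} occurrence is paired in $\rho$; either way \ptranslateIso{\coqdocvar{x}}$=$\coqdocvarR{x}, and the \coqdocvarR{}-pair in $\hat\rho$ carries it through. The application case is purely compositional: \ptranslateIso{(A\,B)} is assembled from \ptranslateIso{A}, \ptranslateIso{B} and the primed copies \tprime{A}, \tprime{B}, and priming is an injective first-class-to-second-class renaming that visibly commutes with the structural recursion, so the two induction hypotheses suffice. The $\lambda$ case exercises the binder machinery: from $\lambda x{:}A.b \sim_\rho \lambda y{:}A'.b'$ one obtains $A\sim_\rho A'$ and $b\sim_{\rho,(x,y)} b'$, and I push the enlarged correspondence through the translated binder $\lambda(x{:}A)(\tprime{x}{:}\tprime{A})(\trel{x}{:}\dots),\ptranslateIso{b}$, appealing to \BCGamma{l}{} to guarantee that $x,\tprime{x},\trel{x}$ (and their counterparts $y,\tprime{y},\trel{y}$) are mutually distinct and fresh for the ambient context. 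The $\forall$ case splits on the universe subscript recorded by the reifier: the \coqdockw{Type}$_i$ subcase mirrors the $\lambda$ case through \ptranslateC{$\forall x{:}A.B$}{\Pi}.

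The remaining \coqdockw{Set}/\coqdockw{Prop} subcase of $\forall$ is where the output of \ptranslateIso{} carries extra data, namely \coqRefDefn{Top.paper}{mkIsoRel} applied to the projected $\Pi$-relation together with the totality and one-to-one proof terms constructed in \secref{sec:uniformProp:piProp} and \secref{sec:uniformProp:funt}. Crucially, these proofs are obtained by applying \emph{fixed closed} combinators to the \coqRefDefn{Top.paper}{IsoRel} bundles of the immediate subterms $A$ and $B$; since the induction hypotheses already give $\alpha$-equality of those bundles, and a closed combinator applied to $\hat\rho$-related arguments yields $\hat\rho$-related results, the proof components inherit $\alpha$-equality without a separate argument. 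I therefore expect the single structural induction to close all cases. The main obstacle, as in the \anyrel{} case, is the interaction between binders and the reserved variable classes: one must verify that crossing a binder preserves every conjunct of \BCGamma{l}{} for the recursive calls so that the induction hypotheses genuinely apply, and that the deterministic choice of $\tprime{x}$, $\trel{x}$ never collides with an in-scope variable. The capture-safety predicate is exactly the hypothesis engineered to make this bookkeeping go through, so the delicate part is not any computation but the careful maintenance of these invariants across the $\lambda$ and $\forall$ binders.
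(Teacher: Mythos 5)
The paper states \lemref{lemma:alphaeqTranslate} without a written proof, so there is nothing concrete to compare your argument against; the only visible trace of the authors' proof is the remark immediately preceding the lemma that it uses the extra conjunct of \BCGamma{l}{} requiring the bound variable of a $\lambda$ or $\forall$ not to occur among the bound variables of the domain type $A$ (a conjunct they themselves conjecture is unnecessary). Your route --- a partial-bijection-indexed $\alpha$-equivalence $\sim_\rho$, a strengthened claim that \ptranslateIso{$t_1$} $\sim_{\hat\rho}$ \ptranslateIso{$t_2$} under the lifted bijection, and structural induction --- is the standard way to show that a name-directed, capture-avoiding translation respects $\alpha$-equality, and your two load-bearing observations are right: injectivity and pairwise disjointness of the variable classes of \coqdocvarP{}, \coqdocvarR{}, \coqdocvarFour{}, \coqdocvarFive{} make $\hat\rho$ a partial bijection, and in the \coqdockw{Set}/\coqdockw{Prop} cases of $\forall$ the entire output, proofs of \coqRefDefn{Top.paper}{Total} and \coqRefDefn{Top.paper}{OneToOne} included, is a fixed closed combinator (\coqRefDefn{Top.paper}{piProp} or \coqRefDefn{Top.paper}{piSet}; see \appref{appendix:correctness:typing:pi}) applied to the subterms' translations, so congruence of $\sim_{\hat\rho}$ under application plus the induction hypotheses close the case. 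One step should be made explicit rather than waved at: in the $\lambda$ and $\forall$ cases you must weaken \ptranslateIso{$A_1$} $\sim_{\hat\rho}$ \ptranslateIso{$A_2$} to the bijection further extended with $(x,y)$ and $(\coqdocvarP{x},\coqdocvarP{y})$ in order to relate the type annotations of \coqdocvarR{x} and \coqdocvarR{y}; weakening of $\sim_\rho$ by new pairs is not valid in general, and here it holds only because \BCGamma{l}{} (bound variables disjoint from $l$, no shadowing) forces $x$ not to be free in $A_1$, hence by \lemref{lemma:freeVars} $x$ and $\coqdocvarP{x}$ are not free in \ptranslateIso{$A_1$}, and symmetrically for $y$ on the other side. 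Finally, your sketch never invokes the conjunct that $x$ not occur among the bound variables of $A$, which the authors say their proof relies on; that is consistent with their conjecture that it is dispensable, but you should verify that your $\rho$-extension discipline genuinely tolerates (or, under \BCGamma{l}{}, genuinely excludes) a rebinding of $x$ inside \ptranslateIso{$A$} --- this is the one point at which your proof and theirs demonstrably diverge.
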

\subsection{Preservation of substitution}
\label{appendix:correctness:subst}
Because the typing rules of {\cocminus} mention $\beta$ equivalence~(\figref{fig:typing}), in this and the next subsection, we prove that \ptranslateIso{} preserves 
$\beta$ equivalence.

In a context that binds variables $l$, consider the term (($\lambda$ (\coqdocvar{x}:$A$), $b$) $t$). In {\cocminus}, this term will $\beta$ reduce to
$b$ [ $t$ / \coqdocvar{x}]. Thus, we need to characterize
\ptranslateIso{$b$ [ $t$ / \coqdocvar{x}]}.
As explained in the previous subsection, we require that the input to \ptranslateIso{} satisfies the \BCGamma{l}{} property . 
Thus, in {\cocminus}, we use a substitution operation that preserves it.
Let \bcsubst{l}{$b$}{\coqdocvar{x}}{$t$} denote the substitution of $t$ for \coqdocvar{x} in $b$, performed in the following way:
First $b$ is $\alpha$ renamed to $b'$, such that its bound variables are disjoint from \emph{all} the variables of $t$, 
and \BCGamma{x::l}{$b'$}. Finally, we perform a naive structurally recursive substitution, say \coqdocdefinition{unsafeSubst}, of $t$ for \coqdocvar{x} in $b'$, without
doing any further $\alpha$ renaming. It is easy to prove that 
\BCGamma{l}{\ensuremath{((\lambda} (\coqdocvar{x}:$A$), $b$) $t$)} implies 
\BCGamma{l}{(\bcsubst{l}{$b$}{\coqdocvar{x}}{$t$})}

To understand \ptranslateIso{\bcsubst{l}{$b$}{\coqdocvar{x}}{$t$}}, it is helpful to understand the free variables of \ptranslateIso{$b$}.
Let \trel{\tprime{\coqdocdefinition{lv}}} denote a function from lists of variables to lists of variables, such that
\trel{\tprime{\coqdocdefinition{lv}}} $l$ = $l$ {\CoqAppend} {\CoqMap} ($\lambda$ \coqdocvar{x}, \coqdocvarP{x}) $l$ {\CoqAppend} 
{\CoqMap} ($\lambda$ \coqdocvar{x}, \coqdocvarR{x}) $l$.
Intuitively, for every variable \coqdocvar{x} in $l$,  the list (\trel{\tprime{\coqdocdefinition{lv}}} $l$) contains not only
\coqdocvar{x} but also \coqdocvarP{x} and \coqdocvarR{x}. 
\begin{lemma}
\label{lemma:freeVars}
 \BCGamma{l}{$t$} 
 $\rightarrow$ \fvars{\ptranslateIso{$t$}} $\subseteq$ \trel{\tprime{\coqdocdefinition{lv}}} (\fvars{$t$})
\end{lemma}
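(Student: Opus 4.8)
The plan is to prove \lemref{lemma:freeVars} by structural induction on the \cocminus{} term $t$, following the clauses defining \ptranslateIso{} in \secref{sec:isorel:weak}; since \cocminus{} has no inductive constructors, the cases for sorts, variables, application, $\lambda$, and $\forall$ exhaust the induction. First I would record three elementary facts about the decoration map \trel{\tprime{\coqdocdefinition{lv}}}: that it is monotone and distributes over union; that for any variable set $S$ it contains both the plain and the primed copies of its argument, i.e. $S \subseteq$ \trel{\tprime{\coqdocdefinition{lv}}}~$S$ and \tprime{$S$} $\subseteq$ \trel{\tprime{\coqdocdefinition{lv}}}~$S$; and a \emph{binder-removal identity} stating that deleting the three decorated forms \coqdocvar{x}, \coqdocvarP{x}, \coqdocvarR{x} from \trel{\tprime{\coqdocdefinition{lv}}}~$S$ yields exactly \trel{\tprime{\coqdocdefinition{lv}}} applied to $S$ with \coqdocvar{x} removed. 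This last identity is where the injectivity of \tprime{} and \trel{} and the disjointness of the five variable classes (\secref{sec:anyrel:core}) are used: the forms \coqdocvar{x}, \coqdocvarP{x}, \coqdocvarR{x} can enter the image only through \coqdocvar{x} itself belonging to $S$, while the fourth- and fifth-class variables \coqdocvarFour{x}, \coqdocvarFive{x} never occur in any such image.

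The elementary cases are direct computations. For a sort $s$, \ptranslateIso{$s$} is closed (its body mentions only the constant \coqRefDefn{Top.paper}{IsoRel} or a universe), so it has no free variables. For a variable \coqdocvar{x}, \ptranslateIso{\coqdocvar{x}} is \coqdocvarR{x}, which lies in the image of the singleton $\{$\coqdocvar{x}$\}$ under \trel{\tprime{\coqdocdefinition{lv}}}. For an application, \ptranslateIso{($A\,B$)} is (\ptranslateIso{$A$} $B$ \tprime{$B$} \ptranslateIso{$B$}), whose free variables form the union of those of \ptranslateIso{$A$}, $B$, \tprime{$B$}, and \ptranslateIso{$B$}; each of these is contained in \trel{\tprime{\coqdocdefinition{lv}}}~(\fvars{($A\,B$)}) by the induction hypothesis together with the first two elementary facts.

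The binder cases carry the real content. For \ptranslateIso{$\lambda$\coqdocvar{x}$:A.B$} the clause introduces precisely the binders \coqdocvar{x}, \coqdocvarP{x}, \coqdocvarR{x}, while the domain annotations \tprime{$A$} and (\projTyRel{A}{\ptranslateIso{$A$}}) \coqdocvar{x} \coqdocvarP{x} contribute only variables already in \trel{\tprime{\coqdocdefinition{lv}}}~(\fvars{$A$}), since the projection merely wraps \ptranslateIso{$A$} with the constant \coqRefDefn{Top.paper}{projRel} $A$ \tprime{$A$}. The only subtle inclusion concerns the body: from \BCGamma{l}{$\lambda$\coqdocvar{x}$:A.B$} I obtain the capture-safety hypothesis \BCGamma{x::l}{$B$}, so the induction hypothesis puts \fvars{\ptranslateIso{$B$}} inside \trel{\tprime{\coqdocdefinition{lv}}}~(\fvars{$B$}); removing the binders \coqdocvar{x}, \coqdocvarP{x}, \coqdocvarR{x} and applying the binder-removal identity reduces this to \trel{\tprime{\coqdocdefinition{lv}}} of \fvars{$B$} without \coqdocvar{x}, which is contained in \trel{\tprime{\coqdocdefinition{lv}}}~(\fvars{$\lambda$\coqdocvar{x}$:A.B$}). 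The $\forall$-in-\coqdockw{Type}$_i$ subcase, where \ptranslateIso{$\forall$\coqdocvar{x}$:A.B$} equals \ptranslateC{$\forall$\coqdocvar{x}$:A.B$}{\Pi}, is identical except for the extra leading binders \coqdocvarFour{x}, \coqdocvarFive{x}, which disappear cleanly because no fourth- or fifth-class variable occurs in any decoration image.

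The main obstacle is the \coqdockw{Set}/\coqdockw{Prop} subcase of $\forall$\coqdocvar{x}$:A.B$, where \ptranslateIso{$\forall$\coqdocvar{x}$:A.B$} is \coqRefDefn{Top.paper}{mkIsoRel} applied to the two quantified types, to the relation \ptranslateC{$\forall$\coqdocvar{x}$:A.B$}{\Pi}, and to the proofs $ptot$ and $pone$ constructed as in \secref{sec:uniformProp:piProp} or \secref{sec:uniformProp:funt}; here I must bound the free variables of $ptot$ and $pone$. The crucial observation is that these proofs are assembled by applying the \emph{closed} combinator lemmas (for example \coqRefDefn{Top.paper}{totalPiHalf} and its companions for the reverse direction and for the \coqRefDefn{Top.paper}{OneToOne} property, together with projections out of \ptranslateIso{$A$}) to the subterms $A$, \tprime{$A$}, \ptranslateIso{$A$}, $B$, \tprime{$B$}, \ptranslateIso{$B$} and to the reserved bound variables \coqdocvar{x}, \coqdocvarP{x}, \coqdocvarR{x}. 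Because neither application nor projection introduces free variables beyond those of its constituents, and each constituent is governed by the induction hypothesis (for the \ptranslateIso{} pieces), by the elementary facts (for $A$, \tprime{$A$}, $B$, \tprime{$B$}), or is a bound variable or a closed constant, the free variables of $ptot$ and $pone$, once the binders are discharged, again fall within \trel{\tprime{\coqdocdefinition{lv}}}~(\fvars{$\forall$\coqdocvar{x}$:A.B$}). Throughout, \BCGamma{l}{$t$} is needed both to license the induction hypothesis on each subterm under the appropriate extended context and, via the variable-class discipline, to ensure that every decorated variable occurring in \ptranslateIso{$t$} is the decoration of a genuine input variable rather than an accidental clash, so that the set-difference bookkeeping in the binder cases is exact rather than a mere over-approximation.
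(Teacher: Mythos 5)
Your proof is correct and follows the same route as the paper, which for this lemma records only that ``the proof is by structural induction on $t$.'' Your elaboration of the cases --- the closed-combinator observation for the \coqdockw{Set}/\coqdockw{Prop} $\forall$ subcase and the binder-removal identity relying on the injectivity and disjointness of the variable classes --- is consistent with the definitions of \ptranslateIso{} and \BCGamma{l}{} given in the paper.
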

\coqdocnoindent
The proof is by structural induction on $t$.

Thus, in \ptranslateIso{\bcsubst{l}{$b$}{\coqdocvar{x}}{$t$}}, if we perform the substitution \emph{after} the translation of $b$, we will need to
substitute for not only \coqdocvar{x} but also \coqdocvarP{x} and \coqdocvarR{x}:
\begin{lemma}
\label{lemma:subst}
 \BCGamma{\coqdocvar{x}::l}{$b$} $\rightarrow$ \BCGamma{l}{$t$} $\rightarrow$ 
 \alphaeq{\ptranslateIso{\bcsubst{l}{$b$}{\coqdocvar{x}}{$t$}}}{\ptranslateIso{$b$} [$t$/\coqdocvar{x}] [\tprime{$t$}/\coqdocvarP{x}] [\ptranslateIso{$t$}/\coqdocvarR{x}]} 
 \end{lemma}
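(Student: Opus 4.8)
The plan is to prove \lemref{lemma:subst} by structural induction on the term $b$, following the structure of the substitution lemma for the {\anyrel} translation~\cite[Lemma 2]{Keller.Lasson2012} but carrying the extra bookkeeping that \ptranslateIso{} forces. Before the induction I would record one auxiliary commutation fact about priming: since \tprime{\cdot} is merely the injective renaming \coqdocvar{v} $\mapsto$ \coqdocvarP{v} and is agnostic to the capture-safe $\alpha$-renaming performed inside \bcsubst{}{}{}{}, we have \alphaeq{\tprime{(\bcsubst{l}{$b$}{\coqdocvar{x}}{$t$})}}{\tprime{$b$}\,[\tprime{$t$}/\coqdocvarP{x}]}. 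This is exactly what makes the primed annotations and the third argument of applications line up on the two sides. Throughout, I would work up to $=_\alpha$ rather than on the nose, because \bcsubst{}{}{}{} itself $\alpha$-renames its input; \lemref{lemma:alphaeqTranslate} guarantees that \ptranslateIso{} respects $=_\alpha$ on \BCGamma{}{}-safe inputs, so these renamings are harmless.

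The non-binding cases are direct. If $b$ is the variable \coqdocvar{x}, then \bcsubst{l}{\coqdocvar{x}}{\coqdocvar{x}}{$t$} is $t$ and \ptranslateIso{\coqdocvar{x}} is \coqdocvarR{x}, so both sides reduce to \ptranslateIso{$t$}. If $b$ is a variable \coqdocvar{y} distinct from \coqdocvar{x}, both sides equal \coqdocvarR{y}, which is untouched by all three substitutions since \coqdocvarR{y} differs from \coqdocvar{x}, \coqdocvarP{x}, and \coqdocvarR{x}. If $b$ is a sort, \ptranslateIso{$b$} is a closed term (e.g.\ \ptranslateIso{\coqdockw{Set}} is \coqRefDefn{Top.paper}{IsoRel}, which mentions no context variable), so the substitution is vacuous on both sides. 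The application case $b = (A\,B)$ is where compositionality does the work: unfolding \ptranslateIso{($A$\,$B$)} $=$ (\ptranslateIso{$A$}\,$B$\,\tprime{$B$}\,\ptranslateIso{$B$}) on the substituted term and applying the induction hypothesis to the first and fourth components, the definition of \bcsubst{}{}{}{} to the second, and the priming fact above to the third, reassembles exactly \ptranslateIso{($A$\,$B$)} with the three substitutions pushed outward.

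The binder cases $b = \lambda$\coqdocvar{y}$:A.B$ and $b = \forall$\coqdocvar{y}$:A.B$ are where the capture-safety machinery is needed. Here \ptranslateIso{} introduces the fresh binders \coqdocvar{y}, \coqdocvarP{y}, \coqdocvarR{y} around the translated body and the projected annotation $(\projTyRel{A}{\ptranslateIso{A}})$. The induction hypothesis applies to $A$ and $B$ (and their primed copies through the priming fact), but I must check that substituting $t$, \tprime{$t$}, and \ptranslateIso{$t$} captures none of these binders. This is precisely what \lemref{lemma:freeVars} buys: \fvars{\ptranslateIso{$t$}} $\subseteq$ \trel{\tprime{\coqdocdefinition{lv}}}(\fvars{$t$}), so \ptranslateIso{$t$} only mentions variables indexed by elements of $l$; since \BCGamma{\coqdocvar{x}::l}{$b$} forces the bound variable \coqdocvar{y} (hence \coqdocvarP{y}, \coqdocvarR{y}) to be disjoint from $l$, no capture can occur, and symmetrically $t$ cannot mention \coqdocvar{y} because its bound and free variables are disjoint after the capture-safe renaming. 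The $\forall$ case then splits along the two clauses of \ptranslateIso{$\forall$\coqdocvar{x}$:A.B$}: the higher-universe clause is literally \ptranslateC{$\forall$\coqdocvar{x}$:A.B$}{\Pi} and is handled as a combined lambda/application case, while the \coqdockw{Set}/\coqdockw{Prop} clause additionally bundles proof terms via \coqRefDefn{Top.paper}{mkIsoRel}.

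The main obstacle I anticipate is this last subcase. The bundled totality and one-to-one proofs are not produced by a uniform syntactic clause but by invoking the reusable combinator lemmas of \secref{sec:uniformProp:piProp} and \secref{sec:uniformProp:funt}, whose bodies are intricate. To push substitution through them I would use the fact that these combinators are transparent functions applied to \ptranslateIso{$A$}, \ptranslateIso{$B$}, and the recursively obtained \coqRefDefn{Top.paper}{Total}/\coqRefDefn{Top.paper}{OneToOne} proofs for $A$ and $B$; substitution commutes with the combinator application because, by the induction hypothesis, it commutes with each argument, and the universe-flag subscripts on $\forall$ are preserved because substituting a well-typed $t$ for \coqdocvar{x} changes the universe of no subterm. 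The delicate part is verifying that the internal binders of these combinators, together with the inserted projections at the domain and codomain positions, introduce no fresh capture -- again discharged by \lemref{lemma:freeVars} and the disjointness of the five variable classes. Once this is checked, the reassembly is mechanical and \lemref{lemma:subst} follows.
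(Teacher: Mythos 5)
Your proposal is correct and follows essentially the same route as the paper: structural induction on the ($\alpha$-renamed) body, working up to $=_\alpha$ via \lemref{lemma:alphaeqTranslate}, discharging capture with the variable-class disjointness and the free/bound-variable lemmas, and pushing substitution through the $\Pi$-combinators by treating them as constants applied to arguments (which is exactly how the paper's appendix handles them). The only cosmetic difference is that the paper first rewrites both substitutions to a naive \coqdocdefinition{unsafeSubst} using \lemref{lemma:boundVars} before recursing, whereas you argue non-capture case-by-case via \lemref{lemma:freeVars}; the two bookkeeping styles are interchangeable.
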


Note that the RHS of the equation uses the regular capture-avoiding substitution. We only need the \emph{input} of \ptranslateIso{} to be safe.
The proof is tedious but straightforward. We begin by rewriting with $\alpha$ equality to replace the substitution operations
on both sides with \coqdocdefinition{unsafeSubst}, which is structurally recursive because it does not have to do $\alpha$ renaming
before recursing under binders. Then the proof proceeds by structural recursion on $b'$. 
For rewriting, we use \lemref{lemma:alphaeqTranslate} and the following lemma about bound variables of translations: 
\begin{lemma}
\label{lemma:boundVars}
 \BCGamma{l}{$t$} 
 $\rightarrow$ \bvars{\ptranslateIso{$t$}} $\subseteq$ \coqdocvar{c}::\coqdocvarP{c}::(\trel{\tprime{\coqdocdefinition{lv}}}$_{45}$ (\bvars{$t$}))
\end{lemma}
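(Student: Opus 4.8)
The plan is to prove \lemref{lemma:boundVars} by structural induction on $t$, following exactly the case split of the recursive definition of \ptranslateIso{} from \secref{sec:isorel:weak}. First I would record the properties of the renaming function on the right-hand side. Just as \trel{\tprime{\coqdocdefinition{lv}}} of \lemref{lemma:freeVars} maps a list to itself together with its class-2 and class-3 (primed and relational) images, the variant \trel{\tprime{\coqdocdefinition{lv}}}$_{45}$ additionally appends, for each \coqdocvar{x} in the input, the class-4 and class-5 images \coqdocvarFour{x} and \coqdocvarFive{x}; these are precisely the extra binders introduced when translating $\forall$. The reserved variables \coqdocvar{c} and \coqdocvarP{c} are prepended unconditionally as a uniform over-approximation that absorbs the universe cases. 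The facts I would establish up front are that \trel{\tprime{\coqdocdefinition{lv}}}$_{45}$ is monotone under list inclusion, is sub-additive over concatenation, always contains its own argument, and that priming a term sends its bound variables to their class-2 images and hence into \trel{\tprime{\coqdocdefinition{lv}}}$_{45}$ of its bound variables. Since \ptranslateIso{} is only defined on capture-safe inputs, the \BCGamma{l}{$t$} hypothesis must be threaded to the immediate subterms so that every recursive call is well-defined and the induction hypotheses apply.

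The base cases and the two plain inductive cases are direct. For a universe $s$, \ptranslateIso{$s$} is a $\lambda$ binding only \coqdocvar{c} and \coqdocvarP{c} (its body merely references the global constant \coqRefDefn{Top.paper}{IsoRel}, or an arrow type at higher universes, neither of which is unfolded), whereas \bvars{$s$} is empty, so the inclusion holds with equality. For a variable, \ptranslateIso{\coqdocvar{x}} is \coqdocvarR{x}, which has no bound variables. For an application $(A\,B)$, \ptranslateIso{} yields \ptranslateIso{A} applied to $B$, \tprime{B} and \ptranslateIso{B}: the induction hypotheses on $A$ and $B$ bound the translated parts, the copy of $B$ is bounded since \bvars{$B$} $\subseteq$ \bvars{$t$}, and the priming observation bounds \tprime{B}. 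For $\lambda$\coqdocvar{x}$:A.B$, the three fresh binders \coqdocvar{x}, \coqdocvarP{x}, \coqdocvarR{x} are exactly the class-1/2/3 images of the element \coqdocvar{x} of \bvars{$t$}; the annotations $A$, \tprime{A} and the projected \ptranslateIso{A} are bounded by the priming observation and the induction hypothesis on $A$; and \ptranslateIso{B} by the induction hypothesis on $B$. Monotonicity and sub-additivity then assemble the claim.

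The $\forall$ case carries the real content. In \ptranslateC{$\forall \coqdocvar{x}:A.B$}{\Pi} two further binders \coqdocvarFour{x} and \coqdocvarFive{x} appear, and these are exactly the class-4 and class-5 images of \coqdocvar{x}, which is the reason the statement uses \trel{\tprime{\coqdocdefinition{lv}}}$_{45}$ rather than the three-class \trel{\tprime{\coqdocdefinition{lv}}}. Their type annotations are the source term itself and its priming, both bounded as above, and the body of the $\Pi$-relation reduces to the same $\lambda$- and application-style bookkeeping. When the quantification lands in \coqdockw{Type}$_i$ for $i>0$ this finishes the case. The genuine obstacle is the sub-case where $B$ lands in \coqdockw{Set} or \coqdockw{Prop}: there \ptranslateIso{$\forall \coqdocvar{x}:A.B$} additionally bundles, via \coqRefDefn{Top.paper}{mkIsoRel}, the proof terms $ptot$ and $pone$ of the \coqRefDefn{Top.paper}{Total} and \coqRefDefn{Top.paper}{OneToOne} properties, constructed from the combinators of \secref{sec:uniformProp:piProp} and \secref{sec:uniformProp:funt}, whose internal binders are not visible in the bare recursion and must be controlled separately.

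To handle them I would strengthen the induction hypothesis into a \emph{paired} statement that simultaneously bounds the bound variables of these combinator-generated proofs by \coqdocvar{c} :: \coqdocvarP{c} :: \trel{\tprime{\coqdocdefinition{lv}}}$_{45}$(\bvars{$t$}). The decisive observation is that the combinators (\coqRefDefn{Top.paper}{totalPiHalf} and its siblings, together with \coqRefDefn{Top.paper}{mkIsoRel}) occur as applied global constants and are never unfolded, so the only bound variables they contribute are those of their arguments: the relations \ptranslateIso{A} and \ptranslateIso{B}, bounded by the ordinary induction hypotheses, and the recursively built sub-witness, which is a $\lambda$ over \coqdocvar{x}, \coqdocvarP{x}, \coqdocvarR{x} whose body is the proof assembled for $B$ in the extended context. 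An auxiliary induction mirroring the nesting of these combinators then shows that $ptot$ and $pone$ introduce only class-1-through-5 images of \bvars{$t$} together with \coqdocvar{c} and \coqdocvarP{c}. Carrying this paired invariant — bounding \ptranslateIso{$t$} and its accompanying totality and one-to-one witnesses together — through one and the same structural induction is the main difficulty; once it is available, the $\forall$-into-\coqdockw{Set}/\coqdockw{Prop} case closes by exactly the monotonicity and sub-additivity bookkeeping used in the other cases.
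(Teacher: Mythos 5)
The paper states this lemma without proof (it is used only as a rewriting aid in the proof of \lemref{lemma:subst}), and your structural induction over the clauses of \ptranslateIso{}, with the observation that each binder \coqdocvar{x} of the source contributes only its class-1 through class-5 images and that the universes contribute only \coqdocvar{c} and \coqdocvarP{c}, is exactly the intended argument; every case is handled correctly. One small simplification: in the \coqdockw{Set}/\coqdockw{Prop} case of $\forall$ the proofs $ptot$ and $pone$ arise by applying the closed global combinators \coqRefDefn{Top.paper}{piProp}/\coqRefDefn{Top.paper}{piSet} (\appref{appendix:correctness:typing:pi}) to arguments whose bound variables are already controlled by the ordinary induction hypotheses, so the strengthened paired invariant you introduce is not actually needed.
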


\subsubsection{\coqdockw{Prop} \cancel{:>} \coqdockw{Set}}
\label{appendix:correctness:subst:subtyping}
Our proof of \lemref{lemma:subst} crucially depends on the fact that substitution does not change the universe flags in $\forall$.
Thus, \ptranslateIso{} makes the same decision before and after the substitution;
\appref{appendix:correctness:typing:pi} presents \ptranslateIso{$\forall\,\hdots$} in much more detail than \secref{sec:isorel:weak}.

For the correctness of our implementation, it is also important to ensure that \emph{on well-typed inputs}, the reifier produce the same flags before and after the substitution.
This is why allowing the rule \coqdockw{Prop} :> \coqdockw{Set} in the input may be problematic.
If we had \coqdockw{Prop} :> \coqdockw{Set},
it would be legal to substitute a proposition, say {\CoqFalse}, for a variable \coqdocvar{X}:\coqdockw{Set}.
For example, the term (($\lambda$ (\coqdocvar{X}:\coqdockw{Set}), $\forall$ (\coqdocvar{x}:{\CoqNat}), \coqdocvar{X}) {\CoqFalse}) would
be well typed.
Our reifier reifies
(($\lambda$ (\coqdocvar{X}:\coqdockw{Set}), $\forall$ (\coqdocvar{x}:{\CoqNat}), \coqdocvar{X}) {\CoqFalse})
as
(($\lambda$ (\coqdocvar{X}:\coqdockw{Set}), $\forall_{S\highlight{S}}$ (\coqdocvar{x}:{\CoqNat}), \coqdocvar{X}) {\CoqFalse}),
but reifies the $\beta$ redex
($\forall$ (\coqdocvar{x}:{\CoqNat}), {\CoqFalse}) as 
($\forall_{S\highlight{P}}$ (\coqdocvar{x}:{\CoqNat}), {\CoqFalse}).
\ptranslateIso{} will thus make different decisions 
(different combinators for the \coqRefDefn{Top.paper}{Total} proof) because of the difference in flags. 
Thus the end-to-end translation (\ptranslateIso{} composed with the reifier and reflector) would not preserve this $\beta$ reduction.

Preservation of definitional equality is necessary, at least in the presence of inductive types. 
If closed terms $u$ and $v$ are definitionally equal, then 
\coqref{Coq.Init.Logic.eq refl}{\coqdocconstructor{eq\_refl}}:$u = v$.
The corresponding abstraction theorem holds iff the end-to-end translations of $u$ and $v$ are definitionally equal.

Using \coqdockw{Prop} :> \coqdockw{Set} is not always a problem: many other parts of \ptranslateIso{} do not differentiate between the two.
For example, the reduction of (($\lambda$ (\coqdocvar{X}:\coqdockw{Set}), $\forall$ (\coqdocvar{x}:\coqdocvar{X}), {\CoqNat}) {\CoqFalse}) is preserved.

\subsection{Preservation of $\beta$ equivalence}
\label{appendix:correctness:beta}
$\beta$ equivalence ($\equiv_\Gamma$), which is used in the typing rules in \figref{fig:typing}, is the conditionally reflexive, symmetric, transitive closure of the $\beta$-reduction
explained in the previous subsection. Reflexivity only holds for safe terms: $t$ $\equiv_\Gamma$ $t$ iff \BCGamma{\coqdocdefinition{vars}\,\Gamma}{$t$}.
$\coqdocdefinition{vars}\,\Gamma$ denotes the variables of the typing context $\Gamma$. 
For example, \coqdocdefinition{vars} [\coqdocvar{x}: {\CoqNat}, \coqdocvar{y}: {\CoqBool}]) = [\coqdocvar{x}; \coqdocvar{y}]. 
Overloading notation, below, \BCGamma{\Gamma}{$t$} will denote \BCGamma{\coqdocdefinition{vars}\,\Gamma}{$t$}.

In $\equiv_\Gamma$, the $\beta$ reductions steps may occur even in subterms, even under binders: when recursing under a binder, we add the variable to the context.

In a context that binds the variables $l$, the term (($\lambda$ (\coqdocvar{x}:$A$), $b$) $t$) $\beta$ reduces in {\cocminus} to
\bcsubst{l}{$b$}{\coqdocvar{x}}{$t$}.\\
\ptranslateIso{($\lambda$ (\coqdocvar{x}:$A$), $b$) $t$} :=
(($\lambda$ (\coqdocvar{x}:$A$) (\coqdocvarP{x}:\tprime{$A$}) (\coqdocvarR{x}:$\hdots$), \ptranslateIso{$b$}) $t$ \tprime{$t$} \ptranslateIso{$t$}),
which is definitionally equivalent in Coq to
\ptranslateIso{$b$} [$t$/\coqdocvar{x}] [\tprime{$t$}/\coqdocvarP{x}] [\ptranslateIso{$t$}/\coqdocvarR{x}],
which is exactly the RHS of \lemref{lemma:subst}. 
 
Using \lemref{lemma:subst}, it is easy to prove the following:
\begin{lemma}
\label{lemma:betaEquiv}
$u \equiv_\Gamma v$ $\rightarrow$ \ptranslateIso{u} $\equiv$ \ptranslateIso{v}
\end{lemma}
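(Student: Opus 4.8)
The goal is \lemref{lemma:betaEquiv}: that $\equiv_\Gamma$ is preserved by \ptranslateIso{}, i.e., $u \equiv_\Gamma v \rightarrow \ptranslateIso{u} \equiv \ptranslateIso{v}$, where $\equiv$ on the right is Coq's definitional ($\beta$) equivalence. The plan is to proceed by induction on the derivation of $u \equiv_\Gamma v$, following the inductive structure of the conditionally-reflexive, symmetric, transitive, congruence closure of the single-step $\beta$-reduction described in \appref{appendix:correctness:subst}. The payoff of all the machinery built in the previous subsection is that the single base case---one step of $\beta$-reduction at the head---reduces almost entirely to \lemref{lemma:subst}, as the excerpt already indicates.

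First I would dispatch the closure cases. Reflexivity ($u \equiv_\Gamma u$, conditioned on \BCGamma{\Gamma}{$u$}) gives $\ptranslateIso{u} \equiv \ptranslateIso{u}$, which is immediate since $\equiv$ on Coq terms is reflexive; one only has to note that the safety hypothesis guarantees \ptranslateIso{u} is well-defined (via \lemref{lemma:alphaeqTranslate}, \ptranslateIso{} is well-defined up to $=_\alpha$ on safe inputs). Symmetry and transitivity transfer directly because Coq's $\equiv$ is itself symmetric and transitive. The congruence/compatibility cases---where the reduction happens in a subterm, possibly under a binder---follow from the fact that \ptranslateIso{} is defined by structural recursion and is itself compatible: applying the induction hypothesis to the reduced subterm and then reassembling with the surrounding $\lambda$, $\forall$, or application constructor preserves $\equiv$, since each clause of \ptranslateIso{} builds its output from the translations of its subterms (projecting relations where needed, but that projection is a fixed context and respects $\equiv$). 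When recursing under a binder we must track that the variable has been added to $\Gamma$ and that the safety conditions \BCGamma{}{} propagate, which they do by construction of the substitution and translation.

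The single nontrivial case is the head $\beta$-step: in a context binding $l = \coqdocdefinition{vars}\,\Gamma$, the redex $((\lambda (\coqdocvar{x}{:}A), b)\,t)$ reduces to \bcsubst{l}{$b$}{\coqdocvar{x}}{$t$}. Here I would unfold the definition of \ptranslateIso{} on the redex to obtain $((\lambda (\coqdocvar{x}{:}A)(\coqdocvarP{x}{:}\tprime{A})(\coqdocvarR{x}{:}\ldots), \ptranslateIso{b})\,t\,\tprime{t}\,\ptranslateIso{t})$, observe that this $\beta$-reduces \emph{in Coq} (three head reductions) to $\ptranslateIso{b}\,[t/\coqdocvar{x}][\tprime{t}/\coqdocvarP{x}][\ptranslateIso{t}/\coqdocvarR{x}]$, and then invoke \lemref{lemma:subst} to identify this, up to $=_\alpha$, with \ptranslateIso{\bcsubst{l}{$b$}{\coqdocvar{x}}{$t$}}. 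This is exactly the chain of equalities the excerpt spells out just before the lemma statement. The hypotheses of \lemref{lemma:subst}, namely \BCGamma{\coqdocvar{x}::l}{$b$} and \BCGamma{l}{$t$}, are furnished by the safety precondition on the whole redex, which decomposes into safety of $b$ under the extended context and safety of $t$; this uses the remark in \appref{appendix:correctness:subst} that \BCGamma{l}{}-safety of the redex implies \BCGamma{l}{}-safety of its contractum and of the pieces.

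The main obstacle, and the place where I would be most careful, is the interaction between the two notions of equality---definitional $\equiv$ in Coq and $=_\alpha$---and the bookkeeping of variable-capture safety. \lemref{lemma:subst} delivers an $=_\alpha$ equation, not a Coq-$\equiv$ equation, so I must first establish that $=_\alpha$ is contained in Coq's $\equiv$ (which it is, $\alpha$-equivalent terms being the same term for the kernel) and chain it with the genuine Coq $\beta$-reductions. Additionally, the congruence step must preserve the flag annotations on $\forall$ (the $SP$, $SS$, etc. subscripts of \secref{sec:isorel:weak}): as \appref{appendix:correctness:subst:subtyping} stresses, \ptranslateIso{} branches on those flags, so I must rely on the fact---central to the proof of \lemref{lemma:subst}---that $\beta$-reduction in {\cocminus} does not alter the universe flags, and hence \ptranslateIso{} makes consistent choices on both sides of each reduction. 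With the \coqdockw{Prop} \cancel{:>} \coqdockw{Set} rule excluded from {\cocminus} (\figref{fig:subtyping}), this flag-stability holds, and the induction closes.
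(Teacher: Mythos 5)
Your proof is correct and takes essentially the same route as the paper: the paper likewise reduces the head $\beta$-redex case to \lemref{lemma:subst}, observing that the translation of $((\lambda \coqdocvar{x}{:}A.\,b)\,t)$ is itself a Coq $\beta$-redex contracting to \ptranslateIso{$b$}$\,[t/\coqdocvar{x}]\,[\tprime{t}/\coqdocvarP{x}]\,[\ptranslateIso{t}/\coqdocvarR{x}]$, i.e.\ exactly the right-hand side of \lemref{lemma:subst}, and it leaves the closure under reflexivity, symmetry, transitivity and congruence implicit as ``easy to prove.'' Your additional care about $=_\alpha$ versus Coq's $\equiv$ and about the stability of the universe flags under reduction matches the paper's own surrounding discussion.
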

On the RHS, we have Coq's definitional equivalence ($\equiv$), which is unconditionally reflexive.
As mentioned before, only the input to \ptranslateIso{} needs to be in Barandregt's convention.

\subsection{Preservation of subtyping}
\label{appendix:correctness:subtyping}
The typing rules of {\cocminus}~(\figref{fig:typing}) mention the subtyping relation~(\figref{fig:subtyping}).
Thus, we prove that \ptranslateIso{} preserves the subtyping relation.
The predicate \BCGammaC{} lifts the \BCGamma{}{} property to contexts, ensuring that all types in the context are safe inputs to \ptranslateIso{}.
:>$_{\coqsubscript}$ and \vdashq are respectively the subtyping relations of Coq (CIC), not {\cocminus}.
\begin{lemma}
\label{lemma:subtyping}
\BCGamma{\Gamma}{$U$} 
$\rightarrow$ \BCGamma{\Gamma}{$V$} 
$\rightarrow$ \BCGammaC{$\Gamma$}
$\rightarrow 
 \Gamma \vdash$ $U$ :> $V$ $\rightarrow$
$\ptranslateIso{\Gamma} \vdashq$ $u$:$U$ $\rightarrow$
$\ptranslateIso{\Gamma} \vdashq$ $u'$:\tprime{$U$} $\rightarrow$
$\ptranslateIso{\Gamma} \vdashq$ $v$:$V$ $\rightarrow$
$\ptranslateIso{\Gamma} \vdashq$ $v'$:\tprime{$V$} $\rightarrow$
$\ptranslateIso{\Gamma} \vdashq$ (\projTyRel{U}{\ptranslateIso{$U$}} $u$ $u'$)  :>$_{\coqsubscript}$ (\projTyRel{V}{\ptranslateIso{$V$}} $v$ $v'$)
\end{lemma}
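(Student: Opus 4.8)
The plan is to proceed by induction on the derivation of $\Gamma \vdash U :> V$ in {\cocminus}, which by \figref{fig:subtyping} is generated by exactly two rules, \textsc{(Sub$_T$)} and \textsc{(Sub$_\pi$)}. In each case I would unfold the definition of \projTyRel{\cdot}{\cdot}, $\beta$-reduce the relevant $\Pi$-relations, and then close the goal by an appeal to Coq's own cumulativity relation (:>$_i$)—specifically its product-cumulativity and universe-cumulativity rules. The \BCGamma{\Gamma}{} and \BCGammaC{$\Gamma$} hypotheses, together with \lemref{lemma:betaEquiv}, are used only to guarantee that \ptranslateIso{} is well-defined on the subterms and commutes with the $\beta$-reductions I perform.

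For the inductive case \textsc{(Sub$_\pi$)} we have $U = \forall x{:}C.A$ and $V = \forall x{:}C.B$ with a shorter derivation of $A <: B$. Since $U$ and $V$ share the same domain $C$, their translations share the same binder telescope: unfolding and $\beta$-reducing, (\projTyRel{U}{\ptranslateIso{U}}) $u$ $u'$ becomes the $\Pi$-relation body $\forall (x{:}C)(\tprime{x}{:}\tprime{C})(\coqdocvarR{x}{:}(\projTyRel{C}{\ptranslateIso{C}})\,x\,\tprime{x}),\ (\projTyRel{A}{\ptranslateIso{A}})\,(u\,x)\,(u'\,\tprime{x})$, and the right-hand side is the analogous body with $B$, $v$, $v'$. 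Because the telescope $(x,\tprime{x},\coqdocvarR{x})$ is \emph{syntactically identical} on both sides, Coq's product cumulativity applies and the goal reduces to the codomain subtyping $(\projTyRel{A}{\ptranslateIso{A}})\,(u\,x)\,(u'\,\tprime{x})$ :>$_i$ $(\projTyRel{B}{\ptranslateIso{B}})\,(v\,x)\,(v'\,\tprime{x})$. This is precisely the induction hypothesis for $A <: B$, instantiated with the four terms $(u\,x){:}A$, $(u'\,\tprime{x}){:}\tprime{A}$, $(v\,x){:}B$, $(v'\,\tprime{x}){:}\tprime{B}$, all well-typed by the \textsc{(App)} rule.

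For the base case \textsc{(Sub$_T$)} we have $U = \Type_i$ and $V = \Type_j$ with $0 < i < j$. Since both universes are higher than \Set{}, \projTyRel{\cdot}{\cdot} is inert here, so (\projTyRel{U}{\ptranslateIso{U}}) $u$ $u'$ is just $\ptranslateIso{\Type_i}\,u\,u'$, which by the higher-universe clause of \secref{sec:isorel:weak} $\beta$-reduces to $u \rightarrow u' \rightarrow \Type_i$; similarly the right-hand side reduces to $v \rightarrow v' \rightarrow \Type_j$. The goal then follows from Coq's universe cumulativity $\Type_i$ :>$_i$ $\Type_j$ combined with product cumulativity. The one delicate point is that product cumulativity requires the domains to be convertible, i.e. $u \equiv v$ and $u' \equiv v'$; this holds because the lemma is only ever invoked in the proof of \thmref{Abstraction} for the \textsc{(Conv)} and \textsc{(Cum)} rules, where both sides arise from translating \emph{one and the same} underlying term, so $u$ and $v$ (and $u'$ and $v'$) are definitionally equal. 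The separate names in the statement are needed only so that the \textsc{(Sub$_\pi$)} step can instantiate them to the shared applications $(u\,x)$, $(v\,x)$.

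The main obstacle I anticipate is not the rule-by-rule structure—once the $\Pi$-relations are reduced, both cases are immediate appeals to Coq's built-in cumulativity—but rather justifying the two reductions cleanly: verifying that \projTyRel{\cdot}{\cdot} is genuinely a no-op on the $\Type_i$ ($i>0$) of \textsc{(Sub$_T$)}, and securing the domain convertibility that product cumulativity demands. Pinning down exactly where $u \equiv v$ and $u' \equiv v'$ come from (equivalently, confining the lemma's use to the cumulativity rules of the abstraction theorem) is the real content of the argument, and I would want to make this invariant explicit rather than leave it implicit in the hypotheses.
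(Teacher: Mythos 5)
Your proof takes the same route as the paper's, whose entire argument is the single sentence ``The proof is straightforward, by induction on the derivation of $\Gamma \vdash U :> V$''; your case split over \textsc{(Sub$_T$)} and \textsc{(Sub$_\pi$)}, the $\beta$-reduction of the applied $\Pi$-relations, and the appeal to Coq's universe and product cumulativity are exactly the details that sentence suppresses, and they check out. One simplification you could make explicit: since the {\cocminus} subtyping relation of \figref{fig:subtyping} has no reflexivity rule, any $U$ and $V$ it relates are built from $\Type_i$ with $i>0$ under $\forall$, so both always have type $\Type_j$ for some $j>0$; hence \projTyRel{U}{\ptranslateIso{$U$}} is just \ptranslateIso{$U$} in \emph{every} case of this lemma (not only the base case), and only plain $\beta$-reduction of \ptranslateC{$\cdot$}{$\Pi$} is ever needed, never the \coqdocdefinition{projRel} projection.

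The substantive point is the one you flag at the end. You are right that Coq's cumulativity on products requires convertible domains, and consequently the lemma \emph{as literally stated} fails at \textsc{(Sub$_T$)}: with $U=\Type_1$, $V=\Type_2$, $u=u'={\CoqNat}$, $v=v'={\CoqBool}$, the type ${\CoqNat}\rightarrow{\CoqNat}\rightarrow\Type_1$ is not a Coq-subtype of ${\CoqBool}\rightarrow{\CoqBool}\rightarrow\Type_2$. The paper's one-line proof does not address this. The statement needs the extra hypotheses $u \equiv v$ and $u' \equiv v'$ (equivalently, two metavariables instead of four); as you observe, these are preserved through the \textsc{(Sub$_\pi$)} step (since then $u\,x \equiv v\,x$ and $u'\,\tprime{x} \equiv v'\,\tprime{x}$ in the extended context) and hold trivially at the only use site, the \textsc{(Cum)} case of the abstraction theorem, where $u=v=a$ and $u'=v'=\tprime{a}$. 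Your instinct to put this invariant into the statement rather than leave it as a remark about use sites is the correct repair; with that amendment your proof is complete, and it is more careful than the paper's on precisely the point where care is needed.
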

The proof is straightforward, by induction on the derivation of $\Gamma \vdash$ $U$ :> $V$.

\subsection{Preservation of typehood}
\label{appendix:correctness:typehood}

\begin{thm}[\label{IsoAbstraction}Abstraction Theorem]
\BCGamma{\Gamma}{$a$} 
$\rightarrow$ \BCGamma{\Gamma}{$B$} 
$\rightarrow$ \BCGammaC{$\Gamma$}
$\rightarrow$ 
$Γ ⊢ a : B$ \coqdoceol
$\rightarrow$ $\ptranslateIso{Γ} \vdashq a : B$ $\wedge$ $\ptranslateIso{Γ} \vdashq \tprime{a} :
\tprime{B}$ $\wedge$ $\ptranslateIso{Γ} \vdashq \ptranslateIso{a} : \projTyRel{B}{\ptranslateIso{B}}\, a\,
\tprime{a}$
\end{thm}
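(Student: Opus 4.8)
The plan is to prove the Abstraction Theorem for $\ptranslateIso{}$ by structural induction on the typing derivation $\Gamma \vdash a : B$, mirroring the standard argument for $\ptranslate{}$ due to \citet{Keller.Lasson2012} (Lemma 2, Theorem 1), but threading the capture-safety hypotheses \BCGamma{\Gamma}{$a$}, \BCGamma{\Gamma}{$B$}, and \BCGammaC{$\Gamma$} through every case. The three conjuncts in the conclusion are proved simultaneously: the first two (that $a : B$ and $\tprime{a} : \tprime{B}$ remain well-typed in the translated context $\ptranslateIso{\Gamma}$) follow because $\ptranslateIso{\Gamma}$ extends $\Gamma$ and the $\tprime{(\cdot)}$-renaming is a typing-preserving injection on variables; the genuinely new content is the third conjunct, that $\ptranslateIso{a} : (\projTyRel{B}{\ptranslateIso{B}})\,a\,\tprime{a}$. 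First I would dispatch the axiom and structural cases (\textsc{Ax$_1$}, \textsc{Ax$_2$}, \textsc{Var}, \textsc{Weak}), where the translated context was designed precisely so that $\ptranslateIso{\coqdocvar{x}} := \coqdocvarR{x}$ has the right type by construction, and the universe cases where $\ptranslateIso{\coqdockw{Prop}}$, $\ptranslateIso{\coqdockw{Set}}$, and $\ptranslateIso{\coqdockw{Type}_i}$ are checked directly against their declared types.

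The heart of the induction is the four remaining rules. For \textsc{(Abs)} and \textsc{(App)}, the argument is compositional: one unfolds the definitions of $\ptranslateIso{\lambda \coqdocvar{x}:A, B}$ and $\ptranslateIso{(A\,B)}$ from \secref{sec:isorel:weak}, applies the induction hypotheses to the subterms, and checks that the projection operator $\projTyRel{A}{\cdot}$ is inserted exactly where a relation must be extracted from a bundled \coqRefDefn{Top.paper}{IsoRel} before being applied. For \textsc{(App)}, the substitution in the conclusion $B[N/x]$ is handled by invoking Lemma~\ref{lemma:subst}, which states precisely that $\ptranslateIso{}$ commutes with the triple substitution $[N/\coqdocvar{x}][\tprime{N}/\coqdocvarP{x}][\ptranslateIso{N}/\coqdocvarR{x}]$; one also appeals to Lemma~\ref{lemma:betaEquiv} to transport typing across the $\beta$-equivalence used in the \textsc{(Conv)} rule, and to Lemma~\ref{lemma:subtyping} for the \textsc{(Cum)} rule. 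The two quantifier rules \textsc{($\forall_1$)} and \textsc{($\forall_2$)} are where the two-case definition of $\ptranslateIso{\forall \coqdocvar{x}:A.B}$ matters: in \textsc{($\forall_1$)} (the \coqdockw{Type}$_i$ case, $i>0$) we use $\ptranslateC{\forall \coqdocvar{x}:A.B}{\Pi}$ directly, whereas in \textsc{($\forall_2$)} (the \coqdockw{Prop}/\coqdockw{Set} case) we must additionally exhibit the bundled proofs $ptot$ and $pone$ of the \coqRefDefn{Top.paper}{Total} and \coqRefDefn{Top.paper}{OneToOne} properties.

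The main obstacle, and the only place where the argument departs substantially from the $\ptranslate{}$ proof, is supplying those bundled proofs in the \textsc{($\forall_2$)} case. Here the correctness argument is not self-contained within the induction: it relies on the constructions of \secref{sec:uniformProp:piProp} (when $B:\coqdockw{Prop}$, via Lemmas~\ref{lemma:piIff} and~\ref{lemma:piComplete} together with Lemma~\ref{lemma:unify} to repackage \coqRefDefn{Top.paper}{IffProps} and \coqRefDefn{Top.paper}{CompleteRel} as \coqRefDefn{Top.paper}{Total}) and of \secref{sec:uniformProp:funt} (when $B:\coqdockw{Set}$, via Lemmas~\ref{lemma:piTot} and~\ref{lemma:piOne}). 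These lemmas carry their own assumptions on the bundled properties of $\ptranslateIso{A}$, so I would need to verify that the inductive hypothesis delivers exactly the \coqRefDefn{Top.paper}{Total}/\coqRefDefn{Top.paper}{OneToOne} data that the domain and codomain combinators consume, and that the restriction $s \in \{\coqdockw{Type}_0, \coqdockw{Prop}\}$ highlighted in \figref{fig:typing} is what guarantees the quantified type lives in a universe for which such proofs are systematically available. The subtlety flagged in \appref{appendix:correctness:subst:subtyping} — that the excluded rule \coqdockw{Prop} :> \coqdockw{Set} would make $\ptranslateIso{}$ choose different combinators before and after substitution and thereby break Lemma~\ref{lemma:subst} — means I must also confirm that every typing derivation under consideration genuinely lives in {\cocminus} and never silently uses that subtyping rule; this is precisely why {\cocminus} omits it. Since all of the quantitative work has already been encapsulated in the cited lemmas and their Coq proofs, the remaining task is the bookkeeping of assembling them correctly at each node of the derivation, which is routine once the \textsc{($\forall_2$)} case is set up.
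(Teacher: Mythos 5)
Your proposal is correct and follows essentially the same route as the paper: induction on the typing derivation, with the substitution, $\beta$-equivalence, and subtyping preservation lemmas handling \textsc{(App)}, \textsc{(Conv)}, and \textsc{(Cum)}, and the genuinely new work concentrated in the $\forall$ case where the bundled \coqRefDefn{Top.paper}{Total}/\coqRefDefn{Top.paper}{OneToOne} proofs must be supplied from the constructions of Sections~\ref{sec:uniformProp} and~\ref{sec:uniformProp:type}. The only (presentational) difference is that the paper packages those constructions into the pre-typechecked combinators \coqRefDefn{Top.paper}{piProp} and \coqRefDefn{Top.paper}{piSet}, so the $\forall_2$ case reduces to checking that the six arguments delivered by the induction hypotheses have the expected types, rather than re-assembling the proofs inside the induction.
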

The proof is by induction on the derivation of $Γ ⊢ a : B$. In the next three subsubsections, we will look at the three cases that are most 
different between \ptranslateIso{} and \ptranslate{}.

\subsubsection{universes}
The interesting cases are $\Gamma \vdash$ \coqdockw{Set} : \coqdockw{Type}$_1$ and $\Gamma \vdash$ \coqdockw{Prop} : \coqdockw{Type}$_1$.
For $i >0$, \ptranslateIso{$\Type_i$} = \ptranslate{$\Type_i$}, so the proofs for the cases $\Gamma \vdash$ \coqdockw{Type}$_{i}$ : \coqdockw{Type}$_{i+1}$
for \ptranslateIso{} are the same as the proofs for \ptranslate{}.

Because \coqdockw{Prop} and \coqdockw{Set} are closed terms, it suffices to consider the empty context.
Below, we consider \coqdockw{Prop} : \coqdockw{Type}$_1$. The other case is similar.

We need to prove \ptranslateIso{\coqdockw{Prop}} : \projTyRel{\coqdockw{Type}_1}{\ptranslateIso{\coqdockw{Type}$_1$}} \coqdockw{Prop} \coqdockw{Prop},
which is (on unfolding definitions)  
{\coqdocnotation{\ensuremath{\lambda}}} {\coqdocnotation{(}}\coqdocvar{c} \coqdocvar{\tprime{c}}: \coqdockw{Prop}
{\coqdocnotation{),}} 
\coqRefDefn{Top.paper}{IsoRel} \coqdocvar{c} \coqdocvar{\tprime{c}}:
\coqdockw{Prop} $\rightarrow$ \coqdockw{Prop} $\rightarrow$ \coqdockw{Type}$_1$,
which boils down to
\coqdocvar{c}: \coqdockw{Prop}, \coqdocvarP{c}: \coqdockw{Prop} $\vdashq$  \coqRefDefn{Top.paper}{IsoRel} \coqdocvar{c} \coqdocvarP{c} : \coqdockw{Type}$_1$,
which (using Coq's typing rules for Inductives) mainly boils down to
\coqdocvar{c}: \coqdockw{Prop}, \coqdocvarP{c}: \coqdockw{Prop} $\vdashq$ \coqdocvar{c} $\rightarrow$ \coqdocvarP{c} $\rightarrow$ \coqdockw{Prop} : \coqdockw{Type}$_1$,
which holds because \coqdockw{Prop}: \coqdockw{Type}$_1$, \coqdockw{Prop} :>$_{\coqsubscript}$ \coqdockw{Set}, and \coqdockw{Set} :>$_{\coqsubscript}$ \coqdockw{Type}$_1$.
Note that we interpret the result of \ptranslateIso{} in Coq, not {\cocminus}. Thus, we were able to use the subtyping rules omitted in {\cocminus}.

\subsubsection{$\forall$ (rules $∀_1$ and $∀_2$ in \figref{fig:typing})}
\label{appendix:correctness:typing:pi}
W.r.t. \ptranslate{}, in \ptranslateIso{}, the interesting cases are when \ensuremath{\Gamma \vdash (∀\coqdocvar{x}\!:\! A.B)} : \coqdockw{Set} and
\ensuremath{\Gamma \vdash (∀\coqdocvar{x}\!:\! A.B)} : \coqdockw{Prop}.
We first explain how \ptranslateIso{} works in these cases in more detail (this presentation is slightly different, but equivalent to the one in \secref{sec:isorel:weak}). Correctness would then be obvious.
In these cases, our implementation merely invokes one of the following definitions (combinators) that have been already accepted (deemed well-typed) by Coq:

\coqdoceol
\coqdocnoindent
\begin{minipage}[t]{\textwidth}
\coqdoceol
\coqdocnoindent
\coqdockw{Definition} \coqdef{Top.paper.piProp}{piProp}{\coqdocdefinition{piProp}} (\coqdocvar{A1} \coqdocvar{A2} :\coqdockw{Set}) (\coqdocvarR{A}: \coqref{Top.paper.IsoRel}{\coqdocdefinition{IsoRel}} \coqdocvariable{A1} \coqdocvariable{A2}) 
(\coqdocvar{B1}: \coqdocvariable{A1} \coqexternalref{:type scope:x '->' x}{http://coq.inria.fr/distrib/8.6/stdlib/Coq.Init.Logic}{\coqdocnotation{\ensuremath{\rightarrow}}} \coqdockw{Prop}) 
(\coqdocvar{B2}: \coqdocvariable{A2} \coqexternalref{:type scope:x '->' x}{http://coq.inria.fr/distrib/8.6/stdlib/Coq.Init.Logic}{\coqdocnotation{\ensuremath{\rightarrow}}} \coqdockw{Prop}) \coqdoceol
\coqdocindent{1.00em}
(\coqdocvarR{B}: \coqdockw{\ensuremath{\forall}} \coqdocvar{a1} \coqdocvar{a2}, {\CoqSigTProj} \coqdocvarR{A} \coqdocvariable{a1} \coqdocvariable{a2} \coqexternalref{:type scope:x '->' x}{http://coq.inria.fr/distrib/8.6/stdlib/Coq.Init.Logic}{\coqdocnotation{\ensuremath{\rightarrow}}} \coqref{Top.paper.IsoRel}{\coqdocdefinition{IsoRel}} (\coqdocvariable{B1} \coqdocvariable{a1}) (\coqdocvariable{B2} \coqdocvariable{a2}))
: \coqref{Top.paper.IsoRel}{\coqdocdefinition{IsoRel}} (\coqdockw{\ensuremath{\forall}} \coqdocvar{a} : \coqdocvariable{A1}, \coqdocvariable{B1} \coqdocvariable{a}) (\coqdockw{\ensuremath{\forall}} \coqdocvar{a} : \coqdocvariable{A2}, \coqdocvariable{B2} \coqdocvariable{a}) :=
\coqdoceol\coqdocindent{2em}
{\CoqExistT}
($\lambda$ (\coqdocvar{f1} : \coqdockw{\ensuremath{\forall}} \coqdocvar{a} : \coqdocvar{A1}, \coqdocvar{B1} \coqdocvariable{a}) (\coqdocvar{f2} : \coqdockw{\ensuremath{\forall}} \coqdocvar{a} : \coqdocvar{A2}, \coqdocvar{B2} \coqdocvariable{a}) \ensuremath{\Rightarrow}
\coqdoceol\coqdocindent{7em}
\coqdockw{\ensuremath{\forall}} (\coqdocvar{a1} : \coqdocvar{A1}) (\coqdocvar{a2} : \coqdocvar{A2}) (\coqdocvar{p} : {\CoqSigTProj} \coqdocvarR{A} \coqdocvariable{a1} \coqdocvariable{a2}), {\CoqSigTProj} (\coqdocvarR{B} \coqdocvariable{a1} \coqdocvariable{a2} \coqdocvariable{p}) (\coqdocvariable{f1} \coqdocvariable{a1}) (\coqdocvariable{f2} \coqdocvariable{a2}))
\coqdoceol\coqdocindent{4.5em} ($\hdots$).
\end{minipage}

\coqdoceol
\coqdocnoindent
\begin{minipage}[t]{\textwidth}
\coqdoceol\coqdocemptyline
\coqdocnoindent
\coqdockw{Definition} \coqdef{Top.paper.piSet}{piSet}{\coqdocdefinition{piSet}} (\coqdocvar{A1} \coqdocvar{A2} :\coqdockw{Set}) (\coqdocvarR{A}: \coqref{Top.paper.IsoRel}{\coqdocdefinition{IsoRel}} \coqdocvariable{A1} \coqdocvariable{A2}) 
(\coqdocvar{B1}: \coqdocvariable{A1} \coqexternalref{:type scope:x '->' x}{http://coq.inria.fr/distrib/8.6/stdlib/Coq.Init.Logic}{\coqdocnotation{\ensuremath{\rightarrow}}} \coqdockw{Set}) 
(\coqdocvar{B2}: \coqdocvariable{A2} \coqexternalref{:type scope:x '->' x}{http://coq.inria.fr/distrib/8.6/stdlib/Coq.Init.Logic}{\coqdocnotation{\ensuremath{\rightarrow}}} \coqdockw{Set}) \coqdoceol
\coqdocindent{1.00em}
(\coqdocvarR{B}: \coqdockw{\ensuremath{\forall}} \coqdocvar{a1} \coqdocvar{a2}, {\CoqSigTProj} \coqdocvarR{A} \coqdocvariable{a1} \coqdocvariable{a2} \coqexternalref{:type scope:x '->' x}{http://coq.inria.fr/distrib/8.6/stdlib/Coq.Init.Logic}{\coqdocnotation{\ensuremath{\rightarrow}}} \coqref{Top.paper.IsoRel}{\coqdocdefinition{IsoRel}} (\coqdocvariable{B1} \coqdocvariable{a1}) (\coqdocvariable{B2} \coqdocvariable{a2}))
: \coqref{Top.paper.IsoRel}{\coqdocdefinition{IsoRel}} (\coqdockw{\ensuremath{\forall}} \coqdocvar{a} : \coqdocvariable{A1}, \coqdocvariable{B1} \coqdocvariable{a}) (\coqdockw{\ensuremath{\forall}} \coqdocvar{a} : \coqdocvariable{A2}, \coqdocvariable{B2} \coqdocvariable{a}) := 
\coqdoceol\coqdocindent{2em}
{\CoqExistT}
($\lambda$ (\coqdocvar{f1} : \coqdockw{\ensuremath{\forall}} \coqdocvar{a} : \coqdocvar{A1}, \coqdocvar{B1} \coqdocvariable{a}) (\coqdocvar{f2} : \coqdockw{\ensuremath{\forall}} \coqdocvar{a} : \coqdocvar{A2}, \coqdocvar{B2} \coqdocvariable{a}) \ensuremath{\Rightarrow}
\coqdoceol\coqdocindent{7em}
\coqdockw{\ensuremath{\forall}} (\coqdocvar{a1} : \coqdocvar{A1}) (\coqdocvar{a2} : \coqdocvar{A2}) (\coqdocvar{p} : {\CoqSigTProj} \coqdocvarR{A} \coqdocvariable{a1} \coqdocvariable{a2}), {\CoqSigTProj} (\coqdocvarR{B} \coqdocvariable{a1} \coqdocvariable{a2} \coqdocvariable{p}) (\coqdocvariable{f1} \coqdocvariable{a1}) (\coqdocvariable{f2} \coqdocvariable{a2}))
\coqdoceol\coqdocindent{4.5em} ($\hdots$).
\coqdocemptyline
\coqdocemptyline
\coqdocemptyline
\end{minipage}

\coqdocnoindent
The bodies of these definitions are dependent pairs whose first components are essentially the {\anyrel} translations of $\Pi$ types.
The second components are huge and thus shown as $\hdots$: 
they are proofs of the \coqRefDefn{Top.paper}{Total} and \coqRefDefn{Top.paper}{OneToOne} properties, which 
were already explained respectively in \secref{sec:uniformProp:piProp} (also \lemref{lemma:unify}) and \secref{sec:uniformProp:funt}.
\ptranslateIso{} merely refers to one of these two constants by name (the string ``piSet'' or ``piProp'') and then applies the six arguments.
The reflector turns those strings to references to the above definitions.

Using these definitions (instead of constructing their bodies by hand in \ptranslateIso{}) greatly simplified our implementation and proofs. 
Many proofs, e.g. \lemref{appendix:correctness:subst}, did not have to reason about the horrendously complex bodies of those definitions. 
In \lemref{appendix:correctness:subst}, we only had to perform the substitution on the arguments
to the constant, which are relatively very simple, as we will show soon. 
Even in this subsection, we don't need to reason about the correctness of the proof parts shown above as $\hdots$, 
because Coq has already checked them for us! Below, we will merely argue that the arguments to the above definitions are of correct types.

When \ensuremath{\Gamma \vdash (∀\coqdocvar{x}\!:\! A.B)} : \coqdockw{Prop}, as indicated by the flags ($\forall_{SP}$ or $\forall_{PP}$),
\ptranslateIso{} invokes the lemma \coqRefDefn{Top.paper}{piProp}.
When \ensuremath{\Gamma \vdash (∀\coqdocvar{x}\!:\! A.B)} : \coqdockw{Set}, as indicated by the flags ($\forall_{SS}$ or $\forall_{PS}$),
\ptranslateIso{} invokes the lemma \coqRefDefn{Top.paper}{piSet}. 
The lemma \coqRefDefn{Top.paper}{piProp} uses fewer assumptions about the arguments \coqdocvarR{A} and \coqdocvarR{B} because it exploits proof irrelevance.
Thus, it is important to prefer the lemma \coqRefDefn{Top.paper}{piProp}.
\emph{Note that \coqdockw{Prop} :>$_{\coqsubscript}$ \coqdockw{Set}, even though \coqdockw{Prop} \cancel{:>} \coqdockw{Set}}.
In both cases, the 6 arguments are the same:
(\coqdocvar{A1} := $A$),
(\coqdocvar{A2} := \tprime{$A$}),
(\coqdocvarR{A} := \ptranslateIso{$A$}),
(\coqdocvar{B1} := $\lambda$ (\coqdocvar{x}:$A$), $B$),
(\coqdocvar{B2} := $\lambda$ (\coqdocvarP{x}:\tprime{$A$}), \tprime{$B$}), and
(\coqdocvarR{B} := $\lambda$ (\coqdocvar{x}:$A$) (\coqdocvarP{x}:\tprime{$A$}) (\coqdocvarR{x}: {{\CoqSigTProj}} {\ptranslateIso{A}} \coqdocvar{x} \coqdocvarP{x}), \ptranslateIso{$B$}).
These arguments are in the context \ptranslateIso{Γ}.

We will consider the case \ensuremath{\Gamma \vdash (∀\coqdocvar{x}\!:\! A.B)} : \coqdockw{Prop} (rule $\forall_2$ in \figref{fig:typing}, with $A$:\coqdockw{Set}). 
The other cases are similar.
It is easy to check that for the above instantiation, the return type is correct (exactly what the abstraction theorem needs).
The correctness of the types of the arguments follows from the induction 
hypotheses (abstraction theorems for the two premises of the rule in \figref{fig:typing}).
The two induction hypotheses (after unfolding definitions) are:
$\ptranslateIso{Γ} \vdashq A : \coqdockw{Set}$ $\wedge$ $\ptranslateIso{Γ} \vdashq \tprime{A} :
\coqdockw{Set}$ $\wedge$ $\ptranslateIso{Γ} \vdashq \ptranslateIso{A} : \coqRefDefn{Top.paper}{IsoRel}\, A\,
\tprime{A}$
and \\
$\ptranslateIso{Γ},\,
\coqdocvar{x}: A,\,
\coqdocvarP{x}: \tprime{A},\,
\coqdocvarR{x}: {{\CoqSigTProj}} \ptranslateIso{A} \coqdocvar{x}\,\coqdocvarP{x},\, 
\vdashq B : \coqdockw{Prop}$ 
$\wedge$ \\
$\ptranslateIso{Γ},\,
\coqdocvar{x}: A,\,
\coqdocvarP{x}: \tprime{A},\,
\coqdocvarR{x}: {{\CoqSigTProj}} \ptranslateIso{A} \coqdocvar{x}\,\coqdocvarP{x},\, 
\vdashq \tprime{B} :
\coqdockw{Prop}$ $\wedge$ \\
$\ptranslateIso{Γ},\,
\coqdocvar{x}: A,\,
\coqdocvarP{x}: \tprime{A},\,
\coqdocvarR{x}: {{\CoqSigTProj}} \ptranslateIso{A} \coqdocvar{x}\,\coqdocvarP{x},\, 
\vdashq \ptranslateIso{B}:
\coqRefDefn{Top.paper}{IsoRel}
\, B\,
\tprime{B}$

\noindent
$B$ does not mention the variables \coqdocvarP{x} and \coqdocvarR{x}.
\tprime{$B$} does not mention the variables \coqdocvar{x} and \coqdocvarR{x}.
Thus, the instantiations of \coqdocvar{B1} and \coqdocvar{B2} are well-defined (and correctly typed) in the context \ptranslateIso{Γ}.

\begin{lemma}
\label{lemma:freeVarsPrime}
 \fvars{\tprime{$t$}} = {\CoqMap} ($\lambda$ \coqdocvar{x}, \coqdocvarP{x}) (\fvars{{$t$}}) 
\end{lemma}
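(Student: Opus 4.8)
The plan is to prove this by structural induction on $t$, following the grammar of CoC from \secref{sec:anyrel:core} ($t$ is a variable, a universe $s$, a product $\forall\, x : A,\, B$, an abstraction $\lambda\, x : A,\, B$, or an application $(A\,B)$). Recall from \secref{sec:anyrel:core} that priming, written \tprime{$t$}, replaces \emph{every} variable \coqdocvar{v}, bound or free, by \coqdocvarP{v}, and that \coqdocvarP{} was fixed there to be \emph{injective}. Hence the lemma is precisely the equivariance of \coqdocdefinition{freeVars} under the injective renaming ($\lambda$ \coqdocvar{x}, \coqdocvarP{x}), and the injectivity of \coqdocvarP{} is the one ingredient the proof genuinely uses.

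The base and application cases are immediate. If $t$ is a variable \coqdocvar{x}, then \fvars{\coqdocvar{x}} is the singleton list containing \coqdocvar{x}, \tprime{$t$} is \coqdocvarP{x}, and \CoqMap ($\lambda$ \coqdocvar{x}, \coqdocvarP{x}) applied to that singleton is the singleton containing \coqdocvarP{x}, which is \fvars{\coqdocvarP{x}}. If $t$ is a universe $s$, priming fixes $s$ and both sides are the empty list. For an application, we have \tprime{$(A\,B)$} $=$ (\tprime{$A$} \tprime{$B$}) and \fvars{$(A\,B)$} $=$ \fvars{$A$} \CoqAppend \fvars{$B$}; the result then follows from the two induction hypotheses together with the fact that mapping a function distributes over \CoqAppend.

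The only nontrivial cases are the two binders, which I would treat uniformly. Take the product case $t = \forall\, x : A,\, B$ (the abstraction case is identical). Its free variables are \fvars{$A$} followed by \fvars{$B$} with the bound variable removed, and priming renames that bound variable to its primed form while priming $A$ and $B$. Applying the induction hypotheses for $A$ and $B$ reduces the goal to a single list identity: that \CoqMap ($\lambda$ \coqdocvar{x}, \coqdocvarP{x}) commutes with removal of the bound variable, i.e. mapping over the list with \coqdocvar{x} removed equals removing \coqdocvarP{x} from the mapped list. This is exactly the step that requires injectivity of \coqdocvarP{}: for any \coqdocvar{v}, one has \coqdocvarP{v} equal to \coqdocvarP{x} if and only if \coqdocvar{v} equals \coqdocvar{x}, so an element of the mapped list is deleted precisely when its (unique) preimage was deleted before mapping.

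The main obstacle --- indeed the only step that is not pure bookkeeping --- is this commutation of list removal with an injective renaming. I would factor it out as a small auxiliary lemma proved by induction on the list, case-splitting on whether the head equals the variable being removed and appealing to injectivity of \coqdocvarP{} in the unequal case; it is then reused verbatim for both the $\forall$ and $\lambda$ cases. With that lemma in hand, every remaining case is straightforward structural computation.
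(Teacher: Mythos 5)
Your proof is correct: the paper states Lemma~\ref{lemma:freeVarsPrime} without any proof, and your structural induction is exactly the argument one would expect, with the one genuinely necessary ingredient --- injectivity of \coqdocvarP{} to commute \CoqMap{} with removal of the bound variable in the $\forall$ and $\lambda$ cases --- correctly identified and isolated as an auxiliary list lemma.
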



\subsubsection{$\lambda$ (rule $ABS$ in \figref{fig:typing})}
\label{appendix:correctness:typing:lam}
Now we consider the case $Γ ⊢ (λ\coqdocvar{x} :A.B) : (∀\coqdocvar{x}:A.C$)

W.r.t. \ptranslate{}, in \ptranslateIso{}, the interesting case is when the type $∀\coqdocvar{x}:A.C$ is in the universe \coqdockw{Set} or \coqdockw{Prop}.
Consider the case when ($∀\coqdocvar{x}:A.C$):\coqdockw{Prop}.
We need to prove that in the typing context $\ptranslateIso{Γ}$,  $\ptranslateIso{λ\coqdocvar{x} :A.B}$ has type
 $\projTyRel{(∀\coqdocvar{x}:A.C)}{\ptranslateIso{∀\coqdocvar{x}:A.C}} (λ\coqdocvar{x} :A.B)
 (λ\coqdocvarP{x} :\tprime{A}.\tprime{B})$, which (as explained in the previous subsubsection) is 
({{\CoqSigTProj}} (\coqRefDefn{Top.paper}{piProp} $A$ $\tprime{A}$ $\ptranslateIso{A}$ $\hdots$)) $(λ\coqdocvar{x} :A.B)
 (λ\coqdocvarP{x} :\tprime{A}.\tprime{B})$.
 Coq's definitional equality includes $\delta$ and $\iota$ reductions (definition unfolding and pattern matching). After 
 unfolding the definition of \coqRefDefn{Top.paper}{piProp}, we get a dependent pair.
 {{\CoqSigTProj}} then acts on the pair ($\iota$ reduction) to produce the first component, which is essentially the {\anyrel} translation
 of $∀\coqdocvar{x}:A.C$. The second component (proofs of \coqRefDefn{Top.paper}{Total} and \coqRefDefn{Top.paper}{OneToOne} properties) 
 get thrown away by {{\CoqSigTProj}}. The rest of this proof is essentially the same as that for
 the {\anyrel} translation (\ptranslate{}).




\subsection{Translation of the W type and its induction principle}
\label{appendix:correctness:W}
\coqRefDefn{Top.paper}{IWTind} is a general induction (recursion) principle for the type \coqRefDefn{Top.paper}{IWT} in \secref{sec:uniformProp:indt}.
\coqdoceol\coqdocnoindent
\coqdockw{Definition} \coqdef{Top.paper.IWTind}{IWTind}{\coqdocdefinition{IWTind}} :=
\coqdoceol\coqdocnoindent
$\lambda$ (\coqdocvar{I} \coqdocvar{A} : \coqdockw{Set}) (\coqdocvar{B} : \coqdocvariable{A} \coqexternalref{:type scope:x '->' x}{http://coq.inria.fr/distrib/8.6/stdlib/Coq.Init.Logic}{\coqdocnotation{\ensuremath{\rightarrow}}} \coqdockw{Set}) (\coqdocvar{AI} : \coqdocvariable{A} \coqexternalref{:type scope:x '->' x}{http://coq.inria.fr/distrib/8.6/stdlib/Coq.Init.Logic}{\coqdocnotation{\ensuremath{\rightarrow}}} \coqdocvariable{I}) (\coqdocvar{BI} : \coqdockw{\ensuremath{\forall}} \coqdocvar{a} : \coqdocvariable{A}, \coqdocvariable{B} \coqdocvariable{a} \coqexternalref{:type scope:x '->' x}{http://coq.inria.fr/distrib/8.6/stdlib/Coq.Init.Logic}{\coqdocnotation{\ensuremath{\rightarrow}}} \coqdocvariable{I})\coqdoceol
\coqdocindent{1.00em}
(\coqdocvar{P} : \coqdockw{\ensuremath{\forall}} \coqdocvar{i} : \coqdocvariable{I}, \coqref{Top.paper.IWT}{\coqdocinductive{IWT}} \coqdocvariable{I} \coqdocvariable{A} \coqdocvariable{B} \coqdocvariable{AI} \coqdocvariable{BI} \coqdocvariable{i} \coqexternalref{:type scope:x '->' x}{http://coq.inria.fr/distrib/8.6/stdlib/Coq.Init.Logic}{\coqdocnotation{\ensuremath{\rightarrow}}} \coqdockw{Set})\coqdoceol
\coqdocindent{1.00em}
(\coqdocvar{f} : \coqdockw{\ensuremath{\forall}} (\coqdocvar{a} : \coqdocvariable{A}) (\coqdocvar{lim} : \coqdockw{\ensuremath{\forall}} \coqdocvar{b} : \coqdocvariable{B} \coqdocvariable{a}, \coqref{Top.paper.IWT}{\coqdocinductive{IWT}} \coqdocvariable{I} \coqdocvariable{A} \coqdocvariable{B} \coqdocvariable{AI} \coqdocvariable{BI} (\coqdocvariable{BI} \coqdocvariable{a} \coqdocvariable{b})),\coqdoceol
\coqdocindent{3.50em}
\coqexternalref{:type scope:x '->' x}{http://coq.inria.fr/distrib/8.6/stdlib/Coq.Init.Logic}{\coqdocnotation{(}}\coqdockw{\ensuremath{\forall}} \coqdocvar{b} : \coqdocvariable{B} \coqdocvariable{a}, \coqdocvariable{P} (\coqdocvariable{BI} \coqdocvariable{a} \coqdocvariable{b}) (\coqdocvariable{lim} \coqdocvariable{b})\coqexternalref{:type scope:x '->' x}{http://coq.inria.fr/distrib/8.6/stdlib/Coq.Init.Logic}{\coqdocnotation{)}} \coqexternalref{:type scope:x '->' x}{http://coq.inria.fr/distrib/8.6/stdlib/Coq.Init.Logic}{\coqdocnotation{\ensuremath{\rightarrow}}} \coqdocvariable{P} (\coqdocvariable{AI} \coqdocvariable{a}) (\coqref{Top.paper.iwt}{\coqdocconstructor{iwt}} \coqdocvariable{I} \coqdocvariable{A} \coqdocvariable{B} \coqdocvariable{AI} \coqdocvariable{BI} \coqdocvariable{a} \coqdocvariable{lim})), \coqdoceol
\coqdocnoindent
\coqdockw{fix} \coqdocvar{F} (\coqdocvar{i} : \coqdocvariable{I}) (\coqdocvar{i0} : \coqref{Top.paper.IWT}{\coqdocinductive{IWT}} \coqdocvariable{I} \coqdocvariable{A} \coqdocvariable{B} \coqdocvariable{AI} \coqdocvariable{BI} \coqdocvariable{i}) \{\coqdockw{struct} \coqdocvar{i0}\} : \coqdocvariable{P} \coqdocvariable{i} \coqdocvariable{i0} :=\coqdoceol
\coqdocindent{1.00em}
\coqdockw{match} \coqdocvariable{i0} \coqdockw{as} \coqdocvar{i2} \coqdoctac{in} (\coqref{Top.paper.IWT}{\coqdocinductive{IWT}} \coqdocvar{\_} \coqdocvar{\_} \coqdocvar{\_} \coqdocvar{\_} \coqdocvar{\_} \coqdocvar{i1}) \coqdockw{return} (\coqdocvariable{P} \coqdocvariable{i1} \coqdocvariable{i2}) \coqdockw{with}\coqdoceol
\coqdocindent{1.00em}
\ensuremath{|} \coqref{Top.paper.iwt}{\coqdocconstructor{iwt}} \coqdocvar{\_} \coqdocvar{\_} \coqdocvar{\_} \coqdocvar{\_} \coqdocvar{\_} \coqdocvar{a} \coqdocvar{lim} \ensuremath{\Rightarrow} \coqdocvariable{f} \coqdocvar{a} \coqdocvar{lim} ($\lambda$ (\coqdocvar{b} : \coqdocvariable{B} \coqdocvar{a}), \coqdocvariable{F} (\coqdocvariable{BI} \coqdocvar{a} \coqdocvariable{b}) (\coqdocvar{lim} \coqdocvariable{b}))\coqdoceol
\coqdocindent{1.00em}
\coqdockw{end}.\coqdoceol

Just as inductive types can be encoded as instantiations of \coqRefDefn{Top.paper}{IWT}, Coq's pattern matching and fixpoints (recursive functions)
can be encoded as instantiations of \coqRefDefn{Top.paper}{IWTind}.
Thus, we checked that \ptranslateIso{\coqRefInductive{Top.paper}{IWT}} and \ptranslateIso{\coqRefDefn{Top.paper}{IWTind}} succeed and are of correct type.
We also checked that in the most general context, \ptranslateIso{} preserves the 
reduction (unfolding \coqdockw{fix} and $\iota$ reduction of pattern matching) of \coqRefDefn{Top.paper}{IWTind}.
(As explained in the above subsections, preservation of reduction is a step in proving that \ptranslateIso{} preserves typing.)

In Coq, reductions can happen even under binders.
Thus, below we pick terms \coqRefDefn{Top.paper}{LHS} and \coqRefDefn{Top.paper}{RHS} which observe the reduction of \coqRefDefn{Top.paper}{IWTind}
in the most general context. \coqRefDefn{Top.paper}{LHS} and \coqRefDefn{Top.paper}{RHS} are the same except the highlighted part.
\coqRefDefn{Top.paper}{LHS} reduces to \coqRefDefn{Top.paper}{RHS}.

\coqdoceol\coqdocnoindent
\begin{minipage}[t]{\textwidth}
\coqdoceol\coqdocnoindent
\coqdockw{Definition} \coqdef{Top.paper.LHS}{LHS}{\coqdocdefinition{LHS}} :=
\coqdoceol\coqdocnoindent
$\lambda$ (\coqdocvar{I} \coqdocvar{A} : \coqdockw{Set}) (\coqdocvar{B} : \coqdocvariable{A} \coqexternalref{:type scope:x '->' x}{http://coq.inria.fr/distrib/8.6/stdlib/Coq.Init.Logic}{\coqdocnotation{\ensuremath{\rightarrow}}} \coqdockw{Set}) (\coqdocvar{AI} : \coqdocvariable{A} \coqexternalref{:type scope:x '->' x}{http://coq.inria.fr/distrib/8.6/stdlib/Coq.Init.Logic}{\coqdocnotation{\ensuremath{\rightarrow}}} \coqdocvariable{I})  (\coqdocvar{BI} : \coqdockw{\ensuremath{\forall}} (\coqdocvar{a} : \coqdocvariable{A}), \coqdocvariable{B} \coqdocvariable{a} \coqexternalref{:type scope:x '->' x}{http://coq.inria.fr/distrib/8.6/stdlib/Coq.Init.Logic}{\coqdocnotation{\ensuremath{\rightarrow}}} \coqdocvariable{I})\coqdoceol
\coqdocindent{1.00em}
(\coqdocvar{P} : \coqdockw{\ensuremath{\forall}} \coqdocvar{i} : \coqdocvariable{I}, \coqref{Top.paper.IWT}{\coqdocinductive{IWT}} \coqdocvariable{I} \coqdocvariable{A} \coqdocvariable{B} \coqdocvariable{AI} \coqdocvariable{BI} \coqdocvariable{i} \coqexternalref{:type scope:x '->' x}{http://coq.inria.fr/distrib/8.6/stdlib/Coq.Init.Logic}{\coqdocnotation{\ensuremath{\rightarrow}}} \coqdockw{Set})\coqdoceol
\coqdocindent{1.00em}
(\coqdocvar{f} : \coqdockw{\ensuremath{\forall}} (\coqdocvar{a} : \coqdocvariable{A}) (\coqdocvar{lim} : \coqdockw{\ensuremath{\forall}} \coqdocvar{b} : \coqdocvariable{B} \coqdocvariable{a}, \coqref{Top.paper.IWT}{\coqdocinductive{IWT}} \coqdocvariable{I} \coqdocvariable{A} \coqdocvariable{B} \coqdocvariable{AI} \coqdocvariable{BI} (\coqdocvariable{BI} \coqdocvariable{a} \coqdocvariable{b})),\coqdoceol
\coqdocindent{3.50em}
\coqexternalref{:type scope:x '->' x}{http://coq.inria.fr/distrib/8.6/stdlib/Coq.Init.Logic}{\coqdocnotation{(}}\coqdockw{\ensuremath{\forall}} \coqdocvar{b} : \coqdocvariable{B} \coqdocvariable{a}, \coqdocvariable{P} (\coqdocvariable{BI} \coqdocvariable{a} \coqdocvariable{b}) (\coqdocvariable{lim} \coqdocvariable{b})\coqexternalref{:type scope:x '->' x}{http://coq.inria.fr/distrib/8.6/stdlib/Coq.Init.Logic}{\coqdocnotation{)}} \coqexternalref{:type scope:x '->' x}{http://coq.inria.fr/distrib/8.6/stdlib/Coq.Init.Logic}{\coqdocnotation{\ensuremath{\rightarrow}}} \coqdocvariable{P} (\coqdocvariable{AI} \coqdocvariable{a}) (\coqref{Top.paper.iwt}{\coqdocconstructor{iwt}} \coqdocvariable{I} \coqdocvariable{A} \coqdocvariable{B} \coqdocvariable{AI} \coqdocvariable{BI} \coqdocvariable{a} \coqdocvariable{lim}))\coqdoceol
\coqdocindent{1.00em}
(\coqdocvar{a}:\coqdocvariable{A}) \coqdoceol
\coqdocindent{1.00em}
(\coqdocvar{lim}: \coqdockw{\ensuremath{\forall}} \coqdocvar{b} : \coqdocvariable{B} \coqdocvariable{a}, \coqref{Top.paper.IWT}{\coqdocinductive{IWT}} \coqdocvariable{I} \coqdocvariable{A} \coqdocvariable{B} \coqdocvariable{AI} \coqdocvariable{BI} (\coqdocvariable{BI} \coqdocvariable{a} \coqdocvariable{b})), \coqdoceol
\coqdocindent{1.00em}
\highlight{\coqref{Top.paper.IWTind}{\coqdocdefinition{IWTind}} \coqdocvar{\_} \coqdocvar{\_} \coqdocvar{\_} \coqdocvar{\_} \coqdocvar{\_} \coqdocvar{\_} \coqdocvariable{f} \coqdocvar{\_} (\coqref{Top.paper.iwt}{\coqdocconstructor{iwt}} \coqdocvar{\_} \coqdocvar{\_} \coqdocvar{\_} \coqdocvar{\_} \coqdocvar{\_} \coqdocvariable{a} \coqdocvariable{lim})}.\coqdoceol
\coqdocemptyline
\coqdocnoindent
\coqdockw{Definition} \coqdef{Top.paper.RHS}{RHS}{\coqdocdefinition{RHS}} :=\coqdoceol
\coqdocnoindent
$\lambda$ (\coqdocvar{I} \coqdocvar{A} : \coqdockw{Set}) (\coqdocvar{B} : \coqdocvariable{A} \coqexternalref{:type scope:x '->' x}{http://coq.inria.fr/distrib/8.6/stdlib/Coq.Init.Logic}{\coqdocnotation{\ensuremath{\rightarrow}}} \coqdockw{Set}) (\coqdocvar{AI} : \coqdocvariable{A} \coqexternalref{:type scope:x '->' x}{http://coq.inria.fr/distrib/8.6/stdlib/Coq.Init.Logic}{\coqdocnotation{\ensuremath{\rightarrow}}} \coqdocvariable{I})  (\coqdocvar{BI} : \coqdockw{\ensuremath{\forall}} (\coqdocvar{a} : \coqdocvariable{A}), \coqdocvariable{B} \coqdocvariable{a} \coqexternalref{:type scope:x '->' x}{http://coq.inria.fr/distrib/8.6/stdlib/Coq.Init.Logic}{\coqdocnotation{\ensuremath{\rightarrow}}} \coqdocvariable{I})\coqdoceol
\coqdocindent{1.00em}
(\coqdocvar{P} : \coqdockw{\ensuremath{\forall}} \coqdocvar{i} : \coqdocvariable{I}, \coqref{Top.paper.IWT}{\coqdocinductive{IWT}} \coqdocvariable{I} \coqdocvariable{A} \coqdocvariable{B} \coqdocvariable{AI} \coqdocvariable{BI} \coqdocvariable{i} \coqexternalref{:type scope:x '->' x}{http://coq.inria.fr/distrib/8.6/stdlib/Coq.Init.Logic}{\coqdocnotation{\ensuremath{\rightarrow}}} \coqdockw{Set})\coqdoceol
\coqdocindent{1.00em}
(\coqdocvar{f} : \coqdockw{\ensuremath{\forall}} (\coqdocvar{a} : \coqdocvariable{A}) (\coqdocvar{lim} : \coqdockw{\ensuremath{\forall}} \coqdocvar{b} : \coqdocvariable{B} \coqdocvariable{a}, \coqref{Top.paper.IWT}{\coqdocinductive{IWT}} \coqdocvariable{I} \coqdocvariable{A} \coqdocvariable{B} \coqdocvariable{AI} \coqdocvariable{BI} (\coqdocvariable{BI} \coqdocvariable{a} \coqdocvariable{b})),\coqdoceol
\coqdocindent{3.50em}
\coqexternalref{:type scope:x '->' x}{http://coq.inria.fr/distrib/8.6/stdlib/Coq.Init.Logic}{\coqdocnotation{(}}\coqdockw{\ensuremath{\forall}} \coqdocvar{b} : \coqdocvariable{B} \coqdocvariable{a}, \coqdocvariable{P} (\coqdocvariable{BI} \coqdocvariable{a} \coqdocvariable{b}) (\coqdocvariable{lim} \coqdocvariable{b})\coqexternalref{:type scope:x '->' x}{http://coq.inria.fr/distrib/8.6/stdlib/Coq.Init.Logic}{\coqdocnotation{)}} \coqexternalref{:type scope:x '->' x}{http://coq.inria.fr/distrib/8.6/stdlib/Coq.Init.Logic}{\coqdocnotation{\ensuremath{\rightarrow}}} \coqdocvariable{P} (\coqdocvariable{AI} \coqdocvariable{a}) (\coqref{Top.paper.iwt}{\coqdocconstructor{iwt}} \coqdocvariable{I} \coqdocvariable{A} \coqdocvariable{B} \coqdocvariable{AI} \coqdocvariable{BI} \coqdocvariable{a} \coqdocvariable{lim}))\coqdoceol
\coqdocindent{1.00em}
(\coqdocvar{a}:\coqdocvariable{A}) \coqdoceol
\coqdocindent{1.00em}
(\coqdocvar{lim}: \coqdockw{\ensuremath{\forall}} \coqdocvar{b} : \coqdocvariable{B} \coqdocvariable{a}, \coqref{Top.paper.IWT}{\coqdocinductive{IWT}} \coqdocvariable{I} \coqdocvariable{A} \coqdocvariable{B} \coqdocvariable{AI} \coqdocvariable{BI} (\coqdocvariable{BI} \coqdocvariable{a} \coqdocvariable{b})), \coqdoceol
\coqdocindent{1.00em}
\highlight{\coqdocvariable{f} \coqdocvariable{a} \coqdocvariable{lim} ($\lambda$ (\coqdocvar{b} : \coqdocvariable{B} \coqdocvariable{a}), \coqref{Top.paper.IWTind}{\coqdocdefinition{IWTind}} \coqdocvar{\_} \coqdocvar{\_} \coqdocvar{\_} \coqdocvar{\_} \coqdocvar{\_} \coqdocvar{\_} \coqdocvariable{f} (\coqdocvariable{BI} \coqdocvariable{a} \coqdocvariable{b}) (\coqdocvariable{lim} \coqdocvariable{b}))}.\coqdoceol
\coqdocemptyline
\coqdocemptyline
\coqdocemptyline
\end{minipage}
 
We observed that \ptranslateIso{\coqRefDefn{Top.paper}{LHS}} and \ptranslateIso{\coqRefDefn{Top.paper}{RHS}} succeed and
that \ptranslateIso{\coqRefDefn{Top.paper}{LHS}} is definitionally equal to \ptranslateIso{\coqRefDefn{Top.paper}{RHS}}.
One way to check that terms $u$ and $v$ are definitionally equal is to ask Coq to check 
(\coqref{Coq.Init.Logic.eq refl}{\coqdocconstructor{eq\_refl}} :  ($u$ = $v$)). We used this method.

Although we did only one reduction experiment for \coqRefDefn{Top.paper}{IWTind}, because Coq's reductions are preserved under substitutions (and how \ptranslateIso{} translates $\lambda$ and application terms),
we have hereby proved that reductions of \coqRefDefn{Top.paper}{IWTind} in \emph{all} well-typed instantiations are preserved.

In our implementation repository (\url{https://github.com/aa755/paramcoq-iff}), the experiments in this subsection can be found in
the file {\tt test-suite/iso/IWTS.v}

\end{document}